\documentclass{eptcs}

\def\proofsinappendix{}      %

\usepackage{cite}

\usepackage{amsthm}
\usepackage{thmtools}      %
\usepackage{amsmath,amssymb,amsfonts}
\usepackage{mathtools}
\usepackage{physics}
\usepackage{stmaryrd}
\usepackage{xfrac}        %

\usepackage{afterpage}
\usepackage{graphicx}
\usepackage{hyperref}
\usepackage[nameinlink,noabbrev]{cleveref}
\usepackage{xcolor}
\usepackage{transparent} %

\usepackage{scalerel}
\usepackage{stackengine,wasysym}

\usepackage{tikz}
\usetikzlibrary{decorations.markings}
\usetikzlibrary{decorations.pathreplacing}
\usetikzlibrary{calc,math,shadows}
\usetikzlibrary{intersections,fadings}

\usepackage{tikzit}
\input{style/tikz/DZX.tikzdefs}

\tikzstyle{box}=[fill=white, font={\small}, draw=black, shape=rectangle, inner sep=2.5pt]
\tikzstyle{wirelable}=[node on layer=labeltextlayer, font={\scriptsize}, scale=.9, text={wirelabelcolor}, fill=none, inner sep=1pt,  tikzit fill={rgb,255: red,255; green,191; blue,191}]

\tikzstyle{wn}=[style=spider, style=phase-thin-spider, fill = addspiderfillcolor, tikzit fill=white, tikzit draw=black, tikzit shape=circle, tikzit category=GLA]
\tikzstyle{bn}=[style=spider, style=phase-thin-spider, fill=copyspiderfillcolor, tikzit draw=black, tikzit shape=circle, tikzit category=GLA, tikzit fill={rgb,255: red,128; green,128; blue,128}]
\tikzstyle{gwn}=[style=wn, style=very thick, style=phase-thick-spider, tikzit shape=circle, tikzit draw=magenta]
\tikzstyle{gbn}=[style=bn, style=very thick, style=phase-thick-spider, tikzit shape=circle, tikzit draw=magenta, tikzit fill={rgb,255: red,128; green,128; blue,128}]
\tikzstyle{ggn}=[style=gbn]
\tikzstyle{grn}=[style=gwn]

\tikzstyle{wphase}=[bubble-base, bubble-white-node, tikzit category=ZX, tikzit fill=white, tikzit draw=blue, tikzit shape=rectangle]
\tikzstyle{whphase}=[style=wphase]
\tikzstyle{bphase}=[bubble-base, bubble-black-node, tikzit category=ZX, tikzit fill={rgb,255: red,128; green,128; blue,128}, tikzit draw=blue, tikzit shape=rectangle]
\tikzstyle{gphase}=[style=bphase]
\tikzstyle{rphase}=[style=wphase]
\tikzstyle{mphase}=[rectangle, rounded corners=1pt, draw=black!60, fill=black!15, inner sep=2pt, font={\scriptsize\boldmath}, minimum height=.3cm, minimum width=.3cm]

\tikzstyle{lmat}=[style=mat, signal to=west, signal from=east, adapt-horizontal-angle, inner xsep=\matHeadPad, tikzit fill=gray, tikzit category=GLA] %
\tikzstyle{rmat}=[style=mat, signal to=east, signal from=west, adapt-horizontal-angle, inner xsep=\matHeadPad, tikzit fill=gray, tikzit category=GLA] %
\tikzstyle{dmat}=[style=mat, signal to=south, signal from=north, outer sep=0pt, adapt-vertical-angle, inner ysep=\matHeadPad, tikzit fill=gray, tikzit category=GLA] %
\tikzstyle{umat}=[style=mat, signal to=north, signal from=south, outer sep=0pt, adapt-vertical-angle, inner ysep=\matHeadPad, tikzit fill=gray, tikzit category=GLA] %

\tikzstyle{glmat}=[style=lmat, very thick, tikzit draw=magenta, tikzit fill=white]
\tikzstyle{grmat}=[style=rmat, very thick, tikzit draw=magenta, tikzit fill=white]
\tikzstyle{gdmat}=[style=dmat, very thick, tikzit draw=magenta, tikzit fill=white]
\tikzstyle{gumat}=[style=umat, very thick, tikzit draw=magenta, tikzit fill=white]
\tikzstyle{lmatt}=[style=glmat]
\tikzstyle{rmatt}=[style=grmat]
\tikzstyle{umatt}=[style=gumat]
\tikzstyle{dmatt}=[style=gdmat]

\tikzstyle{gather}=[draw=none, fill=none, inner sep=0pt, minimum size=0pt]
\tikzstyle{divide}=[style=gather]
\tikzstyle{scalar}=[style=gather]
\tikzstyle{multiplexer}=[style=gather]

\tikzstyle{dashed}=[-, dashed]
\tikzstyle{very}=[style=thick]

\tikzstyle{had}=[fill=hadamardfillcolor, text=hadamardtextcolor, draw=black, shape=rounded rectangle, rounded rectangle east arc=none, rounded rectangle west arc=none, rounded rectangle arc length=80, inner sep=1.5pt, minimum height=5pt, minimum width=5pt, font={\scriptsize\boldmath}, tikzit category=ZX, tikzit fill=yellow, tikzit draw=black]
\tikzstyle{rhad}=[style=hadshape, path picture={\hadcapeast}, inner xsep=\hadHeadPad, tikzit category=ZX, tikzit fill=yellow, tikzit draw=black]
\tikzstyle{lhad}=[style=hadshape, path picture={\hadcapwest}, inner xsep=\hadHeadPad, tikzit category=ZX, tikzit fill=yellow, tikzit draw=black]
\tikzstyle{dhad}=[style=hadshape, path picture={\hadcapsouth}, inner ysep=\hadHeadPad, tikzit category=ZX, tikzit fill=yellow, tikzit draw=black]
\tikzstyle{uhad}=[style=hadshape, path picture={\hadcapnorth}, inner ysep=\hadHeadPad, tikzit category=ZX, tikzit fill=yellow, tikzit draw=black]
\tikzstyle{had-thick}=[outer sep=-\hadThickInset, /utils/exec={}]
\tikzstyle{ghad}=[style=had, very thick, tikzit category=ZX, tikzit fill=yellow, tikzit draw=magenta]
\tikzstyle{grhad}=[style=rhad, style=had-thick, tikzit category=ZX, tikzit fill=yellow, tikzit draw=magenta]
\tikzstyle{glhad}=[style=lhad, style=had-thick, tikzit category=ZX, tikzit fill=yellow, tikzit draw=magenta]
\tikzstyle{gdhad}=[style=dhad, style=had-thick, tikzit category=ZX, tikzit fill=yellow, tikzit draw=magenta]
\tikzstyle{guhad}=[style=uhad, style=had-thick, tikzit category=ZX, tikzit fill=yellow, tikzit draw=magenta]

\tikzstyle{groundout}=[style=groundoutshape]
\tikzstyle{groundin}=[style=groundinshape]

\tikzstyle{dwphase}=[-, pointer-white, phasetosouth, tikzit fill=white, tikzit draw=white]
\tikzstyle{uwphase}=[-, pointer-white, phasetonorth, tikzit fill=white, tikzit draw=white]
\tikzstyle{lwphase}=[-, pointer-white, phasetowest, tikzit fill=white, tikzit draw=white]
\tikzstyle{rwphase}=[-, pointer-white, phasetoeast, tikzit fill=white, tikzit draw=white]
\tikzstyle{dbphase}=[-, pointer-black, phasetosouth, tikzit fill=white, tikzit draw=gray]
\tikzstyle{ubphase}=[-, pointer-black, phasetonorth, tikzit fill=white, tikzit draw=gray]
\tikzstyle{lbphase}=[-, pointer-black, phasetowest, tikzit fill=white, tikzit draw=gray]
\tikzstyle{rbphase}=[-, pointer-black, phasetoeast, tikzit fill=white, tikzit draw=gray]

\tikzstyle{backgroundborder}=[-, fill=none, draw=backgroundbordercolour, tikzit draw=grey, on layer=midlayer, line width=.2pt]

\tikzstyle{backgroundborderless}=[-, fill=backgroundcolour, draw=none, tikzit fill=gray, on layer=backlayer]
\tikzstyle{background}=[-, preaction={fill=backgroundcolour, on layer=  backlayer}, draw=backgroundbordercolour, tikzit draw=black, tikzit fill=gray, on layer=midlayer, line width=.2pt]
\tikzstyle{cobackground}=[-, fill=white, draw=backgroundbordercolour, tikzit draw=black, tikzit fill=white, on layer=midlayer, line width=.2pt]
\tikzstyle{white}=[-, fill=white, draw=black, tikzit fill=white, tikzit draw=black]
\tikzstyle{bbox}=[-, fill=white]
\tikzstyle{stream}=[-, style=thick, tikzit draw=red, postaction=decorate, decoration={markings, mark=at position 0.5 with {\arrow[scale=1.2]{>}}}]
\tikzstyle{thin}=[-, line width=.2pt]

\tikzstyle{thick}=[-, style=very thick, tikzit draw=blue]

\newif\ifappendixproofs
\ifdefined\proofsinappendix
  \appendixproofstrue
\else
  \appendixproofsfalse
\fi

\ifappendixproofs
  \usepackage[createShortEnv]{proof-at-the-end}
  \pratendSetGlobal{end}
\else
  \newcommand{\pratendSetLocal}[1]{}
  \newcommand{\printProofs}[1][]{}

  \newenvironment{theoremE}[1][]{\begin{theorem}[#1]}{\end{theorem}}
  \newenvironment{lemmaE}[1][]{\begin{lemma}[#1]}{\end{lemma}}
  \newenvironment{propositionE}[1][]{\begin{proposition}[#1]}{\end{proposition}}
  \newenvironment{corollaryE}[1][]{\begin{corollary}[#1]}{\end{corollary}}
  \newenvironment{proofE}{\begin{proof}}{\end{proof}}
  \newenvironment{textAtEnd}{}{}
\fi

\newcommand{\proofof}[1]{%
  \phantomsection\label{proof:#1}%
  \begin{proof}[\mbox{\hyperref[#1]{\proofname}}]%
}

\newcommand{\seeproof}[1]{%
  See \hyperref[proof:#1]{proof} on page~\pageref{proof:#1}.%
}

\newcommand{\Z}{\mathbb{Z}}
\newcommand{\N}{\mathbb{N}}
\newcommand{\F}{\mathbb{F}}
\newcommand{\C}{\mathbb{C}}

\newcommand{\B}{\mathcal{B}}
\newcommand{\R}{\mathbb{R}}
\providecommand{\Tr}{\operatorname{Tr}}
\providecommand{\range}{\operatorname{range}}
\providecommand{\rank}{\operatorname{rank}}

\newenvironment{smallbmatrix}
  {\left[\begin{smallmatrix}}
  {\end{smallmatrix}\right]}

\ExplSyntaxOn
\seq_new:N \l__phasemat_vec_seq
\seq_new:N \l__phasemat_mat_seq
\seq_new:N \l__phasemat_col_seq
\tl_new:N  \l__phasemat_first_row_tl
\tl_new:N  \l__phasemat_colspec_tl
\int_new:N \l__phasemat_ncols_int

\NewDocumentCommand \phasemat { m m } {
  \ensuremath{ \__phasemat_build:nn {#1} {#2} }
}

\cs_new_protected:Nn \__phasemat_open_array:n { \begin{array}{#1} }
\cs_new_protected:Nn \__phasemat_separator: {
  \begingroup
  \transparent{0.5}
  \vrule width \arrayrulewidth
  \endgroup
}

\cs_new_protected:Nn \__phasemat_build:nn {
  \seq_set_split:Nnn \l__phasemat_vec_seq { \\ } {#1}
  \seq_set_split:Nnn \l__phasemat_mat_seq { \\ } {#2}
  \tl_set:Nx \l__phasemat_first_row_tl
    { \seq_item:Nn \l__phasemat_mat_seq { 1 } }
  \seq_set_split:NnV \l__phasemat_col_seq { & } \l__phasemat_first_row_tl
  \int_set:Nn \l__phasemat_ncols_int { \seq_count:N \l__phasemat_col_seq }
  \int_compare:nNnTF { \seq_count:N \l__phasemat_vec_seq } = { 1 }
    {
      \int_compare:nNnTF { \l__phasemat_ncols_int } = { 1 }
        { \tl_set:Nn \l__phasemat_colspec_tl { @{}c@{\hspace{1.5pt}\__phasemat_separator:\hspace{1.5pt}}c@{} } }
        {
          \tl_set:Nn \l__phasemat_colspec_tl
            { @{\hspace{2pt}}c@{\hspace{\arraycolsep}\__phasemat_separator:\hspace{\arraycolsep}} }
          \int_step_inline:nn { \l__phasemat_ncols_int } {
            \tl_put_right:Nn \l__phasemat_colspec_tl { c }
          }
          \tl_put_right:Nn \l__phasemat_colspec_tl { @{\hspace{2pt}} }
        }
    }
    {
      \tl_set:Nn \l__phasemat_colspec_tl
        { @{\hspace{2pt}}c@{\hspace{\arraycolsep}\__phasemat_separator:\hspace{\arraycolsep}} }
      \int_step_inline:nn { \l__phasemat_ncols_int } {
        \tl_put_right:Nn \l__phasemat_colspec_tl { c }
      }
      \tl_put_right:Nn \l__phasemat_colspec_tl { @{\hspace{2pt}} }
    }
  \exp_args:NV \__phasemat_open_array:n \l__phasemat_colspec_tl
    \int_step_inline:nn { \seq_count:N \l__phasemat_vec_seq } {
      \seq_item:Nn \l__phasemat_vec_seq {##1}
      \char_generate:nn { 38 } { 4 }
      \seq_item:Nn \l__phasemat_mat_seq {##1}
      \int_compare:nNnF {##1} = { \seq_count:N \l__phasemat_vec_seq } { \\ }
    }
  \end{array}
}
\ExplSyntaxOff

\renewcommand{\vb}[1]{\mathbf{#1}}
\renewcommand{\bar}[1]{\overline{#1}}

\newcommand{\discard}{\mathtt{disc}}
\newcommand{\codiscard}{\mathtt{codisc}}
\newcommand{\disc}{%
  {\mathrel{%
    \tikzsetnextfilename{sym-disc}%
    \tikz[x=1pt,y=1pt,baseline=-0.5ex]{
      \draw (3,0) -- (6,0);
      \draw[line cap=round] (6,4) -- (6,-4);
      \draw[line cap=round] (8,3) -- (8,-3);
      \draw[line cap=round] (10,2) -- (10,-2);
      \draw[line cap=round] (12,1) -- (12,-1);
    }%
  }}%
}
\ProvideDocumentCommand{\swap}{}{\mathrm{swap}}

\DeclareMathOperator{\Lim}{\mathtt{Lim}}
\DeclareMathOperator{\CPM}{CPM}
\DeclareMathOperator{\Proj}{Proj}
\DeclareMathOperator{\Obs}{Obs}
\DeclareMathOperator{\St}{St}
\DeclareMathOperator{\Gr}{Gr}
\DeclareMathOperator{\Unroll}{\mathtt{Unroll}}
\DeclareMathOperator{\Ext}{\mathtt{Ext}}
\DeclareMathOperator{\Ob}{ob}
\DeclareMathOperator{\Span}{span}

\newcommand{\AffRel}{\mathsf{AR}_{\infty}}
\newcommand{\FHilb}{\mathsf{FHilb}}
\newcommand{\Stab}{\mathsf{Stab}}
\newcommand{\StabRel}{\mathtt{StabGrp}}
\newcommand{\ZX}{\mathsf{ZX}}
\newcommand{\StabZX}{\ZX}
\newcommand{\ALR}{\mathsf{ALR}_{\mathsf{fd}}}
\newcommand{\sALR}{\mathsf{s}\ALR}
\newcommand{\ACR}{\mathsf{ACR}_{\mathsf{fd}}}
\newcommand{\dStabZX}{\delta\StabZX}

\NewDocumentCommand{\Pauli}{O{n}}{\mathcal{P}_p^{\otimes #1}}
\newcommand{\interp}[1]{\left\llbracket\ #1 \ \right\rrbracket}
\newcommand{\trans}{\top}
\newcommand{\preceqp}{\preceq_{\mathsf{P}}}
\newcommand{\preceql}{\preceq_{\mathsf{L}}}

\newtheorem{definition}{Definition}
\newtheorem{theorem}{Theorem}

\newtheorem{proposition}{Proposition}
\newtheorem{lemma}{Lemma}
\newtheorem{remark}{Remark}
\newtheorem{example}{Example}
\newtheorem{corollary}{Corollary}
\newtheorem{axiom}{Axiom}
\crefname{definition}{definition}{definitions}
\Crefname{definition}{Definition}{Definitions}
\crefname{theorem}{theorem}{theorems}
\Crefname{theorem}{Theorem}{Theorems}
\crefname{conjecture}{conjecture}{conjectures}
\Crefname{conjecture}{Conjecture}{Conjectures}
\crefname{proposition}{proposition}{propositions}
\Crefname{proposition}{Proposition}{Propositions}
\crefname{lemma}{lemma}{lemmas}
\Crefname{lemma}{Lemma}{Lemmas}
\crefname{remark}{remark}{remarks}
\Crefname{remark}{Remark}{Remarks}
\crefname{example}{example}{examples}
\Crefname{example}{Example}{Examples}
\crefname{corollary}{corollary}{corollaries}
\Crefname{corollary}{Corollary}{Corollaries}
\crefname{axiom}{axiom}{axioms}
\Crefname{axiom}{Axiom}{Axioms}

\providecommand{\Matrices}{\text{Mat}}
\RenewDocumentCommand{\Matrices}{o o o}{%
  \ensuremath{%
    \IfNoValueTF{#1}{%
      \F_p^{n\times n}%
    }{%
      \IfNoValueTF{#2}{%
        \IfNoValueTF{#3}{%
          \F_p(\delta)^{#1 \times #1}%
        }{%
          {#3}^{#1 \times #1}%
        }%
      }{%
        \IfNoValueTF{#3}{%
          \F_p(\delta)^{#1 \times #2}%
        }{%
          {#3}^{#1 \times #2}%
        }%
      }%
    }%
  }%
}

\providecommand{\Herm}{\text{Herm}}
\RenewDocumentCommand{\Herm}{o o}{%
  \ensuremath{%
    \IfNoValueTF{#1}{%
      \IfNoValueTF{#2}{%
        \mathsf{Herm}_n(\F_p(\delta))%
      }{%
        \mathsf{Herm}_n{#2}%
      }%
    }{%
      \IfNoValueTF{#2}{%
        \mathsf{Herm}_{#1}(\F_p(\delta))%
      }{%
        \mathsf{Herm}_{#1}{#2}%
      }%
    }%
  }%
}

\providecommand{\X}{\text{X}}
\providecommand{\Zp}{\mathbb{F}_p}
\renewcommand{\op}{\mathsf{op}}

\newcommand{\axref}[1]{\text{\hyperref[#1]{Ax.~\ref*{#1}}}}
\newcommand{\lemref}[1]{\text{\hyperref[#1]{Lem.~\ref*{#1}}}}
\newcommand{\themref}[1]{\text{\hyperref[#1]{Thm.~\ref*{#1}}}}
\newcommand{\propref}[1]{\text{\hyperref[#1]{Prop.~\ref*{#1}}}}
\newcommand{\defref}[1]{\text{\hyperref[#1]{Def.~\ref*{#1}}}}

\newcommand{\steq}[1]{\overset{\substack{#1}}{=}}

\newcommand{\indhyp}[1]{\text{\hyperlink{#1}{Ind.~Hyp.}}}

\newcommand{\axiomtag}[2][=]{%
  \overset{\text{\hypertarget{ax:#2}{}\axiomregister{#2}}\mathsf{#2}}{#1}}
\newcommand{\axiomtagp}[3][=]{%
  \overset{\text{\hypertarget{ax:#2}{}\hypertarget{ax:#3}{}\axiomregister{#2}\axiomregister{#3}}\mathsf{#3}}{#1}}

\makeatletter
\newcommand{\axiomregister}[1]{%
  \ifcsname axiomreg@#1\endcsname\else
    \expandafter\gdef\csname axiomreg@#1\endcsname{}%
    \if@filesw\immediate\write\@auxout{\string\@axiomdefined{#1}}\fi
  \fi}
\providecommand{\@axiomdefined}[1]{%
  \expandafter\gdef\csname axiomdef@#1\endcsname{}}
\makeatother
\newcommand{\axiomref}[1]{%
  \ifcsname axiomdef@#1\endcsname
    \hyperlink{ax:#1}{\ensuremath{\mathsf{#1}}}%
  \else
    \axiomrefundefined{#1}%
  \fi}
\newcommand{\axiomrefundefined}[1]{%
  \GenericWarning{}{Axiom reference `#1' is undefined: no \string\axiomtag{#1} in any table}%
  \textcolor{red}{\ensuremath{\mathsf{??#1??}}}}

\newcommand{\vdotss}{%
  \tikzsetnextfilename{sym-vdotss}%
  \tikz[baseline, every node/.style={inner sep=0}]{
    \node at (0,0) {.};
    \node at (0,4pt) {.};
    \node at (0,8pt) {.};
  }%
}

\title{The Delayed Stabilizer ZX-Calculus}
\newcommand{\titlerunning}{The Delayed Stabilizer ZX-Calculus}

\author{
  Cole Comfort\textsuperscript{1}\qquad\qquad Giovanni de Felice\textsuperscript{2}
  \institute{%
    \textsuperscript{1}Universit\'e Paris-Saclay, CNRS, ENS Paris-Saclay, Inria,
    CentraleSup\'elec,\\ Laboratoire M\'ethodes Formelles,
    91190 Gif-sur-Yvette, France\\
    \textsuperscript{2}Relational Intelligence Ltd.
  }
}
\newcommand{\authorrunning}{C. Comfort and G. de Felice}
\newcommand{\publicationstatus}{}

\usetikzlibrary{intersections}
\usetikzlibrary{fadings}
\usetikzlibrary{calc}

\usetikzlibrary{external}
\newif\ifcompileunrolledsurfacecode

\begin{document}

  \maketitle
 
  \begin{abstract}
  Many stabilizer quantum error-correcting codes are built from a finite pattern repeated across space or time, such as lattice codes, translation-invariant graph states, and quantum convolutional codes. Ordinary stabilizer ZX-diagrams capture only finite truncations of such systems, obscuring the repeated structure that defines them. We introduce the delayed stabilizer ZX-calculus, a finite graphical language for these infinite, translation-invariant processes. It extends the odd-prime-dimensional stabilizer ZX-calculus with a single new generator, the delay, which feeds data from one time step to the next.

  We equip the calculus with two semantics. In the first semantics, we interpret the behaviour of a delayed ZX-diagram as an equivalence class of sequences of  quantum channels; where two sequences are identified if they have the same information content. We show that the behaviour of a delayed ZX-diagram uniquely determines an infinite stabilizer group.
  In the second semantics, we interpret the delay as a formal variable, encoding the translation-invariant families of Pauli operators as generating functions. This allows us to represent a delayed ZX-diagram in terms of a tableau of generating functions, from which the infinite stabilizer group can be recovered.

  Finally, we give a complete axiomatization of the delayed stabilizer ZX-calculus, featuring generalised Euler decomposition and colour change rules. Using generalised forms of local complementation and pivoting, we reduce every diagram to a unique normal form. This establishes soundness, universality, and completeness for the generating tableau semantics.
  \end{abstract}
  
  \section{Introduction}\label{sec:intro}
  \pratendSetLocal{category=intro, end, restate}
  Stabilizer quantum mechanics is a cornerstone of quantum information theory and underlies much of quantum error correction and fault-tolerant quantum computation, describing quantum processes compactly through Pauli operators rather than exponentially large matrices~\cite{Nielsen2012,gottesman_stabilizer_1997}. The ZX-calculus, and more specifically the stabilizer ZX-calculus, allows such processes to be reasoned about using graph rewriting~\cite{Coecke2011,pqp,2012.13966}. The stabilizer ZX-calculus is sound, universal, and complete for stabilizer quantum mechanics in all prime dimensions~\cite{backens_complete_2014,quopitzx,poor}, meaning that reasoning about stabilizer quantum mechanics can be done exclusively using this graphical notation. This has been particularly useful for quantum error correction and measurement-based quantum computation, where diagrammatic rewrites correspond to transformations of graph states and stabilizer codes~\cite{KissingerWetering2024Book}.

The stabilizer ZX-calculus becomes inadequate, however, for processes defined by structure repeated across space or time. The quantum error-correction literature is full of examples: topological codes~\cite{Dennis2002,Kitaev2003}, in which a fixed local stabilizer pattern repeats across space, and convolutional and serial turbo codes~\cite{Poulin2009,jr_simple_2005,Ollivier2003,wildeentanglement,wildeconvolutional}, where the pattern repeats across time. Further examples include quantum channels with memory~\cite{kretschmann_quantum_2005}, resource states for measurement-based quantum computing~\cite{nest_universal_2006}, and quantum cellular automata~\cite{Haah2021}.

Such processes can be truncated to finite circuits, which themselves can be represented by ordinary ZX-diagrams. However, working with these truncations quickly becomes intractable, obscuring the recursive procedure that generates the circuit. For example, consider the ZX-calculus representation of the surface code from~\cite{grok}:
\begin{equation}\label{eq:surface_code}
  \begin{tikzpicture}
	\begin{pgfonlayer}{nodelayer}
		\node [style=wn] (2) at (1, -0.25) {};
		\node [style=wn] (5) at (2, -1.25) {};
		\node [style=wn] (6) at (2, 0.75) {};
		\node [style=bn] (9) at (1.5, 0.25) {};
		\node [style=bn] (12) at (1.5, -0.75) {};
		\node [style=none] (13) at (-1.5, 1.75) {};
		\node [style=none] (14) at (-1.5, -1.75) {};
		\node [style=none] (15) at (13.5, -1.75) {};
		\node [style=none] (16) at (13.5, 1.75) {};
		\node [style=wn] (17) at (3, -0.25) {};
		\node [style=bn] (24) at (2.5, 0.25) {};
		\node [style=bn] (25) at (2.5, -0.75) {};
		\node [style=none] (26) at (3, 0.25) {};
		\node [style=none] (27) at (2, 0.25) {};
		\node [style=none] (28) at (2, -0.75) {};
		\node [style=none] (29) at (3, -0.75) {};
		\node [style=wn] (30) at (-1, -0.25) {};
		\node [style=wn] (31) at (0, -1.25) {};
		\node [style=wn] (32) at (0, 0.75) {};
		\node [style=bn] (33) at (-0.5, 0.25) {};
		\node [style=bn] (34) at (-0.5, -0.75) {};
		\node [style=wn] (35) at (1, -0.25) {};
		\node [style=bn] (36) at (0.5, 0.25) {};
		\node [style=bn] (37) at (0.5, -0.75) {};
		\node [style=none] (38) at (1, 0.25) {};
		\node [style=none] (39) at (0, 0.25) {};
		\node [style=none] (40) at (0, -0.75) {};
		\node [style=none] (41) at (1, -0.75) {};
		\node [style=wn] (42) at (3, -0.25) {};
		\node [style=wn] (43) at (4, -1.25) {};
		\node [style=wn] (44) at (4, 0.75) {};
		\node [style=bn] (45) at (3.5, 0.25) {};
		\node [style=bn] (46) at (3.5, -0.75) {};
		\node [style=wn] (47) at (5, -0.25) {};
		\node [style=bn] (48) at (4.5, 0.25) {};
		\node [style=bn] (49) at (4.5, -0.75) {};
		\node [style=none] (50) at (5, 0.25) {};
		\node [style=none] (51) at (4, 0.25) {};
		\node [style=none] (52) at (4, -0.75) {};
		\node [style=none] (53) at (5, -0.75) {};
		\node [style=wn] (54) at (0, 0.75) {};
		\node [style=wn] (55) at (1, -0.25) {};
		\node [style=bn] (57) at (0.5, 1.25) {};
		\node [style=bn] (58) at (0.5, 0.25) {};
		\node [style=wn] (59) at (2, 0.75) {};
		\node [style=bn] (60) at (1.5, 1.25) {};
		\node [style=bn] (61) at (1.5, 0.25) {};
		\node [style=none] (62) at (2, 1.25) {};
		\node [style=none] (63) at (1, 1.25) {};
		\node [style=none] (64) at (1, 0.25) {};
		\node [style=none] (65) at (2, 0.25) {};
		\node [style=wn] (66) at (2, 0.75) {};
		\node [style=wn] (67) at (3, -0.25) {};
		\node [style=bn] (69) at (2.5, 1.25) {};
		\node [style=bn] (70) at (2.5, 0.25) {};
		\node [style=wn] (71) at (4, 0.75) {};
		\node [style=bn] (72) at (3.5, 1.25) {};
		\node [style=bn] (73) at (3.5, 0.25) {};
		\node [style=none] (74) at (4, 1.25) {};
		\node [style=none] (75) at (3, 1.25) {};
		\node [style=none] (76) at (3, 0.25) {};
		\node [style=none] (77) at (4, 0.25) {};
		\node [style=wn] (78) at (5, -0.25) {};
		\node [style=wn] (79) at (6, -1.25) {};
		\node [style=wn] (80) at (6, 0.75) {};
		\node [style=bn] (81) at (5.5, 0.25) {};
		\node [style=bn] (82) at (5.5, -0.75) {};
		\node [style=wn] (83) at (7, -0.25) {};
		\node [style=bn] (84) at (6.5, 0.25) {};
		\node [style=bn] (85) at (6.5, -0.75) {};
		\node [style=none] (86) at (7, 0.25) {};
		\node [style=none] (87) at (6, 0.25) {};
		\node [style=none] (88) at (6, -0.75) {};
		\node [style=none] (89) at (7, -0.75) {};
		\node [style=wn] (90) at (3, -0.25) {};
		\node [style=wn] (91) at (4, -1.25) {};
		\node [style=wn] (92) at (4, 0.75) {};
		\node [style=bn] (93) at (3.5, 0.25) {};
		\node [style=bn] (94) at (3.5, -0.75) {};
		\node [style=wn] (95) at (5, -0.25) {};
		\node [style=bn] (96) at (4.5, 0.25) {};
		\node [style=bn] (97) at (4.5, -0.75) {};
		\node [style=none] (98) at (5, 0.25) {};
		\node [style=none] (99) at (4, 0.25) {};
		\node [style=none] (100) at (4, -0.75) {};
		\node [style=none] (101) at (5, -0.75) {};
		\node [style=wn] (102) at (7, -0.25) {};
		\node [style=wn] (103) at (8, -1.25) {};
		\node [style=wn] (104) at (8, 0.75) {};
		\node [style=bn] (105) at (7.5, 0.25) {};
		\node [style=bn] (106) at (7.5, -0.75) {};
		\node [style=wn] (107) at (9, -0.25) {};
		\node [style=bn] (108) at (8.5, 0.25) {};
		\node [style=bn] (109) at (8.5, -0.75) {};
		\node [style=none] (110) at (9, 0.25) {};
		\node [style=none] (111) at (8, 0.25) {};
		\node [style=none] (112) at (8, -0.75) {};
		\node [style=none] (113) at (9, -0.75) {};
		\node [style=wn] (114) at (4, 0.75) {};
		\node [style=wn] (115) at (5, -0.25) {};
		\node [style=bn] (116) at (4.5, 1.25) {};
		\node [style=bn] (117) at (4.5, 0.25) {};
		\node [style=wn] (118) at (6, 0.75) {};
		\node [style=bn] (119) at (5.5, 1.25) {};
		\node [style=bn] (120) at (5.5, 0.25) {};
		\node [style=none] (121) at (6, 1.25) {};
		\node [style=none] (122) at (5, 1.25) {};
		\node [style=none] (123) at (5, 0.25) {};
		\node [style=none] (124) at (6, 0.25) {};
		\node [style=wn] (125) at (6, 0.75) {};
		\node [style=wn] (126) at (7, -0.25) {};
		\node [style=bn] (127) at (6.5, 1.25) {};
		\node [style=bn] (128) at (6.5, 0.25) {};
		\node [style=wn] (129) at (8, 0.75) {};
		\node [style=bn] (130) at (7.5, 1.25) {};
		\node [style=bn] (131) at (7.5, 0.25) {};
		\node [style=none] (132) at (8, 1.25) {};
		\node [style=none] (133) at (7, 1.25) {};
		\node [style=none] (134) at (7, 0.25) {};
		\node [style=none] (135) at (8, 0.25) {};
		\node [style=wn] (136) at (9, -0.25) {};
		\node [style=wn] (137) at (10, -1.25) {};
		\node [style=wn] (138) at (10, 0.75) {};
		\node [style=bn] (139) at (9.5, 0.25) {};
		\node [style=bn] (140) at (9.5, -0.75) {};
		\node [style=wn] (141) at (11, -0.25) {};
		\node [style=bn] (142) at (10.5, 0.25) {};
		\node [style=bn] (143) at (10.5, -0.75) {};
		\node [style=none] (144) at (11, 0.25) {};
		\node [style=none] (145) at (10, 0.25) {};
		\node [style=none] (146) at (10, -0.75) {};
		\node [style=none] (147) at (11, -0.75) {};
		\node [style=wn] (148) at (7, -0.25) {};
		\node [style=wn] (149) at (8, -1.25) {};
		\node [style=wn] (150) at (8, 0.75) {};
		\node [style=bn] (151) at (7.5, 0.25) {};
		\node [style=bn] (152) at (7.5, -0.75) {};
		\node [style=wn] (153) at (9, -0.25) {};
		\node [style=bn] (154) at (8.5, 0.25) {};
		\node [style=bn] (155) at (8.5, -0.75) {};
		\node [style=none] (156) at (9, 0.25) {};
		\node [style=none] (157) at (8, 0.25) {};
		\node [style=none] (158) at (8, -0.75) {};
		\node [style=none] (159) at (9, -0.75) {};
		\node [style=wn] (160) at (11, -0.25) {};
		\node [style=wn] (161) at (12, -1.25) {};
		\node [style=wn] (162) at (12, 0.75) {};
		\node [style=bn] (163) at (11.5, 0.25) {};
		\node [style=bn] (164) at (11.5, -0.75) {};
		\node [style=wn] (165) at (13, -0.25) {};
		\node [style=bn] (166) at (12.5, 0.25) {};
		\node [style=bn] (167) at (12.5, -0.75) {};
		\node [style=none] (168) at (13, 0.25) {};
		\node [style=none] (169) at (12, 0.25) {};
		\node [style=none] (170) at (12, -0.75) {};
		\node [style=none] (171) at (13, -0.75) {};
		\node [style=wn] (172) at (8, 0.75) {};
		\node [style=wn] (173) at (9, -0.25) {};
		\node [style=bn] (174) at (8.5, 1.25) {};
		\node [style=bn] (175) at (8.5, 0.25) {};
		\node [style=wn] (176) at (10, 0.75) {};
		\node [style=bn] (177) at (9.5, 1.25) {};
		\node [style=bn] (178) at (9.5, 0.25) {};
		\node [style=none] (179) at (10, 1.25) {};
		\node [style=none] (180) at (9, 1.25) {};
		\node [style=none] (181) at (9, 0.25) {};
		\node [style=none] (182) at (10, 0.25) {};
		\node [style=wn] (183) at (10, 0.75) {};
		\node [style=wn] (184) at (11, -0.25) {};
		\node [style=bn] (185) at (10.5, 1.25) {};
		\node [style=bn] (186) at (10.5, 0.25) {};
		\node [style=wn] (187) at (12, 0.75) {};
		\node [style=bn] (188) at (11.5, 1.25) {};
		\node [style=bn] (189) at (11.5, 0.25) {};
		\node [style=none] (190) at (12, 1.25) {};
		\node [style=none] (191) at (11, 1.25) {};
		\node [style=none] (192) at (11, 0.25) {};
		\node [style=none] (193) at (12, 0.25) {};
		\node [style=bn] (194) at (12.5, 1.25) {};
		\node [style=none] (195) at (13, 1.25) {};
		\node [style=bn] (196) at (-0.5, 1.25) {};
		\node [style=none] (197) at (0, 1.25) {};
	\end{pgfonlayer}
	\begin{pgfonlayer}{edgelayer}
		\draw [style=background] (16.center)
			 to (15.center)
			 to (14.center)
			 to (13.center)
			 to cycle;
		\draw (5) to (12);
		\draw (12) to (2);
		\draw (2) to (9);
		\draw (9) to (6);
		\draw (6) to (24);
		\draw (24) to (17);
		\draw (17) to (25);
		\draw (25) to (5);
		\draw (9) to (27.center);
		\draw (24) to (26.center);
		\draw (25) to (29.center);
		\draw (12) to (28.center);
		\draw (31) to (34);
		\draw (34) to (30);
		\draw (30) to (33);
		\draw (33) to (32);
		\draw (32) to (36);
		\draw (36) to (35);
		\draw (35) to (37);
		\draw (37) to (31);
		\draw (33) to (39.center);
		\draw (36) to (38.center);
		\draw (37) to (41.center);
		\draw (34) to (40.center);
		\draw (43) to (46);
		\draw (46) to (42);
		\draw (42) to (45);
		\draw (45) to (44);
		\draw (44) to (48);
		\draw (48) to (47);
		\draw (47) to (49);
		\draw (49) to (43);
		\draw (45) to (51.center);
		\draw (48) to (50.center);
		\draw (49) to (53.center);
		\draw (46) to (52.center);
		\draw (55) to (58);
		\draw (58) to (54);
		\draw (54) to (57);
		\draw (60) to (59);
		\draw (59) to (61);
		\draw (61) to (55);
		\draw (57) to (63.center);
		\draw (60) to (62.center);
		\draw (61) to (65.center);
		\draw (58) to (64.center);
		\draw (67) to (70);
		\draw (70) to (66);
		\draw (66) to (69);
		\draw (72) to (71);
		\draw (71) to (73);
		\draw (73) to (67);
		\draw (69) to (75.center);
		\draw (72) to (74.center);
		\draw (73) to (77.center);
		\draw (70) to (76.center);
		\draw (79) to (82);
		\draw (82) to (78);
		\draw (78) to (81);
		\draw (81) to (80);
		\draw (80) to (84);
		\draw (84) to (83);
		\draw (83) to (85);
		\draw (85) to (79);
		\draw (81) to (87.center);
		\draw (84) to (86.center);
		\draw (85) to (89.center);
		\draw (82) to (88.center);
		\draw (91) to (94);
		\draw (94) to (90);
		\draw (90) to (93);
		\draw (93) to (92);
		\draw (92) to (96);
		\draw (96) to (95);
		\draw (95) to (97);
		\draw (97) to (91);
		\draw (93) to (99.center);
		\draw (96) to (98.center);
		\draw (97) to (101.center);
		\draw (94) to (100.center);
		\draw (103) to (106);
		\draw (106) to (102);
		\draw (102) to (105);
		\draw (105) to (104);
		\draw (104) to (108);
		\draw (108) to (107);
		\draw (107) to (109);
		\draw (109) to (103);
		\draw (105) to (111.center);
		\draw (108) to (110.center);
		\draw (109) to (113.center);
		\draw (106) to (112.center);
		\draw (115) to (117);
		\draw (117) to (114);
		\draw (114) to (116);
		\draw (119) to (118);
		\draw (118) to (120);
		\draw (120) to (115);
		\draw (116) to (122.center);
		\draw (119) to (121.center);
		\draw (120) to (124.center);
		\draw (117) to (123.center);
		\draw (126) to (128);
		\draw (128) to (125);
		\draw (125) to (127);
		\draw (130) to (129);
		\draw (129) to (131);
		\draw (131) to (126);
		\draw (127) to (133.center);
		\draw (130) to (132.center);
		\draw (131) to (135.center);
		\draw (128) to (134.center);
		\draw (137) to (140);
		\draw (140) to (136);
		\draw (136) to (139);
		\draw (139) to (138);
		\draw (138) to (142);
		\draw (142) to (141);
		\draw (141) to (143);
		\draw (143) to (137);
		\draw (139) to (145.center);
		\draw (142) to (144.center);
		\draw (143) to (147.center);
		\draw (140) to (146.center);
		\draw (149) to (152);
		\draw (152) to (148);
		\draw (148) to (151);
		\draw (151) to (150);
		\draw (150) to (154);
		\draw (154) to (153);
		\draw (153) to (155);
		\draw (155) to (149);
		\draw (151) to (157.center);
		\draw (154) to (156.center);
		\draw (155) to (159.center);
		\draw (152) to (158.center);
		\draw (161) to (164);
		\draw (164) to (160);
		\draw (160) to (163);
		\draw (163) to (162);
		\draw (162) to (166);
		\draw (166) to (165);
		\draw (165) to (167);
		\draw (167) to (161);
		\draw (163) to (169.center);
		\draw (166) to (168.center);
		\draw (167) to (171.center);
		\draw (164) to (170.center);
		\draw (173) to (175);
		\draw (175) to (172);
		\draw (172) to (174);
		\draw (177) to (176);
		\draw (176) to (178);
		\draw (178) to (173);
		\draw (174) to (180.center);
		\draw (177) to (179.center);
		\draw (178) to (182.center);
		\draw (175) to (181.center);
		\draw (184) to (186);
		\draw (186) to (183);
		\draw (183) to (185);
		\draw (188) to (187);
		\draw (187) to (189);
		\draw (189) to (184);
		\draw (185) to (191.center);
		\draw (188) to (190.center);
		\draw (189) to (193.center);
		\draw (186) to (192.center);
		\draw (194) to (195.center);
		\draw (196) to (197.center);
		\draw (54) to (196);
		\draw (187) to (194);
	\end{pgfonlayer}
\end{tikzpicture}

\end{equation}

Properties of this diagram, representing a single truncation of the lattice, may not extend to larger truncations, forcing unwieldy use of ellipses.

This leaves a gap. We want a graphical language in which these families of codes are represented and manipulated as finite graphs that capture their translation-invariant symmetry, reducing computational problems about infinite stabilizer processes to problems in rewriting and graph theory.

We provide such a language by extending the odd-prime-dimensional stabilizer ZX-calculus with a single new generator \(\begin{tikzpicture}
	\begin{pgfonlayer}{nodelayer}
		\node [style=rmat] (1) at (0.75, 0) {\(\delta\)};
		\node [style=none] (2) at (0, 0.4) {};
		\node [style=none] (3) at (1.5, 0.4) {};
		\node [style=none] (4) at (1.5, -0.4) {};
		\node [style=none] (5) at (0, -0.4) {};
		\node [style=none] (6) at (0, 0) {};
		\node [style=none] (7) at (1.5, 0) {};
	\end{pgfonlayer}
	\begin{pgfonlayer}{edgelayer}
		\draw [style=background] (2.center)
			 to (3.center)
			 to (4.center)
			 to (5.center)
			 to cycle;
		\draw [style=background] (5.center) to (2.center);
		\draw [style=background] (2.center) to (3.center);
		\draw [style=background] (3.center) to (4.center);
		\draw [style=background] (4.center) to (5.center);
		\draw (7.center) to (6.center);
	\end{pgfonlayer}
\end{tikzpicture}
\), the delay, which passes data from one time step to the next. Delayed ZX-diagrams describe infinite processes obtained by iterating a finite pattern and connecting successive copies through the delays. For example, the horizontal translation-invariance of the surface code above is captured concisely by the delayed ZX-diagram
\[\begin{tikzpicture}
	\begin{pgfonlayer}{nodelayer}
		\node [style=wn] (0) at (1.5, -0.25) {};
		\node [style=none] (5) at (-2, 2.5) {};
		\node [style=none] (6) at (-2, -2.25) {};
		\node [style=none] (7) at (2.25, -2.25) {};
		\node [style=none] (8) at (2.25, 2.5) {};
		\node [style=wn] (16) at (-1.5, -0.25) {};
		\node [style=wn] (17) at (0, -1.75) {};
		\node [style=wn] (18) at (0, 1.25) {};
		\node [style=bn] (19) at (-0.75, 0.5) {};
		\node [style=bn] (20) at (-0.75, -1) {};
		\node [style=wn] (21) at (1.5, -0.25) {};
		\node [style=bn] (22) at (0.75, 0.5) {};
		\node [style=bn] (23) at (0.75, -1) {};
		\node [style=none] (24) at (1.5, 0.5) {};
		\node [style=none] (25) at (0, 0.5) {};
		\node [style=none] (26) at (0, -1) {};
		\node [style=none] (27) at (1.5, -1) {};
		\node [style=wn] (40) at (0, 1.25) {};
		\node [style=wn] (41) at (1.5, -0.25) {};
		\node [style=bn] (42) at (0.75, 2) {};
		\node [style=bn] (43) at (0.75, 0.5) {};
		\node [style=none] (48) at (1.5, 2) {};
		\node [style=none] (49) at (1.5, 0.5) {};
		\node [style=bn] (180) at (-0.75, 2) {};
		\node [style=none] (181) at (0, 2) {};
		\node [style=lmat] (182) at (-0.025, -0.25) {\(\delta\)};
	\end{pgfonlayer}
	\begin{pgfonlayer}{edgelayer}
		\draw [style=background] (8.center)
			 to (7.center)
			 to (6.center)
			 to (5.center)
			 to cycle;
		\draw (17) to (20);
		\draw (20) to (16);
		\draw (16) to (19);
		\draw (19) to (18);
		\draw (18) to (22);
		\draw (22) to (21);
		\draw (21) to (23);
		\draw (23) to (17);
		\draw (19) to (25.center);
		\draw (22) to (24.center);
		\draw (23) to (27.center);
		\draw (20) to (26.center);
		\draw (41) to (43);
		\draw (43) to (40);
		\draw (40) to (42);
		\draw (42) to (48.center);
		\draw (43) to (49.center);
		\draw (180) to (181.center);
		\draw (40) to (180);
		\draw (41) to (182);
		\draw (182) to (16);
	\end{pgfonlayer}
\end{tikzpicture}
\]
In this way, a single finite delayed ZX-diagram can represent systems such as spin chains, infinite graph states, lattice foliations, and quantum convolutional codes.

Adding a delay to the ZX-calculus is easy. However, it is not clear when two delayed ZX-diagrams should have the same interpretation.
Following Carette et al.~\cite{delayedtracememory}, we represent a delayed ZX-diagram as a sequence of finite approximations of an infinite process, obtained by iterating the repeating pattern and discarding the boundary:
\[\begin{tikzpicture}
	\begin{pgfonlayer}{nodelayer}
		\node [style=none] (93) at (9.25, 0.7875) {\(D\)};
		\node [style=none] (94) at (17.725, 1.6375) {};
		\node [style=none] (95) at (6.6, 1.6375) {};
		\node [style=none] (96) at (17.725, -1.6375) {};
		\node [style=none] (97) at (6.6, -1.6375) {};
		\node [style=none] (98) at (8, 1.1375) {};
		\node [style=none] (99) at (8.75, 1.1375) {};
		\node [style=none] (100) at (9.75, 1.1375) {};
		\node [style=none] (101) at (8.75, 0.4375) {};
		\node [style=none] (102) at (9.75, 0.4375) {};
		\node [style=none] (103) at (10.5, 0.4375) {};
		\node [style=none] (104) at (10.5, 1.1375) {};
		\node [style=none] (105) at (11, 0.0875) {\(D\)};
		\node [style=none] (106) at (11.5, 0.4375) {};
		\node [style=none] (107) at (6.6, -0.2625) {};
		\node [style=none] (108) at (11.5, -0.2625) {};
		\node [style=none] (109) at (12.25, -0.2625) {};
		\node [style=none] (110) at (12.25, 0.4375) {};
		\node [style=none] (111) at (6.6, 0.4375) {};
		\node [style=none] (112) at (14.6, -0.6875) {\(D\)};
		\node [style=none] (113) at (15.1, -0.2625) {};
		\node [style=none] (114) at (6.6, -1.0875) {};
		\node [style=none] (115) at (15.1, -1.0875) {};
		\node [style=none] (116) at (15.85, -1.0875) {};
		\node [style=none] (117) at (15.85, -0.2625) {};
		\node [style=none] (118) at (15.85, 1.1375) {};
		\node [style=none] (119) at (15.85, 0.4375) {};
		\node [style=none] (120) at (12.75, -0.6375) {};
		\node [style=none] (121) at (13.275, -0.2625) {};
		\node [style=none] (122) at (12.75, -0.3125) {\(\cdots\)};
		\node [style=none] (123) at (16.35, -1.0875) {};
		\node [style=none] (124) at (17.725, -0.2625) {};
		\node [style=none] (125) at (17.725, 1.1375) {};
		\node [style=none] (126) at (17.725, 0.4375) {};
		\node [style=none] (127) at (8, 1.1375) {};
		\node [style=groundin] (128) at (8, 1.1375) {};
		\node [style=groundout] (129) at (16.35, -1.0875) {};
		\node [style=none] (130) at (8.75, 1.2875) {};
		\node [style=none] (131) at (8.75, 0.2875) {};
		\node [style=none] (132) at (9.75, 0.2875) {};
		\node [style=none] (133) at (9.75, 1.2875) {};
		\node [style=none] (134) at (10.475, 0.5875) {};
		\node [style=none] (135) at (10.475, -0.4125) {};
		\node [style=none] (136) at (11.475, -0.4125) {};
		\node [style=none] (137) at (11.475, 0.5875) {};
		\node [style=none] (138) at (14.1, -0.1875) {};
		\node [style=none] (139) at (14.1, -1.1875) {};
		\node [style=none] (140) at (15.1, -1.1875) {};
		\node [style=none] (141) at (15.1, -0.1875) {};
		\node [style=none] (142) at (1.475, -0.0125) {};
		\node [style=none] (143) at (1.475, 0.7375) {};
		\node [style=none] (144) at (0.475, -0.0125) {};
		\node [style=none] (145) at (0.475, 0.7375) {};
		\node [style=none] (146) at (-0.525, -0.5125) {};
		\node [style=none] (147) at (2.475, -0.5125) {};
		\node [style=box] (148) at (0.975, -0.2625) {\(D\)};
		\node [style=none] (149) at (-0.525, -0.8875) {};
		\node [style=none] (150) at (2.475, -0.8875) {};
		\node [style=none] (151) at (2.475, 1.1125) {};
		\node [style=none] (152) at (-0.525, 1.1125) {};
		\node [style=lmat] (153) at (0.975, 0.7375) {\(\delta\)};
		\node [style=none] (154) at (4.5, -0.0125) {\(\rightsquigarrow\)};
	\end{pgfonlayer}
	\begin{pgfonlayer}{edgelayer}
		\draw [style=background] (95.center)
			 to (94.center)
			 to (96.center)
			 to (97.center)
			 to cycle;
		\draw (98.center)
			 to (99.center)
			 to (100.center)
			 to [in=180, out=0] (103.center)
			 to (106.center)
			 to [in=180, out=0] (109.center);
		\draw (121.center)
			 to (113.center)
			 to [in=180, out=0] (116.center)
			 to (123.center);
		\draw (125.center)
			 to (118.center)
			 to (104.center)
			 to [in=0, out=180] (102.center)
			 to (101.center)
			 to (111.center);
		\draw (107.center)
			 to (108.center)
			 to [in=180, out=0] (110.center)
			 to (119.center)
			 to (126.center);
		\draw (114.center)
			 to (115.center)
			 to [in=180, out=0] (117.center)
			 to (124.center);
		\draw [style=bbox] (130.center)
			 to (133.center)
			 to (132.center)
			 to (131.center)
			 to cycle;
		\draw [style=bbox] (130.center) to (133.center);
		\draw [style=bbox] (133.center) to (132.center);
		\draw [style=bbox] (132.center) to (131.center);
		\draw [style=bbox] (131.center) to (130.center);
		\draw [style=bbox] (134.center)
			 to (137.center)
			 to (136.center)
			 to (135.center)
			 to cycle;
		\draw [style=bbox] (134.center) to (137.center);
		\draw [style=bbox] (137.center) to (136.center);
		\draw [style=bbox] (136.center) to (135.center);
		\draw [style=bbox] (135.center) to (134.center);
		\draw [style=bbox] (138.center)
			 to (141.center)
			 to (140.center)
			 to (139.center)
			 to cycle;
		\draw [style=bbox] (138.center) to (141.center);
		\draw [style=bbox] (141.center) to (140.center);
		\draw [style=bbox] (140.center) to (139.center);
		\draw [style=bbox] (139.center) to (138.center);
		\draw [style=background] (152.center)
			 to (151.center)
			 to (150.center)
			 to (149.center)
			 to cycle;
		\draw (146.center) to (147.center);
		\draw (145.center)
			 to [bend left=270, looseness=1.50] (144.center)
			 to (142.center)
			 to [bend right=90, looseness=1.50] (143.center)
			 to cycle;
	\end{pgfonlayer}
\end{tikzpicture}
\]
Each finite approximation captures only part of the infinite process, and different sequences of truncations may describe the same process. We therefore regard two sequences as equivalent when they approximate one another over time, following our recent work~\cite{obs}. We show that two sequences are equivalent in this sense if and only if they have the same infinite stabilizer group. These finite approximations are themselves quantum channels between finite-dimensional Hilbert spaces, so a delayed stabilizer diagram carries a genuine Hilbert-space semantics into which its stabilizer description embeds faithfully, not merely a combinatorial one.

This representation in terms of infinite stabilizer groups is difficult to work with directly. Because delayed diagrams are translation-invariant, we can instead borrow a finite presentation from the theory of quantum convolutional codes~\cite{wilde}. We treat the delay as a formal variable, so that infinite sequences of Pauli operators are represented by formal Laurent series in this variable. The stabilizers of a delayed diagram are then captured by a finite \emph{generating tableau} of fractions of polynomials, whose geometric series recover these infinite sequences of Pauli operators.

This more tractable semantics lets us give an equational theory for delayed stabilizer ZX-diagrams, the delayed stabilizer ZX-calculus \(\dStabZX\). Its spiders, multipliers, and H-boxes are labelled by fractions of polynomials that encode the translation-invariant structure, and it features generalised Euler decomposition and colour change rules.

We introduce a scalable notation that generalises Haah's polynomial representation of translation-invariant graph states~\cite{Haah2017,Haah2021,Haah2013} as the phases of scalable spiders. For example, the following delayed graph state reduces to a single scalable spider:
\[\begin{tikzpicture}
	\begin{pgfonlayer}{nodelayer}
		\node [style=none] (0) at (-0.5, 2.375) {};
		\node [style=none] (1) at (-0.5, -2.625) {};
		\node [style=none] (2) at (4, 2.375) {};
		\node [style=none] (3) at (4, -2.625) {};
		\node [style=bn] (4) at (1, 0.875) {};
		\node [style=bn] (5) at (1, -1.125) {};
		\node [style=bn] (6) at (3, -1.125) {};
		\node [style=none] (7) at (3, 0.875) {};
		\node [style=had] (8) at (2, 0.875) {};
		\node [style=had] (9) at (1, -0.125) {};
		\node [style=had] (10) at (2, -1.125) {};
		\node [style=had] (11) at (3, -0.125) {};
		\node [style=none] (12) at (4, 0.875) {};
		\node [style=lmat] (14) at (2, 1.875) {\(\delta\)};
		\node [style=none] (16) at (3, 1.875) {};
		\node [style=none] (17) at (0.5, 1.875) {};
		\node [style=none] (18) at (0.5, -1.125) {};
		\node [style=bn] (19) at (3, 0.875) {};
		\node [style=none] (20) at (4, 1.375) {};
		\node [style=none] (21) at (1.5, 1.375) {};
		\node [style=none] (141) at (9.5, 0) {\(=\)};
		\node [style=none] (142) at (10, 1.5) {};
		\node [style=none] (143) at (10, -1.5) {};
		\node [style=none] (144) at (26.25, 1.5) {};
		\node [style=none] (145) at (26.25, -1.5) {};
		\node [style=none] (147) at (26.25, -0.75) {};
		\node [style=none] (148) at (26.25, 0.75) {};
		\node [style=none] (149) at (25, 0) {};
		\node [style=gbn] (150) at (24.25, 0) {};
		\node [style=none] (151) at (25.2, 0) {};
		\node [style=bphase] (152) at (17, 0) {\(\phasemat{0 \\ 0}{0 & \delta^{7}+\delta^{6} + \delta^{1}  + 1 \\ \delta^{\minu 7}+\delta^{\minu 6} + \delta^{\minu 1}  + 1 & 0}\)};
		\node [style=none] (158) at (3, -1.125) {};
		\node [style=none] (160) at (1, 0.875) {};
		\node [style=lmat] (167) at (2, -2.125) {\(\delta^6\)};
		\node [style=none] (168) at (1, -2.125) {};
		\node [style=none] (169) at (3, -2.125) {};
		\node [style=none] (206) at (4.5, 0) {\(=\)};
		\node [style=none] (207) at (5, 3.25) {};
		\node [style=none] (208) at (5, -3.25) {};
		\node [style=none] (209) at (9, 3.25) {};
		\node [style=none] (210) at (9, -3.25) {};
		\node [style=none] (211) at (9, -2.75) {};
		\node [style=none] (212) at (9, 2.75) {};
		\node [style=bn] (213) at (7, 2.75) {};
		\node [style=bn] (214) at (7, -2.75) {};
		\node [style=uhad] (216) at (7, -0.05) {\(\delta^{7}+\delta^6+\delta +1\)};
	\end{pgfonlayer}
	\begin{pgfonlayer}{edgelayer}
		\draw [style=background] (1.center)
			 to (0.center)
			 to (2.center)
			 to (3.center)
			 to cycle;
		\draw (4) to (7.center);
		\draw (7.center) to (6);
		\draw (6) to (5);
		\draw (5) to (4);
		\draw (7.center)
			 to [in=0, out=0, looseness=1.75] (16.center)
			 to (17.center)
			 to [bend left=270, looseness=0.75] (18.center)
			 to (5);
		\draw (20.center)
			 to (21.center)
			 to [in=75, out=180] (4);
		\draw (19) to (12.center);
		\draw [style=background] (144.center)
			 to (145.center)
			 to (143.center)
			 to (142.center)
			 to cycle;
		\draw [in=-180, out=0, looseness=1.25] (149.center) to (148.center);
		\draw [in=180, out=0, looseness=1.25] (149.center) to (147.center);
		\draw [style=thick] (151.center) to (150);
		\draw [style=rbphase] (152) to (150);
		\draw (160.center)
			 to [in=-180, out=180, looseness=0.75] (168.center)
			 to (169.center)
			 to [in=0, out=0, looseness=1.75] (158.center);
		\draw [style=background] (210.center)
			 to (208.center)
			 to (207.center)
			 to (209.center)
			 to cycle;
		\draw (214) to (211.center);
		\draw (213) to (212.center);
		\draw (213) to (216);
		\draw (216) to (214);
	\end{pgfonlayer}
\end{tikzpicture}
\]
which unrolls into the following infinite graph state (where dotted lines denote wires connected by a Hadamard gate):

\par\noindent
\makebox[\linewidth][c]{%
  \scalebox{0.45}{%
    
\begin{tikzpicture}[x=1cm,y=1cm,line cap=round]

  \pgfdeclarelayer{L0}
  \pgfdeclarelayer{L1}
  \pgfdeclarelayer{L2}
  \pgfdeclarelayer{L3}
  \pgfsetlayers{L0,L1,L2,L3,main}

  \def\R{2.4}
  \def\L{42}
  \def\turns{14}
  \def\nlines{6}
  \def\roll{12}

  \def\yaw{-69}
  \def\pitch{-6}
  \def\camdist{13}
  \def\pivotU{1.2}
  \def\focal{30}

  \def\nodescale{2}
  \def\tickfac{.8}

  \def\edgewidth{.4}
  \def\edgemin{.001}
  \def\dashpersegment{15}
  \def\dashfrac{.5}
  \def\linesegs{15}

  \def\knotsPerTurn{10}
  \def\frontExtTurns{1.9}
  \def\backExtTurns{6.5}

  \def\rearFadeStartFrac{0.5}
  \def\rearFadePower{.4}

  \def\dotrad{.1}
  \def\dotmin{.0000}
  \def\dotgap{1.2}
  \def\ndots{3}
  \def\frontDotStartU{-4}

  \def\bboxLeft{-21.2}
  \def\bboxRight{10.8}
  \def\bboxBottom{-14}
  \def\bboxTop{7.8}

  \tikzset{
    edge/.style={
      draw=hadamardedgecolor,
      line width=\lw pt,
      dash pattern=on \don cm off \doff cm
    },
    tick/.style={
      draw=hadamardedgecolor,
      line width=\lw pt
    }
  }

  \pgfmathsetmacro{\cyaw}{cos(\yaw)}
  \pgfmathsetmacro{\syaw}{sin(\yaw)}
  \pgfmathsetmacro{\cpit}{cos(\pitch)}
  \pgfmathsetmacro{\spit}{sin(\pitch)}
  \pgfmathsetmacro{\dphidu}{-360*\turns/\L}
  \pgfmathsetmacro{\kdeg}{pi/180}

  \pgfmathsetmacro{\frontExtU}{\frontExtTurns*\L/\turns}
  \pgfmathsetmacro{\backExtU}{\backExtTurns*\L/\turns}
  \pgfmathsetmacro{\uLo}{-\frontExtU}
  \pgfmathsetmacro{\uHi}{\L+\backExtU}

  \pgfmathsetmacro{\fadeStart}{\rearFadeStartFrac*\L}

  \newcommand{\proj}[2]{%
    \pgfmathsetmacro{\uu}{(#1)-\pivotU}%
    \pgfmathsetmacro{\Yw}{\R*cos(#2)}%
    \pgfmathsetmacro{\Zw}{\R*sin(#2)}%
    \pgfmathsetmacro{\Qx}{\cyaw*\uu + \syaw*\Zw}%
    \pgfmathsetmacro{\Qy}{\spit*\syaw*\uu + \cpit*\Yw - \spit*\cyaw*\Zw}%
    \pgfmathsetmacro{\Qz}{-\cpit*\syaw*\uu + \spit*\Yw + \cpit*\cyaw*\Zw + \camdist}%
    \pgfmathsetmacro{\px}{\focal*\Qx/\Qz}%
    \pgfmathsetmacro{\py}{\focal*\Qy/\Qz}%
    \pgfmathsetmacro{\pz}{\Qz}%
  }

  \newcommand{\projaxis}[1]{%
    \pgfmathsetmacro{\uu}{(#1)-\pivotU}%
    \pgfmathsetmacro{\Qzz}{-\cpit*\syaw*\uu + \camdist}%
    \pgfmathsetmacro{\axx}{\focal*(\cyaw*\uu)/\Qzz}%
    \pgfmathsetmacro{\axy}{\focal*(\spit*\syaw*\uu)/\Qzz}%
  }

  \newcommand{\setvis}[2]{%
    \proj{#1}{#2}%
    \pgfmathsetmacro{\Ncx}{\syaw*sin(#2)}%
    \pgfmathsetmacro{\Ncy}{\cpit*cos(#2) - \spit*\cyaw*sin(#2)}%
    \pgfmathsetmacro{\Ncz}{\spit*cos(#2) + \cpit*\cyaw*sin(#2)}%
    \pgfmathtruncatemacro{\vis}{(\Ncx*\Qx+\Ncy*\Qy+\Ncz*\Qz)<0 ? 1 : 0}%
  }

  \newcommand{\projT}[1]{%
    \pgfmathsetmacro{\phi}{\dphidu*(#1)+\roll}%
    \proj{#1}{\phi}%
    \pgfmathsetmacro{\Ypw}{-\R*sin(\phi)*\kdeg*\dphidu}%
    \pgfmathsetmacro{\Zpw}{ \R*cos(\phi)*\kdeg*\dphidu}%
    \pgfmathsetmacro{\Qxp}{\cyaw + \syaw*\Zpw}%
    \pgfmathsetmacro{\Qyp}{\spit*\syaw + \cpit*\Ypw - \spit*\cyaw*\Zpw}%
    \pgfmathsetmacro{\Qzp}{-\cpit*\syaw + \spit*\Ypw + \cpit*\cyaw*\Zpw}%
    \pgfmathsetmacro{\tx}{\focal*(\Qxp*\Qz - \Qx*\Qzp)/(\Qz*\Qz)}%
    \pgfmathsetmacro{\ty}{\focal*(\Qyp*\Qz - \Qy*\Qzp)/(\Qz*\Qz)}%
  }

  \newcommand{\setwidth}{%
    \pgfmathsetmacro{\lw}{max(\edgemin,\edgewidth*\camdist/\pz)}%
  }

  \newcommand{\setdash}[1]{%
    \pgfmathsetmacro{\period}{(#1)/\dashpersegment}%
    \pgfmathsetmacro{\don}{max(.015,\period*\dashfrac)}%
    \pgfmathsetmacro{\doff}{max(.015,\period*(1-\dashfrac))}%
  }

  \newcommand{\fadeop}[1]{%
    \pgfmathsetmacro{\fadeT}{max(0,min(1,((#1)-\fadeStart)/(\uHi-\fadeStart)))}%
    \pgfmathsetmacro{\op}{%
      ((#1)<0)
        ? max(0,((#1)+\frontExtU)/\frontExtU)
        : (((#1)<\fadeStart)
            ? 1
            : max(0,1 - pow(\fadeT,\rearFadePower)))%
    }%
  }

  \newcommand{\helixarc}[3]{%
    \pgfmathsetmacro{\hh}{(#2)-(#1)}%

    \projT{#1}%
    \pgfmathsetmacro{\Pax}{\px}%
    \pgfmathsetmacro{\Pay}{\py}%
    \pgfmathsetmacro{\Cax}{\px+\hh/3*\tx}%
    \pgfmathsetmacro{\Cay}{\py+\hh/3*\ty}%

    \projT{#2}%
    \pgfmathsetmacro{\Pbx}{\px}%
    \pgfmathsetmacro{\Pby}{\py}%
    \pgfmathsetmacro{\Cbx}{\px-\hh/3*\tx}%
    \pgfmathsetmacro{\Cby}{\py-\hh/3*\ty}%

    \pgfmathsetmacro{\umid}{((#1)+(#2))/2}%
    \setvis{\umid}{\dphidu*\umid+\roll}%
    \setwidth%

    \pgfmathsetmacro{\seglen}{%
      veclen(\Cax-\Pax,\Cay-\Pay)
      + veclen(\Cbx-\Cax,\Cby-\Cay)
      + veclen(\Pbx-\Cbx,\Pby-\Cby)
    }%
    \setdash{\seglen}%

    \ifnum\vis=1
      \def\lay{L2}%
    \else
      \def\lay{L0}%
    \fi

    \begin{pgfonlayer}{\lay}
      \draw[edge,opacity=#3]
        (\Pax,\Pay) .. controls (\Cax,\Cay) and (\Cbx,\Cby) .. (\Pbx,\Pby);
    \end{pgfonlayer}
  }

  \newcommand{\taperedline}[3]{%
    \pgfmathtruncatemacro{\NS}{\linesegs}%
    \foreach \q in {0,...,\numexpr\NS-1\relax}{%
      \pgfmathsetmacro{\ula}{(#1)+((#2)-(#1))*\q/\NS}%
      \pgfmathsetmacro{\ulb}{(#1)+((#2)-(#1))*(\q+1)/\NS}%

      \proj{\ula}{#3}%
      \pgfmathsetmacro{\Sax}{\px}%
      \pgfmathsetmacro{\Say}{\py}%

      \proj{\ulb}{#3}%
      \pgfmathsetmacro{\Sbx}{\px}%
      \pgfmathsetmacro{\Sby}{\py}%

      \pgfmathsetmacro{\umm}{(\ula+\ulb)/2}%
      \proj{\umm}{#3}%
      \setwidth%
      \fadeop{\umm}%

      \pgfmathsetmacro{\seglen}{veclen(\Sbx-\Sax,\Sby-\Say)}%
      \setdash{\seglen}%

      \draw[edge,opacity=\op] (\Sax,\Say)--(\Sbx,\Sby);%
    }%
  }

  \pgfmathsetmacro{\duK}{\L/\turns/\knotsPerTurn}
  \pgfmathtruncatemacro{\Ntot}{round((\uHi-\uLo)/\duK)}

  \foreach \s in {0,...,\numexpr\Ntot-1\relax}{%
    \pgfmathsetmacro{\ua}{\uLo+\s*\duK}%
    \pgfmathsetmacro{\um}{\ua+\duK/2}%
    \fadeop{\um}%
    \helixarc{\ua}{\ua+\duK}{\op}%
  }

  \foreach \i in {0,...,\numexpr\nlines-1\relax}{%
    \pgfmathsetmacro{\phl}{360*\i/\nlines+\roll}%
    \setvis{\L/2}{\phl}%

    \ifnum\vis=1
      \def\lay{L2}%
    \else
      \def\lay{L0}%
    \fi

    \begin{pgfonlayer}{\lay}
      \taperedline{\uLo}{\uHi}{\phl}
    \end{pgfonlayer}
  }

  \newcommand{\placebn}[2]{%
    \pgfmathsetmacro{\uN}{\L*((#2)-(#1)/\nlines)/\turns}%

    \ifdim\uN pt<\uLo pt\relax
    \else
      \ifdim\uN pt>\uHi pt\relax
      \else
        \pgfmathsetmacro{\phl}{360*(#1)/\nlines+\roll}%
        \setvis{\uN}{\phl}%
        \projaxis{\uN}%
        \setwidth%
        \fadeop{\uN}%

        \pgfmathsetmacro{\nscl}{\nodescale*\camdist/\pz}%
        \pgfmathsetmacro{\dxn}{\px-\axx}%
        \pgfmathsetmacro{\dyn}{\py-\axy}%
        \pgfmathsetmacro{\nrm}{sqrt(\dxn*\dxn+\dyn*\dyn)+.0001}%
        \pgfmathsetmacro{\tln}{\tickfac*\camdist/\pz}%

        \ifnum\vis=1
          \def\lay{L3}%
        \else
          \def\lay{L1}%
        \fi

        \begin{pgfonlayer}{\lay}
          \draw[tick,opacity=\op]
            (\px,\py)--($(\px,\py)+(\tln*\dxn/\nrm,\tln*\dyn/\nrm)$);
          \node[style=bn,transform shape,scale=\nscl,opacity=\op] at (\px,\py) {};
        \end{pgfonlayer}
      \fi
    \fi
  }

  \pgfmathtruncatemacro{\kLo}{floor(\turns*\uLo/\L)-1}
  \pgfmathtruncatemacro{\kHi}{ceil(\turns*\uHi/\L)+1}

  \foreach \i in {0,...,\numexpr\nlines-1\relax}{%
    \foreach \k in {\kLo,...,\kHi}{%
      \placebn{\i}{\k}%
    }%
  }

  \newcommand{\axisdot}[1]{%
    \projaxis{#1}%
    \pgfmathsetmacro{\rr}{max(\dotmin,\dotrad*\camdist/\Qzz)}%
    \fill (\axx,\axy) circle[radius=\rr];%
  }

  \foreach \j in {0,...,\numexpr\ndots-1\relax}{%
    \axisdot{\frontDotStartU-\j*\dotgap}%
  }

  \pgfresetboundingbox
  \path[use as bounding box]
    (\bboxLeft,\bboxBottom) rectangle (\bboxRight,\bboxTop);

\end{tikzpicture}%
  }%
}
\par

Within this notation we derive generalised forms of local complementation and pivoting:
\[\begin{tikzpicture}
	\begin{pgfonlayer}{nodelayer}
		\node [style=none] (0) at (-10.3, 1) {};
		\node [style=none] (1) at (-1, 1) {};
		\node [style=none] (2) at (-1, -1) {};
		\node [style=none] (3) at (-10.3, -1) {};
		\node [style=gbn] (4) at (-5.75, 0) {};
		\node [style=none] (5) at (-3.5, 0) {};
		\node [style=gbn] (6) at (-1.5, 0.5) {};
		\node [style=none] (7) at (-1, -0.5) {};
		\node [style=none] (8) at (-2, -0.5) {};
		\node [style=wirelable] (10) at (-4.625, 0.25) {$m+n$};
		\node [style=none] (11) at (-2, 0.5) {};
		\node [style=wirelable] (12) at (-2, -0.25) {$n$};
		\node [style=wirelable] (13) at (-2, 0.25) {$m$};
		\node [style=bphase] (9) at (-8.3, 0) {$\phasemat{\vb{x} \\ \vb{y}}{X & \bar E \\ E^\trans & Y}$};
	\end{pgfonlayer}
	\begin{pgfonlayer}{edgelayer}
		\draw [style=background] (2.center)
			 to (1.center)
			 to (0.center)
			 to (3.center)
			 to [in=180, out=0] cycle;
		\draw [style=thick] (7.center)
			 to (8.center)
			 to [in=0, out=-180] (5.center)
			 to (4);
		\draw [style=thick] (6)
			 to (11.center)
			 to [in=0, out=180] (5.center)
			 to (4);
		\draw [style=rbphase] (9) to (4);
	\end{pgfonlayer}
\end{tikzpicture}
 = \begin{tikzpicture}
	\begin{pgfonlayer}{nodelayer}
		\node [style=none] (17) at (1, 1) {};
		\node [style=none] (18) at (8.5, 1) {};
		\node [style=none] (19) at (8.5, -1) {};
		\node [style=none] (20) at (1, -1) {};
		\node [style=gbn] (15) at (4.5, -0.5) {};
		\node [style=none] (16) at (8.5, -0.5) {};
		\node [style=bphase] (21) at (4.5, 0.375) {$\phasemat{\vb{y} + \bar E X^{\minu 1}\vb{x}}{Y - \bar E X^{\minu 1}E^\trans}$};
	\end{pgfonlayer}
	\begin{pgfonlayer}{edgelayer}
		\draw [style=background] (19.center)
			 to (18.center)
			 to (17.center)
			 to (20.center)
			 to [in=180, out=0] cycle;
		\draw [style=thick] (15) to (16.center);
		\draw [style=dbphase] (21) to (15);
	\end{pgfonlayer}
\end{tikzpicture}
.\]
These rules reduce every delayed ZX-diagram to a unique normal form, proving that \(\dStabZX\) is sound, universal, and complete for the generating tableau semantics.

\paragraph*{Contributions.}
The main contributions of this paper are as follows.
\begin{enumerate}
  \item We introduce the directed mixed stabilizer ZX-calculus (\Cref{sec:mixed}), which extends the mixed ZX-calculus of Carette et al.~\cite{Carette2021} with directed rewrite rules that capture when one diagram approximates another (\Cref{def:mixed_stabzx} and \Cref{prop:stabzxdisc_alrdisc_equiv}).
  \item We give a semantics for the behaviour of stateful quantum processes as sequences of finite approximations by completely positive maps between finite-dimensional Hilbert spaces, where two sequences are identified when they contain the same information content (\Cref{sec:beh}).
  \item We show that translation-invariant stabilizer processes embed faithfully into these behaviours of completely positive maps (\Cref{prop:unroll_obs_functor,cor:obs_faithful}), and that each such behaviour is determined by a single infinite stabilizer group, capturing the infinite flow of Pauli operators (\Cref{thm:lim_faithful} and \Cref{cor:unroll}).
  \item For translation-invariant processes, we give a finite generating-tableau semantics whose geometric-series expansion recovers the infinite stabilizer group (\Cref{sec:salr}). Composition is preserved only up to approximation: the composition of infinite stabilizer groups approximates the composition of tableaux (\Cref{thm:gamma_oplax} and \Cref{prop:gamma_ext_approx}).
  \item We introduce the delayed stabilizer ZX-calculus \(\dStabZX\) (\Cref{sec:dzx}), which extends the odd-prime-dimensional ZX-calculus with the delay generator. It provides translation-invariant analogues of the generators (\Cref{subsec:dzx_derived_generators}) and equations (\Cref{subsec:dzx_axioms}) of the stabilizer ZX-calculus.
  \item Using generalised forms of local complementation and pivoting, we reduce every \(\dStabZX\)-diagram to a unique normal form (\Cref{subsec:dzx_scalable_normal_form}, \Cref{prop:dzx_schur_complementation}, \Cref{def:dzx_reduced_ap_form}, and \Cref{lem:AP_unique}). This proves that \(\dStabZX\) is sound, universal, and complete for the generating tableau semantics (\Cref{thm:dzx_completeness}).
\end{enumerate}

  \newpage
  \tableofcontents
  \newpage

  \section{Pure stabilizer quantum mechanics}\label{sec:stab}
  \pratendSetLocal{category=stab, end, restate}
  Stabilizer quantum mechanics is the fundamental structure which underlies the vast majority of finite-dimensional quantum error correction. See for example the seminal and highly influential work of Gottesman~\cite{gottesman_stabilizer_1997}.  The stabilizer formalism of quantum error correction exploits the symplectic representation of stabilizer quantum mechanics in order to redundantly store, manipulate and extract quantum information.

In this section, we first recall the pure stabilizer subtheory of quantum mechanics in terms of linear maps between Hilbert spaces, followed by its symplectic representation in terms of \(\F_p\) linear algebra, ending with the  corresponding graphical language. To formally relate all three pictures of stabilizer quantum mechanics, we will construct them as instances of the following structure

\begin{definition}
  A \textbf{strict \(\dag\)-compact closed category} (\textbf{\dag-CCC}) is a strict symmetric monoidal category \((\mathcal{C},\otimes,I)\), with symmetry \(\swap_{X,Y}:X\otimes Y\to Y\otimes X\) drawn as \(\begin{tikzpicture}
	\begin{pgfonlayer}{nodelayer}
		\node [style=none] (7) at (0.5, 0.15) {};
		\node [style=none] (9) at (1.25, 0.15) {};
		\node [style=none] (10) at (0.5, -0.15) {};
		\node [style=none] (11) at (1.25, -0.15) {};
		\node [style=none] (18) at (1.25, 0.325) {};
		\node [style=none] (19) at (1.25, -0.325) {};
		\node [style=none] (20) at (0.5, 0.325) {};
		\node [style=none] (21) at (0.5, -0.325) {};
	\end{pgfonlayer}
	\begin{pgfonlayer}{edgelayer}
		\draw [in=-180, out=0, looseness=1.50] (7.center) to (11.center);
		\draw [in=0, out=-180, looseness=1.50] (9.center) to (10.center);
		\draw [style=backgroundborderless] (18.center)
			 to (19.center)
			 to (21.center)
			 to (20.center)
			 to cycle;
		\draw [style=backgroundborderless] (20.center) to (18.center);
		\draw [style=backgroundborderless] (18.center) to (19.center);
		\draw [style=backgroundborderless] (19.center) to (21.center);
		\draw [style=backgroundborderless] (21.center) to (20.center);
	\end{pgfonlayer}
\end{tikzpicture}
\), equipped with:
  \begin{itemize}
    \item a \textbf{dagger}: an identity-on-objects involutive contravariant strict monoidal functor \((-)^\dag:\mathcal{C}^{\mathrm{op}}\to\mathcal{C}\) that is compatible with the symmetry, in that \(\swap_{X,Y}^\dag=\swap_{Y,X}\);
    \item for each object \(X\) a \textbf{dual} \(X^\ast\), with \(I^\ast=I\) and \((X\otimes Y)^\ast=Y^\ast\otimes X^\ast\), together with a \textbf{cup} \(\cup_X\coloneqq\begin{tikzpicture}
	\begin{pgfonlayer}{nodelayer}
		\node [style=none] (7) at (0.875, 0.175) {};
		\node [style=none] (18) at (0.5, 0.3) {};
		\node [style=none] (19) at (0.5, -0.3) {};
		\node [style=none] (20) at (1.25, 0.3) {};
		\node [style=none] (21) at (1.25, -0.3) {};
		\node [style=none] (22) at (0.875, -0.175) {};
		\node [style=none] (23) at (1.25, 0.175) {};
		\node [style=none] (24) at (1.25, -0.175) {};
	\end{pgfonlayer}
	\begin{pgfonlayer}{edgelayer}
		\draw (24.center)
			 to (22.center)
			 to [in=180, out=180, looseness=1.50] (7.center)
			 to (23.center);
		\draw [style=backgroundborderless] (19.center)
			 to (18.center)
			 to (20.center)
			 to (21.center)
			 to cycle;
		\draw [style=backgroundborderless] (18.center) to (20.center);
		\draw [style=backgroundborderless] (20.center) to (21.center);
		\draw [style=backgroundborderless] (21.center) to (19.center);
		\draw [style=backgroundborderless] (19.center) to (18.center);
	\end{pgfonlayer}
\end{tikzpicture}
:I\to X\otimes X^\ast\) and a \textbf{cap} \(\cap_X\coloneqq\begin{tikzpicture}
	\begin{pgfonlayer}{nodelayer}
		\node [style=none] (7) at (0.875, 0.175) {};
		\node [style=none] (18) at (1.25, 0.3) {};
		\node [style=none] (19) at (1.25, -0.3) {};
		\node [style=none] (20) at (0.5, 0.3) {};
		\node [style=none] (21) at (0.5, -0.3) {};
		\node [style=none] (22) at (0.875, -0.175) {};
		\node [style=none] (23) at (0.5, 0.175) {};
		\node [style=none] (24) at (0.5, -0.175) {};
	\end{pgfonlayer}
	\begin{pgfonlayer}{edgelayer}
		\draw (24.center)
			 to (22.center)
			 to [in=0, out=0, looseness=1.50] (7.center)
			 to (23.center);
		\draw [style=backgroundborderless] (20.center)
			 to (18.center)
			 to (19.center)
			 to (21.center)
			 to cycle;
		\draw [style=backgroundborderless] (20.center) to (18.center);
		\draw [style=backgroundborderless] (18.center) to (19.center);
		\draw [style=backgroundborderless] (19.center) to (21.center);
		\draw [style=backgroundborderless] (21.center) to (20.center);
	\end{pgfonlayer}
\end{tikzpicture}
:X^\ast\otimes X\to I\) satisfying \(\cap_X=(\cup_X)^\dag\circ\swap_{X^\ast,X}\).
  \end{itemize}
  Modulo the following equations allowing us to bend wires and slide maps around freely:
  \[\input{stab/strings/coherences.tikz}\]
\end{definition}

\subsection{The stabilizer formalism}
\label{section:stabilizer:hilbert}
We first recall the basic theory of quopit stabilizer quantum mechanics (see Gross~\cite{Gross} for reference).

Fix an odd prime \(p\), and write \(\F_p\) for the field of integers modulo \(p\). Denote the quopit Hilbert space by \(\mathcal{H}_p\coloneqq \Span\{\ket{x}\mid x\in\F_p\}\) and the character function by \(\chi(x)\coloneqq \exp(i2\pi x/p)\).

\begin{definition}
The odd prime  \textbf{quopit Pauli group}  \(\mathcal{P}_p\subseteq \mathcal{U}(\mathcal{H}_p)\) is  generated by the operators
 \[Z \ket{x}\coloneqq \chi(x) \ket{x}\qquad\text{and}\qquad X \ket{x} \coloneqq \ket{x+1}.\]
\end{definition}

The \emph{stabilizer formalism} of quantum error correction begins with the following observation:
\begin{lemma}
   Take a maximal Abelian subgroup \(S \subseteq \mathcal{P}_p^{\otimes n}\)
   such that \(\chi(a) 1_{\mathcal{H}_p}^{\otimes n} \in S\) if and only if
   \(a \equiv 0\mod p\). Up to global phase \(\exp(2\pi i \theta)\),  \(S\) uniquely determines a pure quantum
   state \(\ket{S} \in \mathcal{H}_p^{\otimes n}\) such that \(s \ket{S} =
   \ket{S}\) for all \(s \in S\), called a \textbf{stabilizer state} associated to the \textbf{stabilizer group} \(S\).
\end{lemma}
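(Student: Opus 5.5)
The plan is to realise the joint $+1$ eigenspace of $S$ as the image of the group-average projector and show it has dimension one. Define
\[
P_S \coloneqq \frac{1}{|S|}\sum_{s\in S} s .
\]
Because $S$ is a finite group, $P_S t = t P_S = P_S$ for every $t\in S$. Hence $P_S^2=P_S$, and, since $S$ is closed under inverses and each $s$ is unitary, $P_S^\dag = P_S$. So $P_S$ is the orthogonal projector onto the common fixed space $V_S=\{\psi \mid s\psi=\psi\ \forall s\in S\}$. Its rank is
\[
\dim V_S=\Tr P_S = \frac{1}{|S|}\sum_{s\in S}\Tr(s).
\]

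To evaluate this trace, write every element of $\Pauli$ as $\chi(a)X^{\vb{x}}Z^{\vb{z}}$ with $a\in\F_p$ and $(\vb{x},\vb{z})\in\F_p^{2n}$, using $p$ odd. A direct computation shows $\Tr(X^{\vb{x}}Z^{\vb{z}})=p^n$ when $(\vb{x},\vb{z})=0$ and $0$ otherwise. The hypothesis says the only scalar in $S$ is $1$, so the identity term contributes $p^n$ and every other term contributes $0$. Therefore $\Tr P_S=p^n/|S|$, and it remains to show $|S|=p^n$.

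For the order of $S$, consider the map $S\to\F_p^{2n}$ sending $s$ to its exponent vector $(\vb{x},\vb{z})$. It is a homomorphism by the Weyl relations, and it is injective because its kernel consists of the scalars in $S$, which is trivial. The commutation relation $X^{\vb{x}}Z^{\vb{z}}X^{\vb{x}'}Z^{\vb{z}'}=\chi(\vb{z}\cdot\vb{x}'-\vb{x}\cdot\vb{z}')X^{\vb{x}'}Z^{\vb{z}'}X^{\vb{x}}Z^{\vb{z}}$ shows that $S$ is Abelian exactly when its image $L$ is isotropic for the standard symplectic form.

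Maximality then forces $L$ to be Lagrangian. If $L$ were not maximal isotropic, choose $v\in L^{\perp}\setminus L$. The operator $X^{\vb{x}}Z^{\vb{z}}$ for $v=(\vb{x},\vb{z})$ commutes with all of $S$. Because $p$ is odd we can also multiply by a phase $\chi(c)$ so that the enlarged group still contains no nontrivial scalars; the $p$-th power of $X^{\vb{x}}Z^{\vb{z}}$ equals $1$ for $p$ odd. This contradicts maximality. Hence $\dim L=n$, so $|S|=p^n$ and $\Tr P_S=1$. The common fixed space is then a single ray $\ket{S}$, unique up to a global phase.

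I expect the main obstacle to be interpreting ``maximal'' correctly, since adding a scalar would break the phase condition. Maximality must be read among Abelian subgroups satisfying the scalar condition, and the step where $v$ is adjoined has to keep the new group free of nontrivial scalars. I will check this using the canonical choice of phase in the symmetric Weyl representation available for odd $p$.
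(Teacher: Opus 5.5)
Your proof is correct. The paper gives no proof of this lemma. It is recalled as standard background with a pointer to Gross. The closest result in the paper is the subsequent proposition, also imported from Gross, which identifies stabilizer subgroups with nonempty affine Lagrangian subspaces $L+\vb a$ via $\nu$.

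Your argument is the usual self-contained one and fits that picture:
\begin{itemize}
\item the injective homomorphism $s\mapsto(\vb x,\vb z)$ recovers the linear part $L$;
\item maximality makes $L$ Lagrangian. You rightly read maximality as being among scalar-free Abelian subgroups: under the literal reading, a maximal Abelian subgroup contains the whole centre $\{\chi(a)1_{\mathcal H_p}^{\otimes n}\}$, so the hypothesis could never be met.
\end{itemize}
The advantage of your projector-and-trace count is that you never track the phases of the elements of $S$, which is the data the affine shift $\vb a$ records in the paper's correspondence. The phases enter $\Tr P_S$ only through the scalar condition, and $\Tr P_S=p^n/|S|=1$ gives existence and uniqueness of $\ket{S}$ at once.

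For the step you flagged as the main obstacle, you need no choice of phase and no appeal to the symmetric Weyl representation. Put $W=X^{\vb x}Z^{\vb z}$ with $v=(\vb x,\vb z)\in L^{\perp}\setminus L$, and write $\phi(s)$ for the exponent vector of $s$.
\begin{itemize}
\item Since $(X^{\vb x}Z^{\vb z})^k=\chi\bigl(\tbinom{k}{2}\,\vb z\cdot\vb x\bigr)X^{k\vb x}Z^{k\vb z}$, you get $W^p=1$ for odd $p$.
\item Because $W$ commutes with $S$, for every $c\in\F_p$ each element of $\langle S,\chi(c)W\rangle$ has the form $s\,(\chi(c)W)^k$ with $s\in S$ and $0\le k<p$.
\item Its exponent vector $\phi(s)+kv$ vanishes only if $kv\in L$. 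For $0<k<p$ this would force $v\in L$, since $k$ is invertible modulo $p$.
\item Hence any scalar in the enlarged group has $k=0$, so it lies in $S$ and equals $1$.
\end{itemize}
So the enlarged group is a strictly larger scalar-free Abelian subgroup for any $c$, including $c=0$. Writing out these few lines closes the step completely.
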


\begin{example} Consider the following stabilizer states and their stabilizer groups:
\begin{itemize}
  \item The computational basis state \(\ket{a}\) is the stabilizer state of \(\langle \chi(-a)Z\rangle\subseteq \mathcal{P}_p\).
  \item The Hadamard basis state \(\ket{+}\coloneqq \sfrac{1}{\sqrt{p}}\sum_{x\in\F_p}\ket{x}\) is the stabilizer state of \(\langle X\rangle\subseteq \mathcal{P}_p\).
  \item The Bell state \(\sfrac{1}{\sqrt{p}}\sum_{x\in\F_p}\ket{x}\ket{x}\) is the stabilizer state of \(\langle X\otimes X,\; Z\otimes Z^{\minu 1}\rangle\subseteq \mathcal{P}_p^{\otimes 2}\).
\end{itemize}
\end{example}

Stabilizer states and their adjoints assemble themselves into a \dag-CCC.

\begin{definition}
The \(\dag\)-CCC \(\Stab\) of quopit stabilizer maps is the \(\dag\)-CC subcategory of \(\FHilb\) generated by stabilizer states, Clifford operators, and the scalars \(\sfrac{1}{\sqrt{p}}\) and \(\sqrt{p}\), under tensor product, composition, and dagger.
\end{definition}

\subsection{The symplectic representation}
\label{section:stabilizer:symplectic}

The stabilizer formalism can be reformulated using
finite-dimensional symplectic linear algebra over \(\Zp\). We recall
the essential definitions and statements here. Throughout this section
we fix the standard symplectic vector space \((\Zp^{2n},\omega_n)\),
where \( \omega_n\bigl((\vb{x},\vb{z}),(\vb{x}',\vb{z}')\bigr) \coloneqq
\vb{x}^\trans\vb{z}'-\vb{z}^\trans\vb{x}' \).

\begin{definition}
A \textbf{symplectic vector space} \((V,\omega_V)\) is a finite-dimensional \(\F_p\)-vector space \(V\) equipped with a non-degenerate alternating bilinear form \(\omega_V:V\times V\to\F_p\).
\end{definition}

The Pauli group itself admits a symplectic description:

\begin{lemma}
Equipping \(\Zp\oplus\Zp^{2n}\) with the multiplication \(
(a,\vb{v})*(b,\vb{w})\coloneqq
(a+b-2^{\minu 1}\omega_n(\vb{v},\vb{w}),\,\vb{v}+\vb{w})\), there is a group isomorphism \(\nu:(\Zp\oplus\Zp^{2n}, *, \vb 0)\cong(\Pauli, \cdot, I)\) given by
\[(a,(\vb{x},\vb{z}))\mapsto
\chi\!(a+2^{\minu 1}\vb{z}^\trans\vb{x})\bigotimes_{k=1}^n X^{x_k}Z^{z_k}.\]
\end{lemma}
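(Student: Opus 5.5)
The plan is a direct computation: check that \(\nu\) is well defined, then a homomorphism, then a bijection. The only tools are the commutation relation \(ZX = \chi(1) XZ\) in \(\mathcal{P}_p\) and the invertibility of \(2\) modulo the odd prime \(p\).

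\emph{Well-definedness and the single-qudit rule.} Since \(\chi\) is \(p\)-periodic, \(\nu\) is well defined on \(\F_p\). Because \(p\) is odd, \(2^{\minu 1}\) exists in \(\F_p\), so the phase \(\chi(a+2^{\minu 1}\vb z^\trans\vb x)\) makes sense. On one qudit, \(ZX\ket{x}=\chi(x+1)\ket{x+1}=\chi(1)XZ\ket{x}\). By induction this gives \(Z^{z}X^{x'}=\chi(zx')X^{x'}Z^{z}\), and hence
\[
X^{x}Z^{z}X^{x'}Z^{z'}=\chi(zx')\,X^{x+x'}Z^{z+z'}.
\]
Tensoring over the \(n\) factors, the product of the two Pauli strings is \(\chi(\vb z^\trans\vb x')\) times the string indexed by \(\vb v+\vb w\).

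\emph{Homomorphism.} Compare the phase of \(\nu(a,\vb v)\nu(b,\vb w)\) with that of \(\nu((a,\vb v)*(b,\vb w))\), writing \(\vb v=(\vb x,\vb z)\) and \(\vb w=(\vb x',\vb z')\). The exponent of the first is
\[
a+b+2^{\minu 1}(\vb z^\trans\vb x+\vb z'^\trans\vb x')+\vb z^\trans\vb x'.
\]
The exponent of the second is
\[
a+b-2^{\minu 1}(\vb x^\trans\vb z'-\vb z^\trans\vb x')+2^{\minu 1}(\vb z+\vb z')^\trans(\vb x+\vb x').
\]
Expanding the second, the cross terms give \(2^{\minu 1}(\vb z^\trans\vb x'+\vb z'^\trans\vb x)-2^{\minu 1}\vb x^\trans\vb z'+2^{\minu 1}\vb z^\trans\vb x'=\vb z^\trans\vb x'\). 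So the two exponents agree and \(\nu\) is a homomorphism.

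The identity is respected because \(\nu(0,\vb 0)=I\). The operation \(*\) is associative, since \(\omega_n\) is bilinear and the group axioms reduce to those of a central extension by a bilinear cocycle. Alternatively, associativity can be pulled back along \(\nu\) once injectivity is established.

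\emph{Bijectivity.} For surjectivity, \(\mathcal{P}_p^{\otimes n}\) is generated by the local \(X\) and \(Z\) operators. Their products are therefore phases \(\chi(c)\) times Pauli strings \(\bigotimes_k X^{x_k}Z^{z_k}\), after reordering with the commutation rule. All of these lie in the image, via \(a=c-2^{\minu 1}\vb z^\trans\vb x\).

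For injectivity, suppose \(\nu(a,\vb v)=I\). The operator \(\bigotimes_k X^{x_k}Z^{z_k}\) maps \(\ket{\vb 0}\) to \(\ket{\vb x}\), so \(\vb x=\vb 0\). Then \(Z^{\vb z}\) acting as a scalar on all basis states forces \(\vb z=\vb 0\). Finally \(\chi(a)=1\) forces \(a=0\), since \(\chi\) is injective on \(\F_p\).

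The main obstacle is the sign convention in the phase bookkeeping. The argument depends on matching the order \(X^xZ^z\) against the convention \(\omega_n=\vb x^\trans\vb z'-\vb z^\trans\vb x'\). If the check fails, it will be by a sign, meaning the correct factor is \(+2^{\minu 1}\omega_n\) or the phase should use \(-\vb z^\trans\vb x\). I would first verify the case \(n=1\) with \(\vb v=(1,0)\) and \(\vb w=(0,1)\).
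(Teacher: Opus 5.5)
Your proof is correct. The phase bookkeeping closes up with exactly the stated conventions: the ordering \(X^{x}Z^{z}\), \(\omega_n(\vb v,\vb w)=\vb x^\trans\vb z'-\vb z^\trans\vb x'\), and the correction \(-2^{-1}\omega_n\). So the hedge in your last paragraph is unnecessary. For \(n=1\), \((0,(1,0))*(0,(0,1))=(-2^{-1},(1,1))\mapsto XZ\), while the reverse product gives \((2^{-1},(1,1))\mapsto\chi(1)XZ=ZX\), as it should.

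The paper gives no proof of this lemma; it is recalled as standard background from the stabilizer formalism. A direct verification like yours is therefore the expected argument.

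Three small tidy-ups:
\begin{enumerate}
\item To see that \(\nu\) lands in \(\mathcal{P}_p^{\otimes n}\) at all, note that the commutator \(ZXZ^{-1}X^{-1}=\chi(1)I\) puts every scalar \(\chi(c)I\) in the group. Also, \(X^p=Z^p=I\) is what makes exponents \(x_k,z_k\in\F_p\) meaningful.
\item Surjectivity is immediate once \(\nu\) is a homomorphism. Its image is a subgroup containing \(\nu(0,(\vb e_k,\vb 0))\) and \(\nu(0,(\vb 0,\vb e_k))\), where \(\vb e_k\) is the \(k\)th standard basis vector of \(\F_p^n\). These are exactly \(X\) and \(Z\) acting on the \(k\)th tensor factor.
\item Discard the alternative of pulling associativity back along \(\nu\) ``once injectivity is established''. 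Your injectivity argument is a trivial-kernel argument, and a trivial kernel yields injectivity only once the domain is already known to be a group. Keep the direct bilinearity argument, with inverse \((-a,-\vb v)\) since \(\omega_n(\vb v,-\vb v)=0\). Alternatively, prove injectivity directly by comparing the actions of \(\nu(a,\vb v)\) and \(\nu(b,\vb w)\) on every basis state \(\ket{\vb y}\).
\end{enumerate}
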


Two Pauli operators \(X^{x}Z^{z}\) and \(X^{x'}Z^{z'}\) commute precisely when \(\omega_n((\vb{x},\vb{z}),(\vb{x}',\vb{z}'))=0\), which allows us to characterise stabilizer groups in terms of symplectic linear algebra:

\begin{definition}
Given a linear subspace \(S\) of a symplectic vector space \((V, \omega)\), its \textbf{symplectic complement} is the space \(S^\omega \coloneqq \{\vb{v} \in V \mid \forall \vb{s} \in S,\, \omega(\vb{v},\vb{s}) = 0\}\). A linear subspace \(S\subseteq (V, \omega)\) is:
  \textbf{isotropic} if \(S \subseteq S^\omega\),
  \textbf{coisotropic} if \(S^\omega \subseteq S\),
  \textbf{Lagrangian} if \(S = S^\omega\).
An affine subspace is isotropic, coisotropic, Lagrangian if its linear component\footnote{The \textbf{linear component} of a non-empty affine subspace \(S \subseteq \Zp^n\) is the linear subspace \(\{\vb{x}-\vb{y} \mid \vb{x},\vb{y} \in S\}\subseteq \Zp^n\).} is.
\end{definition}

Under this representation, stabilizer states correspond to affine Lagrangian subspaces:

\begin{proposition}[{\cite[Lem.~8]{Gross}}]
\label{prop:stabilizer_group_symplectic}
Nonempty affine Lagrangian subspaces $L+\vb{a}\subseteq(\Zp^{2n},\omega_n)$
are in bijection with stabilizer subgroups of $\Pauli$ via
\(
L+\vb{a}\mapsto
\{\nu(\omega_n(\vb{a},\vb{b}),\vb{b})\mid \vb{b}\in L\}.
\)
\end{proposition}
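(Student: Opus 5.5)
The plan is to show three things in turn. First, the map lands in stabilizer subgroups. Second, it is injective. Third, it is surjective. Throughout I work with the group isomorphism \(\nu\) from the preceding lemma, with product \((a,\vb v)*(b,\vb w)=(a+b-2^{\minu 1}\omega_n(\vb v,\vb w),\vb v+\vb w)\). Here ``stabilizer subgroup'' means a maximal Abelian subgroup \(S\subseteq\Pauli\) containing no nontrivial scalar \(\chi(a)I\), as in the stabilizer lemma.

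\textbf{Well-definedness.} Fix \(L+\vb a\) and set \(S\coloneqq\{\nu(\omega_n(\vb a,\vb b),\vb b)\mid \vb b\in L\}\). First I check the set does not depend on the representative: replacing \(\vb a\) by \(\vb a+\vb l\) with \(\vb l\in L\) changes the phase by \(\omega_n(\vb l,\vb b)=0\), since \(L\) is Lagrangian.

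Next I check closure. Compute
\[(\omega_n(\vb a,\vb b),\vb b)*(\omega_n(\vb a,\vb b'),\vb b')=(\omega_n(\vb a,\vb b+\vb b')-2^{\minu 1}\omega_n(\vb b,\vb b'),\vb b+\vb b'),\]
and isotropy kills the correction term. So \(\vb b\mapsto(\omega_n(\vb a,\vb b),\vb b)\) is a group homomorphism \(L\to\Zp\oplus\Zp^{2n}\), and \(S\) is a subgroup isomorphic to \(L\).

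Commutativity follows because the commutator of \(\nu(\cdot,\vb v)\) and \(\nu(\cdot,\vb w)\) is the scalar \(\chi(-\omega_n(\vb v,\vb w))\) (up to sign convention), which is trivial on \(L\). The only scalar in \(S\) has \(\vb b=0\), and then its phase is \(\omega_n(\vb a,0)=0\).

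For maximality, suppose \(g=\nu(c,\vb v)\) commutes with all of \(S\). Then \(\vb v\in L^\omega=L\). Hence \(g\) equals the element of \(S\) over \(\vb v\) times a scalar \(\chi(c')\). So \(S\) is maximal among Abelian subgroups avoiding nontrivial scalars. Adding \(g\) with \(c'\neq 0\) would bring in the scalar, which is exactly the meaning of maximality in the lemma.

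\textbf{Injectivity.} The projection of \(S\) to \(\Zp^{2n}\) recovers \(L\). The phase function \(\vb b\mapsto\omega_n(\vb a,\vb b)\) on \(L\) then determines \(\vb a\) modulo \(L^\omega=L\), because \(\omega_n\) is non-degenerate. So \(\vb a+L\) is recovered.

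\textbf{Surjectivity.} Given a stabilizer group \(S\), let \(L\) be its image under \(\nu^{\minu 1}\) followed by projection to \(\Zp^{2n}\). The projection is injective on \(S\), since \(S\) contains no nontrivial scalars, so \(S\) is the graph of a phase function \(f:L\to\Zp\).

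\begin{itemize}
\item Commutativity gives \(\omega_n(\vb b,\vb b')=0\) on \(L\), so \(L\) is isotropic.
\item Maximality gives \(L=L^\omega\). Any \(\vb v\in L^\omega\setminus L\) would let \(\nu(0,\vb v)\) be adjoined, giving a larger Abelian group that still has no nontrivial scalars, since its projection is still injective on the span.
\item The group law together with isotropy makes \(f\) additive, i.e.\ \(\Zp\)-linear, because \(p\) is prime.
\item Non-degeneracy of \(\omega_n\) lets us extend \(f\) to a linear functional on \(\Zp^{2n}\) and write it as \(\omega_n(\vb a,-)\) for some \(\vb a\). Then \(S\) is the image of \(L+\vb a\).
\end{itemize}

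\textbf{Main obstacle.} The delicate step is making the maximality argument match the stabilizer lemma's exact notion of maximal Abelian subgroup. A related point is checking that adjoining \(\nu(0,\vb v)\) for \(\vb v\in L^\omega\) really yields an Abelian group with no nontrivial scalars. This needs the homomorphism computation above on \(L+\Zp\vb v\), which is isotropic. The remaining steps are routine bookkeeping with the factor \(2^{\minu 1}\), which exists because \(p\) is odd.
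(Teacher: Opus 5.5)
Your proof is correct. The paper gives no argument of its own here, only the citation to Gross's Lemma~8. That source phrases the correspondence in terms of the stabilizer states themselves: they are the common eigenvectors of the Weyl operators over a Lagrangian subspace $L$, with eigenvalues given by a character of $L$, and characters of $L$ match cosets $\vb a+L$ through non-degeneracy of $\omega_n$.

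You instead stay entirely inside the group $(\Zp\oplus\Zp^{2n},*)$. Isotropy makes $\vb b\mapsto(\omega_n(\vb a,\vb b),\vb b)$ a homomorphism, and the centralizer of $S$ is $S$ times scalars because $L^\omega=L$. Conversely, a scalar-free Abelian $S$ is the graph of an additive phase function on an isotropic subspace. Maximality forces that subspace to be Lagrangian, and non-degeneracy lets you write the phase function as $\omega_n(\vb a,-)$.

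Your route gives a self-contained, purely algebraic proof of exactly the stated bijection, with no Hilbert space needed. Gross's route additionally gives existence and uniqueness of the stabilized state $\ket{S}$, which the paper uses in its preceding lemma but which the proposition does not require.

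Two points are worth keeping explicit. First, you read ``maximal'' as maximal among scalar-free Abelian subgroups. This is the only sensible reading, since a literally maximal Abelian subgroup always contains the scalars $\chi(a)I$, which would make the paper's stabilizer lemma vacuous. Second, the step you flag does go through: $\langle S,\nu(0,\vb v)\rangle$ is the graph of the extension of $f$ by zero on $\Zp\vb v$, over the isotropic space $L+\Zp\vb v$. This extension is well defined because $L\cap\Zp\vb v=0$.
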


Affine Lagrangian subspaces are better known in the quantum information literature by their tableaux.

\begin{definition}
A \textbf{stabilizer tableau} consists of a full-rank matrix \(H\in \F_p^{n\times 2n}\) and a vector \(\vb{a}\in \F_p^{2n}\) such that for any two rows \(\vb{h}_i,\vb{h}_j\) of \(H\), \(\omega_n(\vb{h}_i,\vb{h}_j)=0\). The coset \(\vb{a}+\ker(H)\subseteq(\F_p^{2n},\omega_n)\) is then an affine Lagrangian subspace, and so by \Cref{prop:stabilizer_group_symplectic} corresponds to a stabilizer state.
\end{definition}

\begin{example}Revisiting the previous example, we give the tableaux of the corresponding stabilizer states:

\begin{itemize}
  \item The computational basis state \(\ket{a}\), is stabilized by \(\chi(-a)Z\), therefore it has tableau \(H = \left[\begin{array}{c|c} 1 & 0 \end{array}\right]\) with \(\vb{a} = (a, 0)\), so that \(\vb{a} + \ker(H) = \{(a, z) \mid z\in \F_p\}\).
  \item The Hadamard basis state \(\ket{+}\coloneqq \sfrac{1}{\sqrt{p}} \sum_{x\in\F_p}\ket{x}\), is stabilized by \(X\), therefore it has a tableau \(H = \left[\begin{array}{c|c} 0 & 1 \end{array}\right]\) with \(\vb{a} = 0\), so that \(\vb{a} + \ker(H) = \{(x, 0) \mid x\in \F_p\}\).
  \item The Bell state \(\sfrac{1}{\sqrt{p}}\sum_{x\in\F_p}\ket{x}\ket{x}\) is stabilized by the commuting paulis \(X\otimes X\) and \(Z\otimes Z^{\minu 1}\), therefore it has a tableau
  \[
    H \;=\; \left[\begin{array}{cc|cc}
    1 & -1 & 0 & 0\\
    0 & 0  & 1 & 1
    \end{array}\right]
    \qquad\text{with}\qquad \vb{a} = 0,
  \]
  so that \(\vb{a} + \ker(H) = \{(x, x, z, -z) \mid x, z \in \F_p\}\).
\end{itemize}
\end{example}

Stabilizer maps are in turn represented by affine Lagrangian relations, composed via the usual relational composition following \cite{Weinstein1982}:

\begin{definition}
The category \(\ALR\) of affine Lagrangian relations has:
\begin{itemize}
  \item \textbf{objects}: finite-dimensional symplectic vector spaces \((V,\omega_V)\);
  \item \textbf{morphisms \((V,\omega_V)\to(W,\omega_W)\)}: affine Lagrangian subspaces of \((V,-\omega_V)\times (W,\omega_W)\);
  \item \textbf{composition}: given by relational composition, so that for \(R\subseteq V\times W\) and \(S\subseteq W\times U\),
  \[
    S\circ R \;\coloneqq\; \{(\vb{v}, \vb{u}) \mid \exists\, \vb{w}\in W:\, (\vb{v},\vb{w})\in R \text{ and } (\vb{w},\vb{u})\in S\};
  \]
  \item \textbf{symmetric monoidal structure}: given by the direct sum;
  \item \textbf{\(\dag\)-CC structure}: the dagger is given by the relational converse; whereas  the cups and caps are induced by the symplectic dual \((V,\omega_V)^*\coloneqq(V,-\omega_V)\).
\end{itemize}
\end{definition}

In order to state the equivalence between the Hilbert and symplectic pictures of stabilizer quantum mechanics, we need to quotient stabilizer maps by scalars:
\begin{definition}
Given a symmetric monoidal category \(\mathcal{C}\), let \(\Proj(\mathcal{C})\), denote the symmetric monoidal category of \textbf{projective maps} in \(\mathcal{C}\). This has the same objects as \(\mathcal{C}\), where the morphisms are given by equivalence classes of morphisms \([f]\) in \(\mathcal{C}\), so that \([f]=[g]\) if and only if there exists some invertible scalar \(\lambda:I\to I\) in \(\mathcal{C}\) such that \(\lambda f=g\).
\end{definition}

After applying this quotient,  the composition of stabilizer maps and their tableaux agree:
\begin{theorem}[{\cite{neretin_lectures_2011,comfort_graphical_2021}}]
  \label{theorem:stab_acr}
  There is a \dag-CC equivalence \(\Proj(\Stab)\simeq \ALR\).
\end{theorem}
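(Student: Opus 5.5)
We will construct an explicit functor \(F:\Stab\to\ALR\), show that it factors through \(\Proj(\Stab)\), and then check that the induced functor is a \dag-CC equivalence: essentially surjective, full and faithful.

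\textbf{Construction of \(F\).} We use map–state duality, which is available in both categories since each is a \dag-CCC. A stabilizer map \(f:\mathcal{H}_p^{\otimes n}\to\mathcal{H}_p^{\otimes m}\) corresponds, by bending its inputs with cups, to a stabilizer state \(\ket{f}\in\mathcal{H}_p^{\otimes(n+m)}\). Care is needed for maps with zero output; these are the subtle cases where the "state" vanishes, and we handle them below. By \Cref{prop:stabilizer_group_symplectic}, the stabilizer group of \(\ket{f}\) determines an affine Lagrangian subspace of \(\F_p^{2(n+m)}\).

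We then apply the symplectic sign convention. Transposing the input legs corresponds to the symplectic dual \((V,-\omega_V)\), so the subspace is an affine Lagrangian subspace of \((V,-\omega_V)\times(W,\omega_W)\), that is, a morphism of \(\ALR\). Concretely, \((\vb a,\vb b)\in F(f)\) iff \(f\,P_{\vb a}^\dag \propto P_{\vb b} f\) up to the phase recorded by the affine part, where \(P\) denotes Pauli operators via \(\nu\). This description makes it clear that \(F\) is invariant under nonzero scalars.

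\textbf{Functoriality.} We first check \(F\) on generators: computational and Hadamard states, the Clifford generators (Fourier, phase and multiplier gates), cups and caps, and the scalars. Each computation is routine Pauli conjugation. Preservation of \(\otimes\), dagger, cups and caps is immediate from the Pauli-conjugation description.

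The real content is composition. We need \(F(g\circ f)=F(g)\circ F(f)\), with relational composition on the right. The inclusion \(\supseteq\) follows by chaining the conjugation relations. Equality then follows from a dimension and Lagrangian-maximality argument: the relational composite of affine Lagrangian relations is again affine Lagrangian whenever it is nonempty, and the inclusion \(\supseteq\) forces equality of affine Lagrangians.

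The composition step is the main obstacle, because of affine emptiness. When \(g\circ f=0\), for example \(\bra{1}\ket{0}\), the relational composite is empty, so the zero map must be sent to the empty relation. We therefore treat \(\ALR\) as including the empty relation, or else check that the paper's convention matches this. We then show that \(g\circ f=0\) exactly when the affine parts are inconsistent; this is a character-sum argument on the Pauli phases. The linear parts are always composable, which is Weinstein's classical result, so only the affine shifts can fail to match.

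\textbf{Equivalence.} Essential surjectivity on objects holds because every finite-dimensional symplectic space over \(\F_p\) is isomorphic to \((\F_p^{2n},\omega_n)\). Fullness uses the bijection of \Cref{prop:stabilizer_group_symplectic}: every affine Lagrangian subspace is the image of some stabilizer state, and hence, by bending wires, of some stabilizer map. Faithfulness modulo scalars holds because a stabilizer state is determined up to global phase by its stabilizer group, and the scalars \(\sqrt p^{\pm1}\) together with phases account for the remaining freedom, so \(\Proj\) quotients exactly this freedom. Finally, the \dag-CC structure is preserved strictly on generators, which gives a \dag-CC equivalence \(\Proj(\Stab)\simeq\ALR\).
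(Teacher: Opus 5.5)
The paper gives no argument for this statement, only citations. One is to Neretin's Weil-representation approach, in which the operator attached to a Lagrangian relation is determined up to scalar by intertwining relations with the Heisenberg--Weyl operators. The other is to Comfort--Kissinger, where the equivalence sits alongside a generators-and-relations presentation of \(\ALR\) matching the stabilizer ZX-calculus (cf.\ Theorem~\ref{theorem:stabzx_acr}).

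Your proof is correct in outline and takes a different route. It is the direct semantic one, essentially Neretin's idea read in the opposite direction: you extract the relation from the Pauli intertwiners of a stabilizer map, rather than constructing the operator from the relation. It needs no presentation or completeness theorem, gives the functor explicitly, and isolates the one non-formal input. Write \(fP_{\vb u}=\chi(\varphi_f(\vb u,\vb v))P_{\vb v}f\) for \((\vb u,\vb v)\) in the linear part \(L_f\), and set \(K=\{\vb d\mid(0,\vb d)\in L_f,\ (\vb d,0)\in L_g\}\). The input is then: for nonzero \(f,g\), \(g\circ f\neq 0\) iff \(\varphi_f(0,\vb d)+\varphi_g(\vb d,0)=0\) for all \(\vb d\in K\). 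Since \(K\) is the symplectic complement of the sum of the projections of \(L_f\) and \(L_g\) onto the middle space, the right-hand side says exactly that \(F(g)\circ F(f)\neq\emptyset\). One direction is the identity \(gf=\chi(\varphi_f(0,\vb d)+\varphi_g(\vb d,0))\,gf\); the other is your character sum over stabilizer projectors. What the cited graphical work buys in exchange is a presentation of \(\ALR\), which is what the paper later generalises to the delayed setting.

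Three details need fixing when you write it out.

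First, the intertwining condition must be \(fP_{\vb a}\propto P_{\vb b}f\). With \(P_{\vb a}^\dag\), the identity is sent to the antipode \(\{(\vb a,-\vb a)\}\) and composition picks up a sign. The \(-\omega_V\) is already produced by transposing the input legs of the Choi state.

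Second, proportionality determines only \(L_f\). The affine relation is the set of \(\vb w\) with \((-\omega_V\oplus\omega_W)(\vb v,\vb w)=\varphi_f(\vb v)\) for all \(\vb v\in L_f\). This is a coset of \(L_f\) because \(\varphi_f\) is linear on \(L_f\) for the symmetric Paulis \(\nu(0,\cdot)\). With this description your chaining argument goes through verbatim.

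Third, do not presuppose that every composite of generators has a stabilizer Choi state. Instead note that \(L_h\), defined by the same intertwining condition, is isotropic for every nonzero linear map \(h\); compare \(hP_{\vb u}P_{\vb u'}\) with \(hP_{\vb u'}P_{\vb u}\). Then \(L_{g\circ f}\supseteq L_g\circ L_f\) forces equality, and simultaneously proves that \(\Stab\) is closed under composition.
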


\subsection{The stabilizer ZX-calculus}
Affine Lagrangian relations admit a sound, universal, and complete graphical language, called the quopit stabilizer ZX-calculus, which we denote by \(\ZX\). The stabilizer ZX-calculus was originally introduced by Booth and Carette~\cite{quopitzx}, after which its presentation was refined by Po\'or et al.~\cite{poor}. The interpretation into \(\ZX\) was later given by Booth et al.~\cite{gsa}.

The stabilizer ZX-calculus is the \dag-CCC generated by green and red \textbf{spiders}:
\begin{equation*}
  \begin{tikzpicture}
	\begin{pgfonlayer}{nodelayer}
		\node [style=none] (9) at (-1, 0.9375) {};
		\node [style=none] (10) at (2, 0.9375) {};
		\node [style=none] (11) at (2, -0.9375) {};
		\node [style=none] (12) at (-1, -0.9375) {};
		\node [style=bn] (14) at (0.5, -0.1875) {};
		\node [style=none] (15) at (1.5, 0.3125) {};
		\node [style=none] (16) at (1.5, -0.6875) {};
		\node [style=none] (17) at (-0.5, 0.3125) {};
		\node [style=none] (18) at (-0.5, -0.6875) {};
		\node [style=none] (19) at (-0.525, -0.1875) {$m$};
		\node [style=none] (20) at (2, 0.3125) {};
		\node [style=none] (21) at (2, -0.6875) {};
		\node [style=none] (26) at (0, 0.0125) {$\vdots$};
		\node [style=none] (27) at (1.375, -0.1875) {$n$};
		\node [style=none] (28) at (1, 0.0125) {$\vdots$};
		\node [style=none] (29) at (-1, 0.3125) {};
		\node [style=none] (30) at (-1, -0.6875) {};
		\node [style=bphase] (31) at (0.5, 0.625) {$\phasemat{a}{b}$};
	\end{pgfonlayer}
	\begin{pgfonlayer}{edgelayer}
		\draw [style=background] (11.center)
			 to (10.center)
			 to (9.center)
			 to (12.center)
			 to cycle;
		\draw (20.center)
			 to (15.center)
			 to [in=60, out=-180] (14);
		\draw (21.center)
			 to (16.center)
			 to [in=-60, out=180] (14);
		\draw (29.center)
			 to (17.center)
			 to [in=120, out=360] (14);
		\draw (14)
			 to [in=0, out=-120] (18.center)
			 to (30.center);
		\draw [style=dbphase] (31) to (14);
	\end{pgfonlayer}
\end{tikzpicture}
 \quad\text{and}\quad \begin{tikzpicture}
	\begin{pgfonlayer}{nodelayer}
		\node [style=none] (9) at (-1, 0.95) {};
		\node [style=none] (10) at (2, 0.95) {};
		\node [style=none] (11) at (2, -0.95) {};
		\node [style=none] (12) at (-1, -0.95) {};
		\node [style=wn] (14) at (0.5, -0.175) {};
		\node [style=none] (15) at (1.5, 0.325) {};
		\node [style=none] (16) at (1.5, -0.675) {};
		\node [style=none] (17) at (-0.5, 0.325) {};
		\node [style=none] (18) at (-0.5, -0.675) {};
		\node [style=none] (19) at (-0.525, -0.175) {$m$};
		\node [style=none] (20) at (2, 0.325) {};
		\node [style=none] (21) at (2, -0.675) {};
		\node [style=none] (26) at (0, 0.025) {$\vdots$};
		\node [style=none] (27) at (1.375, -0.175) {$n$};
		\node [style=none] (28) at (1, 0.025) {$\vdots$};
		\node [style=none] (29) at (-1, 0.325) {};
		\node [style=none] (30) at (-1, -0.675) {};
		\node [style=wphase] (31) at (0.5, 0.75) {$\phasemat{a}{b}$};
	\end{pgfonlayer}
	\begin{pgfonlayer}{edgelayer}
		\draw [style=background] (11.center)
			 to (10.center)
			 to (9.center)
			 to (12.center)
			 to cycle;
		\draw (20.center)
			 to (15.center)
			 to [in=60, out=-180] (14);
		\draw (21.center)
			 to (16.center)
			 to [in=-60, out=180] (14);
		\draw (29.center)
			 to (17.center)
			 to [in=120, out=360] (14);
		\draw (14)
			 to [in=0, out=-120] (18.center)
			 to (30.center);
		\draw [style=dwphase] (31) to (14);
	\end{pgfonlayer}
\end{tikzpicture}
, \qq{for all} a,b \in \F_p.
\end{equation*}
The parameter \(a\) is called the \textbf{affine phase}, and \(b\) the \textbf{symplectic phase} associated to the spider.

These spiders are interpreted in \(\FHilb\) as follows, up to a nonzero complex number:
\begin{align*}
  \interp{} &\coloneqq \sum_{x \in \F_p} \chi\!\left(ax + 2^{\minu 1}bx^2\right)\ket{x}^{\otimes n}\bra{x}^{\otimes m}\\
  \interp{} &\coloneqq F^{\otimes n} \sum_{x \in \F_p} \chi\!\left(ax + 2^{\minu 1}bx^2\right)\ket{x}^{\otimes n}\bra{x}^{\otimes m} (F^\dag)^{\otimes m},
\end{align*}
where \(F:\mathcal{H}_p\to\mathcal{H}_p\); \(\ket{x}\mapsto \sfrac{1}{\sqrt{p}}\sum_{y\in\F_p}\chi(xy)\ket{y}\), is the discrete Fourier transform.

Note that in particular, the Pauli operators are encoded in the affine phases, so that \(Z^z X^x \) is represented by the following diagram:

\[\begin{tikzpicture}
	\begin{pgfonlayer}{nodelayer}
		\node [style=none] (0) at (-1.25, 0.625) {};
		\node [style=none] (1) at (1.5, 0.625) {};
		\node [style=none] (2) at (1.5, -0.375) {};
		\node [style=none] (3) at (-1.25, -0.375) {};
		\node [style=none] (5) at (-1.25, 0) {};
		\node [style=none] (6) at (1.5, 0) {};
		\node [style=wn] (7) at (-0.5, 0) {};
		\node [style=bn] (8) at (0.75, 0) {};
		\node [style=wphase] (9) at (-0.5, 0.75) {\(\phasemat{x}{0}\)};
		\node [style=bphase] (10) at (0.75, 0.75) {\(\phasemat{z}{0}\)};
	\end{pgfonlayer}
	\begin{pgfonlayer}{edgelayer}
		\draw [style=background] (2.center)
			 to (1.center)
			 to (0.center)
			 to (3.center)
			 to cycle;
		\draw (5.center) to (6.center);
		\draw [style=dbphase] (10) to (8);
		\draw [style=dwphase] (9) to (7);
	\end{pgfonlayer}
\end{tikzpicture}
\]

From now on, however, we will only interpret stabilizer ZX-diagrams in \(\ALR\), thereby forgoing the need to keep track of nonzero scalars:
\[
  \interp{}
  \coloneqq
    \Bigl\{
      \bigl(
        \begin{smallbmatrix}
          \vb{x}\\ \vb{z}
        \end{smallbmatrix},
        \begin{smallbmatrix}
          \vb{x}'\\ \vb{z}'
        \end{smallbmatrix}
      \bigr)
      \in \F_p^{2m} \times \F_p^{2n}
    \ \Big| \
      z_i = z_j',
      \text{ and }
      \sum_{j=0}^{m-1} z_j+bx_0+a = \sum_{i=0}^{n-1}z_i'
    \Bigr\}
\]
\[
  \interp{}
  \coloneqq
    \Bigl\{
      \bigl(
        \begin{smallbmatrix}
          \vb{x}\\ \vb{z}
        \end{smallbmatrix},
        \begin{smallbmatrix}
          \vb{x}'\\ \vb{z}'
        \end{smallbmatrix}
      \bigr)
      \in \F_p^{2m} \times \F_p^{2n}
    \ \Big|\
      z_i = z_j',
      \text{ and }
      \sum_{j=0}^{m-1} x_j+bz_0+a = \sum_{i=0}^{n-1}x_i'
    \Bigr\}
\]

To state the equational theory in \Cref{fig:zx_axioms} succinctly, we define several notations.

\begin{figure}

\[\input{stab/axioms.tikz}\]

\caption{Axioms of \(\StabZX\), for all \(a,b,c,d,z\in \F_p\), with \(z\neq 0\) and permutations \(\varsigma\) and \(\tau\).}
\label{fig:zx_axioms}
\end{figure}
First, for spiders with trivial phases:

\begin{equation*}
  \begin{tikzpicture}
	\begin{pgfonlayer}{nodelayer}
		\node [style=none] (9) at (-1, 0.75) {};
		\node [style=none] (10) at (2, 0.75) {};
		\node [style=none] (11) at (2, -0.75) {};
		\node [style=none] (12) at (-1, -0.75) {};
		\node [style=bn] (14) at (0.5, 0) {};
		\node [style=none] (15) at (1.5, 0.5) {};
		\node [style=none] (16) at (1.5, -0.5) {};
		\node [style=none] (17) at (-0.5, 0.5) {};
		\node [style=none] (18) at (-0.5, -0.5) {};
		\node [style=none] (19) at (-0.525, 0) {$m$};
		\node [style=none] (20) at (2, 0.5) {};
		\node [style=none] (21) at (2, -0.5) {};
		\node [style=none] (26) at (0, 0.2) {$\vdots$};
		\node [style=none] (27) at (1.375, 0) {$n$};
		\node [style=none] (28) at (1, 0.2) {$\vdots$};
		\node [style=none] (29) at (-1, 0.5) {};
		\node [style=none] (30) at (-1, -0.5) {};
	\end{pgfonlayer}
	\begin{pgfonlayer}{edgelayer}
		\draw [style=background] (11.center)
			 to (10.center)
			 to (9.center)
			 to (12.center)
			 to cycle;
		\draw (20.center)
			 to (15.center)
			 to [in=60, out=-180] (14);
		\draw (21.center)
			 to (16.center)
			 to [in=-60, out=180] (14);
		\draw (29.center)
			 to (17.center)
			 to [in=120, out=360] (14);
		\draw (14)
			 to [in=0, out=-120] (18.center)
			 to (30.center);
	\end{pgfonlayer}
\end{tikzpicture}
 \coloneqq \begin{tikzpicture}
	\begin{pgfonlayer}{nodelayer}
		\node [style=none] (32) at (3, 0.9375) {};
		\node [style=none] (33) at (6, 0.9375) {};
		\node [style=none] (34) at (6, -0.9375) {};
		\node [style=none] (35) at (3, -0.9375) {};
		\node [style=bn] (36) at (4.5, -0.1875) {};
		\node [style=none] (37) at (5.5, 0.3125) {};
		\node [style=none] (38) at (5.5, -0.6875) {};
		\node [style=none] (39) at (3.5, 0.3125) {};
		\node [style=none] (40) at (3.5, -0.6875) {};
		\node [style=none] (41) at (3.475, -0.1875) {$m$};
		\node [style=none] (42) at (6, 0.3125) {};
		\node [style=none] (43) at (6, -0.6875) {};
		\node [style=none] (44) at (4, 0.0125) {$\vdots$};
		\node [style=none] (45) at (5.375, -0.1875) {$n$};
		\node [style=none] (46) at (5, 0.0125) {$\vdots$};
		\node [style=none] (47) at (3, 0.3125) {};
		\node [style=none] (48) at (3, -0.6875) {};
		\node [style=bphase] (49) at (4.5, 0.625) {$\phasemat{0}{0}$};
	\end{pgfonlayer}
	\begin{pgfonlayer}{edgelayer}
		\draw [style=background] (34.center)
			 to (33.center)
			 to (32.center)
			 to (35.center)
			 to cycle;
		\draw (42.center)
			 to (37.center)
			 to [in=60, out=-180] (36);
		\draw (43.center)
			 to (38.center)
			 to [in=-60, out=180] (36);
		\draw (47.center)
			 to (39.center)
			 to [in=120, out=360] (36);
		\draw (36)
			 to [in=0, out=-120] (40.center)
			 to (48.center);
		\draw [style=dbphase] (49) to (36);
	\end{pgfonlayer}
\end{tikzpicture}

  \quad\text{and}\quad
  \begin{tikzpicture}
	\begin{pgfonlayer}{nodelayer}
		\node [style=none] (9) at (-1, 0.75) {};
		\node [style=none] (10) at (2, 0.75) {};
		\node [style=none] (11) at (2, -0.75) {};
		\node [style=none] (12) at (-1, -0.75) {};
		\node [style=wn] (14) at (0.5, 0) {};
		\node [style=none] (15) at (1.5, 0.5) {};
		\node [style=none] (16) at (1.5, -0.5) {};
		\node [style=none] (17) at (-0.5, 0.5) {};
		\node [style=none] (18) at (-0.5, -0.5) {};
		\node [style=none] (19) at (-0.525, 0) {$m$};
		\node [style=none] (20) at (2, 0.5) {};
		\node [style=none] (21) at (2, -0.5) {};
		\node [style=none] (26) at (0, 0.2) {$\vdots$};
		\node [style=none] (27) at (1.375, 0) {$n$};
		\node [style=none] (28) at (1, 0.2) {$\vdots$};
		\node [style=none] (29) at (-1, 0.5) {};
		\node [style=none] (30) at (-1, -0.5) {};
	\end{pgfonlayer}
	\begin{pgfonlayer}{edgelayer}
		\draw [style=background] (11.center)
			 to (10.center)
			 to (9.center)
			 to (12.center)
			 to cycle;
		\draw (20.center)
			 to (15.center)
			 to [in=60, out=-180] (14);
		\draw (21.center)
			 to (16.center)
			 to [in=-60, out=180] (14);
		\draw (29.center)
			 to (17.center)
			 to [in=120, out=360] (14);
		\draw (14)
			 to [in=0, out=-120] (18.center)
			 to (30.center);
	\end{pgfonlayer}
\end{tikzpicture}
 \coloneqq \begin{tikzpicture}
	\begin{pgfonlayer}{nodelayer}
		\node [style=none] (32) at (3, 0.9375) {};
		\node [style=none] (33) at (6, 0.9375) {};
		\node [style=none] (34) at (6, -0.9375) {};
		\node [style=none] (35) at (3, -0.9375) {};
		\node [style=wn] (36) at (4.5, -0.1875) {};
		\node [style=none] (37) at (5.5, 0.3125) {};
		\node [style=none] (38) at (5.5, -0.6875) {};
		\node [style=none] (39) at (3.5, 0.3125) {};
		\node [style=none] (40) at (3.5, -0.6875) {};
		\node [style=none] (41) at (3.475, -0.1875) {$m$};
		\node [style=none] (42) at (6, 0.3125) {};
		\node [style=none] (43) at (6, -0.6875) {};
		\node [style=none] (44) at (4, 0.0125) {$\vdots$};
		\node [style=none] (45) at (5.375, -0.1875) {$n$};
		\node [style=none] (46) at (5, 0.0125) {$\vdots$};
		\node [style=none] (47) at (3, 0.3125) {};
		\node [style=none] (48) at (3, -0.6875) {};
		\node [style=wphase] (49) at (4.5, 0.625) {$\phasemat{0}{0}$};
	\end{pgfonlayer}
	\begin{pgfonlayer}{edgelayer}
		\draw [style=background] (34.center)
			 to (33.center)
			 to (32.center)
			 to (35.center)
			 to cycle;
		\draw (42.center)
			 to (37.center)
			 to [in=60, out=-180] (36);
		\draw (43.center)
			 to (38.center)
			 to [in=-60, out=180] (36);
		\draw (47.center)
			 to (39.center)
			 to [in=120, out=360] (36);
		\draw (36)
			 to [in=0, out=-120] (40.center)
			 to (48.center);
		\draw [style=dwphase] (49) to (36);
	\end{pgfonlayer}
\end{tikzpicture}
.
\end{equation*}

We will also make frequent use of
\textbf{H-boxes}. Given \(z\in \F_p\) such that \(z\neq 0\):
\begin{equation*}
  \label{eq:hadamard_boxes}
  \input{stab/had/0.tikz} \coloneqq \input{stab/had/1.tikz}
  \quad\text{where}\quad
  \input{stab/had/2.tikz} \coloneqq \input{stab/had/3.tikz},
  \quad
  \input{stab/had/4.tikz} \coloneqq \input{stab/had/5.tikz},
  \quad\text{and}\quad
  \input{stab/had/6.tikz} \coloneqq \input{stab/had/7.tikz}.
\end{equation*}

We will use the following notation to denote relations given by linear transformation:
\begin{definition}\label{def:graph}
    Given a function \(f:X\to Y\), its \textbf{graph} (not to be confused with a graph state) is the relation \(\Gr(f)\coloneqq\{ (x, f(x) ) \ | \  x \in X \}\subseteq X\times Y\).
\end{definition}

When \(b\neq 0\), the \(H\)-box is the graph of the \emph{squeezed symplectic Fourier transform}:
\[ \interp{\input{stab/had/0.tikz}} \coloneqq
\Gr
\left[\begin{array}{c|c}
0    & -b^{\minu 1}\\ \hline
b & 0
\end{array}\right]
\]

We use the \(H\)-boxes to define \textbf{multipliers}:
\[
\begin{tikzpicture}
	\begin{pgfonlayer}{nodelayer}
		\node [style=none] (0) at (-2.5, 0) {};
		\node [style=none] (1) at (-0.5, 0) {};
		\node [style=rmat] (2) at (-1.5, 0) {$b$};
		\node [style=none] (3) at (-2.5, 0.375) {};
		\node [style=none] (4) at (-2.5, -0.375) {};
		\node [style=none] (5) at (-0.5, -0.375) {};
		\node [style=none] (6) at (-0.5, 0.375) {};
	\end{pgfonlayer}
	\begin{pgfonlayer}{edgelayer}
		\draw [style=background] (4.center)
			 to (3.center)
			 to (6.center)
			 to (5.center)
			 to cycle;
		\draw (0.center) to (1.center);
	\end{pgfonlayer}
\end{tikzpicture}
 \coloneqq \begin{tikzpicture}
	\begin{pgfonlayer}{nodelayer}
		\node [style=none] (8) at (0.5, 0) {};
		\node [style=none] (9) at (2.75, 0) {};
		\node [style=had] (10) at (1.25, 0) {$b$};
		\node [style=none] (11) at (0.5, 0.375) {};
		\node [style=none] (12) at (0.5, -0.375) {};
		\node [style=none] (13) at (2.75, -0.375) {};
		\node [style=none] (14) at (2.75, 0.375) {};
		\node [style=had] (15) at (2.125, 0) {\(\minu\)};
	\end{pgfonlayer}
	\begin{pgfonlayer}{edgelayer}
		\draw [style=background] (11.center)
			 to (14.center)
			 to (13.center)
			 to (12.center)
			 to cycle;
		\draw (8.center) to (9.center);
	\end{pgfonlayer}
\end{tikzpicture}
.
\]
When \(b\neq 0\), the multiplier is interpreted as the graph of the \textbf{squeezing map}:
\[ \interp{\begin{tikzpicture}
	\begin{pgfonlayer}{nodelayer}
		\node [style=none] (0) at (-2.5, 0) {};
		\node [style=none] (1) at (-0.5, 0) {};
		\node [style=rmat] (2) at (-1.5, 0) {$b$};
		\node [style=none] (3) at (-2.5, 0.375) {};
		\node [style=none] (4) at (-2.5, -0.375) {};
		\node [style=none] (5) at (-0.5, -0.375) {};
		\node [style=none] (6) at (-0.5, 0.375) {};
	\end{pgfonlayer}
	\begin{pgfonlayer}{edgelayer}
		\draw [style=background] (4.center)
			 to (3.center)
			 to (6.center)
			 to (5.center)
			 to cycle;
		\draw (0.center) to (1.center);
	\end{pgfonlayer}
\end{tikzpicture}
} \coloneqq
\Gr
\left[\begin{array}{c|c}
b & 0\\ \hline
0 & b^{\minu 1}
\end{array}\right]
\]

The \textbf{zero map} \(\begin{tikzpicture}
	\begin{pgfonlayer}{nodelayer}
		\node [style=none] (0) at (-0.75, 0.5) {};
		\node [style=none] (1) at (-0.75, -0.5) {};
		\node [style=none] (2) at (1.125, -0.5) {};
		\node [style=none] (3) at (1.125, 0.5) {};
		\node [style=wphase] (4) at (-0.25, 0) {\(\phasemat{1}{0}\)};
		\node [style=wn] (5) at (0.75, 0) {};
	\end{pgfonlayer}
	\begin{pgfonlayer}{edgelayer}
		\draw [style=background] (1.center)
			 to (0.center)
			 to (3.center)
			 to (2.center)
			 to cycle;
		\draw [style=background] (0.center) to (3.center);
		\draw [style=background] (3.center) to (2.center);
		\draw [style=background] (2.center) to (1.center);
		\draw [style=background] (1.center) to (0.center);
		\draw [style=rwphase] (4) to (5);
	\end{pgfonlayer}
\end{tikzpicture}
\) is interpreted as the empty subspace.
Finally, the dagger is given by:
\[
  \left(\right)^\dag
  \coloneqq
  \begin{tikzpicture}
	\begin{pgfonlayer}{nodelayer}
		\node [style=none] (9) at (-1, 0.9375) {};
		\node [style=none] (10) at (2, 0.9375) {};
		\node [style=none] (11) at (2, -0.9375) {};
		\node [style=none] (12) at (-1, -0.9375) {};
		\node [style=wn] (14) at (0.5, -0.1875) {};
		\node [style=none] (15) at (1.5, 0.3125) {};
		\node [style=none] (16) at (1.5, -0.6875) {};
		\node [style=none] (17) at (-0.5, 0.3125) {};
		\node [style=none] (18) at (-0.5, -0.6875) {};
		\node [style=none] (19) at (-0.525, -0.1875) {$m$};
		\node [style=none] (20) at (2, 0.3125) {};
		\node [style=none] (21) at (2, -0.6875) {};
		\node [style=none] (26) at (0, 0.0125) {$\vdots$};
		\node [style=none] (27) at (1.375, -0.1875) {$n$};
		\node [style=none] (28) at (1, 0.0125) {$\vdots$};
		\node [style=none] (29) at (-1, 0.3125) {};
		\node [style=none] (30) at (-1, -0.6875) {};
		\node [style=wphase] (31) at (0.5, 0.625) {$\phasemat{a}{\minu b}$};
	\end{pgfonlayer}
	\begin{pgfonlayer}{edgelayer}
		\draw [style=background] (11.center)
			 to (10.center)
			 to (9.center)
			 to (12.center)
			 to cycle;
		\draw (20.center)
			 to (15.center)
			 to [in=60, out=-180] (14);
		\draw (21.center)
			 to (16.center)
			 to [in=-60, out=180] (14);
		\draw (29.center)
			 to (17.center)
			 to [in=120, out=360] (14);
		\draw (14)
			 to [in=0, out=-120] (18.center)
			 to (30.center);
		\draw [style=dwphase] (31) to (14);
	\end{pgfonlayer}
\end{tikzpicture}

  \quad\text{and}\quad
  \left(\right)^\dag
  \coloneqq
  \begin{tikzpicture}
	\begin{pgfonlayer}{nodelayer}
		\node [style=none] (9) at (-1, 0.9375) {};
		\node [style=none] (10) at (2, 0.9375) {};
		\node [style=none] (11) at (2, -0.9375) {};
		\node [style=none] (12) at (-1, -0.9375) {};
		\node [style=bn] (14) at (0.5, -0.1875) {};
		\node [style=none] (15) at (1.5, 0.3125) {};
		\node [style=none] (16) at (1.5, -0.6875) {};
		\node [style=none] (17) at (-0.5, 0.3125) {};
		\node [style=none] (18) at (-0.5, -0.6875) {};
		\node [style=none] (19) at (-0.525, -0.1875) {$m$};
		\node [style=none] (20) at (2, 0.3125) {};
		\node [style=none] (21) at (2, -0.6875) {};
		\node [style=none] (26) at (0, 0.0125) {$\vdots$};
		\node [style=none] (27) at (1.375, -0.1875) {$n$};
		\node [style=none] (28) at (1, 0.0125) {$\vdots$};
		\node [style=none] (29) at (-1, 0.3125) {};
		\node [style=none] (30) at (-1, -0.6875) {};
		\node [style=bphase] (31) at (0.5, 0.625) {$\phasemat{\minu a}{\minu b}$};
	\end{pgfonlayer}
	\begin{pgfonlayer}{edgelayer}
		\draw [style=background] (11.center)
			 to (10.center)
			 to (9.center)
			 to (12.center)
			 to cycle;
		\draw (20.center)
			 to (15.center)
			 to [in=60, out=-180] (14);
		\draw (21.center)
			 to (16.center)
			 to [in=-60, out=180] (14);
		\draw (29.center)
			 to (17.center)
			 to [in=120, out=360] (14);
		\draw (14)
			 to [in=0, out=-120] (18.center)
			 to (30.center);
		\draw [style=dbphase] (31) to (14);
	\end{pgfonlayer}
\end{tikzpicture}

\]

\begin{theorem}[{\cite[Thm.~4.16]{comfort_graphical_2021}, \cite[Chap.~9]{neretin_lectures_2011},\cite[Cor.~46]{gsa}}]
  \label{theorem:stabzx_acr}
  There is a \dag-CC equivalence \(\ZX \cong \Proj(\Stab)\simeq \ALR\).
\end{theorem}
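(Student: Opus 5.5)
The second equivalence \(\Proj(\Stab)\simeq\ALR\) is exactly \Cref{theorem:stab_acr}, so it remains to show that the interpretation functor \(\interp{-}:\ZX\to\ALR\) is a \dag-CC isomorphism. Composing with \Cref{theorem:stab_acr} then transfers this to \(\Proj(\Stab)\). The plan has three parts: soundness, universality (fullness), and completeness (faithfulness). Since \(\ZX\) and \(\ALR\) have objects indexed by \(\N\) (every symplectic space is isomorphic to some \((\F_p^{2n},\omega_n)\)), bijectivity on hom-sets gives an isomorphism onto the skeleton, and hence an equivalence.

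\textbf{Soundness.} Define \(\interp{-}\) on generators as in the text: spiders go to the displayed affine relations, and the zero map goes to \(\emptyset\). Extend it to a strict \dag-CC functor, which is possible because \(\ZX\) is freely generated as a \dag-CCC modulo the axioms. Then check each axiom of \Cref{fig:zx_axioms}. Each check is a finite computation of relational composites of affine subspaces, and each involves only a bounded number of parameters \(a,b,c,d,z\). The spider-fusion axioms reduce to summing the affine and symplectic phases. The bialgebra and copy axioms, and the Euler-decomposition or colour-change axioms involving H-boxes, become identities between graphs of explicit matrices such as \(\left[\begin{smallmatrix}0&-b^{-1}\\ b&0\end{smallmatrix}\right]\). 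These can be verified by matrix multiplication.

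\textbf{Universality.} Because the category is compact closed, it suffices to represent every affine Lagrangian state \(L+\vb a\subseteq\F_p^{2n}\). I would use the standard graph-state-with-local-Cliffords normal form. Every Lagrangian subspace admits, after a local symplectic change of basis on each coordinate (Hadamard or squeeze), a generator matrix of the form \([\,I\mid A\,]\) with \(A\) symmetric. Such a form is drawn as green spiders carrying diagonal symplectic phases \(A_{ii}\), connected by H-boxes labelled \(A_{ij}\), and with the affine part \(\vb a\) placed in the affine phases. The empty relation is the zero map tensored with anything. This yields the ``AP-form'': affine phases and symplectic phases.

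\textbf{Completeness.} This is the main obstacle. The plan is to show that every diagram rewrites, using the axioms, to a \emph{reduced} AP normal form, and that reduced forms are in bijection with affine Lagrangian subspaces. The reduction proceeds by induction on the number of internal spiders. First unfuse into a graph-like diagram (green spiders and H-box edges). Then remove internal vertices: local complementation handles vertices with nonzero symplectic phase, and pivoting handles pairs of internal vertices joined by an edge. An internal vertex with zero phase and no internal neighbours is removed by a copy or elimination rule, which either disconnects it or produces the zero map. Both local complementation and pivoting must be derived from the axioms (this is where Euler decomposition is used). Uniqueness then follows by fixing a canonical choice: which boundary coordinates carry a Hadamard is determined by a pivot set chosen lexicographically from the subspace, and the remaining data \((A,\vb a)\) are then forced. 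Two diagrams with equal semantics therefore reach the same normal form. Alternatively, the whole completeness argument can be imported from \cite{poor,gsa}, with the above as the sketch of what those references establish.
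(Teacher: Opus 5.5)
Your overall route is the one the paper relies on. The statement is imported from \cite{gsa,poor} together with \Cref{theorem:stab_acr}, and \Cref{sec:dzx} replays that argument over \(\F_p(\delta)\): graph-like form, then AP-form by Schur complementation (which generalises your local complementation and pivoting), then reduced AP-form, with uniqueness coming from the canonical tableau. Your soundness and universality parts are fine, and only Hadamards are actually needed for universality. Falling back on \cite{poor,gsa} is what the paper does. However, the self-contained completeness sketch you give has one step that fails, and it omits the step that makes the normal form canonical.

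\textbf{The step that fails.} After local complementations and pivots, the remaining internal vertices have zero symplectic phase and no internal neighbours. In general they still have boundary neighbours, and such vertices cannot be removed by copy or elimination. They are exactly the internal vertices of the AP-form (\Cref{def:dzx_ap_form}) and carry the constraint part of the subspace. For example, \(\ket{0}\) is a boundary spider joined by an H-edge to a phase-free internal spider. No graph-like diagram without internal vertices denotes it, because every green-spider graph state has full support in the \(Z\)-basis. Only \emph{isolated} internal vertices are eliminated: they give the zero map if their affine phase is nonzero, and are deleted otherwise.

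\textbf{The missing step.} What you need next is the Gaussian-elimination step of \Cref{prop:dzx_reduced_ap_form}:
\begin{itemize}
\item row-reduce the internal--boundary biadjacency matrix \(E\) diagrammatically (\Cref{gsa:lem:gaussianeliminationi}) to \([\,I_m\mid F\,]\), up to a column permutation \(\varsigma\);
\item then clear the edges and phases at the pivot boundary vertices, which reaches \Cref{def:dzx_reduced_ap_form}.
\end{itemize}
The data of that form is exactly the canonical tableau of \Cref{prop:generating_tableau}, so uniqueness reduces to uniqueness of reduced row-echelon form.

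Your uniqueness argument instead concerns a GS-LC-style form, with Hadamards on a lexicographically chosen boundary set. That is also not what the paper calls AP-form, which your universality paragraph conflates with it. Your rewriting procedure has no step that reaches such a canonical representative, so ``equal semantics implies equal normal form'' is not established.

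A minor point: to get \(\ZX\cong\Proj(\Stab)\) rather than only an equivalence, check that the Hilbert-space interpretation followed by \Cref{theorem:stab_acr} agrees with \(\interp{-}\) on generators. Then the canonical, bijective-on-objects functor is fully faithful, and hence an isomorphism.
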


  \section{Noise and approximation in quantum mechanics}\label{sec:mixed}
  \pratendSetLocal{category=mixed, end, restate}
  
Pure quantum mechanics fails to capture noise, which is a fundamental feature of quantum mechanics. In this section, we review mixed stabilizer quantum mechanics which captures such notions as noise and measurement.
We frame mixed state quantum mechanics within categories with a notion of approximation and discarding:

\begin{definition}
  A symmetric monoidal category is \textbf{poset-enriched}, in case for all objects \(X\) and \(Y\), there is a poset \(\subseteq\) structure on the hom-set \(\mathcal{C}(X,Y)\) so that:
  \begin{itemize}
    \item
    given \(f\subseteq g\) and \(h\subseteq k\) of appropriate type, then \(f\otimes h \subseteq g\otimes h\);

    \item 
    and given \(f \subseteq g\) and \(h\subseteq k\) of appropriate type, then \(h\circ f \subseteq k\circ g\).
  \end{itemize}
\end{definition}
Given two processes \(f\) and \(g\) of the same type, we interpret \(f \subseteq g\) to denote that \(f\) is a less noisy version of \(g\), so that \(g\) \emph{approximates} \(f\). To model discarding of information, we use the following structure:
\begin{definition}[{\cite[Def.~2.5]{obs}}]\label{def:discard_bicategory}
  A \textbf{discard bicategory} is a poset-enriched symmetric monoidal category, such that for all objects \(X\), there exists a \textbf{discard map} \(\discard_X:X\to I\), denoted \(\begin{tikzpicture}
	\begin{pgfonlayer}{nodelayer}
		\node [style=none] (1) at (0, 0.3) {};
		\node [style=none] (2) at (0.75, 0.3) {};
		\node [style=none] (3) at (0.75, -0.3) {};
		\node [style=none] (4) at (0, -0.3) {};
		\node [style=none] (5) at (0, 0) {};
		\node [style=groundout] (6) at (0.375, 0) {};
	\end{pgfonlayer}
	\begin{pgfonlayer}{edgelayer}
		\draw [style=background] (3.center)
			 to (2.center)
			 to (1.center)
			 to (4.center)
			 to cycle;
		\draw (5.center) to (6);
	\end{pgfonlayer}
\end{tikzpicture}
\), where for all  $f: X \to Y$,
  \[\discard_Y \circ f \subseteq \discard_X, \quad \discard_I=1_I\quad\text{and}\quad\discard_{X\otimes Y}=\discard_X\otimes\discard_Y.\] 
  In a discard \dag-CCC, denote the dagger of the discard \(\codiscard \coloneqq \discard^\dag\) by \(\begin{tikzpicture}
	\begin{pgfonlayer}{nodelayer}
		\node [style=none] (71) at (11, 0.3) {};
		\node [style=none] (72) at (10.25, 0.3) {};
		\node [style=none] (73) at (10.25, -0.3) {};
		\node [style=none] (74) at (11, -0.3) {};
		\node [style=groundin] (80) at (10.625, 0) {};
		\node [style=none] (81) at (11, 0) {};
	\end{pgfonlayer}
	\begin{pgfonlayer}{edgelayer}
		\draw [style=background] (73.center)
			 to (72.center)
			 to (71.center)
			 to (74.center)
			 to cycle;
		\draw (81.center) to (80);
	\end{pgfonlayer}
\end{tikzpicture}
\).
\end{definition}
Interpreting the \(f\subseteq g\) to denote that \(g\) approximates \(f\), the discard map approximates all effects, so that it is maximally uninformative.
In the following subsections we upgrade the three pictures of stabilizer theory reviewed in the previous section from pure to mixed processes. We show that this naturally gives rise to discard bicategory structures.

\subsection{Completely positive maps}
In standard quantum theory, noisy processes are modeled by completely positive maps (See Nielsen and Chuang for reference \cite{Nielsen2012}).
\begin{definition}
Write \(\B(\mathcal{H}) := \mathcal{H}\otimes\mathcal{H}^*\) for the space of linear operators on a f.d. Hilbert space \(\mathcal{H}\). Recall that an operator \(\rho\in \B(\mathcal{H})\) is \textbf{positive} if \(\langle\psi,\rho\psi\rangle\geq 0\) for all \(\psi\in\mathcal{H}\). Moreover, recall that an operator \(\Phi:\B(\mathcal{H})\to \B(\mathcal{K})\) is \textbf{completely positive} if, for any f.d. Hilbert space \(\mathcal{E}\) and any positive operator \(\rho\in \B(\mathcal{E})\otimes \B(\mathcal{H})\), the operator \((1_{\B(\mathcal{E})}\otimes\Phi)(\rho)\) is positive.
\end{definition}
Completely positive maps form a poset-enriched category \cite[Lem.~6.5]{selinger2004towards}.
\begin{definition}
The poset enriched \dag-CCC, \(\CPM\), of \textbf{completely positive maps} has objects given by operator spaces \(\B(\mathcal{H})\) for all finite dimensional Hilbert spaces \(\mathcal{H}\) and morphisms given by \textbf{completely positive maps} \(\Phi:\B(\mathcal{H})\to \B(\mathcal{K})\). The \dag-CC structure is inherited from \(\FHilb\), and the poset enrichment is given by the \textbf{L\"owner order} so that \(\Phi\preceq\Psi\) if and only if \(\Psi-\Phi\) is completely positive.
\end{definition}

Any linear map \(T:\mathcal{H}\to\mathcal{K}\) induces a \textbf{pure} completely positive map \(\B(\mathcal{H})\to \B(\mathcal{K})\) given by \(\rho\mapsto T\rho T^\dag\); these are interpreted as lossless processes which do not interact with the environment.

On the other hand, given any orthonormal basis \(\{e_i\}\) of \(\mathcal{H}\), the \textbf{trace} \(\Tr_{\mathcal{H}}:\B(\mathcal{H})\to \C\) is the completely positive map \(\rho\mapsto \sum_i \langle e_i, \rho e_i\rangle\), which discards all information on \(\B(\mathcal{H})\) into the environment.

The Stinespring dilation theorem bridges these two extremes, characterising every completely positive map as a pure process on an enlarged space whose output is partially discarded:

\begin{lemma}
Given any completely positive map \(\Phi:\B(\mathcal{H})\to \B(\mathcal{K})\), there exists a Hilbert space \(\mathcal{E}\) and a linear map \(T:\mathcal{H}\to \mathcal{K}\otimes \mathcal{E}\),  such that \(\Phi(\rho) = (1_{\B(\mathcal{K})}\otimes\Tr_{\mathcal{E}})(T\rho T^\dag)\), called a \textbf{Stinespring dilation} of \(\Phi\).
\end{lemma}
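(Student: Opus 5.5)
The plan is the standard proof via the Choi--Jamiołkowski correspondence, using the strict \(\dag\)-compact closed structure of \(\FHilb\) recalled at the start of \Cref{sec:stab}.

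\textbf{Choi operator.} Write \(d=\dim\mathcal{H}\) and fix an orthonormal basis \(\{e_i\}\) of \(\mathcal{H}\). Let \(\Omega\coloneqq\sum_i e_i\otimes e_i\in\mathcal{H}\otimes\mathcal{H}\) be the unnormalised maximally entangled vector, i.e.\ the cup. The Choi operator of \(\Phi\) is
\[C_\Phi\coloneqq(1_{\B(\mathcal{H})}\otimes\Phi)(\Omega\Omega^\dag)\in\B(\mathcal{H}\otimes\mathcal{K}).\]
Since \(\Omega\Omega^\dag\) is positive and \(\Phi\) is completely positive (take \(\mathcal{E}=\mathcal{H}\) in the definition), \(C_\Phi\) is positive.

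\textbf{Dilation.} By the spectral theorem, \(C_\Phi=\sum_{k=1}^r v_kv_k^\dag\) for some vectors \(v_k\in\mathcal{H}\otimes\mathcal{K}\). Each \(v_k\) corresponds under wire-bending to a linear map \(K_k:\mathcal{H}\to\mathcal{K}\), namely \(K_k\psi=(\bar\psi^{\dag}\otimes 1_{\mathcal{K}})v_k\), where \(\bar\psi\) is the conjugate of \(\psi\) in the basis \(\{e_i\}\). Take \(\mathcal{E}\coloneqq\C^r\) and define
\[T\psi\coloneqq\sum_k K_k\psi\otimes f_k,\]
where \(\{f_k\}\) is the standard basis of \(\C^r\). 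Then
\[(1\otimes\Tr_{\mathcal{E}})(T\rho T^\dag)=\sum_k K_k\rho K_k^\dag.\]

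\textbf{Key step.} It remains to show \(\Phi(\rho)=\sum_k K_k\rho K_k^\dag\). This is the step needing care: one must check that \(\Phi\) is recovered from \(C_\Phi\) by bending the \(\mathcal{H}\) wires back with caps (the snake equations). Evaluating on a matrix unit gives
\[\Phi(e_ie_j^\dag)=(e_i^\dag\otimes 1)\,C_\Phi\,(e_j\otimes 1),\]
because \((e_i^\dag\otimes 1)(\Omega\Omega^\dag)(e_j\otimes 1)\) picks out the \((i,j)\) block. The same contraction applied to \(\sum_k v_kv_k^\dag\) gives \(\sum_k K_ke_ie_j^\dag K_k^\dag\), by the definition of \(K_k\) (with the conjugation convention matching the chosen basis). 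Both sides are linear in \(\rho\), and the \(e_ie_j^\dag\) span \(\B(\mathcal{H})\), so the identity holds for all \(\rho\). The only real obstacle is keeping the complex-conjugation convention consistent; fixing the basis once avoids it.
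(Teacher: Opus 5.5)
The paper gives no proof of this lemma. It is quoted as the standard Stinespring dilation theorem, with Nielsen and Chuang as the section's reference, so there is no argument of the authors' to compare yours against.

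Your proof is the usual finite-dimensional Choi--Kraus argument, and it is correct:
\begin{itemize}
\item Positivity of \(C_\Phi\) is exactly complete positivity with ancilla \(\mathcal{H}\).
\item The block identity \(\Phi(e_ie_j^\dag)=(e_i^\dag\otimes 1)C_\Phi(e_j\otimes 1)\) holds because \(C_\Phi=\sum_{a,b}e_ae_b^\dag\otimes\Phi(e_ae_b^\dag)\).
\item One has \((e_i^\dag\otimes 1)v_kv_k^\dag(e_j\otimes 1)=(K_ke_i)(K_ke_j)^\dag=K_ke_ie_j^\dag K_k^\dag\), so linearity in \(\rho\) finishes the argument.
\end{itemize}

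The conjugation issue you flag never actually enters. Just define \(K_ke_i\coloneqq(e_i^\dag\otimes 1)v_k\) on the basis and extend linearly; your formula with \(\bar\psi^\dag\) is precisely this linear extension.

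As a by-product you get \(\dim\mathcal{E}=\operatorname{rank}C_\Phi\le\dim\mathcal{H}\dim\mathcal{K}\), which is the minimal dilation dimension. The degenerate case \(\Phi=0\) is covered by taking \(r=0\), or a single zero vector \(v_1\).
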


Carette et al.~\cite[Def.~5]{Carette2021} also introduced a purification \emph{preorder} between completely positive maps: %

\begin{propositionE}
 \(\Proj(\CPM)\) is a discard bicategory with respect to the trace and the \textbf{purification order}. Given projective completely positive maps \(\Phi, \Psi: \B(\mathcal{H})\to \B(\mathcal{K})\)  in  \(\Proj(\CPM)\), \(\Phi \preceqp \Psi\) in case there exists some Hilbert space \(\mathcal{E}\) and  maps \(\Psi_0: \B(\mathcal{H})\to \B(\mathcal{K})\otimes \B(\mathcal{E})\) and \(\Phi_0: \B(\mathcal{E})\to \C\) in \(\Proj(\CPM)\) such that \(\Psi = (1_{\B(\mathcal{K})} \otimes \Tr_{\mathcal{E}}) \circ \Psi_0\) and \(\Phi = (1_{\B(\mathcal{K})} \otimes \Phi_0) \circ \Psi_0\).
 
 Moreover, \(\Proj(\CPM)\) is a discard bicategory with respect to the trace.
\end{propositionE}
\begin{proofE}
  By Carette et al. the purification order is a preorder enrichment on the full subcategory of  \(\CPM\) whose objects are of the form \(\mathcal{2}^{n}\), for all \(n\in \N\) \cite[Lem.~6]{Carette2021}. However, their argument holds for all finite dimensional Hilbert spaces so that the purification order is a preorder enrichment on \(\CPM\). The quotient by scalars makes the purification order a poset-enrichment on \(\Proj(\CPM)\).

  Next, we show that \(\Proj(\CPM)\) is a discard bicategory with respect to the trace.
  Take any effect \(\Xi: \B(\mathcal{H})\to \C\) in \(\Proj(\CPM)\). Let \(\Phi_0 = \Xi\) and \(\Psi_0 = 1_{\B(\mathcal{H})}\), then \((1_{\C} \otimes \Tr_{\mathcal{H}}) \circ \Psi_0 = \Tr_{\mathcal{H}}\) and \((1_{\C} \otimes \Phi_0) \circ \Psi_0 = \Xi\), and it follows that \(\Xi  \preceqp \Tr_{\mathcal{H}}\).
\end{proofE}

Carette et al. \cite{Carette2021} claim that the purification order and the L\"owner order are ``tightly linked''. We prove this claim formally:
\begin{theoremE}\label{thm:purification_lowner_equiv}
   \([\Phi] \preceqp [\Psi]\) in \(\Proj(\CPM)\) if and only if there is some  \(\lambda\in \R^{\geq 0}\) such that \(\Phi \preceql \lambda \Psi\) in \(\CPM\).
\end{theoremE}
\begin{proofE}
  Let \([\Psi]\) denote an equivalence class in \(\Proj(\CPM)\) with chosen representative \(\Psi\) in \(\CPM\). Suppose that \([\Phi] \preceqp [\Psi]\), then there exists \([\Phi_0]: \B(\mathcal{E}) \to \C\) and \([\Psi_0]: \B(\mathcal{H}) \to \B(\mathcal{K}) \otimes \B(\mathcal{E})\) such that \([\Phi] = (1 \otimes [\Phi_0]) \circ [\Psi_0]\) and \([\Psi] = (1 \otimes \Tr_{\mathcal{E}}) \circ [\Psi_0]\). Then there exists scalars \(a, b \in \R^{\geq 0}\) such that in \(\CPM\):
   \[ \Phi = a (1 \otimes \Phi_0) \circ \Psi_0 \quad  \Psi = b\,(1 \otimes \Tr_{\mathcal{E}}) \circ \Psi_0\]
   Now we normalize the effect \(\Phi_0\). Let \(c = \norm{\Phi_0}^{\minu 1}\) then \(c \Phi_0\) is a completely positive trace non-increasing map and we have \(\Phi = \frac{a}{c} (1 \otimes c \Phi_0) \circ \Psi_0\). Then let \(\lambda = \frac{a}{bc}\):
   \begin{align*}
      \lambda \Psi - \Phi &= \frac{b a}{b c} (1 \otimes \Tr_{\mathcal{E}}) \circ \Psi_0 - \frac{a}{c}(1 \otimes c \Phi_0) \circ \Psi_0 \\
      &= \frac{a}{c}(1 \otimes (\Tr_{\mathcal{E}} - c\Phi_0)) \circ \Psi_0
   \end{align*}
   Since \(c\Phi_0\) is trace non-increasing, \(f_0 \coloneqq (\Tr_{\mathcal{E}} - c\Phi_0)\) is a completely positive map. Then since \(\frac{a}{c} \geq 0\) and \(\Psi_0\) is completely positive, \(\lambda \Psi - \Phi\) is also completely positive. Therefore \(\Phi \preceql \lambda \Psi\).

   Now for the second implication, suppose \(\Phi \preceql \lambda \Psi\) in \(\CPM\). Then \(\Xi \coloneqq \lambda \Psi - \Phi\) is a completely positive map. Consider the qubit space \(\C^2\) with basis \(\ket{0}, \ket{1}\) and define \(\Psi_0: \B(\mathcal{H}) \to \B(\mathcal{K}) \otimes \B(\C^2)\) and \(\Phi_0: \B(\C^2) \to \C\) by:
    \[\Psi_0(\rho) \coloneqq \Phi(\rho) \otimes \dyad{0}{0} + \Xi(\rho) \otimes \dyad{1}{1} \qquad \Phi_0(\sigma) \coloneqq \bra{0} \sigma \ket{0}\]
    for any \(\rho\in \B(\mathcal{H})\) and qubit operator \(\sigma\in \B(\C^2)\). Then we have
    \( (1 \otimes \Tr_{\C^2}) \circ [\Psi_0] =  [(1 \otimes \Tr_{\C^2}) \circ \Psi_0] = [\Phi + \Xi] = [\lambda \Psi] = [\Psi] \) and \((1 \otimes [\Phi_0]) \circ [\Psi_0] = [(1 \otimes \Phi_0) \circ \Psi_0] = [\Phi]\) from which it follows immediately that \([\Phi] \preceqp [\Psi]\) in \(\Proj(\CPM)\).
\end{proofE}

\subsection{Completely positive stabilizer maps}
\label{sec:disc:symplectic}

By restricting to completely positive maps which admit a pure stabilizer Stinespring dilation, we obtain a category of mixed stabilizer processes:
\begin{definition}
  The \dag-CCC \(\Stab^\disc\) of \textbf{completely positive stabilizer maps} is the subcategory of \(\CPM\) whose objects are given by quopit Hilbert spaces \(\mathcal{H}_p^{\otimes n}\) and whose morphisms are given by completely positive maps which admit a pure stabilizer Stinespring dilation.
\end{definition}

What are the stabilizer groups of completely positive stabilizer maps?
Recall that a Pauli operator \(P\) stabilizes a pure state \(\ket{\psi}\) when \(P\ket{\psi} = \ket{\psi}\). For a completely positive map \(\Phi:\B(\mathcal{H}_p^{\otimes n})\to \B(\mathcal{K})\), we say that \(P\) \textbf{stabilizes} \(\Phi\) when \(\Phi\) is invariant under precomposition with the pure CP map induced by \(P\); equivalently, \(\Phi(P\rho P^\dag) = \Phi(\rho)\) for all \(\rho\in \B(\mathcal{H}_p^{\otimes n})\). The \textbf{stabilizer group} of \(\Phi\) is the subgroup of the Pauli group consisting of those operators which stabilize \(\Phi\).

How can the stabilizer group of a completely positive stabilizer map be captured symplectically, as for pure stabilizer maps?
By cyclicity of the trace, \(\Tr_{\mathcal{H}_p^{\otimes n}}(P\rho P^\dag) = \Tr_{\mathcal{H}_p^{\otimes n}}(\rho)\) for every Pauli \(P\), so the trace is stabilized by the entire Pauli group. Therefore, completely positive stabilizer maps should have larger stabilizer groups than pure stabilizer maps, motivating the following definition:
\begin{definition}
  Let \(\ALR^\disc\) denote the \dag-CCC with the same structure as \(\ALR\), where the morphisms are \textbf{affine coisotropic relations}. That is to say, these are affine relations which are either empty, or an affine relation whose linear component \(S\) satisfies \(S^\omega \subseteq S\).
\end{definition}
Booth and Comfort proved that \(\ALR^\disc\) admits a notion of Stinespring dilation \cite[Prop.~38]{denotational}:
\begin{lemma}\label{lem:symplectic_stinespring}
  Given an affine coisotropic relation \(R:  (V, \omega_V)\to \{\bullet\}\) there exists a symplectic vector space \((E,\omega_E)\) and an affine Lagrangian coisometry \(S: (V, \omega_V)\to (E,\omega_E)\) such that:
  \[R =  \discard_{(E,\omega_E)}\circ S\quad\text{where}\quad \discard_{(E,\omega_E)} \coloneqq \{(\vb v,\bullet)\mid \vb v \in E\} :(E,\omega_E) \to \{\bullet\}.\]
\end{lemma}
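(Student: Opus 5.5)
Since $R$ is a state-like relation $V\to\{\bullet\}$, I will identify it with an affine subspace $R = L+\vb a\subseteq V$ whose linear component satisfies $L^\omega\subseteq L$. I will treat the empty case separately.

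\emph{Coisotropic case.} The key structural fact is that $L^\omega$ is isotropic, so the quotient $E \coloneqq L/L^\omega$ carries the induced form $\omega_E([\vb u],[\vb v])\coloneqq\omega_V(\vb u,\vb v)$. This form is well-defined because $L^\omega$ is the radical of $\omega_V|_L$, and it is therefore non-degenerate on $E$, so $(E,\omega_E)$ is symplectic. I will then define
\[
  S\coloneqq\{(\vb a+\vb u,\,[\vb u])\mid \vb u\in L\}\subseteq V\times E .
\]
Its linear component is $\{(\vb u,[\vb u])\}$, which has dimension $\dim L$. Since $\dim E = \dim L-\dim L^\omega = 2\dim L-\dim V$, we get $\dim V+\dim E = 2\dim L$, so $\dim S$ is exactly half of the ambient dimension. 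For isotropy with respect to $-\omega_V\oplus\omega_E$, take $(\vb u,[\vb u])$ and $(\vb u',[\vb u'])$; the form gives $-\omega_V(\vb u,\vb u')+\omega_V(\vb u,\vb u')=0$. Isotropy together with the half-dimension count makes $S$ an affine Lagrangian relation. By construction $\discard_E\circ S=\{\vb a+\vb u\mid\vb u\in L\}=R$.

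\emph{Coisometry.} Next I check that $S\circ S^\dag=1_E$. Relationally, $S\circ S^\dag=\{([\vb u],[\vb u'])\mid \vb a+\vb u=\vb a+\vb u'\}$, which is the diagonal of $E$ because the map $\vb u\mapsto[\vb u]$ is surjective. This holds as long as "coisometry" means $S S^\dag=1$ in relational composition. If the paper instead means $S^\dag S=1$, I would need to revisit the construction, since $S^\dag S$ is the relation $\{(\vb a+\vb u,\vb a+\vb u')\mid \vb u-\vb u'\in L^\omega\}$, which is not the identity. I will take the convention under which this construction works.

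\emph{Empty case.} When $R=\emptyset$, I take $E=0$ and let $S$ be the empty relation $V\to 0$. The empty relation counts as affine Lagrangian in the convention where $\ALR$ contains the zero map, and then $\discard\circ S=\emptyset=R$. The coisometry condition degenerates in this case, so I would either accept it as holding vacuously or note it as an exception.

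\emph{Main obstacle.} The only step that needs real care is the verification that $\omega_E$ is well-defined and non-degenerate. This reduces to the identity $\operatorname{rad}(\omega_V|_L)=L\cap L^\omega=L^\omega$, which holds precisely because $R$ is coisotropic.
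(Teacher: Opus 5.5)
Your proof is correct. The paper gives no argument of its own for this lemma; it only cites Booth and Comfort \cite[Prop.~38]{denotational}, so your coisotropic reduction supplies a self-contained proof. Every step checks out:
\begin{itemize}
\item \(L\cap L^\omega=L^\omega\) is the radical of \(\omega_V|_L\), so \(\omega_E\) is well-defined and symplectic.
\item The count \(\dim L=\tfrac12(\dim V+\dim E)\), together with isotropy, makes \(S\) Lagrangian.
\item \(\discard_E\circ S=\operatorname{dom}S=R\).
\item \(S\circ S^\dag=1_E\) follows from surjectivity of \(L\to L/L^\omega\).
\end{itemize}
Your construction is also essentially forced. If \(S\circ S^\dag=1_E\), then \(S\) is the graph of a surjective affine map on \(\operatorname{dom}S=R\). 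For any Lagrangian relation the linear kernel \(\{\vb v\mid(\vb v,0)\in S_0\}\) equals \((\operatorname{dom}S_0)^\omega=L^\omega\), so \(E\cong L/L^\omega\) symplectically. In stabilizer terms, \(E\) is the logical space of the code with stabilizer \(L^\omega\), \(S\) is decoding, and \(S^\dag\circ S=\{(\vb a+\vb u,\vb a+\vb u')\mid \vb u-\vb u'\in L^\omega\}\) is the code projector.

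Two of your hedges can be settled.

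\emph{The convention.} You do not need to accommodate the reading \(S^\dag\circ S=1_V\). By the same kernel identity, \(S^\dag\circ S\) relates \(\vb v,\vb v'\in\operatorname{dom}S\) exactly when \(\vb v-\vb v'\in(\operatorname{dom}S_0)^\omega\). So it is the identity only if \(\operatorname{dom}S=V\), i.e.\ \(R=V\), and your reading \(S\circ S^\dag=1_E\) is the only one under which the lemma can hold. The proof of \Cref{thm:acr_stabdisc} describes \(V^\dag\circ V\) as a partial identity on the support, which fails for the same reason. However, the identity actually used there does hold for your \(S\). For coisotropic \(R'\subseteq R\) with linear part \(L'\subseteq L\), one has \(L^\omega\subseteq (L')^\omega\subseteq L'\), hence \(R'\circ S^\dag\circ S=R'+L^\omega=R'\).

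\emph{The empty case.} This is not vacuous. With \(E=0\) you get \(S\circ S^\dag=\emptyset\neq\{(0,0)\}=1_0\). More generally, no coisometry can discard to \(\emptyset\), since \(S\circ S^\dag=1_E\neq\emptyset\) forces \(\operatorname{dom}S\neq\emptyset\). So the coisometry clause of the statement genuinely requires \(R\neq\emptyset\). For \(R=\emptyset\), only the factorisation \(R=\discard_E\circ S\) survives, with \(S=\emptyset\).
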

It follows immediately that:
\begin{proposition}
  \label{prop:acr_discard_bicategory}
  \(\ALR^\disc\) is a discard bicategory with respect to subspace inclusion and \(\discard\).
\end{proposition}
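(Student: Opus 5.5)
We need to verify three things: that inclusion of affine coisotropic relations is a poset enrichment compatible with the symmetric monoidal structure, that the discard maps are coherent with the monoidal unit and direct sum, and that every morphism composed with a discard is included in the discard. Reflexivity, antisymmetry and transitivity of \(\subseteq\) hold because it is set inclusion. Monotonicity also follows directly from the relational definitions. If \(f\subseteq g\) and \(h\subseteq k\), then every witness \(\vb w\) for a pair \((\vb v,\vb u)\in h\circ f\) is also a witness in \(k\circ g\), so \(h\circ f\subseteq k\circ g\). Likewise \(f\oplus h=\{((\vb v,\vb v'),(\vb w,\vb w'))\mid (\vb v,\vb w)\in f,(\vb v',\vb w')\in h\}\) is contained in \(g\oplus k\). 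Neither step uses coisotropy; that property is needed only to know the structure is well-defined, which we take from the \dag-CCC structure of \(\ALR^\disc\).

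Next we check that \(\discard_V=\{(\vb v,\bullet)\mid\vb v\in V\}\), as in \Cref{lem:symplectic_stinespring}, is a morphism of \(\ALR^\disc\). Its linear component is all of \(V\). The symplectic complement of the full space is \(\{0\}\) by non-degeneracy, so it is contained in \(V\), and the relation is coisotropic. The identities \(\discard_{\{0\}}=1_{\{0\}}\) and \(\discard_{V\oplus W}=\discard_V\oplus\discard_W\) are immediate: both sides relate every vector to \(\bullet\), and \(\{0\}\) is the monoidal unit.

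The main point is the inequality \(\discard_W\circ f\subseteq\discard_V\) for every affine coisotropic \(f:V\to W\). The composite \(\discard_W\circ f\) is \(\{(\vb v,\bullet)\mid\exists\vb w,(\vb v,\vb w)\in f\}\), which is a subset of \(V\times\{\bullet\}=\discard_V\). If \(f\) is empty, the composite is empty and the inclusion holds trivially. So \(\discard_V\) is the top element of \(\ALR^\disc(V,\{0\})\). Lemma~\ref{lem:symplectic_stinespring} is not strictly needed here; it only explains the intended meaning of discarding.

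I do not expect any real obstacle. The only care needed is to confirm that compositions of coisotropic relations stay coisotropic, so that the poset really lives on the hom-sets. This is part of the assumed \dag-CCC structure of \(\ALR^\disc\), so I will cite it rather than reprove it.
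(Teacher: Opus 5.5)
Your proof is correct. It takes a different route from the paper, which gives no separate argument and simply presents the proposition as an immediate consequence of the symplectic Stinespring dilation (\Cref{lem:symplectic_stinespring}). You instead verify the axioms directly and observe that the dilation is not needed.

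Your route makes transparent that the discard-bicategory axioms are purely relational facts: relational composition and direct sum are monotone under inclusion, and the total relation \(V\times\{\bullet\}\) is the top effect. The only symplectic input is that this total relation is coisotropic, which you check.

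The appeal to the dilation buys something structural instead: every coisotropic effect is a Lagrangian coisometry followed by \(\discard\). This can also be used to show that affine coisotropic relations are closed under composition, a fact you take from the assumed \dag-CC structure. More importantly, it is exactly what the paper needs in \Cref{thm:acr_stabdisc}, where inclusion \(R\subseteq S\) of effects is turned into a purification witness by dilating \(S=\discard\circ V\) and writing \(R=(R\circ V^\dag)\circ V\). This shows that the equivalence with \(\Proj(\Stab^\disc)\) reflects the order.

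For the proposition as stated, your direct verification is complete and self-contained.
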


Taking the stabilizer groups of completely positive stabilizer maps induces the following equivalence:

\begin{theoremE} \label{thm:acr_stabdisc}
\(\Proj(\Stab^\disc)\) is a discard bicategory with respect to the purification order and the trace, and there is an equivalence of \dag-compact closed discard bicategories
\(\Proj(\Stab^\disc) \cong \ALR^\disc\).
\end{theoremE}
\begin{proofE}
  We prove that there is a functor \(\Proj(\Stab^\disc)\to \ALR^\disc\) which preserves the pseudo-purification and reflects the order given by subspace inclusion, sending the trace to the discard relation. From this it follows that \(\Proj(\Stab^\disc)\) is a discard bicategory and that the functor is an equivalence of discard bicategories.

  First, by Booth and Comfort \cite[Cor.~1]{denotational}, there is a \dag-CC equivalence  \(F:\Proj(\Stab^\disc) \simeq \ALR^\disc\) extending the \dag-CC equivalence \(\Proj(\Stab)\simeq \ALR\), sending the trace \(\Tr_{\mathcal{H}_p^{\otimes e}}:\B(\mathcal{H}_p^{\otimes e})\to\C\) in  \(\Proj(\Stab^\disc)\) to the discard relation \(\discard_{(\F_p^{2e}, \omega_{e})}\) in \(\ALR^\disc\).
  It is immediate that the discard is preserved; therefore, it remains to prove that the poset-enrichment is preserved and reflected.
  Because both categories are compact closed, it suffices to prove the claim on effects \(\B(\mathcal{H})\to \C\).

  Now we prove the preservation of the poset-enrichment.
  Take two effects \([\Phi],[\Psi]: \B(\mathcal{H}_p^{\otimes n})\to \C\) in \(\Proj(\Stab^\disc)\), such that \([\Phi] \preceqp [\Psi]\) is witnessed by \([\Psi_0]: \B(\mathcal{H}_p^{\otimes n})\to  \B(\mathcal{H}_p^{\otimes e})\) and \([\Phi_0]: \B(\mathcal{H}_p^{\otimes e})\to \C \). Then
  
  \begin{align*}
    F([\Phi])  = F([\Phi_0]) \circ F([\Psi_0])
    \subseteq \discard_{(\F_p^{2e}, \omega_e)} \circ F([\Psi_0])
     = F([\Tr_{\mathcal{H}_p^{\otimes e}}]) \circ F([\Psi_0])
     = F([\Tr_{\mathcal{H}_p^{\otimes e}}\circ \Psi_0])
     = F([\Psi])
  \end{align*}

  Finally, we show that the poset-enrichment is reflected. Take two effects \(R, S:(\F_p^{2n}, \omega_n)\to \{\bullet\}\) in \(\ALR^\disc\) such that \(R\subseteq S\). 
  By \Cref{lem:symplectic_stinespring} there exists some \(e\in \N\) and coisometry  \(V:(\F_p^{2n}, \omega_n)\to (\F_p^{2e}, \omega_e)\) such that  \(S = \discard_{(\F_p^{2e}, \omega_e)}\circ V\). 

  Let \(G\) denote the inverse of \(F\), when restricted to the objects of the form \((\F_p^{2n}, \omega_n)\) in \(\ALR^\dag\), for all \(n\in \N\). 
  Take \([\Phi_0] = G(R \circ V^\dag )\) and \([\Psi_0] = G(V)\). On the one hand,

   \[
    [\Tr_{\mathcal{H}_p^{\otimes e}}]\circ [\Psi_0]
    = G(\discard_{(\F_p^{2e}, \omega_e)}) \circ  G(V)
    = G(\discard_{(\F_p^{2e}, \omega_e)}\circ V)
    = G(S)
    \]

  On the other hand, because \(R\subseteq S\) we have
   \((V^\dag \circ V) \circ R =   \{(\vb v,\vb v) \ | \ \forall (\vb v, \bullet ) \in S \} \circ R =R\),
  therefore,
  \[
    [\Phi_0] \circ [\Psi_0]
    = G(  R\circ V^\dag )\circ G(V)
    = G(  R\circ V^\dag \circ V)
    = G(R)
  \]

\end{proofE}

\begin{corollaryE}\label{cor:cpm_stab_order_embedding}
  The canonical functor \(\Proj(\Stab^\disc)\to \Proj(\CPM)\) is faithful, preserves the \dag-compact-closed and discard bicategory structure, and reflects the poset enrichment.
\end{corollaryE}
\begin{proofE}
Let \(H:\Stab^\disc\to \CPM\) denote the canonical functor from mixed stabilizer maps to completely positive maps. The preservation of the \dag-compact closed structure, the discard bicategory structure and the faithfulness of \(H\) are immediate. It remains to show that \(H\) reflects the poset-enrichment.

Take effects \(\Phi,\Psi: \B(\mathcal{H}_p^{\otimes n})\to\C\) in \(\Stab^\disc\) with \([H(\Phi)] \preceqp[H(\Psi)]\). Take Stinespring dilations \(U:\mathcal{H}_p^{\otimes n}\to \mathcal{H}_p^{\otimes e}\) and \(V:\mathcal{H}_p^{\otimes n}\to \mathcal{H}_p^{\otimes e'}\) of \(H(\Phi)\) and \(H(\Psi)\) respectively.
By \Cref{thm:purification_lowner_equiv}, there exists some \(\lambda\in\R^{\geq 0}\) such that \(H(\Phi)\preceql \lambda H(\Psi)\). Therefore for any positive \(\rho\in \B(\mathcal{H}_p^{\otimes n})\), we have

\[ 0 \leq (\lambda H(\Psi) - H(\Phi))(\rho) = \lambda \Tr(V^\dag V \rho) - \Tr(U^\dag U \rho) = \Tr((\lambda V^\dag V - U^\dag U)\rho). \]

Recalling that a Hermitian operator \(A\) is positive if and only if \(\Tr(A\rho)\geq 0\) for all positive \(\rho\), it follows that \(\lambda V^\dag V - U^\dag U\) is positive. Invoking Douglas' range-inclusion lemma, it follows that \(\range(U^\dag)\subseteq \range(V^\dag)\).

Now, for any Pauli operator \(P\in \B(\mathcal{H}_p^{\otimes n})\), we have the following chain of logical equivalences:
\begin{align}
  & P \text{ stabilizes } H(\Phi) \notag\\
  \iff & H(\Phi)\circ(P{\cdot}P^\dag) = H(\Phi) \tag{definition} \\
  \iff & H(\Phi)(P\rho P^\dag) = H(\Phi)(\rho) \text{ for all } \rho\in \B(\mathcal{H}_p^{\otimes n}) \tag{evaluating} \\
  \iff & \Tr(U^\dag U  P\rho P^\dag) = \Tr(U^\dag U \rho) \text{ for all } \rho\in \B(\mathcal{H}_p^{\otimes n}) \tag{effect-operator formula} \\
  \iff & \Tr(P^\dag U^\dag U P \rho) = \Tr(U^\dag U \rho) \text{ for all } \rho\in \B(\mathcal{H}_p^{\otimes n}) \tag{cyclicity of the trace} \\
  \iff & P^\dag U^\dag U P = U^\dag U \tag{non-degeneracy of the trace} \\
  \iff & P U^\dag U P^\dag = U^\dag U \tag{Unitarity of \(P\)}
\end{align}
Because \(\range(U^\dag)\subseteq \range(V^\dag)\), it follows that \(P V^\dag V P^\dag = V^\dag V\), so that by symmetry, \(P\) also stabilizes \(H(\Psi)\). Therefore, the stabilizer group of \(\Phi\) is a subgroup of the stabilizer group of \(\Psi\). Recalling the functor \(F:\Proj(\Stab^\disc)\to \ALR^\disc\) from \Cref{thm:acr_stabdisc}, it follows that \(F([\Phi])\subseteq F([\Psi])\). Because \(F\) is an isomorphism of discard bicategories, it reflects the partial order enrichment, so that \([\Phi]\preceqp [\Psi]\) in \(\Proj(\Stab^\disc)\).
\end{proofE}

\subsection{The directed mixed stabilizer ZX-calculus}

Finally, building on the work of Carette et al.~\cite{Carette2021} and Booth et al.~\cite[Cor.~54]{gsa}, we can add discarding to the stabilizer ZX-calculus:

\begin{definition}\label{def:mixed_stabzx}
The \textbf{directed mixed stabilizer ZX-calculus}, \(\StabZX^\disc\), is the \dag-CC discard category presented by adding a generator
\(\begin{tikzpicture}
	\begin{pgfonlayer}{nodelayer}
		\node [style=none] (71) at (10.25, 0.3) {};
		\node [style=none] (72) at (11, 0.3) {};
		\node [style=none] (73) at (11, -0.3) {};
		\node [style=none] (74) at (10.25, -0.3) {};
		\node [style=groundout] (80) at (10.625, 0) {};
		\node [style=none] (81) at (10.25, 0) {};
	\end{pgfonlayer}
	\begin{pgfonlayer}{edgelayer}
		\draw [style=background] (73.center)
			 to (72.center)
			 to (71.center)
			 to (74.center)
			 to cycle;
		\draw (81.center) to (80);
	\end{pgfonlayer}
\end{tikzpicture}
:1\to 0\) to \(\StabZX\).
We impose that for all \(a,b\in\F_p\),
\[
\input{mixed/stab_zx/disc_isom/0.tikz}
=\input{mixed/stab_zx/disc_isom/1.tikz}
,\qquad
\input{mixed/stab_zx/disc_isom/2.tikz}
= \input{mixed/stab_zx/disc_isom/3.tikz}
,\qquad
 \input{mixed/stab_zx/disc_isom/4.tikz}
= \input{mixed/stab_zx/disc_isom/5.tikz}
,\qquad
\input{mixed/stab_zx/disc_isom/6.tikz}
\subseteq \input{mixed/stab_zx/disc_isom/7.tikz}
\qquad\text{and}\qquad
\input{mixed/stab_zx/disc_isom/8.tikz}
\subseteq \input{mixed/stab_zx/disc_isom/9.tikz}.
\]
\end{definition}

Putting everything together, we have:
\begin{theoremE}\label{prop:stabzxdisc_alrdisc_equiv}
There is an equivalence of \dag-CC discard bicategories
\(\StabZX^\disc\simeq\ALR^\disc \simeq \Proj(\Stab^\disc)\).
\end{theoremE}
\begin{proofE}
  The fact that this is a \dag-CC discard equivalence follows from \cite[Cor.~54]{gsa}.  To show that this equivalence preserves and reflects the preorder, first observe that the linear Lagrangian relations are poset enriched with respect to the discrete preorder, as all linear Lagrangian relations of the same type have the same dimension. For affine Lagrangian relations, we have almost the same story, except that all relations contain the empty set. Because affine coisotropic relations can be obtained by composing affine Lagrangian relations with the discard relation \(\disc_E\), the poset-enrichment with respect to subspace inclusion is determined by the poset \(\ACR(E, I)\), for which the discard relation is the top element.
\end{proofE}

  \section{Translation-invariant quantum processes}\label{sec:beh}
  \pratendSetLocal{category=beh, end, restate}
  In \Cref{sec:intro} we introduced the delay informally and claimed that
translation-invariant stabilizer processes can be captured by ``delaying'' finite ZX-diagrams. We now make this precise, beginning with the syntax. Following Katis et al.~\cite[\S~3.1, Prop.~3]{Katis1997}, we extend the
stabilizer ZX-calculus with a delay generator \(\delta\), subject only to the
equations governing its basic interaction with the existing generators:

\begin{definition}
  \label{def:statezx}
  The \(\dag\)-CCC \(\St(\StabZX)\), called the \textbf{stateful stabilizer ZX-calculus}, is given by adding a \textbf{delay} generator \(\) to \(\StabZX\), modulo the sliding equations:
  \begin{equation}\label{eq:fdzx_sliding}
      \begin{tikzpicture}
	\begin{pgfonlayer}{nodelayer}
		\node [style=none] (75) at (3, -1.125) {};
		\node [style=none] (76) at (6.5, -1.125) {};
		\node [style=none] (77) at (6.5, 1.125) {};
		\node [style=none] (78) at (3, 1.125) {};
		\node [style=bn] (79) at (4.75, -0.125) {};
		\node [style=none] (80) at (6.5, 0.375) {};
		\node [style=none] (81) at (6.5, -0.625) {};
		\node [style=none] (82) at (3, 0.375) {};
		\node [style=none] (83) at (3, -0.625) {};
		\node [style=none] (84) at (5.75, 0.375) {};
		\node [style=none] (85) at (5.75, -0.625) {};
		\node [style=none] (86) at (3.75, 0.375) {};
		\node [style=none] (87) at (3.75, -0.625) {};
		\node [style=none] (88) at (6.25, -0.125) {$\vdotss$};
		\node [style=none] (89) at (3, 0.375) {};
		\node [style=none] (90) at (3.25, -0.125) {$\vdotss$};
		\node [style=bphase] (91) at (4.75, 0.75) {\(\phasemat{0}{b}\)};
		\node [style=rmat] (92) at (3.75, 0.375) {\(\delta\)};
		\node [style=rmat] (93) at (3.75, -0.625) {\(\delta\)};
	\end{pgfonlayer}
	\begin{pgfonlayer}{edgelayer}
		\draw [style=background] (77.center)
			 to (76.center)
			 to (75.center)
			 to (78.center)
			 to cycle;
		\draw (79)
			 to [in=360, out=135] (86.center)
			 to (82.center);
		\draw (79)
			 to [in=0, out=-135] (87.center)
			 to (83.center);
		\draw (79)
			 to [in=180, out=45] (84.center)
			 to [in=180, out=0] (80.center);
		\draw (79)
			 to [in=180, out=-45] (85.center)
			 to (81.center);
		\draw [style=dbphase] (91) to (79);
	\end{pgfonlayer}
\end{tikzpicture}

    =
      \begin{tikzpicture}
	\begin{pgfonlayer}{nodelayer}
		\node [style=none] (94) at (7.5, -1.125) {};
		\node [style=none] (95) at (11, -1.125) {};
		\node [style=none] (96) at (11, 1.125) {};
		\node [style=none] (97) at (7.5, 1.125) {};
		\node [style=bn] (98) at (9.25, -0.125) {};
		\node [style=none] (99) at (11, 0.375) {};
		\node [style=none] (100) at (11, -0.625) {};
		\node [style=none] (101) at (7.5, 0.375) {};
		\node [style=none] (102) at (7.5, -0.625) {};
		\node [style=none] (103) at (10.25, 0.375) {};
		\node [style=none] (104) at (10.25, -0.625) {};
		\node [style=none] (105) at (8.25, 0.375) {};
		\node [style=none] (106) at (8.25, -0.625) {};
		\node [style=none] (107) at (10.75, -0.125) {$\vdotss$};
		\node [style=none] (108) at (7.5, 0.375) {};
		\node [style=none] (109) at (7.75, -0.125) {$\vdotss$};
		\node [style=bphase] (110) at (9.25, 0.75) {\(\phasemat{0}{b}\)};
		\node [style=rmat] (111) at (10.25, 0.375) {\(\delta\)};
		\node [style=rmat] (112) at (10.25, -0.625) {\(\delta\)};
	\end{pgfonlayer}
	\begin{pgfonlayer}{edgelayer}
		\draw [style=background] (96.center)
			 to (95.center)
			 to (94.center)
			 to (97.center)
			 to cycle;
		\draw (98)
			 to [in=360, out=135] (105.center)
			 to (101.center);
		\draw (98)
			 to [in=0, out=-135] (106.center)
			 to (102.center);
		\draw (98)
			 to [in=180, out=45] (103.center)
			 to [in=180, out=0] (99.center);
		\draw (98)
			 to [in=180, out=-45] (104.center)
			 to (100.center);
		\draw [style=dbphase] (110) to (98);
	\end{pgfonlayer}
\end{tikzpicture}

    \qquad\text{and}\qquad
      \begin{tikzpicture}
	\begin{pgfonlayer}{nodelayer}
		\node [style=none] (114) at (14, -1.125) {};
		\node [style=none] (115) at (17.5, -1.125) {};
		\node [style=none] (116) at (17.5, 1.125) {};
		\node [style=none] (117) at (14, 1.125) {};
		\node [style=wn] (118) at (15.75, -0.125) {};
		\node [style=none] (119) at (17.5, 0.375) {};
		\node [style=none] (120) at (17.5, -0.625) {};
		\node [style=none] (121) at (14, 0.375) {};
		\node [style=none] (122) at (14, -0.625) {};
		\node [style=none] (123) at (16.75, 0.375) {};
		\node [style=none] (124) at (16.75, -0.625) {};
		\node [style=none] (125) at (14.75, 0.375) {};
		\node [style=none] (126) at (14.75, -0.625) {};
		\node [style=none] (127) at (17.25, -0.125) {$\vdotss$};
		\node [style=none] (128) at (14, 0.375) {};
		\node [style=none] (129) at (14.25, -0.125) {$\vdotss$};
		\node [style=wphase] (130) at (15.75, 0.75) {\(\phasemat{0}{b}\)};
		\node [style=rmat] (131) at (14.75, 0.375) {\(\delta\)};
		\node [style=rmat] (132) at (14.75, -0.625) {\(\delta\)};
	\end{pgfonlayer}
	\begin{pgfonlayer}{edgelayer}
		\draw [style=background] (116.center)
			 to (115.center)
			 to (114.center)
			 to (117.center)
			 to cycle;
		\draw (118)
			 to [in=360, out=135] (125.center)
			 to (121.center);
		\draw (118)
			 to [in=0, out=-135] (126.center)
			 to (122.center);
		\draw (118)
			 to [in=180, out=45] (123.center)
			 to [in=180, out=0] (119.center);
		\draw (118)
			 to [in=180, out=-45] (124.center)
			 to (120.center);
		\draw [style=dwphase] (130) to (118);
	\end{pgfonlayer}
\end{tikzpicture}

    =
      \begin{tikzpicture}
	\begin{pgfonlayer}{nodelayer}
		\node [style=none] (133) at (18.5, -1.125) {};
		\node [style=none] (134) at (22, -1.125) {};
		\node [style=none] (135) at (22, 1.125) {};
		\node [style=none] (136) at (18.5, 1.125) {};
		\node [style=wn] (137) at (20.25, -0.125) {};
		\node [style=none] (138) at (22, 0.375) {};
		\node [style=none] (139) at (22, -0.625) {};
		\node [style=none] (140) at (18.5, 0.375) {};
		\node [style=none] (141) at (18.5, -0.625) {};
		\node [style=none] (142) at (21.25, 0.375) {};
		\node [style=none] (143) at (21.25, -0.625) {};
		\node [style=none] (144) at (19.25, 0.375) {};
		\node [style=none] (145) at (19.25, -0.625) {};
		\node [style=none] (146) at (21.75, -0.125) {$\vdotss$};
		\node [style=none] (147) at (18.5, 0.375) {};
		\node [style=none] (148) at (18.75, -0.125) {$\vdotss$};
		\node [style=wphase] (149) at (20.25, 0.75) {\(\phasemat{0}{b}\)};
		\node [style=rmat] (150) at (21.25, 0.375) {\(\delta\)};
		\node [style=rmat] (151) at (21.25, -0.625) {\(\delta\)};
	\end{pgfonlayer}
	\begin{pgfonlayer}{edgelayer}
		\draw [style=background] (135.center)
			 to (134.center)
			 to (133.center)
			 to (136.center)
			 to cycle;
		\draw (137)
			 to [in=360, out=135] (144.center)
			 to (140.center);
		\draw (137)
			 to [in=0, out=-135] (145.center)
			 to (141.center);
		\draw (137)
			 to [in=180, out=45] (142.center)
			 to [in=180, out=0] (138.center);
		\draw (137)
			 to [in=180, out=-45] (143.center)
			 to (139.center);
		\draw [style=dwphase] (149) to (137);
	\end{pgfonlayer}
\end{tikzpicture}
.
  \end{equation}
\end{definition}

The delay is interpreted as the process which waits and then releases its input one time-step later. Apart from the Paulis, we interpret the stabilizer ZX-diagram as being translation invariant, so that they are repeated at each time step, as expressed by the commutation with the delay. 
However, the Paulis are different: they are interpreted as errors or signals localised at time \(t = 0\). Shifting such an error in time would move it to another step, so it does not commute with the delay.
In other words, the translation-invariant structure is the scaffolding; whereas, the Paulis propagate through this scaffolding.

Informally, such a diagram unfolds into an infinite, translation-invariant diagram as suggested in the introduction:

\begin{example}
  \label{ex:surface_flow}
Revisiting the surface code example given in \Cref{eq:surface_code} in the introduction, we see how the Pauli can flow through the infinite circuit:
\[
    \begin{tikzpicture}
	\begin{pgfonlayer}{nodelayer}
		\node [style=none] (13) at (5.25, 2.375) {};
		\node [style=none] (14) at (5.25, -2.375) {};
		\node [style=none] (15) at (9.75, -2.375) {};
		\node [style=none] (16) at (9.75, 2.375) {};
		\node [style=wn] (198) at (7.5, -1.875) {};
		\node [style=bn] (199) at (6.75, -1.125) {};
		\node [style=wn] (200) at (9, -0.375) {};
		\node [style=bn] (201) at (8.25, 0.375) {};
		\node [style=bn] (202) at (8.25, -1.125) {};
		\node [style=none] (203) at (9, 0.375) {};
		\node [style=none] (204) at (7.5, -1.125) {};
		\node [style=none] (205) at (9, -1.125) {};
		\node [style=wn] (206) at (6, -0.375) {};
		\node [style=wn] (207) at (7.5, 1.125) {};
		\node [style=bn] (208) at (6.75, 0.375) {};
		\node [style=none] (209) at (7.5, 0.375) {};
		\node [style=lmat] (210) at (7.5, -0.375) {\(\delta\)};
		\node [style=wphase] (211) at (6, 1) {\(1,0\)};
		\node [style=wn] (212) at (7.5, 1.125) {};
		\node [style=bn] (213) at (6.75, 1.875) {};
		\node [style=bn] (214) at (8.25, 1.875) {};
		\node [style=none] (215) at (7.5, 1.875) {};
		\node [style=none] (216) at (9, 1.875) {};
	\end{pgfonlayer}
	\begin{pgfonlayer}{edgelayer}
		\draw [style=background] (16.center)
			 to (15.center)
			 to (14.center)
			 to (13.center)
			 to cycle;
		\draw (198) to (199);
		\draw (201) to (200);
		\draw (200) to (202);
		\draw (202) to (198);
		\draw (201) to (203.center);
		\draw (202) to (205.center);
		\draw (199) to (204.center);
		\draw (207) to (208);
		\draw (208) to (206);
		\draw (208) to (209.center);
		\draw (201) to (207);
		\draw (199) to (206);
		\draw (206) to (210);
		\draw (210) to (200);
		\draw [style=dwphase] (211) to (206);
		\draw (212) to (213);
		\draw (214) to (212);
		\draw (214) to (216.center);
		\draw (213) to (215.center);
	\end{pgfonlayer}
\end{tikzpicture}

  \rightsquigarrow
    \begin{tikzpicture}
	\begin{pgfonlayer}{nodelayer}
		\node [style=none] (13) at (1.75, 2.375) {};
		\node [style=none] (14) at (1.75, -2.375) {};
		\node [style=none] (15) at (13.25, -2.375) {};
		\node [style=none] (16) at (13.25, 2.375) {};
		\node [style=wn] (198) at (7.5, -1.875) {};
		\node [style=bn] (199) at (6.75, -1.125) {};
		\node [style=wn] (200) at (9, -0.375) {};
		\node [style=bn] (201) at (8.25, 0.375) {};
		\node [style=bn] (202) at (8.25, -1.125) {};
		\node [style=none] (203) at (9, 0.375) {};
		\node [style=none] (204) at (7.5, -1.125) {};
		\node [style=none] (205) at (9, -1.125) {};
		\node [style=wn] (206) at (6, -0.375) {};
		\node [style=wn] (207) at (7.5, 1.125) {};
		\node [style=bn] (208) at (6.75, 0.375) {};
		\node [style=none] (209) at (7.5, 0.375) {};
		\node [style=wn] (212) at (10.5, -1.875) {};
		\node [style=bn] (213) at (9.75, -1.125) {};
		\node [style=wn] (214) at (12, -0.375) {};
		\node [style=bn] (215) at (11.25, 0.375) {};
		\node [style=bn] (216) at (11.25, -1.125) {};
		\node [style=none] (217) at (12, 0.375) {};
		\node [style=none] (218) at (10.5, -1.125) {};
		\node [style=none] (219) at (12, -1.125) {};
		\node [style=wn] (220) at (9, -0.375) {};
		\node [style=wn] (221) at (10.5, 1.125) {};
		\node [style=bn] (222) at (9.75, 0.375) {};
		\node [style=none] (223) at (10.5, 0.375) {};
		\node [style=wn] (224) at (4.5, -1.875) {};
		\node [style=bn] (225) at (3.75, -1.125) {};
		\node [style=wn] (226) at (6, -0.375) {};
		\node [style=bn] (227) at (5.25, 0.375) {};
		\node [style=bn] (228) at (5.25, -1.125) {};
		\node [style=none] (229) at (6, 0.375) {};
		\node [style=none] (230) at (4.5, -1.125) {};
		\node [style=none] (231) at (6, -1.125) {};
		\node [style=wn] (232) at (3, -0.375) {};
		\node [style=wn] (233) at (4.5, 1.125) {};
		\node [style=bn] (234) at (3.75, 0.375) {};
		\node [style=none] (235) at (4.5, 0.375) {};
		\node [style=none] (236) at (2.25, 0.375) {};
		\node [style=none] (237) at (2.25, -1.125) {};
		\node [style=none] (238) at (12.75, -1.125) {};
		\node [style=none] (239) at (12.75, 0.375) {};
		\node [style=none] (240) at (2, -0.375) {\(\cdots\)};
		\node [style=none] (241) at (13, -0.375) {\(\cdots\)};
		\node [style=wphase] (244) at (7.5, -0.375) {\(1,0\)};
		\node [style=wn] (245) at (7.5, 1.125) {};
		\node [style=wn] (246) at (10.5, 1.125) {};
		\node [style=wn] (247) at (4.5, 1.125) {};
		\node [style=bn] (248) at (8.25, 1.875) {};
		\node [style=none] (249) at (9, 1.875) {};
		\node [style=bn] (250) at (6.75, 1.875) {};
		\node [style=none] (251) at (7.5, 1.875) {};
		\node [style=bn] (252) at (11.25, 1.875) {};
		\node [style=none] (253) at (12, 1.875) {};
		\node [style=bn] (254) at (9.75, 1.875) {};
		\node [style=none] (255) at (10.5, 1.875) {};
		\node [style=bn] (256) at (5.25, 1.875) {};
		\node [style=none] (257) at (6, 1.875) {};
		\node [style=bn] (258) at (3.75, 1.875) {};
		\node [style=none] (259) at (4.5, 1.875) {};
	\end{pgfonlayer}
	\begin{pgfonlayer}{edgelayer}
		\draw [style=backgroundborderless] (16.center)
			 to (15.center)
			 to (14.center)
			 to (13.center)
			 to cycle;
		\draw (198) to (199);
		\draw (201) to (200);
		\draw (200) to (202);
		\draw (202) to (198);
		\draw (201) to (203.center);
		\draw (202) to (205.center);
		\draw (199) to (204.center);
		\draw (207) to (208);
		\draw (208) to (206);
		\draw (208) to (209.center);
		\draw (201) to (207);
		\draw (199) to (206);
		\draw (212) to (213);
		\draw (215) to (214);
		\draw (214) to (216);
		\draw (216) to (212);
		\draw (215) to (217.center);
		\draw (216) to (219.center);
		\draw (213) to (218.center);
		\draw (221) to (222);
		\draw (222) to (220);
		\draw (222) to (223.center);
		\draw (215) to (221);
		\draw (213) to (220);
		\draw (224) to (225);
		\draw (227) to (226);
		\draw (226) to (228);
		\draw (228) to (224);
		\draw (227) to (229.center);
		\draw (228) to (231.center);
		\draw (225) to (230.center);
		\draw (233) to (234);
		\draw (234) to (232);
		\draw (234) to (235.center);
		\draw (227) to (233);
		\draw (225) to (232);
		\draw (238.center) to (214);
		\draw (214) to (239.center);
		\draw (232) to (236.center);
		\draw (237.center) to (232);
		\draw [style=backgroundborder] (13.center) to (16.center);
		\draw [style=backgroundborder] (15.center) to (14.center);
		\draw [style=lwphase] (244) to (226);
		\draw (248) to (249.center);
		\draw (250) to (251.center);
		\draw (252) to (253.center);
		\draw (254) to (255.center);
		\draw (256) to (257.center);
		\draw (258) to (259.center);
		\draw (247) to (258);
		\draw (247) to (256);
		\draw (245) to (250);
		\draw (245) to (248);
		\draw (246) to (254);
		\draw (246) to (252);
	\end{pgfonlayer}
\end{tikzpicture}

  =
    \begin{tikzpicture}
	\begin{pgfonlayer}{nodelayer}
		\node [style=none] (13) at (1.75, 2.375) {};
		\node [style=none] (14) at (1.75, -2.375) {};
		\node [style=none] (15) at (13.25, -2.375) {};
		\node [style=none] (16) at (13.25, 2.375) {};
		\node [style=wn] (198) at (7.5, -1.875) {};
		\node [style=bn] (199) at (6.75, -1.125) {};
		\node [style=wn] (200) at (9, -0.375) {};
		\node [style=bn] (201) at (8.25, 0.375) {};
		\node [style=bn] (202) at (8.25, -1.125) {};
		\node [style=none] (203) at (9, 0.375) {};
		\node [style=none] (204) at (7.5, -1.125) {};
		\node [style=none] (205) at (9, -1.125) {};
		\node [style=wn] (206) at (6, -0.375) {};
		\node [style=wn] (207) at (7.5, 1.125) {};
		\node [style=bn] (208) at (6.75, 0.375) {};
		\node [style=none] (209) at (7.5, 0.375) {};
		\node [style=wn] (212) at (10.5, -1.875) {};
		\node [style=bn] (213) at (9.75, -1.125) {};
		\node [style=wn] (214) at (12, -0.375) {};
		\node [style=bn] (215) at (11.25, 0.375) {};
		\node [style=bn] (216) at (11.25, -1.125) {};
		\node [style=none] (217) at (12, 0.375) {};
		\node [style=none] (218) at (10.5, -1.125) {};
		\node [style=none] (219) at (12, -1.125) {};
		\node [style=wn] (220) at (9, -0.375) {};
		\node [style=wn] (221) at (10.5, 1.125) {};
		\node [style=bn] (222) at (9.75, 0.375) {};
		\node [style=none] (223) at (10.5, 0.375) {};
		\node [style=wn] (224) at (4.5, -1.875) {};
		\node [style=bn] (225) at (3.75, -1.125) {};
		\node [style=wn] (226) at (6, -0.375) {};
		\node [style=bn] (227) at (5.25, 0.375) {};
		\node [style=bn] (228) at (5.25, -1.125) {};
		\node [style=none] (229) at (6.25, 0.375) {};
		\node [style=none] (230) at (4.5, -1.125) {};
		\node [style=none] (231) at (6, -1.125) {};
		\node [style=wn] (232) at (3, -0.375) {};
		\node [style=wn] (233) at (4.5, 1.125) {};
		\node [style=bn] (234) at (3.75, 0.375) {};
		\node [style=none] (235) at (4.5, 0.375) {};
		\node [style=none] (236) at (2.25, 0.375) {};
		\node [style=none] (237) at (2.25, -1.125) {};
		\node [style=none] (238) at (12.75, -1.125) {};
		\node [style=none] (239) at (12.75, 0.375) {};
		\node [style=none] (240) at (2, -0.375) {\(\cdots\)};
		\node [style=none] (241) at (13, -0.375) {\(\cdots\)};
		\node [style=wphase] (242) at (3.25, 1.125) {\(1,0\)};
		\node [style=wn] (243) at (5.825, 0.375) {};
		\node [style=wphase] (244) at (5.825, 1.125) {\(1,0\)};
		\node [style=wn] (245) at (7.5, 1.125) {};
		\node [style=wn] (246) at (10.5, 1.125) {};
		\node [style=wn] (247) at (4.5, 1.125) {};
		\node [style=bn] (248) at (8.25, 1.875) {};
		\node [style=none] (249) at (9, 1.875) {};
		\node [style=bn] (250) at (6.75, 1.875) {};
		\node [style=none] (251) at (7.5, 1.875) {};
		\node [style=bn] (252) at (11.25, 1.875) {};
		\node [style=none] (253) at (12, 1.875) {};
		\node [style=bn] (254) at (9.75, 1.875) {};
		\node [style=none] (255) at (10.5, 1.875) {};
		\node [style=bn] (256) at (5.25, 1.875) {};
		\node [style=none] (257) at (6, 1.875) {};
		\node [style=bn] (258) at (3.75, 1.875) {};
		\node [style=none] (259) at (4.5, 1.875) {};
	\end{pgfonlayer}
	\begin{pgfonlayer}{edgelayer}
		\draw [style=backgroundborderless] (16.center)
			 to (15.center)
			 to (14.center)
			 to (13.center)
			 to cycle;
		\draw (198) to (199);
		\draw (201) to (200);
		\draw (200) to (202);
		\draw (202) to (198);
		\draw (201) to (203.center);
		\draw (202) to (205.center);
		\draw (199) to (204.center);
		\draw (207) to (208);
		\draw (208) to (206);
		\draw (208) to (209.center);
		\draw (201) to (207);
		\draw (199) to (206);
		\draw (212) to (213);
		\draw (215) to (214);
		\draw (214) to (216);
		\draw (216) to (212);
		\draw (215) to (217.center);
		\draw (216) to (219.center);
		\draw (213) to (218.center);
		\draw (221) to (222);
		\draw (222) to (220);
		\draw (222) to (223.center);
		\draw (215) to (221);
		\draw (213) to (220);
		\draw (224) to (225);
		\draw (227) to (226);
		\draw (226) to (228);
		\draw (228) to (224);
		\draw (227) to (229.center);
		\draw (228) to (231.center);
		\draw (225) to (230.center);
		\draw (233) to (234);
		\draw (234) to (232);
		\draw (234) to (235.center);
		\draw (227) to (233);
		\draw (225) to (232);
		\draw (238.center) to (214);
		\draw (214) to (239.center);
		\draw (232) to (236.center);
		\draw (237.center) to (232);
		\draw [style=backgroundborder] (13.center) to (16.center);
		\draw [style=backgroundborder] (15.center) to (14.center);
		\draw [style=rwphase] (242) to (233);
		\draw [style=dwphase] (244) to (243);
		\draw (248) to (249.center);
		\draw (250) to (251.center);
		\draw (252) to (253.center);
		\draw (254) to (255.center);
		\draw (256) to (257.center);
		\draw (258) to (259.center);
		\draw (247) to (258);
		\draw (247) to (256);
		\draw (245) to (250);
		\draw (245) to (248);
		\draw (246) to (254);
		\draw (246) to (252);
	\end{pgfonlayer}
\end{tikzpicture}

\]
\end{example}

However, such infinite diagrams are only heuristic pictures of the intended translation-invariant process, rather than the semantics we work with directly. We formalise them instead through their \emph{unrollings}: finite truncations obtained by restricting the evolution of the process to finite intervals of time. To play nicely with the compact closed structure of \(\St(\StabZX)\), we fix no starting time, truncating over finite intervals of \(\Z\):

\begin{definition}
  \label{def:unrolling}
  Given a stateful stabilizer ZX-diagram \(D\), define its unrolling \(\Unroll(D)\) as a family of \(\ZX^\disc\) diagrams indexed by finite intervals of  \(\Z\):
  \begin{gather*}
    \mapsto  
      \left\{
        \bigotimes_{t=\ell}^r
        \begin{matrix}
          & \text{if }t=0\\[.5cm]
          \begin{tikzpicture}
	\begin{pgfonlayer}{nodelayer}
		\node [style=none] (9) at (-1, 0.95) {};
		\node [style=none] (10) at (2, 0.95) {};
		\node [style=none] (11) at (2, -0.95) {};
		\node [style=none] (12) at (-1, -0.95) {};
		\node [style=wn] (14) at (0.5, -0.175) {};
		\node [style=none] (15) at (1.5, 0.325) {};
		\node [style=none] (16) at (1.5, -0.675) {};
		\node [style=none] (17) at (-0.5, 0.325) {};
		\node [style=none] (18) at (-0.5, -0.675) {};
		\node [style=none] (19) at (-0.525, -0.175) {$m$};
		\node [style=none] (20) at (2, 0.325) {};
		\node [style=none] (21) at (2, -0.675) {};
		\node [style=none] (26) at (0, 0.025) {$\vdots$};
		\node [style=none] (27) at (1.375, -0.175) {$n$};
		\node [style=none] (28) at (1, 0.025) {$\vdots$};
		\node [style=none] (29) at (-1, 0.325) {};
		\node [style=none] (30) at (-1, -0.675) {};
		\node [style=wphase] (31) at (0.5, 0.75) {$\phasemat{0}{b}$};
	\end{pgfonlayer}
	\begin{pgfonlayer}{edgelayer}
		\draw [style=background] (11.center)
			 to (10.center)
			 to (9.center)
			 to (12.center)
			 to cycle;
		\draw (20.center)
			 to (15.center)
			 to [in=60, out=-180] (14);
		\draw (21.center)
			 to (16.center)
			 to [in=-60, out=180] (14);
		\draw (29.center)
			 to (17.center)
			 to [in=120, out=360] (14);
		\draw (14)
			 to [in=0, out=-120] (18.center)
			 to (30.center);
		\draw [style=dwphase] (31) to (14);
	\end{pgfonlayer}
\end{tikzpicture}
& \text{if } t\neq 0
        \end{matrix}
    \right\}_{\ell\leq r\in\Z}
    \quad
      \mapsto  
      \left\{
        \bigotimes_{t=\ell}^r
        \begin{matrix}
          & \text{if }t=0\\[.5cm]
          \begin{tikzpicture}
	\begin{pgfonlayer}{nodelayer}
		\node [style=none] (9) at (-1, 0.95) {};
		\node [style=none] (10) at (2, 0.95) {};
		\node [style=none] (11) at (2, -0.95) {};
		\node [style=none] (12) at (-1, -0.95) {};
		\node [style=bn] (14) at (0.5, -0.175) {};
		\node [style=none] (15) at (1.5, 0.325) {};
		\node [style=none] (16) at (1.5, -0.675) {};
		\node [style=none] (17) at (-0.5, 0.325) {};
		\node [style=none] (18) at (-0.5, -0.675) {};
		\node [style=none] (19) at (-0.525, -0.175) {$m$};
		\node [style=none] (20) at (2, 0.325) {};
		\node [style=none] (21) at (2, -0.675) {};
		\node [style=none] (26) at (0, 0.025) {$\vdots$};
		\node [style=none] (27) at (1.375, -0.175) {$n$};
		\node [style=none] (28) at (1, 0.025) {$\vdots$};
		\node [style=none] (29) at (-1, 0.325) {};
		\node [style=none] (30) at (-1, -0.675) {};
		\node [style=bphase] (31) at (0.5, 0.75) {$\phasemat{0}{b}$};
	\end{pgfonlayer}
	\begin{pgfonlayer}{edgelayer}
		\draw [style=background] (11.center)
			 to (10.center)
			 to (9.center)
			 to (12.center)
			 to cycle;
		\draw (20.center)
			 to (15.center)
			 to [in=60, out=-180] (14);
		\draw (21.center)
			 to (16.center)
			 to [in=-60, out=180] (14);
		\draw (29.center)
			 to (17.center)
			 to [in=120, out=360] (14);
		\draw (14)
			 to [in=0, out=-120] (18.center)
			 to (30.center);
		\draw [style=dbphase] (31) to (14);
	\end{pgfonlayer}
\end{tikzpicture}
& \text{if }t\neq 0
        \end{matrix}
      \right\}_{\ell\leq r\in\Z}\\
  \mapsto 
    \left\{
      \begin{tikzpicture}
	\begin{pgfonlayer}{nodelayer}
		\node [style=none] (12) at (6, 1.0875) {};
		\node [style=none] (13) at (0.325, 1.0875) {};
		\node [style=none] (14) at (6, -1.0875) {};
		\node [style=none] (15) at (0.325, -1.0875) {};
		\node [style=none] (48) at (0.325, -0.5375) {};
		\node [style=none] (49) at (4.5, -0.5375) {};
		\node [style=none] (51) at (5.5, 0.0875) {};
		\node [style=none] (61) at (6, 0.0875) {};
		\node [style=none] (68) at (0.325, -0.0375) {};
		\node [style=none] (69) at (0.825, -0.0375) {};
		\node [style=none] (71) at (1.825, 0.5875) {};
		\node [style=none] (80) at (6, 0.5875) {};
		\node [style=none] (83) at (3.45, -0.0125) {\scalebox{.8}{\(r-\ell-1\)}};
		\node [style=none] (84) at (2, -0.0125) {\(\vdotss\)};
		\node [style=none] (87) at (1, 0.6125) {};
		\node [style=none] (88) at (0.325, 0.6125) {};
		\node [style=none] (89) at (5.5, -0.5375) {};
		\node [style=none] (90) at (6, -0.5375) {};
		\node [style=groundout] (91) at (1, 0.6125) {};
		\node [style=groundin] (92) at (5.5, -0.5375) {};
	\end{pgfonlayer}
	\begin{pgfonlayer}{edgelayer}
		\draw [style=background] (13.center)
			 to (12.center)
			 to (14.center)
			 to (15.center)
			 to cycle;
		\draw (48.center)
			 to (49.center)
			 to [in=180, out=0] (51.center)
			 to (61.center);
		\draw (80.center)
			 to (71.center)
			 to [in=0, out=180] (69.center)
			 to (68.center);
		\draw (88.center) to (87.center);
		\draw (89.center) to (90.center);
	\end{pgfonlayer}
\end{tikzpicture}

    \right\}_{\ell< r\in\Z}
    \bigsqcup\qquad
    \left\{
      \begin{tikzpicture}
	\begin{pgfonlayer}{nodelayer}
		\node [style=none] (12) at (6, 1.0875) {};
		\node [style=none] (13) at (0.325, 1.0875) {};
		\node [style=none] (14) at (6, -1.0875) {};
		\node [style=none] (15) at (0.325, -1.0875) {};
	\end{pgfonlayer}
	\begin{pgfonlayer}{edgelayer}
		\draw [style=background] (13.center)
			 to (12.center)
			 to (14.center)
			 to (15.center)
			 to cycle;
	\end{pgfonlayer}
\end{tikzpicture}

    \right\}_{\ell=r \in \Z}
  \end{gather*}
  where the tensor product and composition of unrollings are defined pointwise.
\end{definition}

Revisiting the infinite ZX-diagram above:
\begin{example} Truncating over the interval \([-1,1]\):
  \label{ex:trunc}
  \[
      \Unroll\left(\right)_{-1\leq 1}
    =\qquad
      \begin{tikzpicture}
	\begin{pgfonlayer}{nodelayer}
		\node [style=none] (13) at (1.75, 2.375) {};
		\node [style=none] (14) at (1.75, -2.375) {};
		\node [style=none] (15) at (13.25, -2.375) {};
		\node [style=none] (16) at (13.25, 2.375) {};
		\node [style=wn] (198) at (7.5, -1.875) {};
		\node [style=bn] (199) at (6.75, -1.125) {};
		\node [style=wn] (200) at (9, -0.375) {};
		\node [style=bn] (201) at (8.25, 0.375) {};
		\node [style=bn] (202) at (8.25, -1.125) {};
		\node [style=none] (203) at (9, 0.375) {};
		\node [style=none] (204) at (7.5, -1.125) {};
		\node [style=none] (205) at (9, -1.125) {};
		\node [style=wn] (206) at (6, -0.375) {};
		\node [style=wn] (207) at (7.5, 1.125) {};
		\node [style=bn] (208) at (6.75, 0.375) {};
		\node [style=none] (209) at (7.5, 0.375) {};
		\node [style=wn] (212) at (10.5, -1.875) {};
		\node [style=bn] (213) at (9.75, -1.125) {};
		\node [style=wn] (214) at (12, -0.375) {};
		\node [style=bn] (215) at (11.25, 0.375) {};
		\node [style=bn] (216) at (11.25, -1.125) {};
		\node [style=none] (217) at (12, 0.375) {};
		\node [style=none] (218) at (10.5, -1.125) {};
		\node [style=none] (219) at (12, -1.125) {};
		\node [style=wn] (220) at (9, -0.375) {};
		\node [style=wn] (221) at (10.5, 1.125) {};
		\node [style=bn] (222) at (9.75, 0.375) {};
		\node [style=none] (223) at (10.5, 0.375) {};
		\node [style=wn] (224) at (4.5, -1.875) {};
		\node [style=bn] (225) at (3.75, -1.125) {};
		\node [style=wn] (226) at (6, -0.375) {};
		\node [style=bn] (227) at (5.25, 0.375) {};
		\node [style=bn] (228) at (5.25, -1.125) {};
		\node [style=none] (229) at (6, 0.375) {};
		\node [style=none] (230) at (4.5, -1.125) {};
		\node [style=none] (231) at (6, -1.125) {};
		\node [style=wn] (232) at (3, -0.375) {};
		\node [style=wn] (233) at (4.5, 1.125) {};
		\node [style=bn] (234) at (3.75, 0.375) {};
		\node [style=none] (235) at (4.5, 0.375) {};
		\node [style=groundin] (237) at (2.25, -0.375) {};
		\node [style=groundout] (238) at (12.75, -0.375) {};
		\node [style=wphase] (244) at (7.5, -0.375) {\(1,0\)};
		\node [style=bn] (247) at (8.25, 1.875) {};
		\node [style=none] (248) at (9, 1.875) {};
		\node [style=bn] (249) at (6.75, 1.875) {};
		\node [style=none] (250) at (7.5, 1.875) {};
		\node [style=bn] (251) at (11.25, 1.875) {};
		\node [style=none] (252) at (12, 1.875) {};
		\node [style=bn] (253) at (9.75, 1.875) {};
		\node [style=none] (254) at (10.5, 1.875) {};
		\node [style=bn] (255) at (5.25, 1.875) {};
		\node [style=none] (256) at (6, 1.875) {};
		\node [style=bn] (257) at (3.75, 1.875) {};
		\node [style=none] (258) at (4.5, 1.875) {};
	\end{pgfonlayer}
	\begin{pgfonlayer}{edgelayer}
		\draw [style=background] (16.center)
			 to (15.center)
			 to (14.center)
			 to (13.center)
			 to cycle;
		\draw (198) to (199);
		\draw (201) to (200);
		\draw (200) to (202);
		\draw (202) to (198);
		\draw (201) to (203.center);
		\draw (202) to (205.center);
		\draw (199) to (204.center);
		\draw (207) to (208);
		\draw (208) to (206);
		\draw (208) to (209.center);
		\draw (201) to (207);
		\draw (199) to (206);
		\draw (212) to (213);
		\draw (215) to (214);
		\draw (214) to (216);
		\draw (216) to (212);
		\draw (215) to (217.center);
		\draw (216) to (219.center);
		\draw (213) to (218.center);
		\draw (221) to (222);
		\draw (222) to (220);
		\draw (222) to (223.center);
		\draw (215) to (221);
		\draw (213) to (220);
		\draw (224) to (225);
		\draw (227) to (226);
		\draw (226) to (228);
		\draw (228) to (224);
		\draw (227) to (229.center);
		\draw (228) to (231.center);
		\draw (225) to (230.center);
		\draw (233) to (234);
		\draw (234) to (232);
		\draw (234) to (235.center);
		\draw (227) to (233);
		\draw (225) to (232);
		\draw (238) to (214);
		\draw (237) to (232);
		\draw [style=lwphase] (244) to (226);
		\draw (247) to (248.center);
		\draw (249) to (250.center);
		\draw (251) to (252.center);
		\draw (253) to (254.center);
		\draw (255) to (256.center);
		\draw (257) to (258.center);
		\draw (233) to (257);
		\draw (233) to (255);
		\draw (207) to (249);
		\draw (207) to (247);
		\draw (221) to (253);
		\draw (221) to (251);
	\end{pgfonlayer}
\end{tikzpicture}

  \]
\end{example}

However, as stands, unrolling is not a functor, so the interpretation above is not well-defined. Two diagrams that are equal in \(\St(\StabZX)\) can unroll to distinct sequences, depending on the order in which the delay is applied.
For example, an error introduced outside a given interval can still propagate back into that interval, despite the fact that it can not be observed in the smaller truncation. Therefore two distinct unrollings may eventually agree only once the time interval is widened far enough to reveal the error.
In \Cref{sec:behaviours} we recover functoriality by identifying two unrollings whenever they exhibit the same behaviour in this sense.

We show that these behaviours can be interpreted as equivalence classes of compatible finite observations, given by sequences of completely positive maps between finite-dimensional Hilbert spaces. Thus the observation is not a single fixed truncation, but a coherent family of finite observations, and no infinite-dimensional channel semantics has to be chosen.

Finally, in \Cref{sec:infinite}
we show that the family of stabiliser groups given by a family for truncations determines a unique infinite stabilizer group; thus capturing the infinite flows of Pauli operators through the translation invariant stabilizer quantum process.

\subsection{Observational equivalence}
\label{sec:behaviours}
To compare unrollings, we first capture the structure of an unrolling abstractly, in any discard bicategory, so that we can reuse it for completely positive maps and for stabilizer groups:
\begin{definition}
  \label{def:monotone}
Let \(\mathcal C\) be a discard bicategory, and take two bi-infinite sequence of objects in  \(\mathcal{C}\), 
\(\vb X,\vb Y\in (\Ob(\mathcal C))^{\Z}\).
A \textbf{monotone sequence} \(\vb U:\vb X\to\vb Y\) is a morphisms in \(\mathcal{C}\) indexed by finite intervals of  \(\Z\)
\[
\left\{\begin{tikzpicture}
	\begin{pgfonlayer}{nodelayer}
		\node [style=none] (0) at (-1.5, 1.875) {};
		\node [style=none] (1) at (-1.5, -1.875) {};
		\node [style=none] (2) at (5.5, -1.875) {};
		\node [style=none] (3) at (5.5, 1.875) {};
		\node [style=none] (4) at (1, 1.375) {};
		\node [style=none] (5) at (1, -1.375) {};
		\node [style=none] (6) at (3, -1.375) {};
		\node [style=none] (7) at (3, 1.375) {};
		\node [style=none] (8) at (2, 0) {\(U_{\ell,r}\)};
		\node [style=none] (9) at (1, 1.125) {};
		\node [style=none] (10) at (1, 0.5) {};
		\node [style=none] (11) at (1, -0.5) {};
		\node [style=none] (12) at (1, -1.125) {};
		\node [style=none] (13) at (-1.5, 1.125) {};
		\node [style=none] (14) at (-1.5, 0.5) {};
		\node [style=none] (15) at (-1.5, -0.5) {};
		\node [style=none] (16) at (-1.5, -1.125) {};
		\node [style=wirelable] (17) at (-0.5, 1.375) {\(X_{\ell}\)};
		\node [style=wirelable] (18) at (-0.5, 0.75) {\(X_{\ell+1}\)};
		\node [style=wirelable] (19) at (-0.5, -0.25) {\(X_{r-1}\)};
		\node [style=wirelable] (20) at (-0.5, -0.875) {\(X_{r}\)};
		\node [style=none] (21) at (0.5, 0) {\(\vdotss\)};
		\node [style=none] (22) at (3, 1.125) {};
		\node [style=none] (23) at (3, 0.5) {};
		\node [style=none] (24) at (3, -0.5) {};
		\node [style=none] (25) at (3, -1.125) {};
		\node [style=none] (26) at (5.5, 1.125) {};
		\node [style=none] (27) at (5.5, 0.5) {};
		\node [style=none] (28) at (5.5, -0.5) {};
		\node [style=none] (29) at (5.5, -1.125) {};
		\node [style=wirelable] (30) at (4.5, 1.375) {\(Y_{\ell}\)};
		\node [style=wirelable] (31) at (4.5, 0.75) {\(Y_{\ell+1}\)};
		\node [style=wirelable] (32) at (4.5, -0.25) {\(Y_{r-1}\)};
		\node [style=wirelable] (33) at (4.5, -0.875) {\(Y_{r}\)};
		\node [style=none] (34) at (3.5, 0) {\(\vdotss\)};
	\end{pgfonlayer}
	\begin{pgfonlayer}{edgelayer}
		\draw [style=background] (3.center)
			 to (2.center)
			 to (1.center)
			 to (0.center)
			 to cycle;
		\draw [style=background] (0.center) to (3.center);
		\draw [style=background] (3.center) to (2.center);
		\draw [style=background] (2.center) to (1.center);
		\draw [style=background] (1.center) to (0.center);
		\draw [style=bbox] (5.center)
			 to (4.center)
			 to (7.center)
			 to (6.center)
			 to cycle;
		\draw (13.center) to (9.center);
		\draw (10.center) to (14.center);
		\draw (11.center) to (15.center);
		\draw (16.center) to (12.center);
		\draw (26.center) to (22.center);
		\draw (23.center) to (27.center);
		\draw (24.center) to (28.center);
		\draw (29.center) to (25.center);
	\end{pgfonlayer}
\end{tikzpicture}
\right\}_{\ell\leq r \in \Z}
\]
such that for all \(\ell\leq r \in \Z\):
\[
\begin{tikzpicture}
	\begin{pgfonlayer}{nodelayer}
		\node [style=none] (0) at (-2.625, 1.875) {};
		\node [style=none] (1) at (-2.625, -1.875) {};
		\node [style=none] (2) at (4.875, -1.875) {};
		\node [style=none] (3) at (4.875, 1.875) {};
		\node [style=none] (4) at (-0.375, 1.375) {};
		\node [style=none] (5) at (-0.375, -0.875) {};
		\node [style=none] (6) at (2.375, -0.875) {};
		\node [style=none] (7) at (2.375, 1.375) {};
		\node [style=none] (8) at (1, 0.25) {\(U_{\ell-1,r}\)};
		\node [style=none] (9) at (-0.375, 1.125) {};
		\node [style=none] (10) at (-0.375, 0.5) {};
		\node [style=none] (11) at (-0.375, -0.5) {};
		\node [style=none] (13) at (-2.625, 1.125) {};
		\node [style=none] (14) at (-2.625, 0.5) {};
		\node [style=none] (15) at (-2.625, -0.5) {};
		\node [style=wirelable] (17) at (-1.625, 1.375) {\(X_{\ell-1}\)};
		\node [style=wirelable] (18) at (-1.625, 0.75) {\(X_{\ell}\)};
		\node [style=wirelable] (19) at (-1.625, -0.25) {\(X_{r}\)};
		\node [style=none] (21) at (-0.875, 0) {\(\vdotss\)};
		\node [style=none] (22) at (2.375, 1.125) {};
		\node [style=none] (23) at (2.375, 0.5) {};
		\node [style=none] (24) at (2.375, -0.5) {};
		\node [style=none] (26) at (4.375, 1.125) {};
		\node [style=none] (27) at (4.875, 0.5) {};
		\node [style=none] (28) at (4.875, -0.5) {};
		\node [style=wirelable] (30) at (3.625, 1.375) {\(Y_{\ell-1}\)};
		\node [style=wirelable] (31) at (3.625, 0.75) {\(Y_{\ell}\)};
		\node [style=wirelable] (32) at (3.625, -0.25) {\(Y_{r}\)};
		\node [style=none] (34) at (2.875, 0) {\(\vdotss\)};
		\node [style=groundout] (70) at (4.375, 1.125) {};
	\end{pgfonlayer}
	\begin{pgfonlayer}{edgelayer}
		\draw [style=background] (3.center)
			 to (2.center)
			 to (1.center)
			 to (0.center)
			 to cycle;
		\draw [style=background] (0.center) to (3.center);
		\draw [style=background] (3.center) to (2.center);
		\draw [style=background] (2.center) to (1.center);
		\draw [style=background] (1.center) to (0.center);
		\draw [style=bbox] (5.center)
			 to (4.center)
			 to (7.center)
			 to (6.center)
			 to cycle;
		\draw (13.center) to (9.center);
		\draw (10.center) to (14.center);
		\draw (11.center) to (15.center);
		\draw (26.center) to (22.center);
		\draw (23.center) to (27.center);
		\draw (24.center) to (28.center);
	\end{pgfonlayer}
\end{tikzpicture}

\subseteq
\begin{tikzpicture}
	\begin{pgfonlayer}{nodelayer}
		\node [style=none] (35) at (5.875, 1.875) {};
		\node [style=none] (36) at (5.875, -1.875) {};
		\node [style=none] (37) at (12.125, -1.875) {};
		\node [style=none] (38) at (12.125, 1.875) {};
		\node [style=none] (39) at (8.375, 0.875) {};
		\node [style=none] (40) at (8.375, -0.875) {};
		\node [style=none] (41) at (10.375, -0.875) {};
		\node [style=none] (42) at (10.375, 0.875) {};
		\node [style=none] (43) at (9.375, 0) {\(U_{\ell,r}\)};
		\node [style=none] (44) at (7.875, 1.125) {};
		\node [style=none] (45) at (8.375, 0.5) {};
		\node [style=none] (46) at (8.375, -0.5) {};
		\node [style=none] (48) at (5.875, 1.125) {};
		\node [style=none] (49) at (5.875, 0.5) {};
		\node [style=none] (50) at (5.875, -0.5) {};
		\node [style=wirelable] (52) at (6.875, 1.375) {\(X_{\ell-1}\)};
		\node [style=wirelable] (53) at (6.875, 0.75) {\(X_{\ell}\)};
		\node [style=wirelable] (54) at (6.875, -0.25) {\(X_{r}\)};
		\node [style=none] (56) at (7.875, 0) {\(\vdotss\)};
		\node [style=none] (58) at (10.375, 0.5) {};
		\node [style=none] (59) at (10.375, -0.5) {};
		\node [style=none] (62) at (12.125, 0.5) {};
		\node [style=none] (63) at (12.125, -0.5) {};
		\node [style=wirelable] (66) at (11.375, 0.75) {\(Y_{\ell}\)};
		\node [style=wirelable] (67) at (11.375, -0.25) {\(Y_{r}\)};
		\node [style=none] (69) at (10.875, 0) {\(\vdotss\)};
		\node [style=groundout] (73) at (7.875, 1.125) {};
	\end{pgfonlayer}
	\begin{pgfonlayer}{edgelayer}
		\draw [style=background] (38.center)
			 to (37.center)
			 to (36.center)
			 to (35.center)
			 to cycle;
		\draw [style=background] (35.center) to (38.center);
		\draw [style=background] (38.center) to (37.center);
		\draw [style=background] (37.center) to (36.center);
		\draw [style=background] (36.center) to (35.center);
		\draw [style=bbox] (40.center)
			 to (39.center)
			 to (42.center)
			 to (41.center)
			 to cycle;
		\draw (48.center) to (44.center);
		\draw (45.center) to (49.center);
		\draw (46.center) to (50.center);
		\draw (58.center) to (62.center);
		\draw (59.center) to (63.center);
	\end{pgfonlayer}
\end{tikzpicture}

\quad\text{and}\quad
\begin{tikzpicture}
	\begin{pgfonlayer}{nodelayer}
		\node [style=none] (75) at (14.375, 1.875) {};
		\node [style=none] (76) at (14.375, -1.875) {};
		\node [style=none] (77) at (21.875, -1.875) {};
		\node [style=none] (78) at (21.875, 1.875) {};
		\node [style=none] (79) at (16.625, 0.875) {};
		\node [style=none] (80) at (16.625, -1.375) {};
		\node [style=none] (81) at (19.375, -1.375) {};
		\node [style=none] (82) at (19.375, 0.875) {};
		\node [style=none] (83) at (18, -0.25) {\(U_{\ell,r+1}\)};
		\node [style=none] (85) at (16.625, 0.5) {};
		\node [style=none] (86) at (16.625, -0.5) {};
		\node [style=none] (87) at (16.625, -1.125) {};
		\node [style=none] (89) at (14.375, 0.5) {};
		\node [style=none] (90) at (14.375, -0.5) {};
		\node [style=none] (91) at (14.375, -1.125) {};
		\node [style=wirelable] (93) at (15.375, 0.75) {\(X_{\ell}\)};
		\node [style=wirelable] (94) at (15.375, -0.25) {\(X_{r}\)};
		\node [style=wirelable] (95) at (15.375, -0.875) {\(X_{r+1}\)};
		\node [style=none] (96) at (16.125, 0) {\(\vdotss\)};
		\node [style=none] (98) at (19.375, 0.5) {};
		\node [style=none] (99) at (19.375, -0.5) {};
		\node [style=none] (100) at (19.375, -1.125) {};
		\node [style=none] (102) at (21.875, 0.5) {};
		\node [style=none] (103) at (21.875, -0.5) {};
		\node [style=none] (104) at (21.375, -1.125) {};
		\node [style=wirelable] (106) at (20.625, 0.75) {\(Y_{\ell}\)};
		\node [style=wirelable] (107) at (20.625, -0.25) {\(Y_{r}\)};
		\node [style=wirelable] (108) at (20.625, -0.875) {\(Y_{r+1}\)};
		\node [style=none] (109) at (19.875, 0) {\(\vdotss\)};
		\node [style=groundout] (140) at (21.375, -1.125) {};
	\end{pgfonlayer}
	\begin{pgfonlayer}{edgelayer}
		\draw [style=background] (78.center)
			 to (77.center)
			 to (76.center)
			 to (75.center)
			 to cycle;
		\draw [style=background] (75.center) to (78.center);
		\draw [style=background] (78.center) to (77.center);
		\draw [style=background] (77.center) to (76.center);
		\draw [style=background] (76.center) to (75.center);
		\draw [style=bbox] (80.center)
			 to (79.center)
			 to (82.center)
			 to (81.center)
			 to cycle;
		\draw (85.center) to (89.center);
		\draw (86.center) to (90.center);
		\draw (91.center) to (87.center);
		\draw (98.center) to (102.center);
		\draw (99.center) to (103.center);
		\draw (104.center) to (100.center);
	\end{pgfonlayer}
\end{tikzpicture}

\subseteq
\begin{tikzpicture}
	\begin{pgfonlayer}{nodelayer}
		\node [style=none] (110) at (22.875, 1.875) {};
		\node [style=none] (111) at (22.875, -1.875) {};
		\node [style=none] (112) at (28.875, -1.875) {};
		\node [style=none] (113) at (28.875, 1.875) {};
		\node [style=none] (114) at (25.125, 0.875) {};
		\node [style=none] (115) at (25.125, -0.875) {};
		\node [style=none] (116) at (27.125, -0.875) {};
		\node [style=none] (117) at (27.125, 0.875) {};
		\node [style=none] (118) at (26.125, 0) {\(U_{\ell,r}\)};
		\node [style=none] (120) at (25.125, 0.5) {};
		\node [style=none] (121) at (25.125, -0.5) {};
		\node [style=none] (122) at (24.875, -1.125) {};
		\node [style=none] (124) at (22.875, 0.5) {};
		\node [style=none] (125) at (22.875, -0.5) {};
		\node [style=none] (126) at (22.875, -1.125) {};
		\node [style=wirelable] (128) at (23.875, 0.75) {\(X_{\ell}\)};
		\node [style=wirelable] (129) at (23.875, -0.25) {\(X_{r}\)};
		\node [style=wirelable] (130) at (23.875, -0.875) {\(X_{r+1}\)};
		\node [style=none] (131) at (24.625, 0) {\(\vdotss\)};
		\node [style=none] (132) at (27.125, 0.5) {};
		\node [style=none] (133) at (27.125, -0.5) {};
		\node [style=none] (134) at (28.875, 0.5) {};
		\node [style=none] (135) at (28.875, -0.5) {};
		\node [style=wirelable] (136) at (28.125, 0.75) {\(Y_{\ell}\)};
		\node [style=wirelable] (137) at (28.125, -0.25) {\(Y_{r}\)};
		\node [style=none] (138) at (27.625, 0) {\(\vdotss\)};
		\node [style=groundout] (143) at (24.625, -1.125) {};
	\end{pgfonlayer}
	\begin{pgfonlayer}{edgelayer}
		\draw [style=background] (113.center)
			 to (112.center)
			 to (111.center)
			 to (110.center)
			 to cycle;
		\draw [style=background] (110.center) to (113.center);
		\draw [style=background] (113.center) to (112.center);
		\draw [style=background] (112.center) to (111.center);
		\draw [style=background] (111.center) to (110.center);
		\draw [style=bbox] (115.center)
			 to (114.center)
			 to (117.center)
			 to (116.center)
			 to cycle;
		\draw (120.center) to (124.center);
		\draw (121.center) to (125.center);
		\draw (126.center) to (122.center);
		\draw (132.center) to (134.center);
		\draw (133.center) to (135.center);
	\end{pgfonlayer}
\end{tikzpicture}
.
\]
\end{definition}

This notion of monotone sequences was originally defined for natural number indexed sequences of completely positive maps by Carette et al.~\cite[\S~D]{delayedtracememory}. However, we use the more general version given by Comfort and de Felice~\cite[Def.~4.1]{obs}.

A monotone sequence over the interval \(\ell \le r\) records the observations
available between timesteps \(\ell\) and \(r\). Monotonicity says that observing
for longer can only refine the previous observations.
As discussed previously, comparing monotone sequences by equality is too rigid to properly capture their behaviour.
To remedy this problem, we introduce an order on sequences capturing when the observations of one sequence of truncations will eventually be shown to approximate those of the other:

\begin{definition}{\cite[Def.~4.2]{obs}}
  \label{def:approximation}
Given monotone sequences \(\vb U,\vb V:\vb X\to\vb Y\), say that \(\vb V\)
\textbf{approximates} \(\vb U\), written \(\vb U\subseteq\vb V\), if for every
\(\ell\leq r\in\Z\), there exists some \textbf{lookahead} \(R\in \N\) and 
\textbf{lookbehind} \(L\in \N\) such that
\[
\begin{tikzpicture}
	\begin{pgfonlayer}{nodelayer}
		\node [style=none] (85) at (-3.25, 3.125) {};
		\node [style=none] (86) at (-3.25, -3.125) {};
		\node [style=none] (87) at (5.5, -3.125) {};
		\node [style=none] (88) at (5.5, 3.125) {};
		\node [style=none] (89) at (-0.5, 2.875) {};
		\node [style=none] (90) at (-0.5, -2.875) {};
		\node [style=none] (91) at (2.5, -2.875) {};
		\node [style=none] (92) at (2.5, 2.875) {};
		\node [style=none] (93) at (1, 0) {\(U_{\ell-L,r+R}\)};
		\node [style=none] (94) at (-0.5, 1.125) {};
		\node [style=none] (95) at (-0.5, 0.5) {};
		\node [style=none] (96) at (-0.5, -0.5) {};
		\node [style=none] (97) at (-0.5, -1.125) {};
		\node [style=none] (98) at (-3.25, 1.125) {};
		\node [style=none] (99) at (-3.25, 0.5) {};
		\node [style=none] (100) at (-3.25, -0.5) {};
		\node [style=none] (101) at (-3.25, -1.125) {};
		\node [style=wirelable] (102) at (-2.25, 1.375) {\(X_{\ell-1}\)};
		\node [style=wirelable] (103) at (-2.25, 0.75) {\(X_{\ell}\)};
		\node [style=wirelable] (104) at (-2.25, -0.25) {\(X_{r}\)};
		\node [style=wirelable] (105) at (-2.25, -0.875) {\(X_{r+1}\)};
		\node [style=none] (106) at (-1.25, 0) {\(\vdotss\)};
		\node [style=none] (107) at (2.5, 0.5) {};
		\node [style=none] (108) at (2.5, -0.5) {};
		\node [style=none] (109) at (5.5, 0.5) {};
		\node [style=none] (110) at (5.5, -0.5) {};
		\node [style=wirelable] (111) at (4, 0.75) {\(Y_{\ell}\)};
		\node [style=wirelable] (112) at (4, -0.25) {\(Y_{r}\)};
		\node [style=none] (113) at (3, 0) {\(\vdotss\)};
		\node [style=groundout] (114) at (5, 1.125) {};
		\node [style=groundout] (115) at (5, -1.125) {};
		\node [style=none] (116) at (-0.5, 2.375) {};
		\node [style=none] (117) at (-3.25, 2.375) {};
		\node [style=wirelable] (118) at (-2.25, 2.625) {\(X_{\ell-L}\)};
		\node [style=groundout] (119) at (5, 2.375) {};
		\node [style=none] (120) at (-1.25, 1.75) {\(\vdotss\)};
		\node [style=none] (121) at (-0.5, -2.375) {};
		\node [style=none] (122) at (-3.25, -2.375) {};
		\node [style=wirelable] (123) at (-2.25, -2.125) {\(X_{r+R}\)};
		\node [style=groundout] (124) at (5, -2.375) {};
		\node [style=none] (125) at (-1.25, -1.75) {\(\vdotss\)};
		\node [style=none] (126) at (2.5, 1.125) {};
		\node [style=none] (127) at (2.5, -1.125) {};
		\node [style=none] (128) at (2.5, 2.375) {};
		\node [style=none] (129) at (2.5, -2.375) {};
		\node [style=none] (130) at (3, -1.75) {\(\vdotss\)};
		\node [style=none] (131) at (3, 1.75) {\(\vdotss\)};
		\node [style=wirelable] (132) at (4, 1.375) {\(Y_{\ell-1}\)};
		\node [style=wirelable] (133) at (4, 2.625) {\(Y_{\ell-L}\)};
		\node [style=wirelable] (134) at (4, -0.875) {\(Y_{r+1}\)};
		\node [style=wirelable] (135) at (4, -2.125) {\(Y_{r+R}\)};
	\end{pgfonlayer}
	\begin{pgfonlayer}{edgelayer}
		\draw [style=background] (88.center)
			 to (87.center)
			 to (86.center)
			 to (85.center)
			 to cycle;
		\draw [style=background] (85.center) to (88.center);
		\draw [style=background] (88.center) to (87.center);
		\draw [style=background] (87.center) to (86.center);
		\draw [style=background] (86.center) to (85.center);
		\draw [style=bbox] (90.center)
			 to (89.center)
			 to (92.center)
			 to (91.center)
			 to cycle;
		\draw (98.center) to (94.center);
		\draw (95.center) to (99.center);
		\draw (96.center) to (100.center);
		\draw (101.center) to (97.center);
		\draw (107.center) to (109.center);
		\draw (108.center) to (110.center);
		\draw (117.center) to (116.center);
		\draw (122.center) to (121.center);
		\draw (128.center) to (119);
		\draw (126.center) to (114);
		\draw (127.center) to (115);
		\draw (129.center) to (124);
	\end{pgfonlayer}
\end{tikzpicture}

\subseteq
\begin{tikzpicture}
	\begin{pgfonlayer}{nodelayer}
		\node [style=none] (35) at (6.5, 3.125) {};
		\node [style=none] (36) at (6.5, -3.125) {};
		\node [style=none] (37) at (13.25, -3.125) {};
		\node [style=none] (38) at (13.25, 3.125) {};
		\node [style=none] (39) at (9.25, 1.375) {};
		\node [style=none] (40) at (9.25, -1.375) {};
		\node [style=none] (41) at (11.25, -1.375) {};
		\node [style=none] (42) at (11.25, 1.375) {};
		\node [style=none] (43) at (10.25, 0) {\(V_{\ell,r}\)};
		\node [style=none] (44) at (8.5, 1.125) {};
		\node [style=none] (45) at (9.25, 0.5) {};
		\node [style=none] (46) at (9.25, -0.5) {};
		\node [style=none] (47) at (8.5, -1.125) {};
		\node [style=none] (48) at (6.5, 1.125) {};
		\node [style=none] (49) at (6.5, 0.5) {};
		\node [style=none] (50) at (6.5, -0.5) {};
		\node [style=none] (51) at (6.5, -1.125) {};
		\node [style=wirelable] (52) at (7.5, 1.375) {\(X_{\ell-1}\)};
		\node [style=wirelable] (53) at (7.5, 0.75) {\(X_{\ell}\)};
		\node [style=wirelable] (54) at (7.5, -0.25) {\(X_{r}\)};
		\node [style=wirelable] (55) at (7.5, -0.875) {\(X_{r+1}\)};
		\node [style=none] (56) at (8.5, 0) {\(\vdotss\)};
		\node [style=none] (58) at (11.25, 0.5) {};
		\node [style=none] (59) at (11.25, -0.5) {};
		\node [style=none] (62) at (13.25, 0.5) {};
		\node [style=none] (63) at (13.25, -0.5) {};
		\node [style=wirelable] (66) at (12.5, 0.75) {\(Y_{\ell}\)};
		\node [style=wirelable] (67) at (12.5, -0.25) {\(Y_{r}\)};
		\node [style=none] (69) at (11.75, 0) {\(\vdotss\)};
		\node [style=groundout] (73) at (8.5, 1.125) {};
		\node [style=groundout] (74) at (8.5, -1.125) {};
		\node [style=none] (75) at (8.5, 2.375) {};
		\node [style=none] (76) at (6.5, 2.375) {};
		\node [style=wirelable] (77) at (7.5, 2.625) {\(X_{\ell-L}\)};
		\node [style=groundout] (78) at (8.5, 2.375) {};
		\node [style=none] (79) at (8.5, 1.75) {\(\vdotss\)};
		\node [style=none] (80) at (8.5, -2.375) {};
		\node [style=none] (81) at (6.5, -2.375) {};
		\node [style=wirelable] (82) at (7.5, -2.125) {\(X_{r+R}\)};
		\node [style=groundout] (83) at (8.5, -2.375) {};
		\node [style=none] (84) at (8.5, -1.75) {\(\vdotss\)};
	\end{pgfonlayer}
	\begin{pgfonlayer}{edgelayer}
		\draw [style=background] (38.center)
			 to (37.center)
			 to (36.center)
			 to (35.center)
			 to cycle;
		\draw [style=background] (35.center) to (38.center);
		\draw [style=background] (38.center) to (37.center);
		\draw [style=background] (37.center) to (36.center);
		\draw [style=background] (36.center) to (35.center);
		\draw [style=bbox] (40.center)
			 to (39.center)
			 to (42.center)
			 to (41.center)
			 to cycle;
		\draw (48.center) to (44.center);
		\draw (45.center) to (49.center);
		\draw (46.center) to (50.center);
		\draw (51.center) to (47.center);
		\draw (58.center) to (62.center);
		\draw (59.center) to (63.center);
		\draw (76.center) to (75.center);
		\draw (81.center) to (80.center);
	\end{pgfonlayer}
\end{tikzpicture}

\]
\end{definition}

Intuitively, \(\vb U\subseteq\vb V\) in case for every observation made in \(\vb U\) from time \(\ell\) to \(r\) can be verified by observing \(\vb V\), possibly over a larger interval of time.

This allows us to define an equivalence relation which equates monotone sequences with the same behaviour:
\begin{definition}{\cite[Def.~4.6]{obs}}
Two monotone sequences \(\vb U,\vb V:\vb X\to\vb Y\) are
\textbf{observationally equivalent} in case they approximate each other so that 
\(\vb U\subseteq\vb V\) and \(\vb V\subseteq\vb U\).
\end{definition}

Observational equivalence classes of monotone sequences over a discard bicategory themselves form a discard bicategory:

\begin{definition}\label{def:obs}{\cite[Def.~4.8]{obs}}
Given a discard bicategory \(\mathcal C\), the discard bicategory \textbf{observational behaviours},
\(\Obs(\mathcal C)\), has:
\begin{itemize}
    \item objects: \(\Z\)-indexed sets of objects in \(\mathcal C\): \(\vb X=\{X_k\}_{k\in\Z}\);
    \item morphisms: observational equivalence classes of monotone sequences \([\vb U]:\vb X\to\vb Y\);
    \item poset enrichment: given by approximation so that \([\vb V]\subseteq[\vb U]\) if and only if \(\vb V\subseteq\vb U\);
    \item discard bicategory structure: defined pointwise.
\end{itemize}
\end{definition}

\begin{theorem}
The data in \Cref{def:obs} makes \(\Obs(\mathcal C)\) a
discard bicategory.
\end{theorem}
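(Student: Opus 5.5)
The plan is to verify the discard bicategory axioms of \Cref{def:discard_bicategory} directly on representatives. First, I will fix the structure on monotone sequences. The identity on \(\vb X\) is the sequence \(\{1_{X_\ell\otimes\cdots\otimes X_r}\}_{\ell\leq r}\). Monotonicity for it is just the axiom \(\discard_{Y}\circ f\subseteq\discard_X\) with \(f\) the identity, together with \(\discard_{X\otimes Y}=\discard_X\otimes\discard_Y\). Composition of \(\vb U:\vb X\to\vb Y\) and \(\vb V:\vb Y\to\vb Z\) is pointwise, \((\vb V\circ\vb U)_{\ell,r}\coloneqq V_{\ell,r}\circ U_{\ell,r}\). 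Tensor is pointwise as well, with \((\vb X\otimes\vb Y)_k\coloneqq X_k\otimes Y_k\) and the wires interleaved by symmetries. The discard on \(\vb X\) is \(\{\discard_{X_\ell\otimes\cdots\otimes X_r}\}\).

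Pointwise composites are again monotone. To see this, write the discarded outer output of \(V_{\ell-1,r}\circ U_{\ell-1,r}\) as an instance of \(V\)'s monotonicity: it is contained in \((V_{\ell,r}\otimes\discard)\circ U_{\ell-1,r}\). Rewriting the discard as \(\discard\circ(\text{identity})\) lets us apply \(U\)'s monotonicity. Each step uses only that composition and tensor are monotone in the poset enrichment of \(\mathcal C\). Tensor monotonicity is the same argument.

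Next, I will show that \(\subseteq\) on sequences is a preorder and that composition and tensor respect it, so that everything descends to observational equivalence classes. Reflexivity takes \(L=R=0\). For transitivity, suppose \(\vb U\subseteq\vb V\) with \((L,R)\) at \((\ell,r)\), and \(\vb V\subseteq\vb W\) with \((L',R')\) at the widened interval \((\ell-L,r+R)\). The discarded \(W_{\ell-L-L',r+R+R'}\) contains the discarded \(V_{\ell-L,r+R}\), after the extra outputs between \(\ell-L\) and \(\ell\) are also discarded. That in turn contains \(U_{\ell,r}\). Monotonicity of \(\vb U\) in both directions also shows that lookahead and lookbehind may always be enlarged. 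This is the fact I will use to choose lookaheads uniformly when several witnesses must be combined. The quotient by \(\subseteq\cap\supseteq\) is then a poset, and the associativity, unit and symmetric monoidal laws hold on the nose pointwise, hence on classes.

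The main obstacle is proving that composition is monotone: \(\vb U\subseteq\vb U'\) and \(\vb V\subseteq\vb V'\) should give \(\vb V\circ\vb U\subseteq\vb V'\circ\vb U'\). Here the lookaheads do not compose naively. The witness for \(\vb V\) at \((\ell,r)\) requires \(V'\) over the wider interval \((\ell-L,r+R)\), and feeding it requires knowledge of \(U'\) over that wider interval with its inner wires not discarded. My plan is to first take the witness \((L_V,R_V)\) for \(\vb V\subseteq\vb V'\) at \((\ell,r)\). I will then use monotonicity of \(\vb V\) to rewrite \(V_{\ell,r}\circ U_{\ell,r}\) as contained in the discard-padded \(V_{\ell-L_V,r+R_V}\circ U_{\ell-L_V,r+R_V}\), with its boundary inputs also discarded. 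This step uses the discard inclusion on both sides together with monotonicity of \(\vb U\). Next I will take the witness for \(\vb U\subseteq\vb U'\) at the wider interval and enlarge all lookaheads to a common maximum. The remaining delicate point is that \(\discard\) on the inputs of \(V\) must absorb the outputs of \(U\) correctly; this is exactly where the axiom \(\discard\circ f\subseteq\discard\) is used. Where a step needs a precise padding lemma, I would fall back on the argument of~\cite[Def.~4.8]{obs}.

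Finally, pointwise discard maps are monotone sequences. The inclusion \([\discard]\circ[\vb U]\subseteq[\discard]\) holds with \(L=R=0\), pointwise from \(\mathcal C\). The laws \(\discard_I=1\) and \(\discard_{\vb X\otimes\vb Y}=\discard_{\vb X}\otimes\discard_{\vb Y}\) hold on the nose. This completes the verification that \(\Obs(\mathcal C)\) is a discard bicategory.
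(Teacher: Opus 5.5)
You are right that the paper gives no proof of this statement: it imports it from \cite{obs} (cf.\ \cite[Def.~4.8]{obs}), so a direct verification is the natural thing to attempt. Several parts of your proposal are fine: monotone sequences, the pointwise identity, composite, tensor and discard, reflexivity, and the enlargement of lookaheads.

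The gap is in the two steps that carry the real content. These are transitivity of \(\subseteq\) and monotonicity of composition, and the latter is also what makes composition well defined on classes. In both you use the approximation order in the wrong direction. In \Cref{def:approximation}, \(\vb U\subseteq\vb V\) says that \(U_{\ell-L,r+R}\), with its outputs outside \([\ell,r]\) discarded, is contained in \(V_{\ell,r}\) with its inputs outside \([\ell,r]\) discarded. So the lookahead goes to the \emph{smaller} sequence. Your arguments instead treat it as the narrow \(U_{\ell,r}\) being contained in a discard-padded widened \(V\). The informal gloss after the definition suggests that reading, but the diagram is unambiguous.

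With the actual definition, your transitivity chain does not connect. The witness for \(\vb U\subseteq\vb V\) at \((\ell,r)\) ends in \(V_{\ell,r}\) with discarded inputs. The witness for \(\vb V\subseteq\vb W\) at \((\ell-L,r+R)\) starts from the discard-padded \(V_{\ell-L-L',r+R+R'}\). Monotonicity of \(\vb V\) only bounds a widened \(V\) above by a narrower one, never conversely, so nothing links the two.

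The same reversed inclusion drives your composition step. Monotonicity of \(\vb U\) and \(\vb V\) shows that the discard-padded \(V_{\ell-L_V,r+R_V}\circ U_{\ell-L_V,r+R_V}\) is contained in \(V_{\ell,r}\circ U_{\ell,r}\) with discarded inputs, not the other way round. Deferring the remaining padding to \cite{obs} then leaves the central claim unproved.

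Both steps are fixed by choosing witnesses in the opposite order. Write \(\discard_A\) for discarding the wires indexed by \(A\subseteq\Z\), tensored with identities elsewhere, and set \(U_I\coloneqq U_{\min I,\max I}\).

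For transitivity, first take \((L',R')\) witnessing \(\vb V\subseteq\vb W\) at \(I_0=[\ell,r]\). Only then take \((L,R)\) witnessing \(\vb U\subseteq\vb V\) at \(I_1=[\ell-L',r+R']\). With \(I_2=[\ell-L'-L,r+R'+R]\) this gives
\[
  \discard_{I_2\setminus I_0}\circ U_{I_2}\subseteq\discard_{I_1\setminus I_0}\circ V_{I_1}\circ\discard_{I_2\setminus I_1}\subseteq W_{I_0}\circ\discard_{I_2\setminus I_0}.
\]

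For composition, take witnesses for \(\vb V\subseteq\vb V'\) and \(\vb U\subseteq\vb U'\) at the same \(I_0\). Enlarge both to a common window \(J=[\ell-L,r+R]\), and compute
\[
  \discard_{J\setminus I_0}\circ V_J\circ U_J\subseteq V'_{I_0}\circ\discard_{J\setminus I_0}\circ U_J\subseteq V'_{I_0}\circ U'_{I_0}\circ\discard_{J\setminus I_0},
\]
where the three discards act on \(Z\)-, \(Y\)- and \(X\)-wires respectively. The input discards produced by the \(V\)-witness are exactly the output discards the \(U\)-witness needs. So the axiom \(\discard\circ f\subseteq\discard\) plays no role here. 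It is needed only for \([\discard]\circ[\vb U]\subseteq[\discard]\), where you use it correctly.
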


It follows from essentially the same argument as that of Comfort and de Felice~\cite[Lem.~6.7]{obs}, that this yields functorial semantics for stateful ZX-diagrams:
\begin{proposition}\label{prop:unroll_obs_functor}
The unrollings of stateful ZX-diagrams induce a \dag-compact closed functor
 \[\Unroll:\St(\StabZX)\to \Obs(\StabZX^\disc).\]
\end{proposition}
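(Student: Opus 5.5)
To prove \Cref{prop:unroll_obs_functor}, we will first define \(\Unroll\) on generators as in \Cref{def:unrolling}. We extend it to arbitrary diagrams by the pointwise tensor and composition: for \(\vb U:\vb X\to\vb Y\) and \(\vb V:\vb Y\to\vb Z\) we set \((\vb V\circ\vb U)_{\ell,r}\coloneqq V_{\ell,r}\circ U_{\ell,r}\), and similarly for \(\otimes\). Two things must then be checked: that this data is well defined on \(\Obs(\StabZX^\disc)\), and that it respects the equations of \(\St(\StabZX)\).

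For well-definedness we proceed in three steps.

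\begin{enumerate}
\item We verify that each generator unrolls to a monotone sequence in the sense of \Cref{def:monotone}. For the spiders and the Paulis, restricting \(U_{\ell-1,r}\) by discarding the output at \(\ell-1\) yields the \(U_{\ell,r}\) component tensored with a discarded spider. This is contained in discarding the input, by the discard-bicategory axiom \(\discard\circ f\subseteq\discard\) of \Cref{prop:acr_discard_bicategory}, transported along \Cref{prop:stabzxdisc_alrdisc_equiv}. For the delay, the unrolling over \([\ell,r]\) is a shifted identity whose boundary wires are discarded and codiscarded, so both inequalities are immediate.
\item We show that pointwise composition and tensor of monotone sequences are again monotone. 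This follows by combining the two inequalities with the monotonicity of composition in a poset-enriched category.
\item We show that composition and tensor respect approximation (\Cref{def:approximation}). Given \(\vb U\subseteq\vb U'\) with lookahead/lookbehind \((L,R)\), and \(\vb V\subseteq\vb V'\) with \((L',R')\), we take \((L+L',R+R')\) for the composite and use monotonicity to widen intervals. This reproduces the argument of \cite[Lem.~6.7]{obs} essentially verbatim, since that proof uses only the discard bicategory axioms. Hence \(\Obs(\StabZX^\disc)\) is a \dag-CC category, with cups, caps and dagger defined pointwise.
\end{enumerate}

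The substantive step is soundness with respect to the relations of \(\St(\StabZX)\). The \(\StabZX\) axioms hold pointwise at each \(t\), so they are preserved on the nose. The coherence equations of compact closure also hold pointwise. What remains are the sliding equations \eqref{eq:fdzx_sliding} and, more subtly, the fact that the delay does not commute with Paulis. Here the unrollings of the two sides of a sliding equation genuinely differ: on one side a spider sits at time \(t\) and feeds the wire to \(t+1\), on the other it sits at \(t+1\). After truncation to \([\ell,r]\), the two sides differ only at the boundaries, where one side has the phase-\(b\) spider composed with a (co)discard and the other does not.

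We will show that the two sides approximate each other using lookahead and lookbehind \(L=R=1\). Enlarging the interval by one step moves the mismatched boundary spider inside the window. There, both sides coincide up to a spider composed with discards at the outer boundary. Such a spider is absorbed into the discard by the axioms of \Cref{def:mixed_stabzx}: a phase-free or phase-\(b\) spider followed by discards on all legs but one equals a discard, and a codiscard into a spider is likewise contained in a codiscard. This yields containment in both directions, hence observational equivalence.

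We expect the main obstacle to be this boundary bookkeeping, in particular ensuring that symplectic phases \(b\) (unlike Pauli affine phases) are absorbed by discarding, and that no affine phase is ever slid. The latter holds because the sliding equations are stated only for phases \(\phasemat{0}{b}\). Finally, preservation of the dagger and of the cups and caps follows because each is defined pointwise and the dagger reverses the roles of discard and codiscard symmetrically in both defining inequalities.
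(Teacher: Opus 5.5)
Your route is the one the paper intends (it simply appeals to the argument of \cite[Lem.~6.7]{obs}), but two steps of your soundness check fail as stated.

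\textbf{The \(\StabZX\) axioms are not all preserved on the nose.} By \Cref{def:unrolling}, affine (Pauli) phases live only in the \(t=0\) copy; at \(t\neq 0\) every spider appears with its affine phase set to \(0\). Stripping affine phases does not respect equations whose two sides denote the empty relation. Let \(\zeta\) be the zero scalar (phase \((1,0)\), no legs) and \(P\) the one-input one-output spider with phase \((0,1)\). The equation \(\zeta\otimes P=\zeta\otimes 1\) holds in \(\StabZX\simeq\ALR\), since both sides are empty. Yet at \(t\neq 0\) the copy of \(\zeta\) is the identity scalar, so on every window with \(0\notin[\ell,r]\) the two unrollings are \(\bigotimes P\neq\bigotimes 1\). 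If every axiom survived stripping, stripping would be a well-defined endofunctor of \(\StabZX\) sending this equation to \(P=1\). Hence some axiom of \Cref{fig:zx_axioms} genuinely fails at \(t\neq 0\). Such equations hold only in \(\Obs(\StabZX^\disc)\): for each window, choose the lookbehind and lookahead so that the widened window contains \(0\). The left-hand side is then empty, hence contained in anything.

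\textbf{Your boundary bookkeeping for the sliding equations covers only one direction.} Take the delay as sending time \(t\) to \(t+1\), and consider showing that the side with delays on the outputs is \(\subseteq\) the side with delays on the inputs. You must widen on the left, because on the delays-on-inputs side \(Y_\ell\) is produced by a spider fed by codiscards, which is in general a proper subspace (e.g.\ for a copy spider). After widening, the required containment is \(s\subseteq s\circ\codiscard_{X_{\ell-1}}\circ\discard_{X_{\ell-1}}\), where \(s\) is the boundary spider from \(X_{\ell-1}\) to \(Y_\ell\). Your absorption rules only delete spiders adjacent to an existing (co)discard, and \(s\) has none, so they cannot produce this. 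What is needed is that cutting a wire approximates it, \(1_X\subseteq\codiscard_X\circ\discard_X\), together with monotonicity of composition. This is clear in \(\ALR^\disc\), or can be obtained from \(\cap_X\subseteq\discard_{X^\ast\otimes X}\) by bending wires. Relatedly, for a spider with \(m\geq 2\) inputs, discarding all its outputs gives a proper subspace of the discard. So ``equals a discard'' should read ``is contained in a discard''; containment is all you use. Similarly, the dagger of a monotonicity inequality is a codiscard inequality, not monotonicity of \(U^\dag\). That \(U^\dag\) is again monotone holds because in \(\ALR^\disc\) both inequalities say the projection of \(U_{\ell-1,r}\) onto the coordinates in \([\ell,r]\) lies in \(U_{\ell,r}\).
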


The quotient by observational equivalence is essential: the two sides of the sliding equation \eqref{eq:fdzx_sliding} unroll to distinct but observationally equivalent sequences. 

The construction of observational behaviours itself is functorial \cite[Prop.~4.10]{obs}; therefore, we can interpret behaviours of \(\ZX^\disc\) diagrams in terms of behaviours of completely positive maps:

\begin{proposition}\label{cor:obs_faithful}
There is a faithful \dag-compact closed discard bicategory functor
\[\Obs(\StabZX^\disc)\to\Obs(\Proj(\CPM))\]
given pointwise by \(\StabZX^\disc \cong \Proj(\Stab^\disc)\to\Proj(\CPM)\).
\end{proposition}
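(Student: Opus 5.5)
The plan is to obtain the functor by applying the functoriality of the observational-behaviour construction \cite[Prop.~4.10]{obs} to the composite \(G\colon\StabZX^\disc \simeq \Proj(\Stab^\disc)\to\Proj(\CPM)\). This composite is the equivalence of \Cref{prop:stabzxdisc_alrdisc_equiv} followed by the canonical functor of \Cref{cor:cpm_stab_order_embedding}. We then check that faithfulness lifts from \(G\) to \(\Obs(G)\).

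\textbf{Step 1: properties of \(G\).} By \Cref{prop:stabzxdisc_alrdisc_equiv} and \Cref{cor:cpm_stab_order_embedding}, \(G\) is a \dag-compact closed functor between discard bicategories. It is monotone on homs and sends \(\discard\) to the trace. Moreover, it is an \emph{order embedding} on hom-posets: \(f\subseteq g\) if and only if \(G(f)\preceqp G(g)\), because \Cref{cor:cpm_stab_order_embedding} gives reflection of the order and the equivalence preserves and reflects it.

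\textbf{Step 2: \(\Obs(G)\) is well defined and structure-preserving.} Define \(\Obs(G)\) pointwise, sending \(\{U_{\ell,r}\}\) to \(\{G(U_{\ell,r})\}\). Because \(G\) is monotone, preserves composition and tensor, and sends discards to discards, the monotonicity inequalities of \Cref{def:monotone} are carried over, so the image is again a monotone sequence. The same argument applied to the inequalities of \Cref{def:approximation}, with the same lookahead \(R\) and lookbehind \(L\), shows that \(\vb U\subseteq\vb V\) implies \(G\vb U\subseteq G\vb V\). Hence \(\Obs(G)\) respects observational equivalence and is monotone. The \dag, compact closed and discard structure of \(\Obs\) is defined pointwise (\Cref{def:obs}), so it is preserved because \(G\) preserves it. This is essentially the content of \cite[Prop.~4.10]{obs}, and I would cite it rather than redo it.

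\textbf{Step 3: faithfulness.} Suppose \([G\vb U]=[G\vb V]\), so that \(G\vb U\subseteq G\vb V\) and conversely. Fix \(\ell\leq r\) and take the witnessing \(L,R\). The two sides of the approximation inequality in \Cref{def:approximation} are composites, built from \(U_{\ell-L,r+R}\), \(V_{\ell,r}\) and discards, and \(G\) maps these composites to the corresponding composites in \(\Proj(\CPM)\). Since \(G\) reflects the order by Step 1, the same inequality holds already in \(\StabZX^\disc\) with the same \(L,R\). Thus \(\vb U\subseteq\vb V\), and by symmetry \([\vb U]=[\vb V]\). Faithfulness of the pointwise map itself is inherited from faithfulness of \(G\).

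The main obstacle is Step 1, specifically order \emph{reflection}. Faithfulness of \(\Obs(G)\) needs more than faithfulness of \(G\): it needs \(G\) to reflect \(\preceqp\), since observational equivalence is defined via the order. This is exactly what \Cref{cor:cpm_stab_order_embedding} provides, through the Douglas range-inclusion argument and the comparison of stabilizer groups. The remaining care is bookkeeping: the witnessing inequalities live in \(\Proj(\CPM)\), so one must ensure that every morphism involved (discards, identities, the \(U\) and \(V\) components) lies in the image of \(G\). This holds because \(G\) preserves the discard.
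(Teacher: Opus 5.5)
Your proposal is correct and takes the paper's route. The functor is \(\Obs\) applied to the composite \(\StabZX^\disc\simeq\Proj(\Stab^\disc)\to\Proj(\CPM)\), which is well defined by the functoriality of the observational-behaviour construction \cite[Prop.~4.10]{obs}. Faithfulness comes from the composite reflecting the poset enrichment (\Cref{cor:cpm_stab_order_embedding}), so the witnessing approximation inequalities pull back to \(\StabZX^\disc\) with the same lookahead and lookbehind; you rightly identify order reflection, not mere pointwise faithfulness, as the key ingredient.
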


Observational equivalence classes of quantum processes may still induce \emph{infinite-dimensional} ones under additional analytic hypotheses. However, arbitrary completely positive maps are too broad for this purpose: one would expect to impose that the maps are trace-preserving or trace-nonincreasing in the Schr\"odinger picture; equivalently unital or subunital in the Heisenberg picture. Since our construction does not rely on infinite dimensional quantum channels, we interpret processes through Smolin's ``Church of the Larger Hilbert Space,'' as finite processes which can be continued into larger finite-dimensional Hilbert spaces indefinitely.

\subsection{Infinite stabilizer groups}
\label{sec:infinite}
Independent of the convergence of behaviours to infinite quantum channels, we can still represent them \emph{concretely} in terms of their infinite stabilizer groups:
\begin{definition}
  Let \(\AffRel\) denote the \dag-compact closed category whose objects are (possibly infinite-dimensional) vector spaces over \(\F_p\), whose morphisms \(R:X\to Y\) are affine subspaces \(R\subseteq X\times Y\). The discard and poset enrichment are given by the discard relation and subspace inclusion, as before.
\end{definition}

It follows immediately from \cite[Cor.~5.6]{obs} that observational equivalence classes of \emph{finite} stabilizer groups can be coherently glued together into a unique \emph{infinite} stabilizer group consisting of the infinite Pauli operators which stabilize each finite truncation:

\begin{theorem}\label{thm:lim_faithful}
  There is a faithful \dag-compact closed discard functor
  \(
  \Lim:\Obs(\ALR^\disc)\rightarrowtail \AffRel
  \)
  sending
  \[
    [\vb R]:\vb X\to\vb Y
    \quad \longmapsto \quad
    \left\{ (\vb x, \vb y) \ \middle| \ \forall \ell\leq r\in\Z:\; \bigl((x_\ell, \ldots, x_r),(y_\ell, \ldots, y_r)\bigr) \in R_{\ell,r}\right\} \subseteq \bigl(\, \prod_{j\in \Z} X_j\, \bigr) \times \bigl(\, \prod_{j\in \Z} Y_j\, \bigr).
  \]
\end{theorem}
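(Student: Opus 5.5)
The plan is to check, in order, that \(\Lim\) is well defined on observational equivalence classes, that it is a functor preserving the \dag-compact closed and discard structure, and that it is faithful. The last two rest on the same fact. Every object \(X_j\) is a finite-dimensional \(\F_p\)-vector space, so each relation \(R_{\ell,r}\) is a \emph{finite} set. Inverse limits of nonempty finite sets are therefore nonempty (König's lemma, or Tychonoff on a product of finite discrete spaces). Equivalently, descending chains of affine subspaces of a fixed finite-dimensional space stabilise, which gives a Mittag-Leffler condition. This is essentially the content of \cite[Cor.~5.6]{obs}, and I would follow that argument, specialised to \(\ALR^\disc\).

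\emph{Well-definedness and monotonicity.} First, \(\Lim[\vb R]\) is an intersection of preimages of affine subspaces under projections, so it is an affine subspace (possibly empty) of \(\prod_j X_j\times\prod_j Y_j\). Next, suppose \(\vb U\subseteq\vb V\) and take \((\vb x,\vb y)\in\Lim\vb U\) and any \(\ell\le r\). Choose \(L,R\) as in \Cref{def:approximation}. The restriction of \((\vb x,\vb y)\) to \([\ell-L,r+R]\) lies in \(U_{\ell-L,r+R}\). Discarding the outputs outside \([\ell,r]\) keeps this point, so its restriction lies in the left-hand side of the approximation inequation, and hence in the right-hand side. Under the relational interpretation that means \(((x_{\ell-L},\dots,x_{r+R}),(y_\ell,\dots,y_r))\), with the extra inputs discarded, lies in \(V_{\ell,r}\). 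Thus \(\Lim\vb U\subseteq\Lim\vb V\). Applying this in both directions shows \(\Lim\) respects observational equivalence, and it also shows \(\Lim\) preserves the poset enrichment.

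\emph{Functoriality.} Composition in \(\Obs\) is pointwise, \((\vb S\circ\vb R)_{\ell,r}=S_{\ell,r}\circ R_{\ell,r}\). The inclusion \(\Lim\vb S\circ\Lim\vb R\subseteq\Lim(\vb S\circ\vb R)\) is immediate, by restricting a global middle witness \(\vb y\) to each interval. The reverse inclusion is the step I expect to be the main obstacle. Fix \((\vb x,\vb z)\in\Lim(\vb S\circ\vb R)\). For each interval \([\ell,r]\), let \(W_{\ell,r}\) be the set of middle witnesses \(\vb y|_{[\ell,r]}\) with \((\vb x|,\vb y|)\in R_{\ell,r}\) and \((\vb y|,\vb z|)\in S_{\ell,r}\). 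Each \(W_{\ell,r}\) is finite and nonempty. Monotonicity of \(\vb R\) and \(\vb S\) (together with \(\discard\) being total) shows that restriction maps \(W_{\ell-1,r}\) and \(W_{\ell,r+1}\) into \(W_{\ell,r}\). Restricting to the cofinal chain \([-n,n]\), König's lemma then yields a compatible family, that is, a global \(\vb y\) witnessing \((\vb x,\vb z)\in\Lim\vb S\circ\Lim\vb R\). Identities, tensor products, symmetries, dagger, and cups and caps are all defined pointwise, so they are checked directly. The pointwise discard is the full relation at every interval, and its limit is the discard relation of \(\AffRel\).

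\emph{Faithfulness.} Suppose \(\Lim\vb U\subseteq\Lim\vb V\); I would show \(\vb U\subseteq\vb V\). Fix \(\ell\le r\) and let \(P_{L,R}\) be \(U_{\ell-L,r+R}\) with the outputs outside \([\ell,r]\) discarded. By monotonicity, as \(L,R\) grow these form a descending chain of affine subspaces of a fixed finite-dimensional space, so they stabilise at some \(P_\infty\). A second compactness argument, as above, shows that every point of \(P_\infty\) extends to a point of \(\Lim\vb U\); hence \(P_\infty\) is exactly the projection of \(\Lim\vb U\) with the outer outputs discarded. The hypothesis \(\Lim\vb U\subseteq\Lim\vb V\) then places \(P_\infty\) inside \(V_{\ell,r}\) with the outer inputs discarded. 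The \(L,R\) at which the chain stabilises are the required lookbehind and lookahead. This proves \(\vb U\subseteq\vb V\), so \(\Lim\) reflects the order and is in particular faithful.
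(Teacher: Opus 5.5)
Your argument is correct and follows the paper's route. The paper proves this statement only by appeal to \cite[Cor.~5.6]{obs}, and your König/descending-chain argument unpacks that result specialised to $\ALR^\disc$: restriction gives well-definedness and monotonicity, compactness of the finite, nonempty sets of middle witnesses gives strict functoriality, and stabilisation gives order-reflection.

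One repair is needed in the faithfulness step. As you define it, $P_{L,R}$ has the growing domain $X_{\ell-L}\oplus\cdots\oplus X_{r+R}$, so it is not a chain in a fixed space, and the padded comparison need not stabilise. However, $V_{\ell,r}$ with the outer inputs discarded places no constraint on those inputs, so you may project them away as well. Take $Q_{L,R}$ to be the restriction of $U_{\ell-L,r+R}$ to the coordinates in $[\ell,r]$. This is a genuinely descending chain in $(X_\ell\oplus\cdots\oplus X_r)\times(Y_\ell\oplus\cdots\oplus Y_r)$, your second compactness argument shows its stable value is exactly the restriction of $\Lim\vb U$ to $[\ell,r]$, and the rest of your argument goes through unchanged.
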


The quotient by observational equivalence is crucial for the faithfulness of \(\Lim\). Here, faithfulness means that monotone sequences have the same infinite stabilizer group if and only if they are observationally equivalent.

By precomposing this limit functor with the unrolling, each stateful \(\ZX\)-diagram determines a unique infinite stabilizer group:

\begin{corollary}\label{cor:unroll}
  There is a \dag-compact closed functor 
  \[\StabRel:\St(\StabZX)\xrightarrow{\Unroll} \Obs(\StabZX^\disc)\simeq \Obs(\ALR^\disc) \overset{\Lim}{\rightarrowtail} \AffRel\]
\end{corollary}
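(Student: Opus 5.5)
The statement is a composite of three functors, so the proof assembles results already established and checks that each piece is \dag-compact closed. I would proceed in three steps, one per arrow.

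\textbf{First arrow.} \Cref{prop:unroll_obs_functor} gives a \dag-compact closed functor \(\Unroll:\St(\StabZX)\to\Obs(\StabZX^\disc)\).

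\textbf{Middle equivalence.} I would take the equivalence of \dag-CC discard bicategories \(\StabZX^\disc\simeq\ALR^\disc\) from \Cref{prop:stabzxdisc_alrdisc_equiv} and apply the functoriality of the observational behaviour construction \cite[Prop.~4.10]{obs}. This is the same functoriality used for \Cref{cor:obs_faithful}, and it lifts the equivalence to \(\Obs(\StabZX^\disc)\simeq\Obs(\ALR^\disc)\). The lifting is done pointwise on monotone sequences.

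The key thing to verify here is that a pointwise functor sends monotone sequences to monotone sequences and preserves approximation. Both conditions of \Cref{def:monotone} and \Cref{def:approximation} are inequalities built from composition, tensor and discard. A functor preserving the poset enrichment and discards therefore preserves them. Its pseudo-inverse acts likewise, so the lift is again an equivalence. The \dag-compact structure in \(\Obs\) is defined pointwise, so it is preserved as well.

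\textbf{Last arrow.} \Cref{thm:lim_faithful} supplies the faithful \dag-compact closed functor \(\Lim:\Obs(\ALR^\disc)\rightarrowtail\AffRel\).

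Composing the three functors gives \(\StabRel\). It is \dag-compact closed because composites and equivalences of \dag-compact closed functors are again \dag-compact closed. A small point to check is that \(\Unroll\) assigns objects as \(\Z\)-indexed families of the same finite object at each time step, so that the codomain objects line up with those expected by \(\Lim\). This is immediate from \Cref{def:unrolling}.

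\textbf{Main obstacle.} The only potentially delicate step is confirming that the lifted equivalence respects observational equivalence classes. The concern is that approximation uses lookahead and lookbehind with discards on the padded wires, so the functor must send \(\discard\) to \(\discard\) strictly. \Cref{prop:stabzxdisc_alrdisc_equiv} provides exactly this, since it is an equivalence of discard bicategories. Given that, the argument is bookkeeping.
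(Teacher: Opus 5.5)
Your proposal is correct and follows the paper's argument: the corollary is stated without a separate proof, as the composite of three results. These are \Cref{prop:unroll_obs_functor}, the equivalence of \Cref{prop:stabzxdisc_alrdisc_equiv} lifted to observational behaviours via the functoriality of \(\Obs\) \cite[Prop.~4.10]{obs}, and the functor \(\Lim\) of \Cref{thm:lim_faithful}. Your additional checks, that the pointwise lift preserves monotonicity and approximation because it preserves order, tensor and discard, are routine verifications the paper leaves implicit.
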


For example, consider the infinite stabilizer group induced by the delay generator:

\begin{example}\label{ex:beh_delay}
The delay is sent to the relation which reindexes the Paulis:
\[
\StabRel\bigl(\bigr)
=
\left\{
\left(
\begin{smallbmatrix}
\vb x\\
\vb z
\end{smallbmatrix},
\begin{smallbmatrix}
\vb x'\\
\vb z'
\end{smallbmatrix}
\right)
\;\middle|\;
\forall t\in\Z,\;
x'_t=x_{t-1},\;
z'_t=z_{t-1}
\right\}  \subseteq
  (\F_p^\Z)^2 \times (\F_p^\Z)^2
.\]
\end{example}

  \section{Generating tableaux for time-delayed circuits}\label{sec:salr}
  \pratendSetLocal{category=salr, end, restate}
  
Stabilizer ZX-diagrams can be represented by affine Lagrangian relations, which themselves are finitely presented by tableaux.
Therefore, it is a natural question to ask if the infinite stabilizer groups of stateful ZX-diagrams also possess such a finite representation. In this section, we show how stateful ZX-diagrams can be represented by a generalised stabilizer tableau which we call a \emph{generating tableau}.

To this end, we represent an infinite sequence of Pauli operators \((\vb p_j \in \F_p^2)_{j\in\N}\) as a formal Laurent series, called its \textbf{generating function}:
\[
  P(\delta) \coloneqq \sum_{j\ge 0}^\infty \vb p_j \delta^j \in \F_{p}((\delta))^{2n}.
\]
Similarly, we represent the delay by the relation which multiplies by the indeterminate \(\delta\):
\begin{equation}\label{eq:delay_interpretation}
    \interp {}
  \coloneqq
    \bigl\{
      \bigl(
        \begin{smallbmatrix}
          \vb x\\
          \vb z
        \end{smallbmatrix}
        ,
        \delta\cdot 
        \begin{smallbmatrix}
          \vb x\\
          \vb z
        \end{smallbmatrix}
      \bigr)
    \; \big|\;
     \vb x,\vb z \in \F_p((\delta))
    \bigr\}
  \subseteq
    \F_p((\delta))^2 \times \F_p((\delta))^2.
\end{equation}

Formal Laurent series still take infinite data to encode. However, some arise as the expansion of a \emph{rational function}, i.e.\ as fractions \(\tfrac{f(\delta)}{g(\delta)}\) of polynomials \(f,g\in\F_p[\delta]\). Rational functions in turn form a \emph{field}, denoted \(\F_p(\delta)\). A rational function is thus a finite description of an infinite sequence, whose generating function is recovered from it by the \textbf{geometric series expansion}, computed by long division of polynomials.
For instance, consider the geometric series of the following fraction of polynomials
\[\tfrac{\delta}{(1-\delta)^2} = \sum_{t\ge 0}^\infty t \delta^t.\]

Interpreting the delay as multiplication by \(\delta\) as in \Cref{eq:delay_interpretation} and interpreting the undelayed \(\ZX\) generators as before; in this section, we find that a stateful ZX-diagram can be described by an affine subspace of \(\F_p(\delta)^{2n}\). 
We identify which affine subspaces over \(\F_p(\delta)\) arise this way, the \emph{shifted affine Lagrangian} subspaces, and show that each admits a finite \emph{generating tableau} generalising the tableau of a stabilizer group.

\subsection{Generating tableaux and the shifted symplectic form}
Interpreting stateful ZX-diagrams as \(\F_p(\delta)\)-affine relations according to \Cref{eq:delay_interpretation} produces a special class of affine relations, which we call \emph{shifted affine Lagrangian relations}. These are analogous to affine Lagrangian relations over \(\F_p\), except where the symplectic form is twisted by the involution\footnote{Not to be confused with the conjugation of complex numbers.} \(
\bar{(-)}:\F_p(\delta)\to\F_p(\delta)\) given by \(\bar f(\delta)\coloneqq f(\delta^{\minu 1})\). Following Wilde and Brun~\cite{wildeentanglement} and Wilde et al.~\cite{wildeconvolutional}:

\begin{definition}
  Given some \(n\in \N\), the standard \textbf{shifted symplectic form} \(\varpi_n\colon \F_p(\delta)^{2n} \times \F_p(\delta)^{2n} \to \F_p(\delta)\) is the sesquilinear form given by extending scalars along \(\F_p(\delta)/\F_p\), and then twisting the symplectic form with the conjugation \(\bar{(-)}\):
  \begin{equation}
    \varpi_n(f(\delta), g(\delta)) \coloneqq 
     \omega_n(f(\delta), g(\delta^{\minu 1})) 
  \end{equation}
  More generally, given any \(\F_p\)-symplectic vector space, one can obtain a shifted symplectic vector space by extending scalars along \(\F_p(\delta)/\F_p\) and twisting the symplectic form.
\end{definition}

Just as the symplectic form captures the commutation of Paulis, the shifted symplectic form captures the translation-invariant commutation of sequences of Paulis:

\begin{remark}
  To see this, remark that two finitely-supported sequences of Pauli operators encoded by \(f\) and \(g\) commute precisely when \(\sum_t\omega_n(f_t,g_t)=0\).
  Now, the \(k\)th coefficient in \(\varpi_n(f,g)\) is given by \(\sum_t\omega_n(f_t,g_{t-k})\). This is zero precisely when the sequence of Pauli operators encoded by \(f\) commutes with the shifted sequence of Pauli operators encoded by \(\delta^k g\).
  Therefore, \(\varpi_n(f,g)=0\) precisely when the Pauli operators encoded by \(f\) commute with all possible shifted sequences of Pauli operators in \(g\).
\end{remark}

Recalling that stabilizer states can be represented by affine coisotropic subspaces with respect to the symplectic form, we represent translation-invariant stabilizer states in the same way, except now with respect to the \emph{shifted} symplectic form:

\begin{definition}
  Given a \(\F_p(\delta)\)-linear subspace \(S+\vb{a}(\delta)\subseteq (V, \varpi )\) of a shifted symplectic vector space, the \textbf{shifted symplectic complement} is the linear subspace:
  \[S^{\varpi} = \{ \vb{f}(\delta) \in V \ | \ \forall \vb{g}(\delta) \in S,\, \varpi(\vb{f}(\delta), \vb{g}(\delta)) = 0\} \subseteq (V, \varpi ).\]

  A \textbf{shifted affine Lagrangian subspace} \(S+\vb{a}(\delta)\subseteq (V, \varpi )\) is an affine subspace with \(S^{\varpi} = S\).
\end{definition}

It follows immediately that this generalises the unshifted notion:
\begin{lemma}
  Given an \(\F_p\)-affine Lagrangian subspace of \((\F_p^{2n},\omega_n)\), its field extension along \(\F_p(\delta)/\F_p\) is a shifted affine Lagrangian subspace of \((\F_p(\delta)^{2n}, \varpi_{n})\).
\end{lemma}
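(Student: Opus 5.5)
The plan is to reduce the statement to a linear-algebra fact about the linear component and then argue by dimension counting over \(\F_p(\delta)\). Let \(L+\vb a\subseteq(\F_p^{2n},\omega_n)\) be affine Lagrangian, and write \(L_\delta\coloneqq \F_p(\delta)\otimes_{\F_p}L\subseteq\F_p(\delta)^{2n}\) for its extension of scalars. The extended affine subspace is \(L_\delta+\vb a\). Its linear component is \(L_\delta\), because the constant vector \(\vb a\) still lies in \(\F_p^{2n}\subseteq\F_p(\delta)^{2n}\). By the definition of shifted affine Lagrangian, it therefore suffices to prove \(L_\delta^{\varpi}=L_\delta\). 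The empty case, if one counts it, is vacuous, since extending scalars of the empty set gives the empty set.

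\textbf{Step 1: isotropy.} Choose an \(\F_p\)-basis \(\vb h_1,\dots,\vb h_n\) of \(L\); this has \(\dim L=n\) because \(L\) is Lagrangian in a \(2n\)-dimensional space. These vectors are an \(\F_p(\delta)\)-basis of \(L_\delta\), since linear independence over \(\F_p\) of constant vectors is preserved under field extension (rank of a matrix is invariant). A general element of \(L_\delta\) has the form \(\sum_i f_i(\delta)\vb h_i\). The conjugation fixes constants, so by sesquilinearity
\[\varpi_n\Bigl(\sum_i f_i\vb h_i,\sum_j g_j\vb h_j\Bigr)=\sum_{i,j}f_i\,\bar g_j\,\omega_n(\vb h_i,\vb h_j)=0.\]
Hence \(L_\delta\subseteq L_\delta^{\varpi}\).

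\textbf{Step 2: dimension.} The form \(\varpi_n\) is linear in its first argument and conjugate-linear in its second. The complement \(L_\delta^{\varpi}\) is the solution set of the \(n\) equations \(\varpi_n(\vb f,\vb h_j)=0\). Each equation equals \(\omega_n(\vb f,\vb h_j)\), because each \(\vb h_j\) is constant. Writing \(J\) for the Gram matrix of \(\omega_n\) and \(H\) for the \(n\times 2n\) matrix whose rows are the \(\vb h_j\), these equations say \(HJ^\trans\vb f=0\), which is an \(\F_p(\delta)\)-linear system. The matrix \(J\) is invertible and \(H\) has rank \(n\), so the system has rank \(n\) and \(\dim L_\delta^{\varpi}=2n-n=n=\dim L_\delta\). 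Together with the inclusion from Step 1, this gives \(L_\delta^{\varpi}=L_\delta\).

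\textbf{Main obstacle.} The only delicate point is that \(\varpi_n\) is sesquilinear rather than bilinear, so the usual formula \(\dim S^\omega=2n-\dim S\) cannot be quoted directly. Step 2 avoids this by testing only against the constant spanning vectors \(\vb h_j\). For those vectors the twist by conjugation is invisible, so the complement of \(L_\delta\) is cut out by an honest linear system over \(\F_p(\delta)\). The remaining facts needed, that extension of scalars preserves the rank of \(H\) and the invertibility of \(J\), are routine.
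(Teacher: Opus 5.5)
Your proof is correct. The paper states this lemma as immediate and gives no argument, and your isotropy check plus dimension count (observing that \(L_\delta^{\varpi}\) is cut out by the rank-\(n\) \(\F_p(\delta)\)-linear system \(\omega_n(\vb f,\vb h_j)=0\)) is the natural way to fill it in. It is the same mechanism as in the paper's proof of \Cref{prop:generating_tableau}, via \(S^{\varpi_n}=\widetilde{S}^{\omega_n}\) with \(\widetilde{L_\delta}=L_\delta\) for an extended subspace.
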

Similarly, the delay given in \Cref{eq:delay_interpretation} is a shifted Lagrangian subspace:
\begin{lemmaE}
  The following subspace is shifted Lagrangian:
  \[
    \left\{
      \Bigl(
        \begin{smallbmatrix}
          \vb x\\
          \vb z
        \end{smallbmatrix}
        ,
        \delta\cdot 
        \begin{smallbmatrix}
          \vb x\\
          \vb z
        \end{smallbmatrix}
      \Bigr)
      \; \middle|\;
      \vb x, \vb z\in \F_p(\delta)
    \right\}
    \subseteq
    (\F_p(\delta)^{2n}, -\varpi_n)
    \times
    (\F_p(\delta)^{2m}, \varpi_m).
  \]
\end{lemmaE}
\begin{proofE}
  Writing \(L\) for the subspace above, any two elements \((\vb v,\delta\vb v),(\vb w,\delta\vb w)\in L\) commute since scaling by \(\delta\) preserves the shifted form, as \(\delta\bar\delta=\delta\,\delta^{\minu 1}=1\):
  \[
    -\varpi_n(\vb v,\vb w)+\varpi_m(\delta\vb v,\delta\vb w)
    =-\varpi_n(\vb v,\vb w)+\delta\bar\delta\,\varpi_n(\vb v,\vb w)
    =0.
  \]
  Hence \(L\) is isotropic, and as \(\vb v\mapsto(\vb v,\delta\vb v)\) is injective, \(\dim L=2\) is half the ambient dimension, so \(L\) is Lagrangian.
\end{proofE}

Defining the dagger
\footnote{The dagger here is not to be confused with the Hermitian adjoint of a bounded linear map between Hilbert spaces.} 
to be conjugate transpose
\((-)^\dag \coloneqq \bar{(-)}^\top\), we obtain a generalisation of the notion of a stabilizer tableau:
\begin{definition}
A \textbf{generating tableau} for a shifted affine Lagrangian subspace \(S \subseteq (\F_p(\delta)^{2n}, \varpi_n)\) consists of a full-rank matrix
\(
    H
  =
  \left[\begin{array}{c|c}
    X & Z
  \end{array}\right]
  \in \Matrices[n][2n][\F_p(\delta)]
\)
and a vector
\(\vb a\in \F_p(\delta)^{2n}\)
such that
\(\vb a+\ker(H) = S\), 
where moreover
\(X Z^\dag=Z X^\dag\), 
so that all rows 
\(\vb h_i,\vb h_j\) 
of \(H\) commute:   
\(\varpi_n(\vb h_i,\vb h_j)=0\). 

In case \(X=0\), say that \(H\) is the tableau associated to a \textbf{shifted graph-state}.
\end{definition}

Denoting the set of \(n\times n\) Hermitian matrices by
\(
    \Herm[n]
  \coloneqq
    \{
        \Sigma \in \Matrices[n][n][\F_p(\delta)]
      \ \mid\
        \Sigma^\dag=\Sigma
    \}
\),
we find that generating tableaux admit a unique normal form, generalizing that of Gottesman and Cleve~\cite{Cleve1997}:
\begin{propositionE}\label{prop:generating_tableau}
  Given shifted affine Lagrangian \(S\subseteq (\F_p(\delta)^{2n}, \varpi_n)\), there exists some unique
  \(0\leq m\leq n \in \N\),
  \(L\in \Matrices[m][(n-m)]\),
  \(\Sigma\in \Herm[m]\),
  \(\vb a\in \F_p(\delta)^{2n}\),
  and permutation \(\varsigma \in \Matrices[n][n]\) such that:
  \[
      S
    =
      \ker\left(
          \left[\begin{array}{cc|cc}
            I_m &  L & \Sigma    & 0\\
            0     & 0 & -L^\dag & I_{n-m}
          \end{array}\right]
        \left[\begin{array}{c|c}
          \varsigma & 0\\ \hline
          0         & \varsigma
        \end{array}\right]
      \right)
      +
      \vb a.
  \]
  Call this \textbf{canonical form} for the generating tableau associated to \(R\).
\end{propositionE}
\begin{proofE}
  Write \(R=S+\vb a\), where \(S\subseteq \F_p(\delta)^{2n}\) is the linear part of \(R\).
  Because \(R\) is shifted affine Lagrangian, \(S\) is shifted Lagrangian.
  Let \(\widetilde{S}\coloneqq \{\bar s\mid s\in S\}\subseteq \F_p(\delta)^{2n}\).

  Since \(\varpi_n(x,s)=\omega_n(x,\bar s)\), we have
  \(S^{\varpi_n}=\widetilde{S}^{\omega_n}\).
  The involution \(s\mapsto \bar s\) is bijective, therefore
  \(\dim_{\F_p(\delta)}(\widetilde{S})=\dim_{\F_p(\delta)}(S)\). Because \(\omega_n\) is non-degenerate as an \(\F_p(\delta)\)-bilinear form,
  \(\dim_{\F_p(\delta)}(\widetilde{S}^{\omega_n})=2n-\dim_{\F_p(\delta)}(\widetilde{S})\).
  Because \(S\) is shifted Lagrangian, \(S=S^{\varpi_n}=\widetilde{S}^{\omega_n}\), hence
  \(\dim_{\F_p(\delta)}(S)=2n-\dim_{\F_p(\delta)}(S)\), therefore \(\dim_{\F_p(\delta)}(S)=n\).

  Therefore there is a full-rank matrix
  \(H=\left[\begin{array}{c|c} X & Z \end{array}\right]\in \Matrices[n][2n][\F_p(\delta)]\)
  with \(S=\ker(H)\). Let \(m\coloneqq \rank(X)\). Applying elementary row-operations, we can reduce this matrix to the following form, while preserving the kernel, reordering the columns with a shifted symplectic permutation matrix \(\varsigma\):
  \[
    H\sim
    \left[\begin{array}{cc|cc}
      I_m & A & B & 0\\
      0   & 0 & C & I_{n-m}
    \end{array}\right]
    \left[\begin{array}{c|c}
      \varsigma & 0\\ \hline
      0                 & \varsigma
    \end{array}\right].
  \]
  The shifted Lagrangian condition on \(S\) forces \(C=-A^\dag\) and \(B^\dag= B\).
  Setting \(L\coloneqq A\) and \(\Sigma\coloneqq B\):
  \[
      S
    =
      \ker\left(
        \left[\begin{array}{cc|cc}
          I_m &  L & \Sigma    & 0\\
          0     & 0 & -L^\dag & I_{n-m}
        \end{array}\right]
        \circ
        \left[\begin{array}{c|c}
          \varsigma & 0\\ \hline
          0                 & \varsigma
        \end{array}\right]
      \right),
  \]
  To prove uniqueness, suppose two canonical tableaux define the same linear subspace \(S\), with data
  \((m,L,\Sigma,\varsigma)\) and \((m',L',\Sigma',\varsigma')\), and let the corresponding full-rank matrices be
  \(H=[X\mid Z]\) and \(H'=[X'\mid Z']\).
  Since \(\ker(H)=\ker(H')=S\) and both have rank \(n\), there is some \(U\in \mathrm{GL}_n(\F_p(\delta))\) such that \(H'=UH\).
  Hence \(X'=UX\), so that 
  \(m'=\rank(X')=\rank(UX)=\rank(X)=m\),  thus \(m\) is unique.

  Since \(H'=UH\) with \(U\) invertible, \(H\) and \(H'\) have the same row space, and hence the same reduced row-echelon form, whose pivot columns are therefore determined by \(S\). The \(m=\rank(X)\) pivots among the \(X\)-columns determine \(\varsigma\), and the reduced form then determines \(L\) and \(\Sigma\). Finally, \(\vb a\) is determined modulo \(S\) by the coset \(\vb a+S=R\).
\end{proofE}

The generating tableaux of shifted Lagrangian subspaces were originally used to represent the stabilizer groups of quantum convolutional codes: the infinite-time, translation-invariant analogue to stabilizer codes (see Wilde's thesis for reference \cite{wilde}). There is a breadth of literature on quantum convolutional codes in the qubit setting~\cite{Ollivier2003,jr_simple_2005,wildeentanglement,wildeconvolutional}; however, the compositional structure of quantum convolutional coding theory remains unexplored.  By working in odd-prime dimensions, we show that translation-invariant stabilizer quantum processes can be composed as relations, in the same way as their finite-dimensional analogues:

\begin{definition}\label{def:salr}
The \dag-CC feedback category 
\(\sALR\) of shifted \(\F_p(\delta)\)-affine 
Lagrangian relations between finite-dimensional 
\(\F_p(\delta)\)-shifted symplectic vector spaces has objects given by shifted symplectic vector spaces, whose morphisms \((V, \varpi_V) \to (W, \varpi_W)\) are either shifted affine Lagrangian subspaces of \((V, -\varpi_V )\times (W, \varpi_W )\) or the empty set. The rest of the \dag-CC structure is the same as for \(\ALR\).
\end{definition}

Interpreting the delay as in \Cref{eq:delay_interpretation}, we have that:
\begin{proposition}
  \label{prop:compact_closed_functor}
  There is a \dag-compact closed functor
  \(\Ext:\St(\StabZX)\to \sALR\)
  given by interpreting the delay
  \(\)
  as multiplication by \(\delta\); and by interpreting the \(\ZX\) generators in \(\ALR \subseteq \sALR\), given by extending scalars along \(\F_p(\delta)/\F_p\).
\end{proposition}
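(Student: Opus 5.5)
Since \(\St(\StabZX)\) is presented as \(\StabZX\) with one extra generator, the delay, modulo the sliding equations \eqref{eq:fdzx_sliding}, the universal property of the presentation reduces the task to three checks. First, the assignment on generators lands in \(\sALR\). Second, \(\sALR\) is a genuine \dag-compact closed category. Third, the axioms of \(\StabZX\) and the sliding equations hold in \(\sALR\).

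For the first point, the delay's interpretation is shifted Lagrangian by the lemma above. The undelayed generators are sent to the scalar extensions of their \(\ALR\) interpretations, and these are shifted affine Lagrangian by the lemma stating that extension of scalars along \(\F_p(\delta)/\F_p\) preserves (affine) Lagrangianity.

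For the second point, the \dag-CC structure of \(\sALR\) is taken verbatim from \(\ALR\): relational composition, direct sum, converse, and cups and caps from \((V,\varpi)^*=(V,-\varpi)\). The one nontrivial fact is closure under composition, i.e.\ that the composite of shifted affine Lagrangian relations is shifted affine Lagrangian or empty. I would prove this by the standard reduction. Write \(S\circ R\) as \(R\oplus S\) intersected with the diagonal on \(W\), then projected away from \(W\); equivalently, compose \(R\oplus S\) with the cap on \(W\). Intersecting with a coisotropic subspace and reducing preserves Lagrangianity; this is the symplectic reduction argument. It goes through unchanged for the sesquilinear form \(\varpi\), because the only inputs are the identity \((U^{\varpi})^{\varpi}=U\) and the dimension formula \(\dim U^{\varpi}=\dim V-\dim U\). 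Both follow from \(U^{\varpi}=\widetilde U^{\omega}\), exactly as in the proof of \Cref{prop:generating_tableau}. The affine parts are either consistent, in which case the linear component is the reduced Lagrangian, or inconsistent, in which case the result is empty. This closure step is where I expect the main work, namely checking carefully that the twisted form does not break the reduction argument. Associativity, identities, symmetry and the snake equations are all inherited from the category of \(\F_p(\delta)\)-affine relations.

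For the third point, the \(\StabZX\) axioms hold because extension of scalars \(\ALR\to\sALR\) is a strict \dag-CC functor. It is a functor because tensoring with \(\F_p(\delta)\) is flat: it commutes with intersections, direct sums and images, hence with relational composition, and it preserves non-emptiness of affine solution sets. Every \(\StabZX\) axiom is therefore sound by \Cref{theorem:stabzx_acr}.

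It remains to check the sliding equations. A phase-free spider with symplectic phase \(b\) has a linear interpretation, given by \(\F_p\)-linear equations. Precomposing with \(\delta\cdot(-)\) on all inputs gives the set of pairs \((\vb u,\vb v)\) with \((\delta^{-1}\vb u,\vb v)\) in the spider. Postcomposing with \(\delta\) on all outputs gives the pairs \((\vb u,\vb v)\) with \((\vb u,\delta^{-1}\vb v)\) in the spider. Since the defining equations are homogeneous and \(\F_p(\delta)\)-linear, \((\vb u,\vb v)\) lies in the spider exactly when \((\delta^{-1}\vb u,\delta^{-1}\vb v)\) does, so the two composites coincide.

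Finally, the dagger of the delay is the converse, multiplication by \(\delta^{-1}\), and the cups and caps are strictly preserved. The induced functor \(\Ext\) is therefore \dag-compact closed.
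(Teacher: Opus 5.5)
Your approach matches the paper's. Its entire proof is the observation that an \(\F_p(\delta)\)-linear subspace \(S\) satisfies \(\delta\cdot S=S\), which gives the sliding equations. The paper takes the \dag-compact closed structure of \(\sALR\) and the inclusion \(\ALR\subseteq\sALR\) as given. Your checks of closure under composition (coisotropic reduction for \(\varpi\)) and of functoriality of scalar extension are correct, and they fill in what the paper leaves implicit.

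There is one slip to fix in the sliding computation. Precomposing a phase-free spider \(\mathrm{Sp}\) with delays on its inputs gives \(\{(\mathbf u,\mathbf v)\mid(\delta\mathbf u,\mathbf v)\in\mathrm{Sp}\}\), not the condition \((\delta^{-1}\mathbf u,\mathbf v)\in\mathrm{Sp}\). With your formula the two composites genuinely differ: for the identity spider you get \(\mathbf v=\delta^{-1}\mathbf u\) on one side and \(\mathbf v=\delta\mathbf u\) on the other. With the correct formula, scaling \((\delta\mathbf u,\mathbf v)\) by \(\delta^{-1}\) gives exactly \((\mathbf u,\delta^{-1}\mathbf v)\), as you intended.

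Relatedly, for functoriality of scalar extension you need that emptiness is preserved, not merely non-emptiness. That is, an affine system inconsistent over \(\F_p\) must remain inconsistent over \(\F_p(\delta)\). This holds because rank is invariant under field extension.
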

\begin{proof}
  Functoriality is immediate, as given any \(\F_p(\delta)\)-linear vector space \(S\), by linearity, we have \(\delta \cdot S = S\).
\end{proof}

\begin{example}\label{ex:delayed_graph_state_ext}
  By delaying graph states, we obtain \emph{shifted} graph-states:
  \[
    \Ext\left(\begin{tikzpicture}
	\begin{pgfonlayer}{nodelayer}
		\node [style=none] (0) at (-0.5, 2) {};
		\node [style=none] (1) at (-0.5, -2) {};
		\node [style=none] (2) at (4, 2) {};
		\node [style=none] (3) at (4, -2) {};
		\node [style=bn] (4) at (1, 0.5) {};
		\node [style=bn] (5) at (1, -1.5) {};
		\node [style=bn] (6) at (3, -1.5) {};
		\node [style=none] (7) at (3, 0.5) {};
		\node [style=had] (8) at (2, 0.5) {\(d\)};
		\node [style=had] (9) at (1, -0.5) {\(a\)};
		\node [style=had] (10) at (2, -1.5) {\(b\)};
		\node [style=had] (11) at (3, -0.5) {\(c\)};
		\node [style=none] (12) at (4, 0.5) {};
		\node [style=none] (13) at (4, -1.5) {};
		\node [style=lmat] (14) at (2, 1.5) {\(\delta\)};
		\node [style=none] (16) at (3, 1.5) {};
		\node [style=none] (17) at (0.5, 1.5) {};
		\node [style=none] (18) at (0.5, -1.5) {};
		\node [style=bn] (19) at (3, 0.5) {};
		\node [style=none] (20) at (4, 1) {};
		\node [style=none] (21) at (1.5, 1) {};
	\end{pgfonlayer}
	\begin{pgfonlayer}{edgelayer}
		\draw [style=background] (1.center)
			 to (0.center)
			 to (2.center)
			 to (3.center)
			 to cycle;
		\draw (4) to (7.center);
		\draw (7.center) to (6);
		\draw (6) to (5);
		\draw (5) to (4);
		\draw (7.center)
			 to [in=0, out=0, looseness=1.75] (16.center)
			 to (17.center)
			 to [bend left=270, looseness=0.75] (18.center)
			 to (5);
		\draw (20.center)
			 to (21.center)
			 to [in=75, out=180] (4);
		\draw (19) to (12.center);
		\draw (6) to (13.center);
	\end{pgfonlayer}
\end{tikzpicture}
\right)
    =
    \ker
      \left[\begin{array}{ccc|ccc}
        1 & 0 & 0 & 0              & \delta a+ d & 0 \\
        0 & 1 & 0 & \delta^{\minu 1} a+ d & 0           & \delta^{\minu 1} b+ c\\
        0 & 0 & 1 & 0                & \delta b+ c & 0
      \end{array}\right]
    \subseteq (\F_p^{6}(\delta),\varpi_3).
  \]
\end{example}

\begin{example}\label{ex:delayed_cx_ext}
  The delayed controlled \(X\) gate has the following interpretation:
  \[
    \Ext\left(\begin{tikzpicture}
	\begin{pgfonlayer}{nodelayer}
		\node [style=none] (0) at (13.125, 1.25) {};
		\node [style=none] (1) at (13.125, -1.25) {};
		\node [style=none] (2) at (17.125, -1.25) {};
		\node [style=none] (3) at (17.125, 1.25) {};
		\node [style=none] (41) at (16.5, 0) {};
		\node [style=none] (42) at (13.75, 0) {};
		\node [style=none] (43) at (16.5, 0.75) {};
		\node [style=none] (44) at (13.75, 0.75) {};
		\node [style=bn] (45) at (14.25, 0) {};
		\node [style=wn] (46) at (15.25, -0.75) {};
		\node [style=none] (47) at (13.125, -0.75) {};
		\node [style=none] (48) at (15.25, 0) {};
		\node [style=none] (49) at (16.5, -0.75) {};
		\node [style=none] (50) at (17.125, -0.75) {};
		\node [style=lmat] (51) at (15.25, 0.75) {\(\delta\)};
	\end{pgfonlayer}
	\begin{pgfonlayer}{edgelayer}
		\draw [style=background] (3.center)
			 to (2.center)
			 to (1.center)
			 to (0.center)
			 to cycle;
		\draw [style=background] (0.center) to (3.center);
		\draw [style=background] (3.center) to (2.center);
		\draw [style=background] (2.center) to (1.center);
		\draw [style=background] (1.center) to (0.center);
		\draw (45)
			 to (42.center)
			 to [bend left=90, looseness=0.75] (44.center)
			 to (43.center)
			 to [bend left=90, looseness=0.75] (41.center)
			 to [in=0, out=-180] (46);
		\draw (45)
			 to (48.center)
			 to [in=180, out=0] (49.center)
			 to (50.center);
		\draw (45) to (46);
		\draw (47.center) to (46);
	\end{pgfonlayer}
\end{tikzpicture}
\right)
    =
    \Gr
    \left[\begin{array}{c|c}
      (1+\delta)^{\minu 1} & 0\\ \hline
      0                   & 1+\delta^{\minu 1}
    \end{array}\right]
    \subseteq (\F_p^{2}(\delta),-\varpi_1)\times(\F_p^{2}(\delta),\varpi_1).
  \]
\end{example}

These shifted graph states have been extensively studied by Haah in the context of lattice codes, topological quantum computing, and quantum cellular automata \cite{Haah2017, Haah2021, Haah2013}. In his work, the entries of the graph are restricted to the ring \(\F_p[\delta,\delta^{\minu 1}]\subset \F_p(\delta)\) of Laurent polynomials as well as related group algebras and modules.

\subsection{From generating tableaux to behaviours}
Having interpreted stateful ZX-diagrams in \(\sALR\) in the previous subsection, we now relate this generating-tableau semantics to the unrolling semantics of \Cref{cor:unroll}. We show that the geometric series expansion recovers the 
\emph{rational} stabilizers of the unrollings of a stateful ZX-diagram.

First, it is an immediate consequence of~\cite[Thm.~3.6]{obs} that:

\begin{theoremE}\label{thm:gamma_oplax}
  There is a \dag-CC oplax normal functor
  \(\Gamma:\sALR\to \AffRel\)
  factoring through \(\Obs(\ACR)\), which sends morphisms
  \(
      R
    \subseteq
      (
          \F_p^{2n}(\delta), -\varpi_n)
        \times
          (\F_p^{2m}(\delta), \varpi_m
      )
  \)
  to the \(\F_p\)-affine subspace of \((\F_p^\Z)^{2n}\times (\F_p^\Z)^{2m}\) given pointwise by interpreting geometric series as infinite sequences in \(\F_p\).
\end{theoremE}

Being oplax normal, means that \(\Gamma\) preserves identities and satisfies \(\Gamma(S\circ R)\subseteq \Gamma(S)\circ \Gamma(R)\). When this inclusion is strict, composition in \(\AffRel\) is interpreted to \emph{approximate} composition in \(\sALR\).

For example, consider the symplectic analogue of the observation made by Comfort and de Felice~\cite{obs}:
\begin{example}
  \label{ex:not_invertible}
  Consider the shifted Lagrangian relation:
  \begin{equation*}
      M_{\delta+1}
    \coloneqq
      \Gr
      \left[\begin{array}{c|c}
         \delta +1 & 0\\ \hline
         0 & (\delta^{\minu 1}+1)^{\minu 1}
      \end{array}\right]
    =
      \Ext\Bigl(\begin{tikzpicture}
	\begin{pgfonlayer}{nodelayer}
		\node [style=none] (63) at (-0.1, -0.75) {};
		\node [style=none] (64) at (2.875, -0.75) {};
		\node [style=none] (65) at (2.875, 0.75) {};
		\node [style=none] (66) at (-0.1, 0.75) {};
		\node [style=none] (67) at (-0.1, -0.125) {};
		\node [style=none] (68) at (2.875, -0.125) {};
		\node [style=bn] (87) at (0.525, -0.125) {};
		\node [style=wn] (88) at (2.25, -0.125) {};
		\node [style=rmat] (89) at (1.375, 0.25) {\(\delta\)};
		\node [style=none] (90) at (1.375, -0.5) {};
	\end{pgfonlayer}
	\begin{pgfonlayer}{edgelayer}
		\draw [style=background] (64.center)
			 to (63.center)
			 to (66.center)
			 to (65.center)
			 to cycle;
		\draw [style=background] (65.center) to (64.center);
		\draw [style=background] (64.center) to (63.center);
		\draw [style=background] (63.center) to (66.center);
		\draw [style=background] (66.center) to (65.center);
		\draw (67.center) to (87);
		\draw (88) to (68.center);
		\draw (88)
			 to [in=0, out=135] (89.center)
			 to [in=45, out=-180] (87);
		\draw (87)
			 to [in=180, out=-45] (90.center)
			 to [in=-135, out=0] (88);
	\end{pgfonlayer}
\end{tikzpicture}
\Bigr)
  \end{equation*}
  This is only partially invertible in \(\Gamma(\sALR)\subset \AffRel\):
  \[
    \Gamma(M_{\delta+1}) \circ \Gamma(M_{\delta+1}^{\minu 1})=1,
    \quad\text{however}\quad
    \Gamma(M_{\delta+1}^{\minu 1}) \circ \Gamma(M_{\delta+1})\supsetneq 1.
  \]
  
\end{example}

Because observational equivalence is defined by finite approximations, exhibiting the left inverse would require expanding the entire infinite geometric series, that is, it would require infinite lookahead. The right inverse, by contrast, can be computed lazily with finite lookahead, without fully expanding the series at each stage of the unrolling.
The oplaxness of \(\Gamma\) is the categorical shadow of a \emph{catastrophic} encoder in the theory of quantum convolutional codes. The feedback transfer function \((1+\delta)^{\minu 1}\) of \(M_{\delta+1}\) is the textbook example: inverting it propagates a single error indefinitely, so it admits no finite-depth, non-catastrophic inverse~\cite{Ollivier2003, wilde}. %

As an immediate consequence of \cite[Prop.~6.10]{obs}, we can also relate the unrolling semantics to the generating-tableau semantics,

\begin{proposition}\label{prop:gamma_ext_approx}
For all delayed ZX-diagrams \(D\in \St(\StabZX)\),  \(\Gamma(\Ext(D))\subseteq \StabRel(D)\).
\end{proposition}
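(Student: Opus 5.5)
The plan is structural induction on the presentation of \(\St(\StabZX)\), reducing everything to checking generators. The key observation is that the two sides have opposite laxness. By \Cref{cor:unroll}, \(\StabRel=\Lim\circ\Unroll\) is a strict \dag-compact closed functor, so it preserves composition and tensor on the nose. By \Cref{thm:gamma_oplax}, \(\Gamma\) is oplax normal, so \(\Gamma(S\circ R)\subseteq\Gamma(S)\circ\Gamma(R)\). The functor \(\Ext\) is strict by \Cref{prop:compact_closed_functor}. Hence, if \(\Gamma(\Ext(g))\subseteq\StabRel(g)\) holds for every generator \(g\), then for a composite \(D=D_2\circ D_1\) we get
\[
\Gamma(\Ext(D))\subseteq\Gamma(\Ext(D_2))\circ\Gamma(\Ext(D_1))\subseteq\StabRel(D_2)\circ\StabRel(D_1)=\StabRel(D).
\]
The second inclusion uses monotonicity of relational composition in \(\AffRel\), which holds because \(\AffRel\) is poset-enriched by inclusion. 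Tensor products are handled the same way. For tensor I will check that \(\Gamma\) is at least oplax for \(\oplus\); since geometric-series expansion acts componentwise, I expect it to be strict there. Cups, caps, identities and symmetries are preserved, because both functors are \dag-compact closed and \(\Gamma\) is normal.

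It then remains to verify the generators. I will show equality, which is stronger than needed, for the following cases.
\begin{itemize}
\item \textbf{Spiders with phase \((a,b)\).} \(\Ext\) gives the scalar extension of an \(\F_p\)-affine Lagrangian relation.
\begin{itemize}
\item For symplectic-only phases (affine part zero), the defining equations have constant coefficients. Their solutions among expansions of rational functions are therefore exactly the sequences satisfying the same equations at every time step. This agrees with \(\StabRel\), whose unrolling is that spider tensored over every \(t\in[\ell,r]\), and whose limit imposes the equations pointwise.
\item For affine phases, the \(\F_p\)-constant \(a\) is placed at \(t=0\) in the unrolling (\Cref{def:unrolling}). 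It expands as the constant Laurent series \(a\delta^0\), again at \(t=0\), so the two sides agree.
\end{itemize}
\item \textbf{The delay.} Multiplication by \(\delta\) expands to the reindexing \(x'_t=x_{t-1}\), which is exactly \Cref{ex:beh_delay}. Since \(\Gamma\) only sees rationally generated sequences, the inclusion holds.
\item \textbf{The discard,} if it is present in the syntax. It is sent to the full relation on both sides.
\end{itemize}

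The main obstacle is making sure the inductive bookkeeping is legitimate. \(\StabRel\) is only defined on \(\St(\StabZX)\) modulo the sliding equations, so I need to check that the generator case analysis is compatible with the chosen representative diagram. Here this is harmless: the claim concerns the relations assigned to a morphism, not to a syntactic term, and any term representing \(D\) yields the same \(\StabRel(D)\) and \(\Ext(D)\). The remaining subtle point is a precise meaning for ``geometric series as infinite sequences'' in \(\Gamma\). I would take \(\Gamma(R)\) to be the set of sequence pairs arising as Laurent expansions, in \(\delta\), of elements of \(R\). The generator checks then reduce to the fact that Laurent expansion \(\F_p(\delta)\to\F_p((\delta))\hookrightarrow\F_p^\Z\) is an \(\F_p\)-linear ring embedding commuting with \(\delta\)-shift.

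Alternatively, I would cite the general result \cite[Prop.~6.10]{obs} directly: it gives exactly this inclusion for the analogous stateful linear relations. The proof would then transport it along \(\ALR\subseteq\) affine relations, with the generator verification above as the only new content.
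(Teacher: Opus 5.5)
Your route differs from the paper's, which proves this in one line by invoking \cite[Prop.~6.10]{obs}; your fallback of citing that result is exactly what the paper does. Your inductive step is sound. It uses only four facts:
\begin{itemize}
\item \(\Ext\) and \(\StabRel=\Lim\circ\Unroll\) are strict functors (\Cref{prop:compact_closed_functor,cor:unroll});
\item \(\Gamma\) is oplax normal (\Cref{thm:gamma_oplax});
\item relational composition is monotone;
\item \(\Gamma\) is oplax for \(\oplus\), which you rightly flag as needing a check.
\end{itemize}
So the opposite laxness of the two sides really does reduce the statement to generators. This buys a self-contained explanation of \emph{why} the inclusion holds. It also isolates the one genuinely new ingredient relative to \cite{obs}: affine phases are placed at \(t=0\). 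The discard is not a generator of \(\St(\StabZX)\), so that bullet is vacuous.

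The base case, however, is argued for the wrong \(\Gamma\). Suppose \(\Gamma(R)\) were the set \(\iota(R)\) of \(\delta\)-adic expansions of elements of \(R\).
\begin{itemize}
\item Injectivity of \(\iota\) would give \(\Gamma(S)\circ\Gamma(R)=\Gamma(S\circ R)\) on the nose.
\item \(\Gamma(1)\) would be only the diagonal on rational sequences, a proper subrelation of the identity on \(\F_p^\Z\).
\item Hence \(\Gamma(M_{\delta+1}^{-1})\circ\Gamma(M_{\delta+1})=\Gamma(1)\subsetneq 1\), the opposite of \Cref{ex:not_invertible}.
\end{itemize}
The paper's \(\Gamma\), which is normal and genuinely oplax, must therefore contain sequences that are not expansions of elements of \(R\). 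An example is the two-sided alternating sequence \(((-1)^t)_{t\in\Z}\), which spans the kernel of \(1+\delta\) acting on \(\F_p^\Z\). For the same reason, your claimed equalities at the generators fail under your own definition: \(\StabRel\) of a plain wire is all of \(\F_p^\Z\). Only inclusion is needed, so that overclaim is harmless.

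The real fix is to use only properties that hold for the actual \(\Gamma\), which is computed through finite truncations and \(\Lim\).
\begin{itemize}
\item Every sequence in \(\Gamma(\Ext(g))\) satisfies the finite-window equations cutting out \(\Ext(g)\). For a spider these are the constant-coefficient equations at each \(t\), with the affine offset at \(t=0\). For the delay they are \(x'_t=x_{t-1}\) and \(z'_t=z_{t-1}\). These equations hold on every expansion and survive truncation and limits.
\item By \Cref{def:unrolling} and \Cref{thm:lim_faithful}, \(\StabRel(g)\) is exactly the set of \emph{all} bi-infinite solutions of those same equations.
\end{itemize}
With this replacement for your ``Laurent expansion is a ring embedding'' argument, your proof is complete.
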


This is merely an inclusion because the geometric series expansion produces only the \emph{rational} sequences of Pauli operators which stabilize its unrollings.
This justifies our initial claim at the beginning of the section that the generating-tableau is a finitary approximation of the infinite space of stabilizers.

\begin{example}
  \label{ex:delayed_graph_state_gamma}
  Continuing \Cref{ex:delayed_graph_state_ext}, let
  \[
    \Ext\left(\right)
    =
    \ker\left[\begin{array}{ccc|ccc}
        1 & 0 & 0 & 0              & \delta a+ d & 0 \\
      0 & 1 & 0 & \delta^{\minu 1} a+ d & 0           & \delta^{\minu 1} b+ c\\
      0 & 0 & 1 & 0                & \delta b+ c & 0
    \end{array}\right]
    \eqqcolon R.
  \]
  A rational sequence of Paulis encoded by \((\vb x,\vb z)\in ((\F_p^3)^\Z)^2\) is in \(\Gamma(R)\) if and only if:
  \[
      x_{0,t}=-az_{1,t-1}-dz_{1,t},
    \quad
      x_{1,t}=-az_{0,t+1}-dz_{0,t}-bz_{2,t+1}-cz_{2,t},
    \quad\text{and}\quad
      x_{2,t}=-bz_{1,t-1}-cz_{1,t},
    \quad\forall t\in\Z.
  \]
  Therefore, the finitely supported stabilizer subgroup encoded by 
  \(
    \Gamma(R)\cap
    (
      \substack{\Z\\ \oplus}\F_p^3
    )^2
  \)
  is generated by:
  \[
      K_{0,t}:=X_{1,t-1}^{\minu a}X_{1,t}^{\minu d}Z_{0,t},
    \quad
      K_{1,t}:=X_{0,t+1}^{\minu a}X_{0,t}^{\minu d}X_{2,t+1}^{\minu b}X_{2,t}^{\minu c}Z_{1,t},
    \quad\text{and}\quad
      K_{2,t}:=X_{1,t-1}^{\minu b}X_{1,t}^{\minu c}Z_{2,t},
    \quad\forall t\in\Z.
  \]
  where \(X_{n,t}\) (respectively \(Z_{n,t}\)) denotes the Pauli \(X\)
  (respectively Pauli \(Z\)) on wire \(n\in\{0,1,2\}\) at time \(t\).
\end{example}

\begin{example}\label{ex:delayed_cx_gamma}
  Continuing \Cref{ex:delayed_cx_ext},
  \[
    \Ext\left(\right)
    =
    \Gr
    \left[\begin{array}{c|c}
      (1+\delta)^{\minu 1} & 0\\ \hline
      0                   & 1+\delta^{\minu 1}
    \end{array}\right]
  \]
  Applying the geometric series expansion, two rational sequences of Paulis encoded by 
  \(\bigl((\vb x,\vb z),(\vb x',\vb z')\bigr)\in ((\F_p^\Z)^2)^2\)
  are in this relation if and only if:
  \[
      x_t=x'_t+x'_{t-1}
    \qquad\text{and}\qquad
      z'_t=z_t+z_{t+1},
    \qquad\forall t\in\Z.
  \]
  Therefore, the finitely supported stabilizer subgroup  
  is generated by the following local Pauli operators:
  \[
      L^X_t:=X_t^{\mathrm{in}}X_{t+1}^{\mathrm{in}}X_t^{\mathrm{out}}
    \qquad\text{and}\qquad
      L^Z_t:=Z_t^{\mathrm{in}}Z_t^{\mathrm{out}}Z_{t-1}^{\mathrm{out}},
    \qquad\forall t\in\Z.
  \]
\end{example}

  \section{The delayed ZX-calculus\texorpdfstring{: \(\dStabZX\)}{}}\label{sec:dzx}
  \pratendSetLocal{category=dzx, end, restate}
  In this section, we add enough equations to \(\St(\StabZX)\) to obtain a complete axiomatisation of \(\sALR\). We call the resulting language the \textbf{delayed stabilizer ZX-calculus}, denoted \(\dStabZX\). The delayed ZX-calculus can be regarded as a quantum/symplectic analogue of the graphical calculus for signal flow graphs
\cite{baez,Bonchi2021, Bonchi2015, Bonchi2017}.

\subsection{Derived notation}\label{subsec:dzx_derived_generators}
In this subsection, we define generalised spiders, H-boxes and multipliers in \(\St(\StabZX)\), giving their interpretations in \(\sALR\). This will allow us to state the equational theory of \(\dStabZX\) in the following subsection. We prove that these notations are well-defined in \Cref{subsec:dzx_axiom_lemmas}.

\begin{definition}\label{def:dzx_poly_multiplier}
   Polynomial multipliers are defined by induction on their degree:
   \begin{equation*}
         \begin{tikzpicture}
	\begin{pgfonlayer}{nodelayer}
		\node [style=none] (51) at (-4.375, -1.25) {};
		\node [style=none] (52) at (0, -1.25) {};
		\node [style=none] (53) at (0, 1.25) {};
		\node [style=none] (54) at (-4.375, 1.25) {};
		\node [style=none] (55) at (-4.375, 0) {};
		\node [style=none] (56) at (0, 0) {};
		\node [style=rmat] (61) at (-2.25, 0) {\(f(\delta)+b\delta^{n+1}\)};
	\end{pgfonlayer}
	\begin{pgfonlayer}{edgelayer}
		\draw [style=background] (53.center)
			 to (52.center)
			 to (51.center)
			 to (54.center)
			 to cycle;
		\draw (55.center) to (61);
		\draw (61) to (56.center);
	\end{pgfonlayer}
\end{tikzpicture}

      \coloneqq
         \begin{tikzpicture}
	\begin{pgfonlayer}{nodelayer}
		\node [style=none] (8) at (1, -1.25) {};
		\node [style=none] (9) at (6.5, -1.25) {};
		\node [style=none] (10) at (6.5, 1.25) {};
		\node [style=none] (11) at (1, 1.25) {};
		\node [style=none] (41) at (1, 0) {};
		\node [style=bn] (43) at (1.75, 0) {};
		\node [style=wn] (44) at (5.75, 0) {};
		\node [style=none] (45) at (6.5, 0) {};
		\node [style=none] (46) at (2.75, 0.5) {};
		\node [style=rmat] (47) at (3.75, -0.5) {\(f(\delta)\)};
		\node [style=none] (48) at (4.75, 0.5) {};
		\node [style=none] (49) at (2.75, -0.5) {};
		\node [style=none] (50) at (4.75, -0.5) {};
		\node [style=rmat] (63) at (3, 0.5) {\(b\delta^n\)};
		\node [style=rmat] (64) at (4.5, 0.5) {\(\delta\)};
	\end{pgfonlayer}
	\begin{pgfonlayer}{edgelayer}
		\draw [style=background] (10.center)
			 to (9.center)
			 to (8.center)
			 to (11.center)
			 to cycle;
		\draw (41.center) to (43);
		\draw (44) to (45.center);
		\draw (43)
			 to [in=-180, out=60] (46.center)
			 to (48.center)
			 to [in=120, out=0] (44);
		\draw (44)
			 to [in=0, out=-120] (50.center)
			 to (47.center)
			 to (49.center)
			 to [in=-60, out=180] (43);
	\end{pgfonlayer}
\end{tikzpicture}

   \end{equation*}
\end{definition}

These restrict to \(\StabZX\) multipliers when \(f(\delta)=f(0) \in \F_p\), and the delay when  \(f(\delta) = \delta\):
\begin{equation*}
      \Ext\left(\begin{tikzpicture}
	\begin{pgfonlayer}{nodelayer}
		\node [style=none] (51) at (-4.375, -0.75) {};
		\node [style=none] (52) at (-1, -0.75) {};
		\node [style=none] (53) at (-1, 0.75) {};
		\node [style=none] (54) at (-4.375, 0.75) {};
		\node [style=none] (55) at (-4.375, 0) {};
		\node [style=none] (56) at (-1, 0) {};
		\node [style=rmat] (61) at (-2.75, 0) {\(f(\delta)\)};
	\end{pgfonlayer}
	\begin{pgfonlayer}{edgelayer}
		\draw [style=background] (53.center)
			 to (52.center)
			 to (51.center)
			 to (54.center)
			 to cycle;
		\draw (55.center) to (61);
		\draw (61) to (56.center);
	\end{pgfonlayer}
\end{tikzpicture}
\right)
   = 
      \Gr
      \left[\begin{array}{c|c}
         f(\delta) & 0\\ \hline
         0 & f(\delta^{\minu 1})^{\minu 1}
      \end{array}\right]
\end{equation*}

Using two polynomial multipliers, we define rational multipliers:
\begin{definition}\label{def:dzx_rational_multiplier}
   Given \(f(\delta),g(\delta) \in \F_p[\delta]\), define 
   \[
         \begin{tikzpicture}
	\begin{pgfonlayer}{nodelayer}
		\node [style=none] (72) at (7.125, -0.5) {};
		\node [style=none] (73) at (9.875, -0.5) {};
		\node [style=none] (74) at (9.875, 0.5) {};
		\node [style=none] (75) at (7.125, 0.5) {};
		\node [style=none] (76) at (7.125, 0) {};
		\node [style=none] (77) at (9.875, 0) {};
		\node [style=lmat] (79) at (8.5, 0) {\(g(\delta)\)};
	\end{pgfonlayer}
	\begin{pgfonlayer}{edgelayer}
		\draw [style=background] (72.center)
			 to (75.center)
			 to (74.center)
			 to (73.center)
			 to cycle;
		\draw [style=background] (74.center) to (73.center);
		\draw [style=background] (73.center) to (72.center);
		\draw [style=background] (72.center) to (75.center);
		\draw [style=background] (75.center) to (74.center);
		\draw (76.center) to (79);
		\draw (79) to (77.center);
	\end{pgfonlayer}
\end{tikzpicture}

      \coloneqq
         \begin{tikzpicture}
	\begin{pgfonlayer}{nodelayer}
		\node [style=none] (81) at (11.375, -0.75) {};
		\node [style=none] (82) at (15.625, -0.75) {};
		\node [style=none] (83) at (15.625, 0.75) {};
		\node [style=none] (84) at (11.375, 0.75) {};
		\node [style=none] (85) at (11.375, 0) {};
		\node [style=none] (86) at (15.625, 0) {};
		\node [style=rmat] (87) at (13.5, 0) {\(g(\delta)\)};
		\node [style=none] (88) at (14.75, 0) {};
		\node [style=none] (89) at (12.25, 0) {};
		\node [style=none] (90) at (14.75, 0.5) {};
		\node [style=none] (91) at (12.25, 0.5) {};
		\node [style=none] (92) at (12.25, -0.5) {};
		\node [style=none] (93) at (14.75, -0.5) {};
	\end{pgfonlayer}
	\begin{pgfonlayer}{edgelayer}
		\draw [style=background] (83.center)
			 to (82.center)
			 to (81.center)
			 to (84.center)
			 to cycle;
		\draw [style=background] (83.center) to (82.center);
		\draw [style=background] (82.center) to (81.center);
		\draw [style=background] (81.center) to (84.center);
		\draw [style=background] (84.center) to (83.center);
		\draw (85.center)
			 to [in=180, out=0, looseness=0.75] (91.center)
			 to (90.center)
			 to [bend left=90, looseness=1.25] (88.center)
			 to (87.center)
			 to (89.center)
			 to [bend right=90, looseness=1.25] (92.center)
			 to (93.center)
			 to [in=180, out=0, looseness=0.75] (86.center);
	\end{pgfonlayer}
\end{tikzpicture}

      \quad\text{and}\quad
         \begin{tikzpicture}
	\begin{pgfonlayer}{nodelayer}
		\node [style=none] (51) at (-5.625, -0.5) {};
		\node [style=none] (52) at (-1.75, -0.5) {};
		\node [style=none] (53) at (-1.75, 0.5) {};
		\node [style=none] (54) at (-5.625, 0.5) {};
		\node [style=none] (55) at (-5.625, 0) {};
		\node [style=none] (56) at (-1.75, 0) {};
		\node [style=rmat] (61) at (-3.75, 0) {\(f(\delta)/g(\delta)\)};
	\end{pgfonlayer}
	\begin{pgfonlayer}{edgelayer}
		\draw [style=background] (53.center)
			 to (52.center)
			 to (51.center)
			 to (54.center)
			 to cycle;
		\draw (55.center) to (61);
		\draw (61) to (56.center);
	\end{pgfonlayer}
\end{tikzpicture}

      \coloneqq
         \begin{tikzpicture}
	\begin{pgfonlayer}{nodelayer}
		\node [style=none] (63) at (-0.5, -0.5) {};
		\node [style=none] (64) at (4.125, -0.5) {};
		\node [style=none] (65) at (4.125, 0.5) {};
		\node [style=none] (66) at (-0.5, 0.5) {};
		\node [style=none] (67) at (-0.5, 0) {};
		\node [style=none] (68) at (4.125, 0) {};
		\node [style=rmat] (69) at (0.875, 0) {\(f(\delta)\)};
		\node [style=lmat] (70) at (2.625, 0) {\(g(\delta)\)};
	\end{pgfonlayer}
	\begin{pgfonlayer}{edgelayer}
		\draw [style=background] (65.center)
			 to (64.center)
			 to (63.center)
			 to (66.center)
			 to cycle;
		\draw [style=background] (65.center) to (64.center);
		\draw [style=background] (64.center) to (63.center);
		\draw [style=background] (63.center) to (66.center);
		\draw [style=background] (66.center) to (65.center);
		\draw (67.center) to (69);
		\draw (69) to (68.center);
	\end{pgfonlayer}
\end{tikzpicture}
.
   \]
\end{definition}

We define generalized, \emph{directed} \(H\)-boxes labelled by rational functions:

\begin{definition}\label{def:dzx_directed_fourier_box}
   Given \(h(\delta)\in \F_p(\delta)\), define
   \[
      \begin{tikzpicture}
	\begin{pgfonlayer}{nodelayer}
		\node [style=none] (51) at (-7.125, -0.5) {};
		\node [style=none] (52) at (-4.75, -0.5) {};
		\node [style=none] (53) at (-4.75, 0.5) {};
		\node [style=none] (54) at (-7.125, 0.5) {};
		\node [style=none] (55) at (-7.125, 0) {};
		\node [style=none] (56) at (-4.75, 0) {};
		\node [style=rhad] (61) at (-6, 0) {\(h(\delta)\)};
	\end{pgfonlayer}
	\begin{pgfonlayer}{edgelayer}
		\draw [style=background] (53.center) to (52.center) to (51.center) to (54.center) to cycle;
		\draw (55.center) to (61);
		\draw (61) to (56.center);
	\end{pgfonlayer}
\end{tikzpicture}

   \coloneqq
      \begin{tikzpicture}
	\begin{pgfonlayer}{nodelayer}
		\node [style=none] (63) at (-2.75, -0.5) {};
		\node [style=none] (64) at (1.25, -0.5) {};
		\node [style=none] (65) at (1.25, 0.5) {};
		\node [style=none] (66) at (-2.75, 0.5) {};
		\node [style=none] (67) at (-2.75, 0) {};
		\node [style=none] (68) at (1.25, 0) {};
		\node [style=rmat] (69) at (-1.375, 0) {\(h(\delta)\)};
		\node [style=had] (70) at (0.25, 0) {};
	\end{pgfonlayer}
	\begin{pgfonlayer}{edgelayer}
		\draw [style=background] (65.center) to (64.center) to (63.center) to (66.center) to cycle;
		\draw [style=background] (65.center) to (64.center);
		\draw [style=background] (64.center) to (63.center);
		\draw [style=background] (63.center) to (66.center);
		\draw [style=background] (66.center) to (65.center);
		\draw (67.center) to (69);
		\draw (69) to (68.center);
	\end{pgfonlayer}
\end{tikzpicture}

   \quad\text{and}\quad
      \begin{tikzpicture}
	\begin{pgfonlayer}{nodelayer}
		\node [style=none] (80) at (9.125, -0.5) {};
		\node [style=none] (81) at (5.25, -0.5) {};
		\node [style=none] (82) at (5.25, 0.5) {};
		\node [style=none] (83) at (9.125, 0.5) {};
		\node [style=none] (84) at (9.125, 0) {};
		\node [style=none] (85) at (5.25, 0) {};
		\node [style=lmat] (86) at (7.75, 0) {\(h(\delta)\)};
		\node [style=had] (87) at (6.25, 0) {};
	\end{pgfonlayer}
	\begin{pgfonlayer}{edgelayer}
		\draw [style=background] (80.center) to (83.center) to (82.center) to (81.center) to cycle;
		\draw [style=background] (82.center) to (81.center);
		\draw [style=background] (81.center) to (80.center);
		\draw [style=background] (80.center) to (83.center);
		\draw [style=background] (83.center) to (82.center);
		\draw (84.center) to (86);
		\draw (86) to (85.center);
	\end{pgfonlayer}
\end{tikzpicture}

   \eqqcolon
      \begin{tikzpicture}
	\begin{pgfonlayer}{nodelayer}
		\node [style=none] (71) at (15, -0.5) {};
		\node [style=none] (72) at (12.625, -0.5) {};
		\node [style=none] (73) at (12.625, 0.5) {};
		\node [style=none] (74) at (15, 0.5) {};
		\node [style=none] (75) at (15, 0) {};
		\node [style=none] (76) at (12.625, 0) {};
		\node [style=lhad] (77) at (13.875, 0) {\(h(\delta)\)};
	\end{pgfonlayer}
	\begin{pgfonlayer}{edgelayer}
		\draw [style=background] (73.center) to (72.center) to (71.center) to (74.center) to cycle;
		\draw (75.center) to (77);
		\draw (77) to (76.center);
	\end{pgfonlayer}
\end{tikzpicture}
.
   \]
\end{definition}

\begin{textAtEnd}
\begin{lemma}
   \label{lem:self-conj-laurent}
   For \(f(\delta)\in \F_p[\delta, \delta^{\minu 1}]\), we have \(f(\delta)=f(\delta^{\minu 1})\) if and only if \(f(\delta) \in \F_p[\delta+\delta^{\minu 1}]\).
\end{lemma}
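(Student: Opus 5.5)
The reverse direction is immediate, so the plan is to dispose of it first and then handle the forward direction by induction. Set \(u\coloneqq\delta+\delta^{\minu 1}\). The involution \(\bar{(-)}\) is a ring homomorphism of \(\F_p[\delta,\delta^{\minu 1}]\) that fixes \(u\). Hence it fixes every polynomial expression in \(u\), which proves that \(\F_p[u]\) consists of self-conjugate Laurent polynomials.

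For the forward direction, suppose \(f=\sum_{k=-N}^{N} c_k\delta^k\) satisfies \(\bar f=f\). Comparing coefficients of \(\delta^k\) on both sides gives \(c_k=c_{-k}\) for all \(k\). In particular the support of \(f\) is symmetric, so we may index it by \(-N\leq k\leq N\) with \(c_N=c_{-N}\).

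I then induct on \(N\), the largest exponent appearing. If \(N=0\), then \(f=c_0\in\F_p\subseteq\F_p[u]\). If \(N>0\), the binomial theorem gives
\[
u^N=(\delta+\delta^{\minu 1})^N=\delta^N+\delta^{\minu N}+(\text{terms }\delta^j\text{ with }|j|<N).
\]
Consider \(g\coloneqq f-c_N u^N\). It is again self-conjugate, because both \(f\) and \(u^N\) are. Its coefficients of \(\delta^{\pm N}\) are \(c_{\pm N}-c_N=0\), using the symmetry \(c_{-N}=c_N\). So \(g\) has largest exponent strictly less than \(N\), and by the induction hypothesis \(g\in\F_p[u]\). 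Therefore \(f=g+c_Nu^N\in\F_p[u]\).

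The main point needing care is that subtracting \(c_Nu^N\) kills both extreme terms at once. This is exactly where the symmetry \(c_N=c_{-N}\) derived from \(\bar f=f\) is used. The argument is characteristic-free, since the leading coefficient of \(u^N\) is \(1\), so odd \(p\) causes no issue.
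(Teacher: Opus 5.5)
Your proof is correct and follows essentially the same route as the paper: the reverse direction holds because the involution fixes \(\delta+\delta^{-1}\), and the forward direction goes by induction on the top exponent, using the binomial expansion of \((\delta+\delta^{-1})^N\) to remove the extreme pair \(\delta^{\pm N}\). The only difference is bookkeeping: the paper first shows by strong induction that each \(\delta^j+\delta^{-j}\) lies in \(\F_p[\delta+\delta^{-1}]\) and then concludes by linearity, whereas you peel the leading term off \(f\) directly, which needs only the leading terms of \((\delta+\delta^{-1})^N\) rather than the full pairing of binomial coefficients.
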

\begin{proof}
   First, we prove that if \(h(\delta)\in \F_p[\delta+\delta^{\minu 1}]\), then \( h(\delta) = h(\delta^{\minu 1})  \in \F_p[\delta,\delta^{\minu 1}]\).  Since, \(\bar{\delta+\delta^{\minu 1}} = \delta^{\minu 1}+\delta = \delta+\delta^{\minu 1}\)
   every element of \(\F_p[\delta+\delta^{\minu 1}]\) is self-symmetric.

   For the converse direction, we prove that \((\delta^j+\delta^{\minu j}) \in \F_p[\delta+\delta^{\minu 1}]\) for all \(j\in \N\),  which implies that every self-conjugate Laurent polynomial \(h(\delta) = h_0 + \sum_{j=1}^{n} h_j (\delta+\delta^{\minu j}) \in \F_p[\delta+\delta^{\minu 1}]\).

   We prove that  \((\delta^j+\delta^{\minu j}) \in \F_p[\delta+\delta^{\minu 1}]\) by induction on \(j \in \N\).
   For the base case of \(j=0\):
   \[\delta^0 + \delta^{\minu 0} = 2(\delta+\delta^{\minu 1})^0 \in \F_p[\delta+\delta^{\minu 1}]\]
   
   For the inductive hypothesis, take some \(0<j\in \N\) such that for every \(m\in \N\),  such that \(m<j\), \((\delta^m+\delta^{\minu m}) \in \F_p[\delta+\delta^{\minu 1}]\).
   From the binomial theorem, we have:
   \begin{align}
      (\delta+\delta^{\minu 1})^{j}
   &=
         \sum_{k=0}^{j}\binom{j}{k}\,\delta^{j-2k} 
      =
         \binom{j}{0} \delta^{j}
         + \sum_{k=1}^{\lfloor j/2\rfloor}\binom{j}{k} \delta^{j-2k}
         + \sum_{k=\lfloor j/2\rfloor+1}^{j-1}\binom{j}{k}\,\delta^{j-2k}
         + \binom{j}{j} \delta^{\minu j}                                           \\
   &=
         \delta^{j}+\delta^{\minu j}
      + \sum_{k=1}^{\lfloor j/2\rfloor}
            \binom{j}{k}\left(\delta^{j-2k}+\delta^{\minu (j-2k)}\right)
   \end{align}

   Therefore, rearranging terms:
   \begin{align}
      \delta^{j}+\delta^{\minu j}
      =
      (\delta+\delta^{\minu 1})^{j}
      - \sum_{k=1}^{\lfloor j/2\rfloor}
      \binom{j}{k}\left(\delta^{j-2k}+\delta^{\minu (j-2k)}\right)
   \end{align}
   It is clear that  \((\delta+\delta^{\minu 1})^{j}= (\delta+\delta^{\minu 1})^{j} \in \F_p[\delta+\delta^{\minu 1}]\).  Moreover all of the other summands are also in \(\F_p[\delta+\delta^{\minu 1}]\) by the inductive hypothesis.  Therefore, because \(\F_p[\delta+\delta^{\minu 1}]\) is a ring, \({\delta^{j}+\delta^{\minu j} \in \F_p[\delta+\delta^{\minu 1}]}\), as desired.
\end{proof}

\begin{lemma}\label{lem:self-conj-rational}
   For \(f(\delta)\in \F_p(\delta)\), we have \(f(\delta)=f(\delta^{\minu 1})\) if and only if \(f(\delta) \in \F_p(\delta+\delta^{\minu 1})\).
\end{lemma}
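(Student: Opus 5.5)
The easy direction comes first. Since \(\overline{\delta+\delta^{\minu 1}}=\delta+\delta^{\minu 1}\) and conjugation \(\bar{(-)}\) is a field automorphism of \(\F_p(\delta)\), every rational function in \(\delta+\delta^{\minu 1}\) is fixed by conjugation. The substance lies in the converse.

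For the converse, the plan is to reduce to \Cref{lem:self-conj-laurent} by choosing a self-conjugate Laurent-polynomial denominator. Take \(f=a/b\) with \(a,b\in\F_p[\delta]\) and \(b\neq 0\), and suppose \(\bar f=f\). Multiply numerator and denominator by \(\bar b\), giving
\[f=\frac{a\bar b}{b\bar b}.\]
Here the denominator \(d\coloneqq b\bar b\) is a nonzero Laurent polynomial with \(\bar d=d\). By \Cref{lem:self-conj-laurent}, \(d\in\F_p[\delta+\delta^{\minu 1}]\).

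The numerator \(c\coloneqq a\bar b=fd\) is also a Laurent polynomial. It is self-conjugate because \(\bar c=\bar f\,\bar d=fd=c\). So \Cref{lem:self-conj-laurent} gives \(c\in\F_p[\delta+\delta^{\minu 1}]\) as well. Hence
\[f=c/d\in\F_p(\delta+\delta^{\minu 1}),\]
which completes the converse.

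Once the right denominator is chosen, nothing here is a real obstacle. The key step is recognising that \(b\bar b\) is self-conjugate and nonzero (it is nonzero because \(\F_p(\delta)\) is a field and \(\bar b\neq0\)). After that, the argument only needs the fact that \(\bar{(-)}\) is multiplicative and preserves Laurent polynomials. The one point to check is that \(a\bar b\) really lies in \(\F_p[\delta,\delta^{\minu 1}]\), which holds because \(\bar b\) is a polynomial in \(\delta^{\minu 1}\). This keeps us inside the hypothesis of \Cref{lem:self-conj-laurent} as stated.
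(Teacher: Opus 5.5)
Your proposal is correct and follows essentially the same route as the paper: multiply through by the conjugate of the denominator so that \(b\bar b\) is a nonzero self-conjugate Laurent polynomial, observe that the numerator is then self-conjugate too, and apply \Cref{lem:self-conj-laurent} to both. The differences are cosmetic. The paper writes the numerator as \(\bar a\, b\), obtained from the conjugated fraction, which equals your \(a\bar b\) because \(f\) is self-conjugate. You also spell out the easy direction, which the paper's proof leaves implicit.
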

\begin{proof}
   First, remark that \(0\) is both a  self-conjugate and in \(\F_p(\delta+\delta^{\minu 1})\). 
   Take a non-zero self-conjugate rational function 
   \[\frac{f(\delta)}{g(\delta)}= \bar{\left(\frac{f(\delta)}{g(\delta)}\right)}=\frac{f(\delta^{\minu 1})}{g(\delta^{\minu 1})}  \in \F_p(\delta)\]
   for \(f(\delta), g(\delta) \in \F_p[\delta]\). Then:
   \[\frac{f(\delta)}{g(\delta)} = \bar{\left(\frac{f(\delta)}{g(\delta)}\right)}= \frac{f(\delta^{\minu 1})}{g(\delta^{\minu 1})} = \frac{f(\delta^{\minu 1})g(\delta)}{g(\delta^{\minu 1}) g(\delta)}\]
   Remark that \(g(\delta^{\minu 1})g(\delta)\) and  \(f(\delta^{\minu 1})g(\delta)\) are Laurent polynomials. 
   Moreover,
   \[\bar{g(\delta^{\minu 1}) g(\delta) }= g(\delta)g(\delta^{\minu 1}) =g(\delta^{\minu 1}) g(\delta)  \in \F_p[\delta+\delta^{\minu 1}]\]
   by \Cref{lem:self-conj-laurent}. Similarly, since \(f(\delta) = g(\delta)f(\delta^{\minu 1})/g(\delta^{\minu 1})\), another application of \Cref{lem:self-conj-laurent} gives:
   \[\bar{f(\delta^{\minu 1})g(\delta)} = f(\delta)g(\delta^{\minu 1}) =   \frac{f(\delta^{\minu 1})g(\delta)g(\delta^{\minu 1})}{g(\delta^{\minu 1})}  = f(\delta^{\minu 1})g(\delta) \in \F_p[\delta+\delta^{\minu 1}] \]
   Therefore, every self-conjugate rational function is a fraction of self-conjugate Laurent polynomials in \(\F_p[\delta+\delta^{\minu 1}]\) as desired.
\end{proof}
\end{textAtEnd}

To define undirected generalised H-boxes and generalised spiders, it is helpful to understand the field \(\{f(\delta)\ | \ f(\delta) = f(\delta^{\minu 1})\} \subset \F_p(\delta)\) of \emph{self conjugate} rational functions:

\begin{propositionE}\label{prop:self-conj-rational}
   The self-conjugate rational functions form the field of fractions \(\F_p(\delta+\delta^{\minu 1})\subseteq \F_p(\delta)\) of the ring \(\F_p[\delta+\delta^{\minu 1}]\) of self-conjugate Laurent polynomials.
   
   Concretely, every self-conjugate rational function can be written as
   \[
      \frac{f(\delta)+f(\delta^{\minu 1})-f(0)}{g(\delta)+g(\delta^{\minu 1})-g(0)}
   \]
   for some polynomials \(f(\delta),g(\delta)\in \F_p[\delta]\).
\end{propositionE}
\begin{proofE}
   This follows immediately from \Cref{lem:self-conj-laurent} and \Cref{lem:self-conj-rational}.
\end{proofE}

The left and right-facing \(H\)-boxes labelled by these self-conjugate rational functions coincide:
\begin{definition}\label{def:dzx_undirected_fourier_box}
   Given \(h(\delta) \in \F_p(\delta+\delta^{\minu 1})\), define
   \[
         \begin{tikzpicture}
	\begin{pgfonlayer}{nodelayer}
		\node [style=none] (51) at (-5.125, -0.5) {};
		\node [style=none] (52) at (-2.75, -0.5) {};
		\node [style=none] (53) at (-2.75, 0.5) {};
		\node [style=none] (54) at (-5.125, 0.5) {};
		\node [style=none] (55) at (-5.125, 0) {};
		\node [style=none] (56) at (-2.75, 0) {};
		\node [style=rhad] (61) at (-4, 0) {\(h(\delta)\)};
	\end{pgfonlayer}
	\begin{pgfonlayer}{edgelayer}
		\draw [style=background] (53.center) to (52.center) to (51.center) to (54.center) to cycle;
		\draw (55.center) to (61);
		\draw (61) to (56.center);
	\end{pgfonlayer}
\end{tikzpicture}

      \eqqcolon
         \begin{tikzpicture}
	\begin{pgfonlayer}{nodelayer}
		\node [style=none] (79) at (2.375, -0.5) {};
		\node [style=none] (80) at (0, -0.5) {};
		\node [style=none] (81) at (0, 0.5) {};
		\node [style=none] (82) at (2.375, 0.5) {};
		\node [style=none] (83) at (2.375, 0) {};
		\node [style=none] (84) at (0, 0) {};
		\node [style=had] (85) at (1.25, 0) {\(h(\delta)\)};
	\end{pgfonlayer}
	\begin{pgfonlayer}{edgelayer}
		\draw [style=background] (79.center) to (82.center) to (81.center) to (80.center) to cycle;
		\draw [style=background] (81.center) to (80.center);
		\draw [style=background] (80.center) to (79.center);
		\draw [style=background] (79.center) to (82.center);
		\draw [style=background] (82.center) to (81.center);
		\draw (83.center) to (85);
		\draw (85) to (84.center);
	\end{pgfonlayer}
\end{tikzpicture}

      \coloneqq
         \begin{tikzpicture}
	\begin{pgfonlayer}{nodelayer}
		\node [style=none] (71) at (7.375, -0.5) {};
		\node [style=none] (72) at (5, -0.5) {};
		\node [style=none] (73) at (5, 0.5) {};
		\node [style=none] (74) at (7.375, 0.5) {};
		\node [style=none] (75) at (7.375, 0) {};
		\node [style=none] (76) at (5, 0) {};
		\node [style=lhad] (77) at (6.25, 0) {\(h(\delta)\)};
	\end{pgfonlayer}
	\begin{pgfonlayer}{edgelayer}
		\draw [style=background] (73.center) to (72.center) to (71.center) to (74.center) to cycle;
		\draw (75.center) to (77);
		\draw (77) to (76.center);
	\end{pgfonlayer}
\end{tikzpicture}
.
   \]
\end{definition}

Finally, we define generalised spiders with rational affine phases and self-conjugate rational symplectic phases:
\begin{definition}\label{def:dzx_shifted_spider}
   Take some \(a(\delta)\in \F_p(\delta)\) and  \(b(\delta)\in \F_p(\delta+\delta^{\minu 1})\).

   Choose \(h(\delta), k(\delta)\in \F_p[\delta+\delta^{\minu 1}]\) and \(f(\delta), g(\delta)\in \F_p[\delta]\) such that:
   
   \[
         b(\delta) = \frac{h(\delta)}{k(\delta)},
      \quad
         h(\delta) =  f(\delta)+f(\delta^{\minu 1})-f(0),
      \quad\text{and}\quad
      k(\delta) = g(\delta)+g(\delta^{\minu 1}) - g(0)
   \]

   We define \((a(\delta),b(\delta))\)-labelled green spiders by case distinction:
   \[
         % [inline block 0: 21 envs, 32091 chars in 3 pieces, piece 1 here, a bare % at each other -> data_tex | \begin{tikzpicture} 	\begin{pgfonlayer}{nodelayer}...]


               & \text{otherwise.}\\
         \end{cases}
   \]
   Where red spiders are defined as follows:
   \[
         %

   \]

\end{definition}

\subsection{Axioms}\label{subsec:dzx_axioms}
Using the notation from the previous subsection, we impose the additional equations:

\begin{axiom}[Spider-fusion]\label{ax:dzx_fusion}
   Given \( (a(\delta), b(\delta)), (c(\delta), d(\delta)) \in   \F_p(\delta) \times \F_p(\delta+\delta^{\minu 1})\):
   \begin{equation*}
         %
.
   \]
\end{axiom}

All of the axioms can be easily verified to be sound.

We reframe the presentation of \(\dStabZX\) by taking the rational spiders and \(H\)-boxes as primitives, providing a self-contained axiomatisation in \Cref{fig:dzx_axioms}. We close this subsection by recording the well-definedness of the derived notation and the equivalence of the two presentations.

\begin{restatable}{proposition}{propDZXWellDefined}\label{prop:dzx_well_defined}
   The derived notation of \(\dStabZX\) is well-defined.
\end{restatable}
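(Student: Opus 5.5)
The plan is to check, notation by notation, that each derived generator of \Cref{subsec:dzx_derived_generators} is independent of the choices made in its definition. Independence will be shown \emph{within} \(\St(\StabZX)\) plus the imposed axioms of \Cref{subsec:dzx_axioms}, not merely up to \(\Ext\). Soundness of each derived diagram then follows by computing its image under \(\Ext\) (\Cref{prop:compact_closed_functor}).

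There are four notations, handled in order of dependency.

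First, polynomial multipliers (\Cref{def:dzx_poly_multiplier}). The recursion on degree is unambiguous once a polynomial is written in its unique coefficient form. By induction on degree, I compute \(\Ext\) of the recursive diagram: the copy spider, the adder, and the parallel branches \(b\delta^n\cdot\delta\) and \(f(\delta)\) yield \(\Gr\) of \(\mathrm{diag}(f+b\delta^{n+1},\,\overline{(f+b\delta^{n+1})}^{\minu 1})\). This gives the stated interpretation, including the base cases of the \(\StabZX\) multiplier and the delay.

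Second, rational multipliers (\Cref{def:dzx_rational_multiplier}). Here the real content is independence of the representative \(f/g\). If \(f/g=f'/g'\), then \(fg'=f'g\) as polynomials. I would use the axioms that polynomial multipliers compose multiplicatively and that a reversed (transposed) multiplier by \(g\) cancels a forward multiplier by \(g\) on the appropriate side. With these, I rewrite \(f\cdot g^{\minu 1}\) as \(fh\cdot(gh)^{\minu 1}\) for any nonzero polynomial \(h\). Taking \(h=g'\) and then symmetrically reduces both fractions to the common form \(fg'\cdot(gg')^{\minu 1}\). A similar argument shows that directed H-boxes (\Cref{def:dzx_directed_fourier_box}) inherit well-definedness.

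Third, undirected H-boxes (\Cref{def:dzx_undirected_fourier_box}). For \(h\in\F_p(\delta+\delta^{\minu 1})\), the two directed boxes must agree. By \Cref{prop:self-conj-rational}, \(h=\bar h\). I would show that the left-facing box labelled \(h\) equals the right-facing box labelled \(\bar h\), using the H-box/multiplier commutation axiom (a multiplier passes through a Hadamard, becoming its conjugate-inverse).

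Fourth, spiders (\Cref{def:dzx_shifted_spider}). This is the main obstacle: the definition depends on a choice of \(f,g\) with \(b=h/k\), where \(h=f+\bar f-f(0)\) and \(k=g+\bar g-g(0)\). Two such presentations differ by a common self-conjugate factor, or by different \(f\) yielding the same \(h\). For the latter, \(f\) is determined by \(h\) via its nonnegative-degree part, so that case is unique. For the former, I would first try to reduce the spider definition to a normal form with \(k\) minimal, then show that scaling numerator and denominator by a self-conjugate Laurent polynomial \(c\bar{c}\)-type factor is absorbed by the multiplier-cancellation argument from the rational case. Here the fusion axiom \Cref{ax:dzx_fusion} lets me split off the affine phase and treat the symplectic phase separately. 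Red spiders then follow by the colour-change definition.

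Finally, I would confirm all the claimed interpretations by computing \(\Ext\) on each case.
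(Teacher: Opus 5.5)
The genuine gap is in your treatment of spiders.

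\Cref{def:dzx_shifted_spider} defines green spiders by a case distinction. The substance of the paper's argument is checking that the branches agree. Your plan never mentions the branches, so this is left unverified.

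For the choice of fraction $b=h/k$ you only state an intention. Reducing to a normal form with minimal $k$ is precisely the independence claim to be proven. The mechanism you propose does not transfer from rational multipliers either:
\begin{itemize}
\item For rational multipliers, a common factor $c$ of $f$ and $g$ shows up as a forward $c$-multiplier followed by a reverse one, and these cancel.
\item In a spider, $h$ and $k$ enter only through the polynomials $f,g$ with $h=f+\bar f-f(0)$ and $k=g+\bar g-g(0)$. The polynomial attached to $ch$ (the nonnegative-degree part of $ch$) is not related to $f$ by any multiplier determined by $c$.
\item Pushing multipliers onto legs does not rescue this. A multiplier by $d$ rescales a symplectic phase by $d\bar d$ (cf.\ $X\mapsto A^\dag XA$ in \Cref{lem:scalable_arrow_phase}). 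So at best you absorb common factors that are norms, which is all your ``$c\bar c$-type'' factor covers.
\item The actual ambiguity is rescaling by an arbitrary nonzero $c\in\F_p[\delta+\delta^{-1}]$, and such $c$ need not be a norm. A non-square $\nu\in\F_p^\times$ is never $d(\delta)d(\delta^{-1})$ with $d\in\F_p(\delta)$: $d$ and $\bar d$ have the same order $m$ at $\delta=1$, forcing $m=0$ and $\nu=d(1)^2$.
\end{itemize}
What is missing are explicit rewrites of the spider construction itself. The paper supplies them as rewrite chains through \Cref{def:dzx_poly_multiplier}, \Cref{lem:box_inverse} and the spider definitions. It does this first for arbitrary $f,g$ and constants $x,y\in\F_p$, then under the hypothesis that $f+\bar f+x$ and $g+\bar g+y$ are both nonzero. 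Each chain is transferred to red spiders via the red-spider definition. Leaning on \Cref{ax:dzx_fusion} at this point also needs care, since that axiom is phrased in terms of the very spiders whose well-definedness is at stake.

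A smaller issue is in your rational step. Multiplicativity of polynomial multipliers, and cancellation of a forward $g$-multiplier against a reverse one, are not axioms.
\begin{itemize}
\item The cancellation is \Cref{lem:polynomial_inverse}, derived from the generalised colour-change rule \Cref{ax:dzx_colour} and \Cref{ax:h_flip} via \Cref{lem:box_inverse}. It is exactly where the new axioms do their work, since it is not derivable in $\St(\StabZX)$ alone (compare \Cref{ex:not_invertible}).
\item Multiplicativity and $hf/hg=f/g$ then follow by induction on the recursive definition (\Cref{lem:mutliplier_arithmetic}).
\end{itemize}
With these derived rather than assumed, your treatment of rational multipliers, directed H-boxes and undirected H-boxes (the latter via \Cref{ax:h_flip}) coincides with the paper's.
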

\seeproof{prop:dzx_well_defined}

A series of menial calculations show that:
\begin{proposition}\label{prop:dzx_equivalent_presentations}
The presentation in \Cref{subsec:dzx_axioms} is equivalent to that of \Cref{fig:dzx_axioms}, where we identify
\[\begin{tikzpicture}
	\begin{pgfonlayer}{nodelayer}
		\node [style=none] (51) at (-4.375, -0.5) {};
		\node [style=none] (52) at (-1.5, -0.5) {};
		\node [style=none] (53) at (-1.5, 0.5) {};
		\node [style=none] (54) at (-4.375, 0.5) {};
		\node [style=none] (55) at (-4.375, 0) {};
		\node [style=none] (56) at (-1.5, 0) {};
		\node [style=rmat] (61) at (-3, 0) {\(h(\delta)\)};
	\end{pgfonlayer}
	\begin{pgfonlayer}{edgelayer}
		\draw [style=background] (53.center)
			 to (52.center)
			 to (51.center)
			 to (54.center)
			 to cycle;
		\draw (55.center) to (61);
		\draw (61) to (56.center);
	\end{pgfonlayer}
\end{tikzpicture}
\quad \text{with} \quad \begin{tikzpicture}
	\begin{pgfonlayer}{nodelayer}
		\node [style=none] (51) at (-4.375, -0.5) {};
		\node [style=none] (52) at (-0.75, -0.5) {};
		\node [style=none] (53) at (-0.75, 0.5) {};
		\node [style=none] (54) at (-4.375, 0.5) {};
		\node [style=none] (55) at (-4.375, 0) {};
		\node [style=none] (56) at (-0.75, 0) {};
		\node [style=rhad] (61) at (-3, 0) {\(h(\delta)\)};
		\node [style=had] (62) at (-1.75, 0) {\(\minu\)};
	\end{pgfonlayer}
	\begin{pgfonlayer}{edgelayer}
		\draw [style=background] (53.center)
			 to (52.center)
			 to (51.center)
			 to (54.center)
			 to cycle;
		\draw (55.center) to (61);
		\draw (61) to (56.center);
	\end{pgfonlayer}
\end{tikzpicture}
; \quad\text{conversely, identifying} \quad \begin{tikzpicture}
	\begin{pgfonlayer}{nodelayer}
		\node [style=none] (51) at (-4.375, -0.5) {};
		\node [style=none] (52) at (-1.5, -0.5) {};
		\node [style=none] (53) at (-1.5, 0.5) {};
		\node [style=none] (54) at (-4.375, 0.5) {};
		\node [style=none] (55) at (-4.375, 0) {};
		\node [style=none] (56) at (-1.5, 0) {};
		\node [style=rhad] (61) at (-3, 0) {\(h(\delta)\)};
	\end{pgfonlayer}
	\begin{pgfonlayer}{edgelayer}
		\draw [style=background] (53.center)
			 to (52.center)
			 to (51.center)
			 to (54.center)
			 to cycle;
		\draw (55.center) to (61);
		\draw (61) to (56.center);
	\end{pgfonlayer}
\end{tikzpicture}
\quad\text{with}\quad \begin{tikzpicture}
	\begin{pgfonlayer}{nodelayer}
		\node [style=none] (51) at (-4.375, -0.5) {};
		\node [style=none] (52) at (-0.75, -0.5) {};
		\node [style=none] (53) at (-0.75, 0.5) {};
		\node [style=none] (54) at (-4.375, 0.5) {};
		\node [style=none] (55) at (-4.375, 0) {};
		\node [style=none] (56) at (-0.75, 0) {};
		\node [style=rmat] (61) at (-3, 0) {\(h(\delta)\)};
		\node [style=had] (62) at (-1.75, 0) {};
	\end{pgfonlayer}
	\begin{pgfonlayer}{edgelayer}
		\draw [style=background] (53.center)
			 to (52.center)
			 to (51.center)
			 to (54.center)
			 to cycle;
		\draw (55.center) to (61);
		\draw (61) to (56.center);
	\end{pgfonlayer}
\end{tikzpicture}
.\]
\end{proposition}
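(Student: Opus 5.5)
The equivalence is established by exhibiting mutually inverse translations between the two presentations and checking that each sends axioms to derivable equations. Both presentations are \dag-compact closed categories presented by generators and relations. So I will define two identity-on-objects functors. The first, \(F\), goes from the presentation of \Cref{subsec:dzx_axioms} (stateful \(\StabZX\) with the delay, derived rational multipliers, H-boxes and spiders, plus the added axioms) to the presentation of \Cref{fig:dzx_axioms}. The second, \(G\), goes back. On generators:

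\begin{itemize}
\item \(F\) sends the delay to the multiplier labelled \(\delta\), read as the right-facing H-box labelled \(\delta\) followed by the \(\minu\) H-box, as in the stated identification.
\item \(F\) sends the constant \(\StabZX\) spiders and H-boxes to their constant-labelled counterparts.
\item \(G\) sends a primitive rational spider or H-box of \Cref{fig:dzx_axioms} to the derived notation of \Cref{def:dzx_directed_fourier_box}, \Cref{def:dzx_undirected_fourier_box} and \Cref{def:dzx_shifted_spider}. In particular the H-box labelled \(h(\delta)\) goes to the multiplier \(h(\delta)\) followed by a plain H-box.
\end{itemize}

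\Cref{prop:dzx_well_defined} guarantees that \(G\) does not depend on the representatives \(f,g\) (and \(h,k\)) chosen in these definitions.

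Next I check that \(F\) and \(G\) respect the relations, one axiom at a time. For \(G\), each axiom of \Cref{fig:dzx_axioms} becomes, after unfolding the derived notation, an equation between diagrams built from polynomial multipliers (\Cref{def:dzx_poly_multiplier}), their transposes, delays and constant spiders. I derive these equations from the axioms of \Cref{subsec:dzx_axioms}, by induction on the degrees of the numerator and denominator polynomials wherever a label is rational. Spider fusion, for instance, reduces to the fusion axiom \Cref{ax:dzx_fusion} together with the \(\StabZX\) rules of \Cref{fig:zx_axioms}. For \(F\), the sliding equations \eqref{eq:fdzx_sliding} follow from fusion together with the Euler/colour-change rules of the figure with constant labels. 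The \(\StabZX\) axioms are instances of the figure with constant labels. The added axioms are essentially the figure's axioms rewritten in derived notation, so their images are immediate.

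It then remains to show \(G\circ F=1\) and \(F\circ G=1\) on generators, up to the respective equational theories.

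\begin{itemize}
\item For \(G F\): the delay is sent to \(G(\text{H-box }\delta)\circ \minu\), which unfolds to multiplier \(\delta\), then H-box, then \(\minu\) H-box. This equals the delay by \(\StabZX\)'s rule that composing the two Fourier H-boxes gives the identity.
\item For \(F G\): a primitive H-box \(h(\delta)\) is sent to \(F(\text{multiplier }h)\circ H\). Unfolding the multiplier's own identification, this is \(h\)-H-box, then \(\minu\), then \(H\), and \(\minu\circ H = 1\) in the figure's theory.
\item Rational spiders then follow because \(G\) of a spider is assembled from multipliers, H-boxes and constant spiders. Applying \(F\) and using the figure's fusion and colour-change rules collapses this back to the primitive spider.
\end{itemize}

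The main obstacle is the \(G\)-direction for axioms with non-polynomial labels, notably the generalised Euler decomposition and the rational multiplier and H-box composition laws. These require showing that multipliers compose as multiplication in \(\F_p(\delta)\), so that \(f/g\cdot g/f = 1\) and so on, using only the delay-sliding equations and the finitely many added axioms. I would first prove a small toolkit by induction on degree: \(f\cdot g\) multipliers compose, \(f+g\) arises via a spider and cospider, and a multiplier composed with its transpose-cup form inverts. Once the toolkit is in place, every remaining axiom check is a routine calculation, which matches the paper's description of the proof as ``a series of menial calculations''.
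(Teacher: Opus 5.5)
Your overall strategy is the right one: translations $F$ and $G$ defined on generators, an axiom-by-axiom check in both directions, then $GF=1$ and $FG=1$ on generators. It is what the paper's ``series of menial calculations'' amounts to, and your toolkit on the side of \Cref{subsec:dzx_axioms} is essentially \Cref{lem:polynomial_inverse} and \Cref{lem:mutliplier_arithmetic}.

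The gap is on the figure's side. $F$ is fixed by its values on the delay and the constant generators. So $F$ applied to the multiplier $h(\delta)$ is the image of the recursive copy/add network of delays from \Cref{def:dzx_poly_multiplier}, combined with the cup/cap transpose of \Cref{def:dzx_rational_multiplier} when $h$ is not a polynomial. In your $FG=1$ step you replace this by ``the $h(\delta)$-box followed by the $-$ box'', justified by ``unfolding the multiplier's own identification''. But that identification is exactly what must be shown consistent. You need to prove, in the theory of \Cref{fig:dzx_axioms}, that these networks of $\delta$-labelled boxes collapse to a single box labelled $h(\delta)$. For spiders, you likewise need the images of the boxes labelled $f,g,h,k$ in \Cref{def:dzx_shifted_spider} to collapse. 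This is the mirror image of your toolkit: sums of labels via copy/add, products via composition, and inverses via transposition, established from the figure's rules for arbitrary rational labels. The paper leans on \Cref{lem:gaa_functor} for this kind of label arithmetic. Your claim that the images of the added axioms are ``immediate'' silently depends on the same fact, because those axioms are stated in derived notation. So, as written, the $F$-direction rests on an unproved lemma, and the obstacle you located in the $G$-direction has an unaddressed twin.

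A smaller correction: the sliding equations \eqref{eq:fdzx_sliding} cannot follow from instances of the figure's rules at constant labels only. $F$ sends the delay to the $\delta$-labelled box followed by the $-$ box, and a derivation using only constant-label instances never rewrites that box. You need the rule that moves a label-$w$ multiplier through a spider, instantiated at $w=\delta$. As in \Cref{lem:scalable_arrow_phase}, it turns the phase $(a,b)$ into $(\bar w a,\bar w b w)$, which for $w=\delta$ is $(\delta^{-1}a,b)$. So the delay slides exactly when $a=0$, which is why the sliding equations are imposed only for spiders with zero affine phase.
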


\clearpage
\begin{figure}[p]
\vspace*{\fill}
\centering
\[\input{dzx/axioms/eqs/axioms.tikz}\]

\small
\renewcommand{\arraystretch}{1.15}
\begin{tabular}{@{}p{0.3\linewidth}p{0.32\linewidth}p{0.30\linewidth}@{}}
\textbf{Parameter class} & \textbf{Parameters} & \textbf{Conditions} \\
Permutations &
\(\sigma,\tau\) &
--- \\
Rational functions &
\(a(\delta), c(\delta), w(\delta)\in \F_p(\delta)\) &
\(w(\delta)\neq 0\) \\
Self-conjugate rational functions &
\(b(\delta), d(\delta), z(\delta)\in \F_p(\delta+\delta^{\minu 1})\) &
\(z(\delta)\neq 0\) 
\end{tabular}

\caption{Axioms of the delayed stabilizer ZX-calculus.}
\label{fig:dzx_axioms}
\vspace*{\fill}
\end{figure}
\clearpage

\begin{textAtEnd}
\subsection{Proof of well-definedness}\label{subsec:dzx_axiom_lemmas}
\phantomsection\label{proof:prop:dzx_well_defined}
In this section, we prove some useful lemmas for \(\dStabZX\), showing that our derived notation in the previous subsection is well-defined.

First, we show that the rational function multipliers are well-defined.

By \Cref{ax:dzx_colour}, \Cref{ax:h_flip} and by definition of the \(H\)-box it follows that:
\begin{lemma}\label{lem:polynomial_inverse}
   For any nonzero \(f(\delta) \in \F_p[\delta]\):
   \[
      \begin{tikzpicture}
\begin{pgfonlayer}{nodelayer}
		\node [style=none] (0) at (2.5, 0.5) {};
		\node [style=none] (1) at (2.5, -0.5) {};
		\node [style=none] (2) at (-1.25, -0.5) {};
		\node [style=none] (3) at (-1.25, 0.5) {};
		\node [style=rmat] (4) at (-0.25, 0) {\(f(\delta)\)};
		\node [style=lmat] (5) at (1.5, 0) {\(f(\delta)\)};
		\node [style=none] (6) at (2.5, 0) {};
		\node [style=none] (7) at (-1.25, 0) {};
\end{pgfonlayer}
\begin{pgfonlayer}{edgelayer}
		\draw [style=background] (0.center)
			 to (1.center)
			 to (2.center)
			 to (3.center)
			 to cycle;
		\draw [style=background] (3.center) to (0.center);
		\draw [style=background] (0.center) to (1.center);
		\draw [style=background] (1.center) to (2.center);
		\draw [style=background] (2.center) to (3.center);
		\draw (6.center) to (7.center);
\end{pgfonlayer}
\end{tikzpicture}

      =
      \begin{tikzpicture}
\begin{pgfonlayer}{nodelayer}
		\node [style=none] (48) at (25, 0.5) {};
		\node [style=none] (49) at (25, -0.5) {};
		\node [style=none] (50) at (24, -0.5) {};
		\node [style=none] (51) at (24, 0.5) {};
		\node [style=none] (52) at (25, 0) {};
		\node [style=none] (53) at (24, 0) {};
\end{pgfonlayer}
\begin{pgfonlayer}{edgelayer}
		\draw [style=background] (48.center)
			 to (49.center)
			 to (50.center)
			 to (51.center)
			 to cycle;
		\draw [style=background] (51.center) to (48.center);
		\draw [style=background] (48.center) to (49.center);
		\draw [style=background] (49.center) to (50.center);
		\draw [style=background] (50.center) to (51.center);
		\draw (52.center) to (53.center);
\end{pgfonlayer}
\end{tikzpicture}

      =
      \begin{tikzpicture}
\begin{pgfonlayer}{nodelayer}
		\node [style=none] (102) at (-6.875, 0.5) {};
		\node [style=none] (103) at (-6.875, -0.5) {};
		\node [style=none] (104) at (-10.625, -0.5) {};
		\node [style=none] (105) at (-10.625, 0.5) {};
		\node [style=lmat] (106) at (-9.625, 0) {\(f(\delta)\)};
		\node [style=rmat] (107) at (-7.875, 0) {\(f(\delta)\)};
		\node [style=none] (108) at (-6.875, 0) {};
		\node [style=none] (109) at (-10.625, 0) {};
\end{pgfonlayer}
\begin{pgfonlayer}{edgelayer}
		\draw [style=background] (102.center)
			 to (103.center)
			 to (104.center)
			 to (105.center)
			 to cycle;
		\draw [style=background] (105.center) to (102.center);
		\draw [style=background] (102.center) to (103.center);
		\draw [style=background] (103.center) to (104.center);
		\draw [style=background] (104.center) to (105.center);
		\draw (108.center) to (109.center);
\end{pgfonlayer}
\end{tikzpicture}

   \] 
\end{lemma}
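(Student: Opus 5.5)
The plan is to move everything into $H$-boxes, where \Cref{ax:dzx_colour} and \Cref{ax:h_flip} apply directly, and then cancel.

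\textbf{Step 0: semantic sanity check.} First I will confirm that the identity is sound in $\sALR$, which fixes the exact form the diagrammatic argument must reach.
\begin{itemize}
\item By the interpretation recorded after \Cref{def:dzx_poly_multiplier}, the right-facing multiplier $f(\delta)$ is
\[
\Gr\left[\begin{array}{c|c} f(\delta) & 0\\ \hline 0 & f(\delta^{\minu 1})^{\minu 1}\end{array}\right].
\]
\item The left-facing multiplier of \Cref{def:dzx_rational_multiplier} is obtained by bending the wires with the cup and cap. The cap pairs $(V,-\varpi)$ with $(V,\varpi)$, so this bending yields the inverse diagonal graph
\[
\Gr\left[\begin{array}{c|c} f(\delta)^{\minu 1} & 0\\ \hline 0 & f(\delta^{\minu 1})\end{array}\right].
\]
\end{itemize}
Both composites are therefore the identity relation, and this is only meaningful because $f\neq 0$.

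\textbf{Step 1: rewrite the left multiplier.} Next I unfold the definitions. By \Cref{def:dzx_poly_multiplier} together with the defining identity for multipliers in terms of $H$-boxes, the right-facing multiplier $f(\delta)$ is a directed $H$-box labelled $f(\delta)$ followed by the $H$-box labelled $\minu$ (the identification of \Cref{prop:dzx_equivalent_presentations}). The left-facing multiplier is the transpose of this composite, taken with cups and caps. I will push the transpose through the composite, which reverses the order of the two boxes. Each transposed box is then replaced by its un-transposed form: I apply \Cref{ax:h_flip} to the directed $H$-box (turning it into the left-facing $H$-box of \Cref{def:dzx_directed_fourier_box}), and the undirected flip to the $\minu$ box. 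After this, the left multiplier reads as an $H$-box labelled $\minu$ followed by a left-facing $H$-box labelled $f(\delta)$.

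\textbf{Step 2: cancel.} Composing the right multiplier with the left multiplier now produces the string
\[
\text{$H$-box } f \;\to\; \text{$H$-box } \minu \;\to\; \text{$H$-box } \minu \;\to\; \text{left $H$-box } f.
\]
The two middle $H$-boxes fuse to the identity; this is the standard $\StabZX$ fact that $H$-box $\minu$ inverts the plain $H$-box, available through \Cref{ax:dzx_colour} and the definitions in \eqref{eq:hadamard_boxes}. What remains is a right-facing $H$-box $f$ followed by a left-facing $H$-box $f$. I will cancel this pair using \Cref{ax:dzx_colour}, the colour-change and Euler-type rule that exchanges spider colours across an $H$-box, after unfolding both boxes via \Cref{def:dzx_directed_fourier_box} into the rmat/lmat-with-Hadamard forms.

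\textbf{Step 3: the other order.} The composite in the opposite order is handled the same way, or alternatively obtained from the first equation by taking transposes, since transposing swaps left-facing and right-facing multipliers.

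\textbf{Main obstacle.} The delicate point is Step 2: showing that a directed $H$-box followed by its opposite-facing version with the same label cancels. The axiom must be used with the correct conjugation: the label $f(\delta)$ must meet $f(\delta)$, not $\bar f$. For this I will check that \Cref{ax:h_flip} indeed relabels $f\mapsto f(\delta^{\minu 1})$ on the $Z$-component, matching the semantics in Step 0. If a direct cancellation fails, the fallback is induction on $\deg f$ via \Cref{def:dzx_poly_multiplier}: each layer uses spider fusion, a bialgebra step, and the base cases $f=\delta$ (the delay is unitary, since its cup/cap transpose is its inverse) and constant $f=b\neq 0$ (handled by the $\StabZX$ multiplier rules).
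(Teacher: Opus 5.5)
Write $\overrightarrow{M}_f,\overleftarrow{M}_f$ for the right- and left-facing multipliers, $\overrightarrow{H}_f,\overleftarrow{H}_f$ for the directed $H$-boxes of \Cref{def:dzx_directed_fourier_box}, $H,H^-$ for the plain and $(-1)$-labelled $H$-boxes, and $;$ for diagrammatic composition. Your Step~1 decompositions $\overrightarrow{M}_f=\overrightarrow{H}_f;H^-$ and $\overleftarrow{M}_f=H^-;\overleftarrow{H}_f$ are correct, but Step~2 does not go through.

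\begin{itemize}
\item The composite is $\overrightarrow{H}_f;H^-;H^-;\overleftarrow{H}_f$, and $H^-;H^-$ is not the identity. Indeed $H^-$ is $\Gr\left[\begin{array}{c|c}0&1\\ \hline -1&0\end{array}\right]$, whose square is $-I$, so $H^-;H^-$ is the antipode (the multiplier $-1$). The fact that $H^-$ inverts the plain box gives $H;H^-=1$, which is not the pair you have.
\item The pair you then cancel, $\overrightarrow{H}_f;\overleftarrow{H}_f$, is not the identity either. It is also the antipode, which is exactly \Cref{lem:box_product} (semantically $(x,z)\mapsto(-x,-z)$).
\end{itemize}

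So Step~2 asserts two false equations whose signs happen to cancel: the endpoint is right, but the derivation is not. Your ``main obstacle'' is misdiagnosed accordingly. The conjugation $f$ versus $\bar f$ is harmless; the real issue is the sign. The box that cancels $\overrightarrow{H}_f$ is $\overleftarrow{H}_{-f}$ (\Cref{lem:box_inverse} with $z=-f$), not $\overleftarrow{H}_f$, and \Cref{ax:dzx_colour} will not cancel the latter.

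The repair is to keep the antipode produced by $H^-;H^-$ and absorb it into the left box. This gives $\overrightarrow{H}_f;\overleftarrow{H}_{-f}$, which does cancel. The paper reaches this pair more directly, without transposing any boxes:
\begin{itemize}
\item it inserts $H;H^-=1$ between the two multipliers via \Cref{lem:box_inverse};
\item it reads off $\overrightarrow{M}_f;H=\overrightarrow{H}_f$ and $H^-;\overleftarrow{M}_f=\overleftarrow{H}_{-f}$ from \Cref{def:dzx_directed_fourier_box};
\item it cancels the pair by inserting an identity spider ($\mathsf{Z.11}$) and applying \Cref{ax:dzx_colour}.
\end{itemize}

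Separately, the transpose shortcut in your Step~3 fails. Bending wires reverses the order of composition and also flips each multiplier, so it maps $\overrightarrow{M}_f;\overleftarrow{M}_f=1$ to itself, not to $\overleftarrow{M}_f;\overrightarrow{M}_f=1$. The other order must be computed separately; the paper does an analogous computation, reusing the first equation. If you do it your way, you meet $\overleftarrow{H}_f;\overrightarrow{H}_f=-1$ and must again track the sign.
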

\begin{proof}
      \[
   
   \steq{\lemref{lem:box_inverse}}
   \begin{tikzpicture}
\begin{pgfonlayer}{nodelayer}
		\node [style=none] (8) at (9.25, 0.5) {};
		\node [style=none] (9) at (9.25, -0.5) {};
		\node [style=none] (10) at (3.5, -0.5) {};
		\node [style=none] (11) at (3.5, 0.5) {};
		\node [style=rmat] (12) at (4.5, 0) {\(f(\delta)\)};
		\node [style=lmat] (13) at (8.25, 0) {\(f(\delta)\)};
		\node [style=none] (14) at (9.25, 0) {};
		\node [style=none] (15) at (3.5, 0) {};
		\node [style=had] (17) at (5.75, 0) {};
		\node [style=had] (18) at (6.75, 0) {\(\minu\)};
\end{pgfonlayer}
\begin{pgfonlayer}{edgelayer}
		\draw [style=background] (8.center)
			 to (9.center)
			 to (10.center)
			 to (11.center)
			 to cycle;
		\draw [style=background] (11.center) to (8.center);
		\draw [style=background] (8.center) to (9.center);
		\draw [style=background] (9.center) to (10.center);
		\draw [style=background] (10.center) to (11.center);
		\draw (14.center) to (15.center);
\end{pgfonlayer}
\end{tikzpicture}

   \steq{\defref{def:dzx_directed_fourier_box}}
   \begin{tikzpicture}
\begin{pgfonlayer}{nodelayer}
		\node [style=none] (19) at (14.5, 0.5) {};
		\node [style=none] (20) at (14.5, -0.5) {};
		\node [style=none] (21) at (10.25, -0.5) {};
		\node [style=none] (22) at (10.25, 0.5) {};
		\node [style=rhad] (23) at (11.25, 0) {\(f(\delta)\)};
		\node [style=lhad] (24) at (13.25, 0) {\(\minu f(\delta)\)};
		\node [style=none] (25) at (14.5, 0) {};
		\node [style=none] (26) at (10.25, 0) {};
\end{pgfonlayer}
\begin{pgfonlayer}{edgelayer}
		\draw [style=background] (19.center)
			 to (20.center)
			 to (21.center)
			 to (22.center)
			 to cycle;
		\draw [style=background] (22.center) to (19.center);
		\draw [style=background] (19.center) to (20.center);
		\draw [style=background] (20.center) to (21.center);
		\draw [style=background] (21.center) to (22.center);
		\draw (25.center) to (26.center);
\end{pgfonlayer}
\end{tikzpicture}

   \steq{\axiomref{Z.11}}
   \begin{tikzpicture}
\begin{pgfonlayer}{nodelayer}
		\node [style=none] (28) at (20.5, 0.5) {};
		\node [style=none] (29) at (20.5, -0.5) {};
		\node [style=none] (30) at (15.5, -0.5) {};
		\node [style=none] (31) at (15.5, 0.5) {};
		\node [style=rhad] (32) at (16.5, 0) {\(f(\delta)\)};
		\node [style=lhad] (33) at (19.25, 0) {\(\minu f(\delta)\)};
		\node [style=none] (34) at (20.5, 0) {};
		\node [style=none] (35) at (15.5, 0) {};
		\node [style=wn] (37) at (17.75, 0) {};
\end{pgfonlayer}
\begin{pgfonlayer}{edgelayer}
		\draw [style=background] (28.center)
			 to (29.center)
			 to (30.center)
			 to (31.center)
			 to cycle;
		\draw [style=background] (31.center) to (28.center);
		\draw [style=background] (28.center) to (29.center);
		\draw [style=background] (29.center) to (30.center);
		\draw [style=background] (30.center) to (31.center);
		\draw (34.center) to (35.center);
\end{pgfonlayer}
\end{tikzpicture}

   \steq{\axref{ax:dzx_colour}}
   \begin{tikzpicture}
\begin{pgfonlayer}{nodelayer}
		\node [style=none] (38) at (23, 0.5) {};
		\node [style=none] (39) at (23, -0.5) {};
		\node [style=none] (40) at (21.5, -0.5) {};
		\node [style=none] (41) at (21.5, 0.5) {};
		\node [style=none] (44) at (23, 0) {};
		\node [style=none] (45) at (21.5, 0) {};
		\node [style=bn] (47) at (22.25, 0) {};
\end{pgfonlayer}
\begin{pgfonlayer}{edgelayer}
		\draw [style=background] (38.center)
			 to (39.center)
			 to (40.center)
			 to (41.center)
			 to cycle;
		\draw [style=background] (41.center) to (38.center);
		\draw [style=background] (38.center) to (39.center);
		\draw [style=background] (39.center) to (40.center);
		\draw [style=background] (40.center) to (41.center);
		\draw (44.center) to (45.center);
\end{pgfonlayer}
\end{tikzpicture}

   \steq{\axiomref{Z.11}}
   
   \]
   \begin{align*}
   
   & \steq{\lemref{lem:box_inverse}}
   \input{dzx/axiom_lemmas/polynomial_inverse/eq2/0.tikz}
   \steq{\lemref{lem:polynomial_inverse}}
   \input{dzx/axiom_lemmas/polynomial_inverse/eq2/1.tikz}\\
   &\steq{\defref{def:dzx_directed_fourier_box}}
   \input{dzx/axiom_lemmas/polynomial_inverse/eq2/2.tikz}
   \steq{\axiomref{Z.11}}
   \input{dzx/axiom_lemmas/polynomial_inverse/eq2/3.tikz}
   \steq{\axref{ax:dzx_colour}}
   \input{dzx/axiom_lemmas/polynomial_inverse/eq2/4.tikz}
   \steq{\axiomref{Z.11}}
   \input{dzx/axiom_lemmas/polynomial_inverse/eq2/5.tikz}
   \end{align*}
\end{proof}
Moreover, it follows from several simple induction arguments that the previous lemma implies that rational arithmetic can be derived by polynomial multipliers and their converses:
\begin{lemma} \label{lem:mutliplier_arithmetic}
For all \(f(\delta), g(\delta), h(\delta), k(\delta), s(\delta), t(\delta)\in \F_p[\delta]\) with \(g(\delta), k(\delta), t(\delta)\) nonzero:
\[
\begin{tikzpicture}
\begin{pgfonlayer}{nodelayer}
		\node [style=none] (51) at (-7, -1) {};
		\node [style=none] (52) at (-0.25, -1) {};
		\node [style=none] (53) at (-0.25, 1) {};
		\node [style=none] (54) at (-7, 1) {};
		\node [style=none] (61) at (-5.25, 0.5) {};
		\node [style=none] (63) at (-7, 0) {};
		\node [style=none] (64) at (-0.25, 0) {};
		\node [style=none] (65) at (-5.25, -0.5) {};
		\node [style=bn] (66) at (-6, 0) {};
		\node [style=wn] (67) at (-1.25, 0) {};
		\node [style=none] (94) at (-2, 0.5) {};
		\node [style=none] (95) at (-2, -0.5) {};
		\node [style=rmat] (96) at (-4.5, 0.5) {\(f(\delta)\)};
		\node [style=rmat] (97) at (-4.5, -0.5) {\(h(\delta)\)};
		\node [style=lmat] (98) at (-2.75, 0.5) {\(g(\delta)\)};
		\node [style=lmat] (99) at (-2.75, -0.5) {\(k(\delta)\)};
\end{pgfonlayer}
\begin{pgfonlayer}{edgelayer}
		\draw [style=background] (53.center)
			 to (52.center)
			 to (51.center)
			 to (54.center)
			 to cycle;
		\draw (64.center) to (67);
		\draw (66) to (63.center);
		\draw (66)
			 to [in=180, out=45] (61.center)
			 to (94.center)
			 to [in=135, out=0] (67);
		\draw (67)
			 to [in=0, out=-135] (95.center)
			 to (65.center)
			 to [in=-45, out=180] (66);
\end{pgfonlayer}
\end{tikzpicture}

=
\begin{tikzpicture}
\begin{pgfonlayer}{nodelayer}
		\node [style=none] (100) at (0.75, -1) {};
		\node [style=none] (101) at (10.5, -1) {};
		\node [style=none] (102) at (10.5, 1) {};
		\node [style=none] (103) at (0.75, 1) {};
		\node [style=none] (104) at (2.5, 0.5) {};
		\node [style=none] (106) at (0.75, 0) {};
		\node [style=none] (107) at (10.5, 0) {};
		\node [style=none] (108) at (2.5, -0.5) {};
		\node [style=bn] (109) at (1.75, 0) {};
		\node [style=wn] (110) at (6.5, 0) {};
		\node [style=none] (118) at (5.75, 0.5) {};
		\node [style=none] (119) at (5.75, -0.5) {};
		\node [style=rmat] (120) at (3.25, 0.5) {\(f(\delta)\)};
		\node [style=rmat] (121) at (3.25, -0.5) {\(h(\delta)\)};
		\node [style=rmat] (122) at (5, 0.5) {\(k(\delta)\)};
		\node [style=rmat] (123) at (5, -0.5) {\(g(\delta)\)};
		\node [style=lmat] (124) at (7.75, 0) {\(g(\delta)\)};
		\node [style=lmat] (125) at (9.5, 0) {\(k(\delta)\)};
\end{pgfonlayer}
\begin{pgfonlayer}{edgelayer}
		\draw [style=background] (102.center)
			 to (101.center)
			 to (100.center)
			 to (103.center)
			 to cycle;
		\draw (107.center) to (110);
		\draw (109) to (106.center);
		\draw (109)
			 to [in=180, out=45] (104.center)
			 to (118.center)
			 to [in=135, out=0] (110);
		\draw (110)
			 to [in=0, out=-135] (119.center)
			 to (108.center)
			 to [in=-45, out=180] (109);
\end{pgfonlayer}
\end{tikzpicture}

\]
\[
\begin{tikzpicture}
\begin{pgfonlayer}{nodelayer}
		\node [style=none] (143) at (-10, -1) {};
		\node [style=none] (144) at (-0.25, -1) {};
		\node [style=none] (145) at (-0.25, 1) {};
		\node [style=none] (146) at (-10, 1) {};
		\node [style=none] (147) at (-5.25, 0.5) {};
		\node [style=none] (148) at (-10, 0) {};
		\node [style=none] (149) at (-0.25, 0) {};
		\node [style=none] (150) at (-5.25, -0.5) {};
		\node [style=bn] (151) at (-6, 0) {};
		\node [style=wn] (152) at (-1.25, 0) {};
		\node [style=none] (153) at (-2, 0.5) {};
		\node [style=none] (154) at (-2, -0.5) {};
		\node [style=rmat] (155) at (-4.5, 0.5) {\(h(\delta)\)};
		\node [style=rmat] (156) at (-4.5, -0.5) {\(s(\delta)\)};
		\node [style=lmat] (157) at (-2.75, 0.5) {\(k(\delta)\)};
		\node [style=lmat] (158) at (-2.75, -0.5) {\(t(\delta)\)};
		\node [style=rmat] (159) at (-9, 0) {\(f(\delta)\)};
		\node [style=lmat] (160) at (-7.25, 0) {\(g(\delta)\)};
\end{pgfonlayer}
\begin{pgfonlayer}{edgelayer}
		\draw [style=background] (145.center)
			 to (144.center)
			 to (143.center)
			 to (146.center)
			 to cycle;
		\draw (149.center) to (152);
		\draw (151) to (148.center);
		\draw (151)
			 to [in=180, out=45] (147.center)
			 to (153.center)
			 to [in=135, out=0] (152);
		\draw (152)
			 to [in=0, out=-135] (154.center)
			 to (150.center)
			 to [in=-45, out=180] (151);
\end{pgfonlayer}
\end{tikzpicture}

=
\begin{tikzpicture}
\begin{pgfonlayer}{nodelayer}
		\node [style=none] (126) at (0.75, -1) {};
		\node [style=none] (127) at (9.75, -1) {};
		\node [style=none] (128) at (9.75, 1) {};
		\node [style=none] (129) at (0.75, 1) {};
		\node [style=none] (130) at (2.5, 0.5) {};
		\node [style=none] (132) at (0.75, 0) {};
		\node [style=none] (133) at (9.75, 0) {};
		\node [style=none] (134) at (2.5, -0.5) {};
		\node [style=bn] (135) at (1.75, 0) {};
		\node [style=wn] (136) at (8.75, 0) {};
		\node [style=none] (137) at (8, 0.5) {};
		\node [style=none] (138) at (8, -0.5) {};
		\node [style=rmat] (139) at (3.75, 0.5) {\(f(\delta)h(\delta)\)};
		\node [style=rmat] (140) at (3.75, -0.5) {\(f(\delta)s(\delta)\)};
		\node [style=lmat] (141) at (6.75, 0.5) {\(g(\delta)k(\delta)\)};
		\node [style=lmat] (142) at (6.75, -0.5) {\(g(\delta)t(\delta)\)};
\end{pgfonlayer}
\begin{pgfonlayer}{edgelayer}
		\draw [style=background] (128.center)
			 to (127.center)
			 to (126.center)
			 to (129.center)
			 to cycle;
		\draw (133.center) to (136);
		\draw (135) to (132.center);
		\draw (135)
			 to [in=180, out=45] (130.center)
			 to (137.center)
			 to [in=135, out=0] (136);
		\draw (136)
			 to [in=0, out=-135] (138.center)
			 to (134.center)
			 to [in=-45, out=180] (135);
\end{pgfonlayer}
\end{tikzpicture}

\]
\[
\begin{tikzpicture}
\begin{pgfonlayer}{nodelayer}
		\node [style=none] (178) at (-11.5, -0.5) {};
		\node [style=none] (179) at (-5, -0.5) {};
		\node [style=none] (180) at (-5, 0.5) {};
		\node [style=none] (181) at (-11.5, 0.5) {};
		\node [style=none] (183) at (-11.5, 0) {};
		\node [style=none] (184) at (-5, 0) {};
		\node [style=rmat] (194) at (-9.75, 0) {\(h(\delta)f(\delta)\)};
		\node [style=lmat] (196) at (-7, 0) {\(h(\delta)g(\delta)\)};
\end{pgfonlayer}
\begin{pgfonlayer}{edgelayer}
		\draw [style=background] (180.center)
			 to (179.center)
			 to (178.center)
			 to (181.center)
			 to cycle;
		\draw (183.center) to (184.center);
\end{pgfonlayer}
\end{tikzpicture}

=
\begin{tikzpicture}
\begin{pgfonlayer}{nodelayer}
		\node [style=none] (199) at (-4, -0.5) {};
		\node [style=none] (200) at (-0.25, -0.5) {};
		\node [style=none] (201) at (-0.25, 0.5) {};
		\node [style=none] (202) at (-4, 0.5) {};
		\node [style=none] (203) at (-4, 0) {};
		\node [style=none] (204) at (-0.25, 0) {};
		\node [style=rmat] (206) at (-3, 0) {\(f(\delta)\)};
		\node [style=lmat] (208) at (-1.25, 0) {\(g(\delta)\)};
\end{pgfonlayer}
\begin{pgfonlayer}{edgelayer}
		\draw [style=background] (201.center)
			 to (200.center)
			 to (199.center)
			 to (202.center)
			 to cycle;
		\draw (203.center) to (204.center);
\end{pgfonlayer}
\end{tikzpicture}

=
\begin{tikzpicture}
\begin{pgfonlayer}{nodelayer}
		\node [style=none] (209) at (0.75, -0.5) {};
		\node [style=none] (210) at (8, -0.5) {};
		\node [style=none] (211) at (8, 0.5) {};
		\node [style=none] (212) at (0.75, 0.5) {};
		\node [style=none] (213) at (0.75, 0) {};
		\node [style=none] (214) at (8, 0) {};
		\node [style=rmat] (215) at (1.75, 0) {\(f(\delta)\)};
		\node [style=lmat] (216) at (3.5, 0) {\(g(\delta)\)};
		\node [style=rmat] (217) at (5.25, 0) {\(h(\delta)\)};
		\node [style=lmat] (218) at (7, 0) {\(k(\delta)\)};
\end{pgfonlayer}
\begin{pgfonlayer}{edgelayer}
		\draw [style=background] (211.center)
			 to (210.center)
			 to (209.center)
			 to (212.center)
			 to cycle;
		\draw (213.center) to (214.center);
\end{pgfonlayer}
\end{tikzpicture}

=
\begin{tikzpicture}
\begin{pgfonlayer}{nodelayer}
		\node [style=none] (220) at (9, -0.5) {};
		\node [style=none] (221) at (15.75, -0.5) {};
		\node [style=none] (222) at (15.75, 0.5) {};
		\node [style=none] (223) at (9, 0.5) {};
		\node [style=none] (224) at (9, 0) {};
		\node [style=none] (225) at (15.75, 0) {};
		\node [style=rmat] (226) at (11, 0) {\(f(\delta)h(\delta\)};
		\node [style=lmat] (227) at (13.75, 0) {\(g(\delta)k(\delta)\)};
\end{pgfonlayer}
\begin{pgfonlayer}{edgelayer}
		\draw [style=background] (222.center)
			 to (221.center)
			 to (220.center)
			 to (223.center)
			 to cycle;
		\draw (224.center) to (225.center);
\end{pgfonlayer}
\end{tikzpicture}

\]
\end{lemma}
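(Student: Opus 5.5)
The plan is to reduce all three equations to three basic facts about polynomial multipliers, and then to manipulate the diagrams purely by moving multipliers and their converses along wires.

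The three facts are:
\begin{itemize}
\item[(a)] Composition of polynomial multipliers is multiplication of polynomials, \(f(\delta)\) followed by \(h(\delta)\) equals \(f(\delta)h(\delta)\). In particular polynomial multipliers commute with one another.
\item[(b)] Polynomial multipliers copy through the black (copy) spider and co-copy through the white (addition) spider, in both directions.
\item[(c)] \Cref{lem:polynomial_inverse}: a polynomial multiplier followed by its converse is the identity, in either order.
\end{itemize}

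For (a), I would argue by induction on degree using \Cref{def:dzx_poly_multiplier}. The inductive step unfolds \(f(\delta)+b\delta^{n+1}\) into a copy spider, two parallel branches and an addition spider. The delays are slid through the spiders using the sliding equations \eqref{eq:fdzx_sliding}. The constant multipliers are then handled by the stabilizer ZX bialgebra and multiplier rules lifted to \(\dStabZX\).

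For (b), the base cases are the scalar multipliers, which are already rules of \(\StabZX\), and the delay, which is a sliding equation. Induction on degree extends this to all polynomial multipliers, since each is built only from spiders, scalar multipliers and delays. Taking converses (daggers or transposes) of these equations shows that the converse multipliers \(g(\delta)\) also copy and co-copy.

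Before attacking the addition equations, I would prove the two product identities in the third line first. Cancellation \(hf/hg=f/g\) follows by writing \(hf\) as \(f\) then \(h\) using (a), and the converse of \(hg\) as converse \(h\) then converse \(g\), since the converse of a product is the reversed product of converses. Commuting with (a) brings \(h\) and its converse next to each other, where they cancel by (c). The composite identity \((f/g)(h/k)=fh/(gk)\) needs a converse multiplier to commute past an ordinary one. Here I would use (c) to insert \(g\) followed by the converse of \(g\) around \(h\), commute \(h\) with \(g\) by (a), and cancel again.

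For the first equation (addition), I would start on the left-hand side. On the top branch I insert the identity \(k\) followed by the converse of \(k\), using (c), between \(f\) and the converse of \(g\). On the bottom branch I insert \(g\) followed by the converse of \(g\) between \(h\) and the converse of \(k\). Using the commutation established above, I then arrange each branch as \(f,k\), then converse \(g\), then converse \(k\), and symmetrically \(h,g\), then converse \(g\), then converse \(k\). Now both branches end in the same converse multipliers, so the converse form of (b) lets them co-copy out through the white spider, giving the right-hand side.

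The second equation is analogous. The leading \(f/g\) copies through the black spider by (b), and the pieces are merged using the product identities proved above.

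The main obstacle I expect is the commutation of a converse multiplier past an ordinary multiplier. This is not literally an axiom, and it needs the insert–commute–cancel trick to be stated carefully for general polynomials. I would isolate it as an auxiliary claim and prove it first, checking that the \(H\)-box based definition of converse multipliers interacts correctly with \Cref{ax:dzx_colour}.
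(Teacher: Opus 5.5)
Your plan follows the route the paper has in mind: the paper only asserts that the lemma follows from several simple inductions together with \Cref{lem:polynomial_inverse}. Your treatment of the two product identities is correct. In particular, your insert--commute--cancel argument already gives \(\check g;h=h;\check g\), so that is not the real obstacle.

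Notation: \(\check g\) denotes the left-pointing (converse) multiplier, \(;\) is diagrammatic composition, \(\Delta\colon 1\to 2\) is the black copy spider, \(\nabla\colon 2\to 1\) is the white addition spider, and \((-)^{\top}\) is the wire-bending transpose.

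The step that fails as justified is obtaining the spider laws for converse multipliers by ``taking converses'' of (b). Transposing reverses the spiders, and so does the dagger. Thus \(\Delta;(f\otimes f)=f;\Delta\) becomes \((\check f\otimes\check f);\Delta^{\top}=\Delta^{\top};\check f\), and \((f\otimes f);\nabla=\nabla;f\) becomes \(\check f;\nabla^{\top}=\nabla^{\top};(\check f\otimes\check f)\). What you actually use is the other orientation:
\begin{itemize}
\item \((\check g\otimes\check g);\nabla=\nabla;\check g\), and likewise for \(k\), to pull the converses out of the addition spider in the first equation;
\item \(\check g;\Delta=\Delta;(\check g\otimes\check g)\), to copy \(f/g\) through the black spider in the second.
\end{itemize}
These are transposes of the ``wrong-way'' laws \(g;\nabla^{\top}=\nabla^{\top};(g\otimes g)\) and \((g\otimes g);\Delta^{\top}=\Delta^{\top};g\). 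Those laws fail for the zero multiplier (semantically a disconnect). Hence no induction over all polynomial multipliers can prove them, and such an induction is what ``(b) in both directions'' amounts to.

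The needed laws do hold here because \(g,k,t\neq 0\). They follow from the right-way laws and \Cref{lem:polynomial_inverse} by your own conjugation trick:
\[
(\check g\otimes\check g);\nabla=(\check g\otimes\check g);\nabla;g;\check g=\bigl((\check g;g)\otimes(\check g;g)\bigr);\nabla;\check g=\nabla;\check g,
\]
\[
\check g;\Delta=\check g;\Delta;(g\otimes g);(\check g\otimes\check g)=\check g;g;\Delta;(\check g\otimes\check g)=\Delta;(\check g\otimes\check g).
\]
Alternatively, postcompose both sides of the addition equation with the invertible \(g;k\), after which only right-way laws are needed. With this patch the argument goes through.

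One organisational remark: establish the right-way part of (b) before (a). The induction for (a) must push \(h\) through the spiders unfolding \(f\). It also needs the sum lemma \(\Delta;(p\otimes q);\nabla=p+q\) for arbitrary polynomials \(p,q\), not only for a leading monomial as in \Cref{def:dzx_poly_multiplier}.
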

Therefore rational function multipliers are well-defined, and thus, so are directed \(H\)-boxes.  By \Cref{ax:h_flip} it follows that undirected \(H\)-boxes are also well-defined.

Finally:

\begin{lemma}
   Generalised spiders are well-defined.
\end{lemma}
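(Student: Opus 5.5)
The aim is to show that the diagram assigned to a spider with label \((a(\delta),b(\delta))\) in \Cref{def:dzx_shifted_spider} does not depend on the choices made there. We may take as already established that rational multipliers and both kinds of \(H\)-box are well-defined, by \Cref{lem:polynomial_inverse}, \Cref{lem:mutliplier_arithmetic} and \Cref{ax:h_flip}. The definition involves three choices:
\begin{itemize}
  \item a representation \(b(\delta)=h(\delta)/k(\delta)\) with \(h,k\in\F_p[\delta+\delta^{\minu 1}]\);
  \item polynomials \(f,g\in\F_p[\delta]\) with \(h=f+\bar f-f(0)\) and \(k=g+\bar g-g(0)\);
  \item the representation of the affine phase \(a(\delta)\) as a rational multiplier feeding the spider.
\end{itemize}
I will handle these one at a time, and in each case reduce the question to an identity already available for multipliers and \(H\)-boxes.

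\textbf{Choice of \(f\) and \(g\).} By \Cref{lem:self-conj-laurent} the polynomial \(f\) is determined by \(h\), apart from its constant term. Indeed, \(f(\delta)+f(\delta^{\minu 1})-f(0)\) fixes every coefficient \(f_j\) with \(j\ge 1\) and also \(f(0)\). So \(f\) is in fact unique, and the same holds for \(g\); this step costs nothing.

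\textbf{Choice of the fraction \(h/k\).} If \(h/k=h'/k'\), then \(hk'=h'k\), and all of these lie in \(\F_p[\delta+\delta^{\minu 1}]\). I would pass through the common form \(hk'/(kk')\) and show that the spider built from \(h/k\) equals the spider built from \((hk')/(kk')\).

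\begin{itemize}
  \item \textbf{Unfolding the definition.} In the generic case the spider is a green spider with a leg decorated by the symplectic phase gadget, and that gadget is assembled from the polynomial pieces \(f\) and \(g\) together with an inverse multiplier for \(k\). Multiplying numerator and denominator by a common self-conjugate \(m=k'\) should be absorbed by \Cref{lem:mutliplier_arithmetic}, which gives \(\tfrac{mf}{mg}=\tfrac fg\) at the level of multipliers.
  \item \textbf{The main obstacle.} The polynomial \(F\) with \(F+\bar F-F(0)=mh\) is not simply \(mf\). So I expect to need a lemma saying that the phase gadget for a symmetric Laurent polynomial \(h\) depends only on \(h\), and is additive and multiplicative in the appropriate sense. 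I would first try to prove this by induction on the degree of \(f\), following the inductive \Cref{def:dzx_poly_multiplier}. The delay-sliding equations \eqref{eq:fdzx_sliding} and spider fusion (\Cref{ax:dzx_fusion}) would be used to move the extra \(\delta^{n+1}\) and \(\delta^{\minu(n+1)}\) terms into a sum of two symmetric parts.
  \item \textbf{Cancelling the common factor.} Once that lemma is in hand, cancelling the common factor \(m\) is the \(H\)-box identity obtained from \Cref{lem:polynomial_inverse} together with \Cref{ax:h_flip}.
\end{itemize}

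\textbf{Boundary cases.} The definition splits into cases such as \(b=0\) or the numerator vanishing. Each case uses only a fixed representative, so it suffices to check that the case split itself is determined by \(b(\delta)\) and not by the choice of \(h/k\). This holds because \(b=0\) if and only if \(h=0\). For the affine phase, well-definedness follows directly from the well-definedness of rational multipliers. Red spiders are defined from green spiders through \(H\)-boxes, so their well-definedness is inherited.
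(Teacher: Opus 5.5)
Your reduction is reasonable, and you are right that \(f\) and \(g\) are uniquely determined by \(h\) and \(k\). The only real freedom is therefore the fraction \(b=h/k\), but that is exactly the step you do not prove. You defer it to a lemma (the phase gadget depends only on \(h\) and is additive and multiplicative) that you neither state precisely nor prove, and the mechanism you suggest does not work as written. \Cref{lem:mutliplier_arithmetic} cancels a common factor in a rational \emph{multiplier} \(f/g\). The symplectic phase \(h/k\) is not the label of such a multiplier, and, as you note yourself, the polynomial \(F\) with \(F+\bar F-F(0)=mh\) is not \(mf\). Nor do \Cref{lem:polynomial_inverse} and the \(H\)-box flip axiom let you strip a factor \(m\) out of a phase; they only say that a multiplier composed with its converse is the identity. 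The rule for pushing a multiplier through a phase, \Cref{lem:scalable_arrow_phase}, rescales the symplectic phase by \(\bar m m\) rather than by \(m\). So ``cancelling the common factor'' is unsupported, and the induction on \(\deg f\) via sliding and fusion remains a plan rather than an argument.

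You also dismiss the case distinction in \Cref{def:dzx_shifted_spider} by asserting that each branch uses a fixed representative and that the split depends only on \(b\). You have not checked this against the definition, and the paper's proof treats exactly this point as the content of the lemma. Since the spider is defined by cases, one must show that the two ways of forming it are compatible. The paper first proves an identity by explicit rewriting, using \(\mathsf{Z.0}\), \(\mathsf{Z.2}\), \(\mathsf{Z.4}\), \Cref{def:dzx_poly_multiplier}, \Cref{def:dzx_directed_fourier_box} and \Cref{lem:box_inverse}. This identity holds for arbitrary \(f,g\) and free constants \(x,y\in\F_p\) in place of \(-f(0)\) and \(-g(0)\). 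It then proves a second identity under the hypothesis that \(f+\bar f+x\) and \(g+\bar g+y\) are both nonzero. For each identity it derives the colour-swapped version from the definition of red spiders. The paper concludes well-definedness from these compatibility identities, and nothing in your proposal plays their role.
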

\begin{proof}
   Because generalised spiders are defined by case distinction, we need to show that both ways of forming spiders are compatible with each other.

   First, given \(f(\delta),g(\delta), a(\delta), b(\delta)\in \F_p(\delta)\) and \(x,y\in \F_p\):
   \begin{align*}
      &\input{dzx/axiom_lemmas/spiders/0/00.tikz}
      \steq{\defref{def:dzx_shifted_spider}} \input{dzx/axiom_lemmas/spiders/0/0.tikz}\\
      &\steq{\axiomref{Z.0}}
      \input{dzx/axiom_lemmas/spiders/0/2.tikz}
      \steq{\defref{def:dzx_directed_fourier_box}\\ \lemref{lem:box_inverse}\\ \axiomref{Z.2}} \input{dzx/axiom_lemmas/spiders/0/3.tikz}\\
      &\steq{\axiomref{Z.4}} \input{dzx/axiom_lemmas/spiders/0/4.tikz}
      \steq{\defref{def:dzx_poly_multiplier}} \input{dzx/axiom_lemmas/spiders/0/5.tikz}\\
      &\steq{\defref{def:dzx_directed_fourier_box}} \input{dzx/axiom_lemmas/spiders/0/55.tikz}
      \steq{\defref{def:dzx_shifted_spider}} \input{dzx/axiom_lemmas/spiders/0/6.tikz}
   \end{align*}

   Using the definition of generalised red spiders and \lemref{lem:box_inverse}, this implies that:
   \[
         % [inline block 1: 9 envs, 21216 chars in 3 pieces, piece 1 here, a bare % at each other -> data_tex | \begin{tikzpicture} 	\begin{pgfonlayer}{nodelayer}...]


   \]

   Now, suppose that \(f(\delta)+f(\delta^{\minu 1}) + x\) and  \(g(\delta)+g(\delta^{\minu 1}) + y\) are both nonzero. We have that:

   \begin{align*}
      &%

   \end{align*}

   As before, this implies the colour swapped equation:

   \[
      %

   \]
   
   Together, these imply that generalised spiders are well-defined.
\end{proof}

Therefore, it follows that:

\propDZXWellDefined*

\end{textAtEnd}

\subsection{Completeness}
\label{subsec:dzx_scalable_normal_form}
In this subsection, we prove that \(\dStabZX\) is sound, universal and complete with respect to its interpretation into \(\sALR\), generalising the proof technique used in \cite{gsa}. To prove completeness, we introduce scalable notation which allows us to bundle multiple wires together:

\begin{definition}
  \label{eq:strict}
  Given any \(n \in \N\), we introduce a ``thick'' wire
  \[
    % [inline block 2: 21 envs, 19821 chars in 9 pieces, piece 1 here, a bare % at each other -> data_tex | \begin{tikzpicture} 	\begin{pgfonlayer}{nodelayer}...]


  \]
  which represents \(n\) wires bundled together; alongside generators which allow for wires to be split apart and merged back together, so that for all \(n,m\in \N\):\footnote{Formally, these are the components of the (op)laxators from \(\dStabZX\) into a suitable choice of strictification.}
  \begin{equation*}
      %
  \end{equation*}

\end{definition}

We define scalable spiders whose affine phases are given by vectors and whose symplectic phases are given by Hermitian matrices:

\begin{definition}
  \label{def:scalable_shifted_graph}
   Given a Hermitian matrix \(G\in \Herm\) and a vector \(\vb{a}\in \F_p(\delta)^n\),  the \textbf{scalable shifted graph state} diagram
   \(%
\)
   associated to \(G\) and \(\vb a\) is constructed by:
   \begin{enumerate}
   \item  first, assembling the \(\dStabZX\) diagram \(\bigotimes_{j=0}^{n-1} %
\);
   \item second, for each \( 0\leq j<k < n\) connecting the \(j\)th and \(k\)th wires with the following \(\dStabZX\) diagrams:
   \[
         %
;
   \]
   \item 
      and then finally bundling all of the wires together.
   \end{enumerate}
   Scalable spiders with different numbers of input and output wires are defined in the obvious way (see \Cref{def:scalable_spider}).
\end{definition}
The scalable notation allows for the connectivity information of graph states to be recovered in terms of their adjacency matrices, making the proof of completeness more tractable, and less combinatorial.
\begin{example}
  \label{ex:scalable_normal}
   \[
   %

   \]
\end{example}

Not all \(\dStabZX\)-diagrams are equivalent to graph states, however, they can be rewritten to the following form, consisting of a graph state where not all edges are necessarily connected to the output:

\begin{definition}
  \label{def:graph_like_form}
A \(\dStabZX\)-diagram is in \textbf{graph-like form} when
it can be decomposed into a graph-state with affine shift where some of
 the output wires are capped off by some number of green effects \(%
\]
\end{definition}

\begin{lemma}\label{lem:dzx_graph_form}
All \(\dStabZX\)-diagrams can be rewritten to graph-like form.
\end{lemma}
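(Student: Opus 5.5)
We argue by structural induction on \(\dStabZX\)-diagrams, following the approach of \cite{gsa} for the undelayed calculus. The aim is to show that the set of diagrams rewritable to graph-like form contains the generators and is closed under the operations that build diagrams. Compact closure lets us bend every input into an output, so it suffices to treat states. Closure under tensor is immediate: the disjoint union of two graph-like states is again a graph state with affine shift. Its adjacency matrix is block-diagonal and Hermitian, its affine vectors are concatenated, and its capping green effects simply accumulate.

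For the generators we use the presentation of \Cref{fig:dzx_axioms}, where rational spiders and \(H\)-boxes are primitive (\Cref{prop:dzx_equivalent_presentations}). A green spider with phase \((a(\delta),b(\delta))\), with its inputs bent to outputs, is turned into a graph state as follows. Unfuse it with \Cref{ax:dzx_fusion} so that each output leg carries its own green node, and insert a pair of \(H\)-boxes on each connecting wire using the \(H\)-box inverse lemma \lemref{lem:box_inverse}. The result is a single internal vertex carrying the phase, joined by \(H\)-edges to one vertex per output, with the internal vertex's own wire capped. This is graph-like, since a capped output is allowed. Red spiders reduce to green spiders surrounded by \(H\)-boxes through the colour change rule \Cref{ax:dzx_colour}. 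An \(H\)-box labelled \(h(\delta)\) on an output is absorbed into the graph: attach a green node there and record \(h\) (respectively \(\bar h\) in the transposed entry) in the adjacency matrix, which keeps it Hermitian. Directed \(H\)-boxes and the delay, which is the multiplier \(\delta\), reduce to these pieces by \Cref{def:dzx_directed_fourier_box} and \Cref{def:dzx_poly_multiplier}.

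The substantive step is closure under contraction, that is, capping two outputs of a graph-like state together with a cup. First fuse the two green boundary vertices along the cap with \Cref{ax:dzx_fusion}. The outcome is a vertex carrying a possible self-loop \(H\)-edge, which we absorb into its symplectic phase: the loop's label \(h+\bar h\) is self-conjugate, so it fits the phase restriction. Any parallel \(H\)-edges created by the fusion are merged by adding their labels, using the multiplier arithmetic of \lemref{lem:mutliplier_arithmetic}. The merged vertex is now internal with no output wire. To bring the diagram back to graph-like form we give it an output leg capped by a green effect, via identity unfusion. This is exactly the kind of capped output that graph-like form permits.

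I expect the main obstacle to lie in the edge cases of these fusions. When edge labels cancel to zero, we must delete the edge, which requires a delayed analogue of the zero-multiplier/disconnect rule. When a self-loop contribution makes the phase's symplectic part degenerate, we must verify that the case distinction in \Cref{def:dzx_shifted_spider} is respected. Both cases are covered by \Cref{prop:dzx_well_defined}, but they need care. By contrast, the real simplification of capped vertices (local complementation and pivoting) is deferred to the normal-form step and is not needed for graph-like form.
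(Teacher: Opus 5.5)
Your strategy is essentially the paper's. Its proof invokes \cite[Prop.~34]{gsa} for the undelayed pieces and rewrites the delay as a directed Hadamard edge, which is your ``delay $=$ multiplier $\delta$ $=$ directed $H$-box'' step. Your explicit closure argument with rational labels fills in what the paper leaves implicit, since the cited result only covers $\F_p$-labels: merging parallel edges by adding labels, and absorbing a directed loop $h$ into the phase as the self-conjugate $h+\bar h$.

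One step fails as written: the green-spider base case. After unfusing and inserting $H$ and its inverse on each connecting wire, each leg reads central vertex, $H$, $H^{-1}$, output vertex. There is no green node between the two boxes, so this is not a graph edge. If you instead read it as a single $H$-edge from the capped central vertex to each output vertex, the diagram denotes $\sum_x \chi(ax+2^{-1}bx^2)\,(F\ket{x})^{\otimes n}$. That is the red spider, not the green one.

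The repair is cheap, in either of two ways:
\begin{itemize}
\item insert an identity green vertex between the two boxes on each leg, giving a second layer of capped vertices; or
\item note that your one-internal-vertex construction is exactly the red spider, and obtain the green spider from it by the colour change rule followed by your own step absorbing $H$-boxes sitting on outputs.
\end{itemize}
Separately, the zero-label edge case is not supplied by the well-definedness proposition. It needs a disconnect rule for the multiplier $0$. With these fixed, the argument goes through.
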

\begin{proof}
All of the non-delayed components can be put into graph-like form by \cite[Prop.~34]{gsa}; moreover, the delay can be rewritten to a directed Hadamard edge as follows:
\[
%
.
\]
\end{proof}

We can refine graph-like diagrams so that the nodes which are not connected to the output are not allowed to be connected to each other:

\begin{definition}\label{def:dzx_ap_form}
  A graph-like \(\dStabZX\)-diagram is in \textbf{AP-form} when it takes the form
  \begin{equation}
    \label{eq:dzx-AP-form}
    %

  \end{equation}
  for \(m,n \in \N\), \(\vb{x} \in \F_p(\delta)^m\), \(\vb{y} \in \F_p(\delta)^n\), \(E
  \in \Matrices[m][n][\F_p(\delta)]\), and \(Y \in \Herm[n]\).
\end{definition}

To reduce \(\dStabZX\)-diagrams to AP-form, we shall use the following rule, which generalises local complementation and pivoting to the translation-invariant setting:
\begin{restatable}[Schur complementation]{proposition}{propDZXSchurComplementation}\label{prop:dzx_schur_complementation}
Given \(Y \in \Herm[n]\), invertible \(X \in \Herm[m]\), \(E \in \Matrices\),
  \(\vb{x} \in \F^m\) and \(\vb{y} \in \F^n\),
\[\begin{tikzpicture}
	\begin{pgfonlayer}{nodelayer}
		\node [style=none] (0) at (-10.3, 1) {};
		\node [style=none] (1) at (-1, 1) {};
		\node [style=none] (2) at (-1, -1) {};
		\node [style=none] (3) at (-10.3, -1) {};
		\node [style=gbn] (4) at (-5.75, 0) {};
		\node [style=none] (5) at (-3.5, 0) {};
		\node [style=gbn] (6) at (-1.5, 0.5) {};
		\node [style=none] (7) at (-1, -0.5) {};
		\node [style=none] (8) at (-2, -0.5) {};
		\node [style=wirelable] (10) at (-4.625, 0.25) {$m+n$};
		\node [style=none] (11) at (-2, 0.5) {};
		\node [style=wirelable] (12) at (-2, -0.25) {$n$};
		\node [style=wirelable] (13) at (-2, 0.25) {$m$};
		\node [style=bphase] (9) at (-8.3, 0) {$\phasemat{\vb{x} \\ \vb{y}}{X &  E \\ E^\dag & Y}$};
	\end{pgfonlayer}
	\begin{pgfonlayer}{edgelayer}
		\draw [style=background] (2.center)
			 to (1.center)
			 to (0.center)
			 to (3.center)
			 to [in=180, out=0] cycle;
		\draw [style=thick] (7.center)
			 to (8.center)
			 to [in=0, out=-180] (5.center)
			 to (4);
		\draw [style=thick] (6)
			 to (11.center)
			 to [in=0, out=180] (5.center)
			 to (4);
		\draw [style=rbphase] (9) to (4);
	\end{pgfonlayer}
\end{tikzpicture}
=\begin{tikzpicture}
	\begin{pgfonlayer}{nodelayer}
		\node [style=none] (17) at (1, 1) {};
		\node [style=none] (18) at (8.5, 1) {};
		\node [style=none] (19) at (8.5, -1) {};
		\node [style=none] (20) at (1, -1) {};
		\node [style=gbn] (15) at (4.5, -0.5) {};
		\node [style=none] (16) at (8.5, -0.5) {};
		\node [style=bphase] (21) at (4.5, 0.375) {$\phasemat{\vb{y} +  E X^{\minu 1}\vb{x}}{Y - E X^{\minu 1}E^\dag}$};
	\end{pgfonlayer}
	\begin{pgfonlayer}{edgelayer}
		\draw [style=background] (19.center)
			 to (18.center)
			 to (17.center)
			 to (20.center)
			 to [in=180, out=0] cycle;
		\draw [style=thick] (15) to (16.center);
		\draw [style=dbphase] (21) to (15);
	\end{pgfonlayer}
\end{tikzpicture}
\]
\end{restatable}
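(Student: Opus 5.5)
The plan is to argue syntactically, by induction on \(m\), reducing to two elementary base rewrites that generalise local complementation and pivoting. Throughout I will use the scalable notation of \Cref{def:scalable_shifted_graph} in block form. The first preliminary step is a bookkeeping lemma stating that a scalable spider whose phase is \(\phasemat{\vb x \\ \vb y}{X & E \\ E^\dag & Y}\) equals the parallel composite of two scalable spiders, with phases \(\phasemat{\vb x}{X}\) and \(\phasemat{\vb y}{Y}\), joined by a bundle of directed H-edges labelled by the entries of \(E\). This follows directly from the construction in \Cref{def:scalable_shifted_graph}, since the pairwise edges between wires \(j<k\) are specified entrywise.

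Next I would record two algebraic facts in \(\F_p(\delta)\).
\begin{enumerate}
\item The Schur complement of a Hermitian matrix with respect to an invertible Hermitian block is again Hermitian, because \((X^{\minu 1})^\dag = X^{\minu 1}\). Hence the right-hand side is a well-defined scalable spider, with entries in the right fields by \Cref{prop:self-conj-rational}.
\item The quotient formula for Schur complements holds, together with its affine analogue. If \(X = \begin{bmatrix} X_1 & F \\ F^\dag & X_2\end{bmatrix}\) with \(X_1\) and the Schur complement \(X/X_1\) invertible, then eliminating \(X_1\) first and then \(X/X_1\) gives the same pair \((\vb y + E X^{\minu 1}\vb x,\; Y - E X^{\minu 1} E^\dag)\) as eliminating \(X\) in one step.
\end{enumerate}
With these facts, the induction step is: split the \(m\) capped wires as \(m = k + (m-k)\) with \(k\in\{1,2\}\), apply the base rule to the first \(k\) capped wires, treating the other \(m-k\) capped wires as ordinary outputs, and then apply the induction hypothesis to the remaining \(m-k\) wires.

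For the choice of \(k\), note that \(X\) is invertible. If some diagonal entry \(X_{ii}\) is nonzero, I permute that wire to the front and take \(k=1\); the relevant Schur complement is then invertible because \(X\) is. Otherwise every diagonal entry vanishes, so invertibility forces some off-diagonal entry \(e = X_{ij}\neq 0\), and I take \(k=2\) with the pivot block \(\begin{bmatrix} 0 & e\\ \bar e & 0\end{bmatrix}\), whose inverse is \(\begin{bmatrix} 0 & \bar e^{\minu 1}\\ e^{\minu 1} & 0\end{bmatrix}\). Permutations are handled by the permutation axioms in \Cref{fig:dzx_axioms}.

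The base cases are the main obstacle. For \(k=1\), the argument is a generalised local complementation. The capped vertex carries phase \((x, b)\) with \(b = X_{11}\in\F_p(\delta+\delta^{\minu 1})\) nonzero. I would use the generalised colour change and Euler decomposition axioms to rewrite the green effect with symplectic phase \(b\) as a red effect preceded by an H-box labelled \(b^{\minu 1}\). Then I would push the red effect through the H-edges to the neighbours, using spider fusion (\Cref{ax:dzx_fusion}) and the copy/bialgebra rules, which creates an H-edge of weight \(-E_{i1} b^{\minu 1}\bar E_{j1}\) between every pair of neighbours \(i,j\). Finally, the affine part \(x\) copies through as the phases \(E_{i1}b^{\minu 1}x\).

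The delicate point in the \(k=1\) case is that edge labels are directed. I need to check that conjugation lands on the correct side, so that the label produced is \(E_{i1}b^{\minu 1}\bar E_{j1}\) rather than its conjugate. I would verify this first on the single-edge case using \Cref{lem:polynomial_inverse} and the multiplier arithmetic of \Cref{lem:mutliplier_arithmetic}. For \(k=2\), the pivot, I would follow the strategy of \cite{gsa}: apply the \(k=1\) move twice after a colour change that temporarily introduces a nonzero self-loop, or alternatively derive the pivot directly from a bialgebra step. If that becomes unwieldy, the fallback is to compose three generalised local complementations, mirroring the classical identity that a pivot is the composite of three local complementations.
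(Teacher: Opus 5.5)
Your algebraic reduction is sound: Schur complements of invertible Hermitian matrices are invertible and Hermitian, the quotient formula holds over \(\F_p(\delta)\), and a \(1\times1\) or \(2\times2\) pivot always exists. However, it only reduces the proposition to its own instances with \(m=1\) and \(m=2\) and arbitrary \(n\). All of the diagrammatic content lives in those base cases, and you do not establish them.

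The one concrete mechanism you give for \(k=1\) does not work. After the state change, the capped vertex is a red effect with symplectic phase \(b^{-1}\neq 0\), so it is not affine. The copy rules only copy affine spiders (\Cref{lem:pauli_push}), so this effect cannot be pushed through the green spider and its H-edges to the neighbours. Producing the weighted complete graph with labels \(-E_{i1}b^{-1}\bar E_{j1}\) on the neighbourhood is a separate lemma, itself needing an induction over the number of neighbours, and that is the missing piece. For \(k=2\) you only list candidate strategies without carrying any of them out. The three-local-complementations fallback is not immediate either: local complementations, as graph-state equivalences, introduce local corrections (including on the capped vertices) rather than deleting vertices.

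The idea you are missing, and which the paper uses, is to handle the invertible block \(X\) all at once at the scalable level. This removes the need for both the induction on \(m\) and the pivot case split. The steps are as follows.
\begin{enumerate}
\item Unfuse the block spider into spiders with phases \((\vb{x},X)\) and \((\vb{y},Y)\), joined by the scalable \(H\)-box labelled \(E\).
\item After fusing in the caps, the \((\vb{x},X)\) part is a green state with invertible phase. \Cref{gsa:lem:scalable_symplectic_states} turns it into a red state with phase \((-X^{-1}\vb{x},X^{-1})\) in one step. This is where invertibility of \(X\) is used, and no choice of pivot is involved.
\item Push this state across the \(E\)-labelled \(H\)-box using colour change together with \Cref{lem:scalable_arrow_phase}: a quadratic phase \(C\) pulled back along a matrix arrow \(A\) becomes \(A^\dagger C A\). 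This produces the Schur-complement correction term. For a single capped vertex, it is exactly the rank-one update your base case needs.
\item Absorb the leftover arrow into the zero-phase effect (\Cref{prop:matprop}) and fuse with \((\vb{y},Y)\) (\Cref{lem:scalable_fusion}).
\end{enumerate}

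With these two lemmas, your induction and quotient-formula bookkeeping become redundant. Without them, or an equivalent inductively proved lemma, your base cases remain open.

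Even within your scheme, the \(2\times2\) pivot is avoidable in odd characteristic. If \(X_{11}=X_{22}=0\neq X_{12}=e\), the congruence \(\mathbf{u}_1\mapsto\mathbf{u}_1+e^{-1}\mathbf{u}_2\) produces the diagonal entry \(2\). Realising that congruence diagrammatically, however, again requires the arrow-phase lemma.
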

\seeproof{prop:dzx_schur_complementation}

\begin{restatable}{proposition}{propDZXAPForm}\label{prop:dzx_ap_form}
Every graph-like \(\dStabZX\)-diagram can be put into AP-form.
\end{restatable}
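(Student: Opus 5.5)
We argue by induction on the number of internal (non-output) vertices of a graph-like diagram that are joined to another internal vertex. Start from a graph-like diagram given by \Cref{lem:dzx_graph_form}: a scalable shifted graph state with Hermitian adjacency matrix \(G\in\Herm[m+n]\) and affine shift \(\vb a\), with \(m\) of its wires capped by green effects. Order the wires so the capped ones come first and write
\[
  G=\begin{bmatrix} A & E\\ E^\dag & Y\end{bmatrix},
  \qquad
  \vb a=\begin{bmatrix}\vb x\\ \vb y\end{bmatrix}.
\]
The diagram is in AP-form exactly when \(A\) is diagonal, i.e.\ the internal vertices are pairwise non-adjacent, with internal self-loops recorded as diagonal symplectic phases of the effects. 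So the aim is to remove the off-diagonal entries of \(A\), reducing \(m\) as we go.

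The tool is \Cref{prop:dzx_schur_complementation}, applied to a principal block of internal vertices. If some internal vertex \(j\) has \(A_{jj}\neq 0\), take \(X=(A_{jj})\); it is a nonzero self-conjugate rational function, hence invertible in \(\F_p(\delta+\delta^{\minu 1})\). Schur complementation then deletes vertex \(j\) and replaces the remaining matrix by its Schur complement, which is again Hermitian because \(X^{\minu 1}\) is self-conjugate. This is the generalised local complementation. Each application strictly decreases \(m\), so it can be repeated while such a vertex exists.

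If instead every internal vertex has \(A_{jj}=0\) but \(A_{jk}\neq 0\) for some \(j\neq k\), take the \(2\times 2\) principal block
\[
  X=\begin{bmatrix} 0 & A_{jk}\\ \bar A_{jk} & 0\end{bmatrix},
\]
which is Hermitian with determinant \(-A_{jk}\bar A_{jk}\neq 0\), hence invertible. Schur complementation removes both vertices; this is the generalised pivot. When neither case applies, \(A\) has zero diagonal and zero off-diagonal, so \(A=0\) and the diagram is in AP-form. Induction on \(m\) then finishes the proof.

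The main obstacle is bookkeeping rather than algebra. Schur complementation as stated eliminates a block of vertices that carry green effects, i.e.\ are plugged by the capping effect. So we must first use spider fusion (\Cref{ax:dzx_fusion}) and the scalable-spider notation to absorb each effect into the graph state: the effect's affine and symplectic phases are added to \(\vb x\) and to the diagonal of \(A\). After that, we permute the chosen vertices to the top using the symmetry of scalable spiders, which conjugates \(G\) by a permutation.

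We also have to check that after Schur complementation the remaining internal vertices are still presented as capped wires of a graph state. They now carry phases \(\vb y+EX^{\minu 1}\vb x\) and the new adjacency block, and the split/merge generators of \Cref{eq:strict} should let us recover exactly the capped form. Finally, we must confirm that the affine parts transform correctly. This follows directly from the statement of \Cref{prop:dzx_schur_complementation}, provided the sign convention \(EX^{\minu 1}\vb x\) is respected when the eliminated vertices sit in the middle of the ordering.
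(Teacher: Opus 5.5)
Your argument is correct, but it takes a different route from the paper. The paper does not eliminate internal vertices one or two at a time. It writes the phase matrix as \(\begin{bmatrix} A & B\\ B^\dag & C\end{bmatrix}\), with \(A\) the internal block. It then applies a single invertible change of basis \(Z\) to the internal wires, implemented with scalable matrix arrows, which acts on the phase by congruence and brings \(A\) to \(\begin{bmatrix} D & 0\\ 0 & 0\end{bmatrix}\) with \(D\) invertible. Finally it applies \Cref{prop:dzx_schur_complementation} once, with respect to \(D\); the surviving internal vertices correspond to \(\ker A\), so their internal block is zero.

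Your version uses only wire permutations and repeated \(1\times 1\) and \(2\times 2\) principal Schur complements, i.e.\ generalised local complementation and pivoting. So you do not need to push arbitrary matrix arrows through scalable spiders, nor the kernel/complement decomposition. The paper phrases that decomposition via \(\ker(A)^\perp\) and a ``unitary'' \(Z\), which is delicate over \(\F_p(\delta)\) because the standard Hermitian form has isotropic vectors; any complement of \(\ker A\) would suffice. Your price is an induction on the number of internal vertices, with the graph recomputed at each step, whereas the paper reaches AP-form in one matrix-level step. By the quotient property of Schur complements, your iteration amounts to a single Schur complement with respect to an invertible principal block of \(A\) of size \(\operatorname{rank} A\). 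So the two are the same idea at different granularity.

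One correction to your framing: AP-form (\Cref{def:dzx_ap_form}) has zero internal block, so internal vertices carry no symplectic phase at all, and ``\(A\) diagonal'' is the wrong characterisation. This does not damage the proof, because your first case removes every internal vertex with a nonzero diagonal entry, so you terminate at \(A=0\).
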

\seeproof{prop:dzx_ap_form}

The AP-form can be refined to a unique normal form using Gaussian elimination:

\begin{definition}\label{def:dzx_reduced_ap_form}
   A graph-like \(\dStabZX\)-diagram is in \textbf{reduced AP-form} either when it takes the form
   \(\begin{tikzpicture}
	\begin{pgfonlayer}{nodelayer}
		\node [style=none] (0) at (-5.8, 0.375) {};
		\node [style=none] (1) at (-2.375, 0.375) {};
		\node [style=none] (2) at (-2.375, -0.375) {};
		\node [style=none] (3) at (-5.8, -0.375) {};
		\node [style=ggn] (8) at (-3.375, 0) {};
		\node [style=bn] (20) at (-4.125, 0) {};
		\node [style=bn] (20) at (-4.125, 0) {};
		\node [style=wirelable] (12) at (-2.75, 0.225) {$n$};
		\node [style=none] (15) at (-2.375, 0) {};
		\node [style=gphase] (21) at (-5.125, 0) {\(\phasemat{1}{0}\)};
	\end{pgfonlayer}
	\begin{pgfonlayer}{edgelayer}
		\draw [style=background] (2.center)
			 to (1.center)
			 to (0.center)
			 to (3.center)
			 to [in=180, out=0] cycle;
		\draw [style=thick] (8) to (15.center);
		\draw [style=rbphase] (21) to (20);
	\end{pgfonlayer}
\end{tikzpicture}
\)
   or when it takes the following form
   \[\begin{tikzpicture}
	\begin{pgfonlayer}{nodelayer}
		\node [style=none] (0) at (-4.925, 1.375) {};
		\node [style=none] (1) at (9, 1.375) {};
		\node [style=none] (2) at (9, -1.375) {};
		\node [style=none] (3) at (-4.925, -1.375) {};
		\node [style=gbn] (4) at (2, 0) {};
		\node [style=none] (5) at (4, 0) {};
		\node [style=gbn] (6) at (6.25, 0.5) {};
		\node [style=none] (8) at (5.75, -0.5) {};
		\node [style=bphase] (9) at (-1.5, 0) {$\phasemat{\vb{x} \\ \vb{0} \\ \vb{s}}{0 & I_m & L \\ I_m & 0 & 0 \\ L^\dag & 0 & \Sigma}$};
		\node [style=wirelable] (10) at (3.125, 0.25) {$m+n$};
		\node [style=none] (11) at (5.75, 0.5) {};
		\node [style=wirelable] (12) at (5.75, -0.25) {$n$};
		\node [style=wirelable] (13) at (5.75, 0.25) {$m$};
		\node [style=glmat] (14) at (7.5, -0.5) {\(\varsigma\)};
		\node [style=none] (15) at (9, -0.5) {};
	\end{pgfonlayer}
	\begin{pgfonlayer}{edgelayer}
		\draw [style=background] (2.center)
			 to (1.center)
			 to (0.center)
			 to (3.center)
			 to [in=180, out=0] cycle;
		\draw [style=thick] (4)
			 to (5.center)
			 to [in=180, out=0] (8.center)
			 to (14);
		\draw [style=thick] (4)
			 to (5.center)
			 to [in=-180, out=0] (11.center)
			 to (6);
		\draw [style=thick] (15.center) to (14);
		\draw [style=rbphase] (9) to (4);
	\end{pgfonlayer}
\end{tikzpicture}
\]
   for some
   \(0\leq m\leq n\),
   \(\vb{x}\in\F_p(\delta)^m\),
   \(\vb{s}\in\F_p(\delta)^{n-m}\),
   \(L\in \Matrices[m][(n-m)][\F_p(\delta)]\),
   \(\Sigma\in \Herm[n-m]\),
   and permutation
   \(\varsigma\in \Matrices[n][n][\F_p]\).
\end{definition}

Because the data of the reduced AP-form associated to a \(\dStabZX\)-diagram \(D\) is precisely that of its associated generating tableau given in \Cref{prop:generating_tableau}, it follows that:
\begin{lemma}\label{lem:AP_unique}
   The reduced AP-form of a \(\dStabZX\)-diagram is unique. Moreover, every shifted affine Lagrangian relation can be represented as a \(\dStabZX\)-diagram.
\end{lemma}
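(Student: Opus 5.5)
We have to show two things: that the reduced AP-form of a diagram is unique, and that every shifted affine Lagrangian relation is the interpretation of some \(\dStabZX\)-diagram. Both follow from computing directly the interpretation under \(\Ext\) of a diagram in reduced AP-form, and comparing it with the canonical form of \Cref{prop:generating_tableau}.

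\textbf{Step 1: interpret the reduced AP-form.} We first compute the semantics of a state in reduced AP-form with data \((m,\vb x,\vb s,L,\Sigma,\varsigma)\). The scalable green spider labelled by the Hermitian matrix \(G\) and the vector \(\vb a\) is a shifted graph state; generalising \Cref{ex:delayed_graph_state_ext}, its interpretation is \(\ker[\,G \mid -I\,]\) shifted by an affine vector built from \(\vb a\), up to the conventions for the \(X\) and \(Z\) blocks.

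The first \(m\) wires are then capped by green effects. Plugging a green effect into a graph-state wire swaps the roles of \(X\) and \(Z\) on the neighbouring wires. The \(I_m\) block in the phase matrix pairs each capped node with exactly one output node, so eliminating the capped variables amounts to a short block calculation. The result should be the kernel of
\[\left[\begin{array}{cc|cc} I_m & L & \Sigma' & 0\\ 0&0&-L^\dag & I_{n-m}\end{array}\right],\]
with the affine part determined by \(\vb x\) and \(\vb s\). Finally, the permutation \(\varsigma\) acts on both the \(X\) and \(Z\) blocks. I expect this calculation to be the main obstacle: the signs and the placement of \(\Sigma\) relative to \(L\) must be matched exactly with the canonical form, possibly after reparametrising, for example \(\Sigma\mapsto-\Sigma\) or \(L\mapsto L^\dag\).

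For the empty relation, the other branch of \Cref{def:dzx_reduced_ap_form} applies: the red spider with affine phase \(1\) is the zero map, so that diagram is interpreted as \(\emptyset\).

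\textbf{Step 2: the correspondence is a bijection.} Step 1 gives a map from reduced AP data to canonical-form data. We check that this map is a bijection onto the data of \Cref{prop:generating_tableau}.

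The only subtle point is the affine part. A coset \(\vb a+S\) is determined by a representative modulo \(S\), so we must show that the pair \((\vb x,\vb s)\) is a canonical choice of representative. In the AP form the affine shifts sit only on designated coordinates, namely the \(Z\)-part of the capped wires and the pivot-free part. Using the pivot structure of the matrix, each coset of \(S\) then has a unique representative supported on those coordinates, which is exactly the Gaussian-elimination normalisation of the affine part.

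\textbf{Step 3: conclude.} Uniqueness: if a diagram \(D\) can be rewritten to two reduced AP-forms, soundness of the axioms gives both forms the same interpretation \(\Ext(D)\). By the uniqueness part of \Cref{prop:generating_tableau}, together with the bijection of Step 2, the two data tuples coincide, so the diagrams are identical. Here we also use that the interpretation in \(\sALR\) descends from \(\St(\StabZX)\), by \Cref{prop:compact_closed_functor}.

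Universality: given a shifted affine Lagrangian relation, we first bend wires with the compact structure so that it becomes a state. If it is empty, we use the zero-map diagram. Otherwise we take its canonical form from \Cref{prop:generating_tableau}, read off \((m,L,\Sigma,\varsigma)\) and the affine representative, and build the corresponding reduced AP-form diagram using the scalable notation. Step 1 shows that this diagram interprets to the given relation.
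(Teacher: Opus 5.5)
Your plan follows the paper's argument: the paper just observes that the data of a reduced AP-form is that of the canonical generating tableau of \Cref{prop:generating_tableau}, so uniqueness and universality are inherited from that proposition, and your Step 2 is more careful than the paper about the affine part, which \Cref{prop:generating_tableau} only determines modulo \(S\). One correction to your Step 1 guess: the reduced AP-form has \(\Sigma\in\Herm[n-m]\) while the canonical form has \(\Sigma\in\Herm[m]\), so the capped-spider computation cannot return a kernel with the same pivot count \(m\); it returns the canonical form with the roles of \(m\) and \(n-m\) exchanged and \(L\) replaced by \(\pm L^\dag\) (up to a block permutation of the wires), which is still a bijection of data, so Steps 2 and 3 go through.
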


Moreover, every AP-form diagram can be rewritten to reduced AP-form:

\begin{restatable}{proposition}{propDZXReducedAPForm}\label{prop:dzx_reduced_ap_form}
Every  \(\dStabZX\)-diagram in AP-form can be rewritten to reduced AP-form.
\end{restatable}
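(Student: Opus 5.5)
We will reduce the statement to Gaussian elimination on the generating tableau read off from an AP-form diagram, implementing each row operation diagrammatically. First, if the affine data of the AP-form is inconsistent, so that the diagram denotes the empty relation, we derive the zero-map form \(\phasemat{1}{0}\) disconnected from \(n\) outputs. This uses the standard \(\StabZX\) zero-rule lifted to \(\dStabZX\) via spider fusion and colour change. Concretely, an internal green node with nonzero affine phase and no symplectic or edge data produces a scalar zero, which then absorbs the rest of the diagram. So it suffices to treat the nonempty case.

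In the nonempty case, take the AP-form with parameters \(\vb{x},\vb{y},E,Y\) as in \Cref{def:dzx_ap_form}. The first step is to make \(E\) have full row rank \(m\). Suppose \(E\) has a row dependency \(\vb{e}_i=\sum_{j\neq i}c_j\vb{e}_j\) with \(c_j\in\F_p(\delta)\). Then we apply a generalised bialgebra or copy step: we attach a scalable multiplier network, built from \Cref{def:dzx_rational_multiplier} and \Cref{lem:mutliplier_arithmetic}, that realises the elementary row operation \(\vb{e}_i\mapsto \vb{e}_i-\sum_j c_j\vb{e}_j\) on the internal nodes. After this, internal node \(i\) has no edges to outputs. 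It is then either a disconnected scalar, which is deleted if its affine phase is consistent, or it witnesses emptiness. Elementary row operations among internal nodes act on \(E\) by left multiplication with an invertible matrix \(U\), and they act on \(\vb{x}\) by \(U\vb{x}\) up to conjugation conventions. So we must prove a single ``internal row-operation'' lemma in scalable notation. We would obtain it from the \(m=1\) case of a two-node pivot and iterate.

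Next we bring \(E\) into reduced row-echelon form \([I_m\;L]\varsigma\). The column permutation \(\varsigma\) is just wire crossings, pulled out as the \(\varsigma\) box. Row scaling by a nonzero rational \(w(\delta)\) uses the multiplier axioms, namely the \(w(\delta)\neq 0\) rules in \Cref{fig:dzx_axioms}, pushed through the internal green node. Now each internal node \(i\) is connected to exactly one pivot output \(i\) with unit weight. We then apply the pivot instance of \Cref{prop:dzx_schur_complementation} between each internal node and its pivot output. Following the statement, this rewrites the output-output block so that pivot outputs have no mutual edges. It also clears the symplectic and affine phases on the pivot outputs, moving them into \(\vb{s}\), \(\Sigma\), \(L\) and \(\vb{x}\). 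This yields exactly the matrix \(\begin{smallbmatrix}0&I_m&L\\ I_m&0&0\\ L^\dag&0&\Sigma\end{smallbmatrix}\) with affine vector \((\vb{x},\vb{0},\vb{s})\). The Hermiticity of \(\Sigma\) is preserved because Schur complements of Hermitian matrices are Hermitian.

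The main obstacle is the internal row-operation lemma. It requires showing diagrammatically, for rational coefficients and with the conjugation \(\bar{(\cdot)}\) correctly placed, that adding a \(c(\delta)\)-multiple of one internal node's adjacency row to another's is derivable. We would first try to derive it as a special case of \Cref{prop:dzx_schur_complementation}. To do so, we would introduce an auxiliary internal node with invertible \(1\times1\) block \(X\), via the identity-removal and \(H\)-box inverse rules, so that eliminating it realises the desired row operation. If the conjugation bookkeeping fails there, we would instead derive it directly from the rational bialgebra rule and \Cref{lem:mutliplier_arithmetic}.
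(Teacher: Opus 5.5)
Your Steps 0--2 agree with the paper. It also runs Gaussian elimination on \(E\) through a lemma letting an invertible matrix act on \(E\) from the left (\Cref{gsa:lem:gaussianeliminationi}). It deletes a zero row whose affine phase vanishes, and sends a zero row with nonzero phase to the zero normal form.

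\textbf{The gap is in your last step.} \Cref{prop:dzx_schur_complementation} only eliminates \emph{capped} vertices whose internal block is invertible, and the eliminated vertices disappear. In AP-form the internal block is zero. The pivot outputs are boundary vertices that must survive. So there is no ``pivot instance between an internal node and its pivot output'' available as written, and certainly none that rewrites the block of vertices it would have to delete.

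\textbf{The missing idea is a boundary unfusion.} Write the output data in the pivot/free splitting as \(\begin{smallbmatrix} A & B\\ B^\dagger & C\end{smallbmatrix}\) with affine part \((\vb{a},\vb{b})\). Replace the output wire of each pivot vertex \(v_i\) by an \(H\)-box labelled \(1\) into a fresh phase-free node \(u_i\), followed by an \(H\)-box labelled \(-1\) into a fresh output node \(w_i\); these compose to the identity. Now the \(v_i\) are internal, and the block of the internal nodes together with the \(v_i\) is
\[
  \begin{bmatrix} 0 & I_m \\ I_m & A \end{bmatrix},
  \qquad\text{with inverse}\qquad
  \begin{bmatrix} -A & I_m \\ I_m & 0 \end{bmatrix},
\]
for every \(A\). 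Schur complementing it has the following effect:
\begin{itemize}
\item the \(u_i\) become the new internal nodes, with phases \(\vb{x}\), edges \(-I_m\) to the \(w_i\), and edges \(-L\) to the free outputs;
\item the \(w_i\) carry no edges or phases;
\item the free outputs acquire \(\Sigma = C + L^\dagger A L - B^\dagger L - L^\dagger B\) and \(\vb{s} = \vb{b} + L^\dagger\vb{a} + (B^\dagger - L^\dagger A)\vb{x}\), up to the paper's sign and conjugation conventions.
\end{itemize}
A final row scaling by \(-1\) gives reduced AP-form. Contrary to your description, only \(\Sigma\) and \(\vb{s}\) absorb \(A\), \(B\), \(\vb{a}\); \(L\) and \(\vb{x}\) change at most by that sign.

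With this repair your argument is correct, but its last step differs from the paper's. The paper does not use Schur complementation there. After the same elimination, it colour-changes the internal nodes into parity constraints that express the pivot outputs as an affine image of the free outputs under the matrix arrow \(L\). It then pushes the pivot outputs' phase data back through that arrow with \Cref{lem:scalable_arrow_phase}, together with scalable fusion, colour change and \Cref{lem:scalable_box_loop}. Your repaired route packages this substitution into one application of an already established proposition, at the price of the unfusion gadget and a sign correction.

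A similar gadget also makes your first idea for the row-operation lemma work:
\begin{itemize}
\item join the internal nodes by the identity block to fresh nodes \(t\);
\item join the \(t\) through the block \(-U\) to new capped phase-free nodes;
\item eliminate the internal nodes together with \(t\), whose block \(\begin{smallbmatrix} 0 & I\\ I & 0\end{smallbmatrix}\) is invertible.
\end{itemize}
For invertible \(U\) the added gadget is trivial, and the result has edges \(UE\).
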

\seeproof{prop:dzx_reduced_ap_form}

This implies immediately that:

\begin{theorem}\label{thm:dzx_completeness}
   The interpretation \(\interp{-}:\dStabZX\to \sALR\) is a \dag-CC equivalence, so that the delayed stabilizer ZX-calculus is sound, universal, and complete for shifted affine Lagrangian relations.
\end{theorem}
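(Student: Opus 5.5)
The proof assembles the ingredients established just above into the standard normal-form argument for completeness.

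\emph{Soundness.} We check that \(\interp{-}\) is well-defined on \(\dStabZX\). By \Cref{prop:compact_closed_functor}, \(\Ext\) is a \dag-compact closed functor on \(\St(\StabZX)\), and the sliding equations hold in \(\sALR\). Every additional axiom of \Cref{fig:dzx_axioms} holds in \(\sALR\). Each is a finite check of equality of shifted affine Lagrangian subspaces, done by computing both sides using the interpretations of multipliers, H-boxes and spiders from \Cref{subsec:dzx_derived_generators}. The derived notation is well-defined by \Cref{prop:dzx_well_defined}. Hence \(\interp{-}\) factors through the quotient and is a \dag-CC functor, identity on objects up to the evident correspondence \(n\mapsto(\F_p(\delta)^{2n},\varpi_n)\). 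Every object of \(\sALR\) is isomorphic to one of these: each shifted symplectic space arises by scalar extension, and a shifted symplectic basis exists via the normal form of \Cref{prop:generating_tableau}. So \(\interp{-}\) is essentially surjective on objects.

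\emph{Universality (fullness).} By compact closure it suffices to treat states \(0\to n\); a morphism \(m\to n\) is handled by bending wires with cups and caps, which both sides preserve. A state is either empty or a shifted affine Lagrangian subspace.
\begin{itemize}
\item The empty relation is the interpretation of the first reduced AP-form in \Cref{def:dzx_reduced_ap_form}.
\item A nonempty relation has a canonical generating tableau \((m,L,\Sigma,\vb a,\varsigma)\) by \Cref{prop:generating_tableau}. The second part of \Cref{lem:AP_unique} gives a reduced AP-form diagram whose interpretation is exactly this tableau.
\end{itemize}
Hence \(\interp{-}\) is full.

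\emph{Completeness (faithfulness).} Let \(D_1,D_2\) be diagrams with \(\interp{D_1}=\interp{D_2}\). After bending wires, assume both are states. We rewrite each into reduced AP-form in three steps:
\begin{enumerate}
\item \Cref{lem:dzx_graph_form} puts the diagram in graph-like form.
\item \Cref{prop:dzx_ap_form} puts it in AP-form, using the Schur complementation rule \Cref{prop:dzx_schur_complementation}.
\item \Cref{prop:dzx_reduced_ap_form} puts it in reduced AP-form.
\end{enumerate}
All rewrites are derivable, so \(D_i=N_i\) in \(\dStabZX\) for reduced AP-forms \(N_i\). By soundness, \(\interp{N_1}=\interp{N_2}\). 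The data of a reduced AP-form is exactly the canonical tableau, which \Cref{prop:generating_tableau} shows is uniquely determined by the relation, including the empty case. Therefore \(N_1\) and \(N_2\) are syntactically identical, and \(D_1=D_2\). Together with the previous steps, \(\interp{-}\) is a full, faithful, essentially surjective \dag-CC functor, i.e.\ an equivalence.

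\emph{Main obstacle.} The delicate point is matching the reduced AP-form data precisely to the canonical tableau, as claimed in \Cref{lem:AP_unique}. Concretely, one must compute the interpretation of the form in \Cref{def:dzx_reduced_ap_form}:
\begin{itemize}
\item the \(m\) capped nodes joined by the identity block eliminate \(m\) variables;
\item the remaining data yields the kernel of \(\left[\begin{array}{cc|cc} I_m & L & \Sigma & 0\\ 0&0&-L^\dag & I_{n-m}\end{array}\right]\) up to the permutation \(\varsigma\) and the affine shift.
\end{itemize}
In doing so one must keep careful track of the conjugations \(\bar{(-)}\) introduced by the directed H-boxes, and of the sign conventions. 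I would first verify this on the case \(m=0\), a pure shifted graph state, where the tableau is \([\,\Sigma\mid I\,]\) up to sign. I would then treat a single capped node to see how \(L\) and \(L^\dag\) arise, and induct on \(m\) using scalable spider fusion.
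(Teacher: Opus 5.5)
Your proposal is correct and is essentially the paper's own argument. Soundness comes from checking the axioms, universality from \Cref{lem:AP_unique}, and completeness from rewriting to graph-like form, then to AP-form by Schur complementation, then to reduced AP-form, concluding by uniqueness of the canonical generating tableau of \Cref{prop:generating_tableau}. One correction to the check in your last paragraph: the match with \Cref{prop:generating_tableau} is a reparametrisation, not a literal identity (note $\Sigma\in\Herm[n-m]$ in \Cref{def:dzx_reduced_ap_form} but $\Sigma\in\Herm[m]$ in \Cref{prop:generating_tableau}), since $m$ capped nodes yield a canonical tableau with rank parameter $n-m$ and off-diagonal block $\pm L^\dag$ (possibly after exchanging the $X$ and $Z$ halves, depending on the colour convention); this is a bijection on the data, so your uniqueness argument is unaffected.
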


\subsection{Examples}
We end with examples demonstrating  how the delayed stabilizer ZX-calculus can be used to reason about quantum convolutional codes and lattice codes.

First, we recall the delayed controlled-X gate from
\Cref{ex:delayed_cx_ext,ex:delayed_cx_gamma}.
\begin{example}
  \label{ex:push_pauli_gate}
   If we want to push the Pauli-\(X\) gate past the delayed controlled-\(X\) gate, we need infinite time, as it propagates infinitely far into the future:
      \begin{align*}
      \input{dzx/examples/push_pauli_gate/CX_gun/0.tikz}
      &= \input{dzx/examples/push_pauli_gate/CX_gun/1.tikz}
      = \input{dzx/examples/push_pauli_gate/CX_gun/2.tikz}
      = \input{dzx/examples/push_pauli_gate/CX_gun/3.tikz}
      = \input{dzx/examples/push_pauli_gate/CX_gun/4.tikz} \\
      &= \input{dzx/examples/push_pauli_gate/CX_gun/5.tikz}
      = \overset{\infty}{\cdots} = \input{dzx/examples/push_pauli_gate/CX_gun/6.tikz}
      \end{align*}

   However, using the axioms of \(\dStabZX\), the delayed controlled-X gate can be simplified into a rational multiplier:
      \begin{align*}
      \input{dzx/examples/push_pauli_gate/CX_calculation/0.tikz}
      &= \input{dzx/examples/push_pauli_gate/CX_calculation/1.tikz}
      = \input{dzx/examples/push_pauli_gate/CX_calculation/2.tikz}
      = \input{dzx/examples/push_pauli_gate/CX_calculation/3.tikz}
      = \input{dzx/examples/push_pauli_gate/CX_calculation/4.tikz} \\
      &= \input{dzx/examples/push_pauli_gate/CX_calculation/5.tikz}
      = \input{dzx/examples/push_pauli_gate/CX_calculation/6.tikz}
      = \input{dzx/examples/push_pauli_gate/CX_calculation/7.tikz}
      = \input{dzx/examples/push_pauli_gate/CX_calculation/8.tikz}
      = \input{dzx/examples/push_pauli_gate/CX_calculation/9.tikz}
      \end{align*}
   Which allows us to easily compute its stabilizers using only finite reasoning:
      \begin{align*}
      \input{dzx/examples/push_pauli_gate/CX_commutation/0.tikz}
      &= \input{dzx/examples/push_pauli_gate/CX_commutation/1.tikz}
      = \input{dzx/examples/push_pauli_gate/CX_commutation/2.tikz} \\
      &= \input{dzx/examples/push_pauli_gate/CX_commutation/3.tikz}
      = \input{dzx/examples/push_pauli_gate/CX_commutation/4.tikz}
      = \input{dzx/examples/push_pauli_gate/CX_commutation/5.tikz}
      \end{align*}
\end{example}

Next, we return to the delayed graph state from
\Cref{ex:delayed_graph_state_ext,ex:delayed_graph_state_gamma}:
\begin{example} The tableaux of a delayed graph-state becomes explicit after reduction to the normal form:
  \label{ex:tableaux_delayed_graph}
   \begin{align*}
   \input{dzx/examples/tableaux_delayed_graph/d_graph_calculation/0.tikz}
   &= \input{dzx/examples/tableaux_delayed_graph/d_graph_calculation/1.tikz}
   = \input{dzx/examples/tableaux_delayed_graph/d_graph_calculation/2.tikz}
   = \input{dzx/examples/tableaux_delayed_graph/d_graph_calculation/3.tikz}
   = \input{dzx/examples/tableaux_delayed_graph/d_graph_calculation/4.tikz}
   = \input{dzx/examples/tableaux_delayed_graph/d_graph_calculation/5.tikz} \\
   &= \input{dzx/examples/tableaux_delayed_graph/d_graph_calculation/6.tikz}
   \end{align*}
    This unrolls to an infinite graph state built from tiling square plaquettes in a line:
    \[\begin{tikzpicture}
	\begin{pgfonlayer}{nodelayer}
		\node [style=none] (0) at (11.375, 1.5) {};
		\node [style=none] (1) at (11.375, -1.5) {};
		\node [style=none] (2) at (30.625, -1.5) {};
		\node [style=none] (3) at (30.625, 1.5) {};
		\node [style=bn] (4) at (15, 1) {};
		\node [style=bn] (5) at (15, -1) {};
		\node [style=bn] (6) at (16.5, 0) {};
		\node [style=bn] (7) at (13.5, 0) {};
		\node [style=had] (8) at (14.25, 0.5) {\(a\)};
		\node [style=had] (9) at (15.75, 0.5) {\(d\)};
		\node [style=had] (10) at (15.75, -0.5) {\(c\)};
		\node [style=had] (11) at (14.25, -0.5) {\(b\)};
		\node [style=none] (12) at (13.5, -1.5) {};
		\node [style=none] (13) at (15, 1.5) {};
		\node [style=none] (14) at (16.5, -1.5) {};
		\node [style=none] (15) at (15, -1.5) {};
		\node [style=bn] (16) at (18, 1) {};
		\node [style=bn] (17) at (18, -1) {};
		\node [style=bn] (18) at (19.5, 0) {};
		\node [style=bn] (19) at (16.5, 0) {};
		\node [style=had] (20) at (17.25, 0.5) {\(a\)};
		\node [style=had] (21) at (18.75, 0.5) {\(d\)};
		\node [style=had] (22) at (18.75, -0.5) {\(c\)};
		\node [style=had] (23) at (17.25, -0.5) {\(b\)};
		\node [style=none] (25) at (18, 1.5) {};
		\node [style=none] (26) at (19.5, -1.5) {};
		\node [style=none] (27) at (18, -1.5) {};
		\node [style=bn] (30) at (27, 1) {};
		\node [style=bn] (31) at (27, -1) {};
		\node [style=bn] (32) at (28.5, 0) {};
		\node [style=bn] (33) at (25.5, 0) {};
		\node [style=had] (34) at (26.25, 0.5) {\(a\)};
		\node [style=had] (35) at (27.75, 0.5) {\(d\)};
		\node [style=had] (36) at (27.75, -0.5) {\(c\)};
		\node [style=had] (37) at (26.25, -0.5) {\(b\)};
		\node [style=none] (38) at (27, 1.5) {};
		\node [style=none] (39) at (28.5, -1.5) {};
		\node [style=none] (40) at (27, -1.5) {};
		\node [style=none] (41) at (29.25, 0) {};
		\node [style=none] (42) at (12.75, 0) {};
		\node [style=none] (43) at (29.25, 0) {};
		\node [style=none] (44) at (12.75, 0) {};
		\node [style=none] (45) at (25.5, -1.5) {};
		\node [style=none] (50) at (29.875, 0) {\(\cdots\)};
		\node [style=none] (52) at (12.125, 0) {\(\cdots\)};
		\node [style=bn] (53) at (21, 1) {};
		\node [style=bn] (54) at (21, -1) {};
		\node [style=bn] (55) at (22.5, 0) {};
		\node [style=bn] (56) at (19.5, 0) {};
		\node [style=had] (57) at (20.25, 0.5) {\(a\)};
		\node [style=had] (58) at (21.75, 0.5) {\(d\)};
		\node [style=had] (59) at (21.75, -0.5) {\(c\)};
		\node [style=had] (60) at (20.25, -0.5) {\(b\)};
		\node [style=none] (61) at (19.5, -1.5) {};
		\node [style=none] (62) at (21, 1.5) {};
		\node [style=none] (63) at (22.5, -1.5) {};
		\node [style=none] (64) at (21, -1.5) {};
		\node [style=bn] (65) at (22.5, 0) {};
		\node [style=bn] (66) at (24, 1) {};
		\node [style=bn] (67) at (24, -1) {};
		\node [style=bn] (68) at (25.5, 0) {};
		\node [style=bn] (69) at (22.5, 0) {};
		\node [style=had] (70) at (23.25, 0.5) {\(a\)};
		\node [style=had] (71) at (24.75, 0.5) {\(d\)};
		\node [style=had] (72) at (24.75, -0.5) {\(c\)};
		\node [style=had] (73) at (23.25, -0.5) {\(b\)};
		\node [style=none] (74) at (22.5, -1.5) {};
		\node [style=none] (75) at (24, 1.5) {};
		\node [style=none] (76) at (25.5, -1.5) {};
		\node [style=none] (77) at (24, -1.5) {};
		\node [style=bn] (78) at (25.5, 0) {};
	\end{pgfonlayer}
	\begin{pgfonlayer}{edgelayer}
		\draw [style=background] (3.center)
			 to (2.center)
			 to (1.center)
			 to (0.center)
			 to cycle;
		\draw [style=background] (0.center) to (3.center);
		\draw [style=background] (3.center) to (2.center);
		\draw [style=background] (2.center) to (1.center);
		\draw [style=background] (1.center) to (0.center);
		\draw (7) to (4);
		\draw (4) to (6);
		\draw (6) to (5);
		\draw (5) to (7);
		\draw (12.center) to (7);
		\draw (15.center) to (5);
		\draw (14.center) to (6);
		\draw (13.center) to (4);
		\draw (19) to (16);
		\draw (16) to (18);
		\draw (18) to (17);
		\draw (17) to (19);
		\draw (27.center) to (17);
		\draw (26.center) to (18);
		\draw (25.center) to (16);
		\draw (33) to (30);
		\draw (30) to (32);
		\draw (32) to (31);
		\draw (31) to (33);
		\draw (40.center) to (31);
		\draw (39.center) to (32);
		\draw (38.center) to (30);
		\draw (42.center) to (7);
		\draw (32) to (41.center);
		\draw (45.center) to (33);
		\draw (56) to (53);
		\draw (53) to (55);
		\draw (55) to (54);
		\draw (54) to (56);
		\draw (61.center) to (56);
		\draw (64.center) to (54);
		\draw (63.center) to (55);
		\draw (62.center) to (53);
		\draw (69) to (66);
		\draw (66) to (68);
		\draw (68) to (67);
		\draw (67) to (69);
		\draw (74.center) to (69);
		\draw (77.center) to (67);
		\draw (76.center) to (68);
		\draw (75.center) to (66);
	\end{pgfonlayer}
\end{tikzpicture}
\]
\end{example}

By taking multiple delayed traces of a graph state, we obtain a surface code. For example, we can add another delayed edge to the previous example:
\begin{example}\label{ex:delayed_surface_code}
   Consider the following delayed graph state for \(r\in \N\):
   \begin{align*}
   \input{dzx/examples/delayed_surface_code/0.tikz}
   &= \input{dzx/examples/delayed_surface_code/1.tikz}
   = \input{dzx/examples/delayed_surface_code/2.tikz}
   = \input{dzx/examples/delayed_surface_code/3.tikz}
   = \input{dzx/examples/delayed_surface_code/4.tikz} \\
   &= \input{dzx/examples/delayed_surface_code/5.tikz}
   \end{align*}

   This unrolls to an infinite surface code with square plaquettes wrapped around a cylinder of ``radius'' \(r\): the diamond-shaped finite delayed diagram determines the shape of each plaquette, while the delays determine how successive plaquettes are glued together around the cylinder. For \(r=6\) and \(a=b=c=d=1\), this is depicted in \Cref{fig:unrolled_surface_code}.

   \begin{figure}[h]
      \centering
      \input{figures/intro/general/unrolled_surface_code_v2.tikz}
      \caption{Infinitely unrolled surface code of \Cref{ex:delayed_surface_code}.\\ \noindent Dotted lines denote wires connected by a Hadamard gate.}
      \label{fig:unrolled_surface_code}
   \end{figure}

\end{example}

  \section*{Acknowledgements}
  This work has been partially funded by the European Union through the MSCA SE
  project QCOMICAL and within the framework of ``Plan France 2030'', under
  the research projects EPIQ ANR-22-PETQ-0007 and HQI-R\&D ANR-22-PNCQ-0002.

  \newpage
  \bibliographystyle{eptcs}
  \bibliography{style/bibliography}
  
  \newpage 
  \appendix

  \ifappendixproofs
  \section{Proofs of \Cref{sec:mixed}}
  \IfFileExists{\jobname-pratendmixed.tex}{%
    \printProofs[mixed]%
  }{}

  \section{Proofs of \Cref{sec:salr}}
  \IfFileExists{\jobname-pratendsalr.tex}{%
    \printProofs[salr]%
  }{}

  \section{Proofs of \Cref{sec:dzx}}
  \IfFileExists{\jobname-pratenddzx.tex}{%
    \printProofs[dzx]%
  }{}
  \fi

  \begingroup
  \BeforeBeginEnvironment{definition}{\par}
  \BeforeBeginEnvironment{theorem}{\par}
  \BeforeBeginEnvironment{lemma}{\par}
  \BeforeBeginEnvironment{proposition}{\par}
  \BeforeBeginEnvironment{corollary}{\par}
  
\subsection{Graphical affine algebra}

\begin{figure}[h]
\centering
\[\input{gsa_completeness/graphical_affine_algebra.tikz}\]
\caption{Axioms of the Graphical Affine Algebra~\cite{gsa} (GAA), where \(a, b,z\in \mathbb{K}\) with \(z\neq 0\).}
\label{fig:gaa_axioms}%
\end{figure}

It is easy to see that:

\begin{lemma}
  \label{lem:gaa_functor}%
  The interpretation of the scalable generators defines a strict symmetric
  monoidal functor \(\mathsf{GAA} \to \StabZX\).
\end{lemma}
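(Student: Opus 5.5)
The plan is to define the functor on generators and check that every axiom of \(\mathsf{GAA}\) in \Cref{fig:gaa_axioms} holds after interpretation. Since \(\mathsf{GAA}\) is presented by generators and equations, any assignment of its generators to \(\StabZX\)-diagrams that sends each axiom to a valid \(\StabZX\)-equation extends uniquely to a strict symmetric monoidal functor. Strictness and preservation of the symmetric monoidal structure then come for free.

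First, I would fix the interpretation of the scalable generators, following \cite{gsa}:
\begin{itemize}
\item the copy and delete maps of \(\mathsf{GAA}\) go to scalable green spiders with zero phase;
\item the add and zero maps go to scalable red spiders;
\item a scalar \(a\in\mathbb{K}\) goes to the multiplier labelled \(a\);
\item the affine generator \(1\) goes to the red state with affine phase \(1\).
\end{itemize}
Thick wires are handled by the bundling conventions of \Cref{eq:strict}. The split and merge maps are strict isomorphisms in the strictification, so everything can be checked wire-by-wire on ordinary \(\StabZX\)-diagrams.

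Second, I would verify the axioms group by group.
\begin{itemize}
\item The commutative (co)monoid and bialgebra axioms follow from spider fusion and the bialgebra rule of \Cref{fig:zx_axioms}.
\item The scalar axioms (addition and multiplication of multipliers, copying and adding scalars) follow from the multiplier and Euler/colour-change rules, as in \cite{gsa}.
\item The inverse axiom for nonzero \(z\) uses the H-box definition together with the invertibility of the squeezing multiplier.
\item The affine axioms (copying and deleting the constant \(1\), and the zero map absorbing) use the phase-copy rules and the zero-map rule.
\end{itemize}

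The only step I expect to need care is the affine/empty-relation axioms. Their proof relies on the rule that the phase-\((1,0)\) effect annihilates diagrams up to scalar, so scalars must be tracked consistently. Since we work in \(\Proj\), scalars may be discarded, which should make this routine. Alternatively, all of these checks can be discharged semantically: both sides of each axiom have equal interpretations in \(\ALR\), and \(\StabZX\) is complete for \(\ALR\) by \Cref{theorem:stabzx_acr}, so equality holds in \(\StabZX\) as well.
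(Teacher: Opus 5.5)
Your main route is what the paper's ``it is easy to see'' amounts to, and your assignment matches the \(\ALR\) semantics. You send copy/delete to zero-phase green spiders, add/zero to red spiders, scalars to multipliers and the affine constant to the red state with affine phase \(1\), then check the axioms of \Cref{fig:gaa_axioms}.

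Your semantic fallback is a genuinely different argument, and for the statement as written (labels in \(\F_p\), target \(\StabZX\)) it is cleaner. It becomes essentially one line once you note the following. If a GAA diagram has affine-relation semantics \(R\), the \(\ALR\)-interpretation of its image is the doubling \(\{((\vb x,\vb z),(\vb x',\vb z'))\mid (\vb x,\vb x')\in R,\ \vb z\cdot\vb u=\vb z'\cdot\vb u'\text{ for all }(\vb u,\vb u')\text{ in the linear part of }R\}\), taken to be empty if \(R\) is. This assignment is functorial and agrees with your interpretation on every generator. So soundness of GAA for affine relations, together with the imported completeness result \Cref{theorem:stabzx_acr}, settles all axioms at once. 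Your worry about scalars is also moot: \(\StabZX\) is already taken modulo nonzero scalars here, so the only scalar that matters is the empty relation.

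Be aware, though, that the paper actually relies on this lemma with labels in \(\F_p(\delta)\) inside \(\dStabZX\). It concludes that ``we can work with rational function labelled multipliers'', and \Cref{prop:matprop} uses the lemma to transfer GAA facts to matrices over \(\F_p(\delta)\).

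For that version your semantic shortcut would be circular. \Cref{thm:dzx_completeness} is proved via \Cref{prop:matprop}, which is itself derived from this lemma.

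The syntactic route also cannot rest on \Cref{fig:zx_axioms} alone in that setting. The scalar and inverse axioms for labels \(f/g\) must be derived from the definitions of polynomial and rational multipliers, using the new colour-change and H-box rules of \(\dStabZX\). That is exactly the role of the polynomial-inverse and multiplier-arithmetic lemmas in the well-definedness subsection.
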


Therefore, we can work with rational function labelled multipliers as we would expect.

\subsection{Miscellaneous non-scalable lemmas}
In this appendix, we prove various non-scalable lemmas in \(\dStabZX\) to help us prove completeness.

\begin{lemma}
  \label{lem:box_inverse}
  It follows immediately from \axiomref{D.20} and \axiomref{D.11} that for any invertible \(z\in \F_p(\delta)\):
  \begin{align*}
      \begin{tikzpicture}
	\begin{pgfonlayer}{nodelayer}
		\node [style=none] (0) at (1.25, 0) {};
		\node [style=rhad] (2) at (-0.75, 0) {$\minu z$};
		\node [style=none] (3) at (-1.5, 0) {};
		\node [style=lhad] (4) at (0.5, 0) {$z$};
		\node [style=none] (bg0) at (-1.5, 0.5) {};
		\node [style=none] (bg1) at (1.25, 0.5) {};
		\node [style=none] (bg2) at (1.25, -0.5) {};
		\node [style=none] (bg3) at (-1.5, -0.5) {};
	\end{pgfonlayer}
	\begin{pgfonlayer}{edgelayer}
		\draw [style=background] (bg0.center)
			 to (bg1.center)
			 to (bg2.center)
			 to (bg3.center)
			 to cycle;
		\draw [in=180, out=0] (3.center) to (2);
		\draw [in=180, out=0] (2) to (4);
		\draw [in=180, out=0] (4) to (0.center);
	\end{pgfonlayer}
\end{tikzpicture}

    =
      \begin{tikzpicture}
	\begin{pgfonlayer}{nodelayer}
		\node [style=none] (7) at (1, 0) {};
		\node [style=none] (8) at (-1, 0) {};
		\node [style=none] (bg0) at (-1, 0.5) {};
		\node [style=none] (bg1) at (1, 0.5) {};
		\node [style=none] (bg2) at (1, -0.5) {};
		\node [style=none] (bg3) at (-1, -0.5) {};
	\end{pgfonlayer}
	\begin{pgfonlayer}{edgelayer}
		\draw [style=background] (bg0.center)
			 to (bg1.center)
			 to (bg2.center)
			 to (bg3.center)
			 to cycle;
		\draw (8.center) to (7.center);
	\end{pgfonlayer}
\end{tikzpicture}

  \end{align*}
\end{lemma}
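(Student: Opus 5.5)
The plan is to prove the equation directly from the two cited axioms of \Cref{fig:dzx_axioms}, without going through the derived rational multipliers. This matters because \Cref{lem:polynomial_inverse} and the well-definedness of rational multipliers themselves rely on the present lemma, so appealing to them would be circular.

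First, I would use \axiomref{D.11}, which I read as the rule relating a left-facing H-box labelled \(z\) to its right-facing converse. With it I rewrite the left-facing box \(z\) as a right-facing box composed with the appropriate caps and cups, or equivalently as its dagger. The composite then becomes two H-boxes of the same orientation, labelled \(-z\) and \(z\), possibly with a colour or sign adjustment that \axiomref{D.11} supplies.

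Second, I would apply \axiomref{D.20}, which I expect to be the generalised inverse law for H-boxes: a right-facing box labelled \(w\) followed by the reversed box labelled \(-w\) cancels to the identity wire for nonzero \(w\). This is the translation-invariant analogue of the Fourier-inverse axiom of \(\StabZX\). Instantiating it at \(w=z\), with invertibility guaranteeing that the hypothesis \(w\neq 0\) of \Cref{fig:dzx_axioms} holds, yields the bare wire.

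The main obstacle is bookkeeping of signs and orientation. The label in the statement is \(-z\) on the right-facing box and \(z\) on the left-facing one, and I must check that after the \axiomref{D.11} rewrite the labels match exactly the pattern demanded by \axiomref{D.20}, rather than differing by an antipode. If a stray antipode appears, I would first try to absorb it using the spider identity \axiomref{Z.11}, which removes trivial phase-free spiders as used in the proof of \Cref{lem:polynomial_inverse}. Failing that, I would use the fact that an antipode composed with an H-box labelled \(w\) is the H-box labelled \(-w\), which should itself be an instance of the same axioms at \(z=1\).

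Soundness gives a sanity check: under \(\interp{-}\) both sides are the graph of the identity, since the squeezed Fourier matrices for \(-z\) and for the converse of \(z\) multiply to \(I\).
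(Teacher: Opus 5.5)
There is a genuine gap. The only step in your plan that does any work is your guess that \(\mathsf{D.20}\) says a right-facing box labelled \(w\) followed by a left-facing box labelled \(-w\) is the bare wire. At \(w=-z\) that is the lemma verbatim, so you would be assuming what is to be proved, and \(\mathsf{D.11}\) would play no role. At \(w=z\), the instantiation you actually state, it describes the right-facing box \(z\) followed by the left-facing box \(-z\), which is not the diagram in the statement.

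The paper's own usage rules this reading out. \Cref{lem:box_product} is obtained from \(\mathsf{D.20}\) \emph{alone*}, and it says that the right-facing box \(z\) followed by the left-facing box \(z\) is the antipode. So what \(\mathsf{D.20}\) produces from such a right/left pair is a multiplier (the antipode, for equal labels), not the wire. The semantics force this:
\begin{itemize}
\item the right-facing box \(a\) is \(\Gr\left[\begin{smallmatrix}0 & -\bar{a}^{-1}\\ a & 0\end{smallmatrix}\right]\);
\item the left-facing box \(w\) is \(\Gr\left[\begin{smallmatrix}0 & -w^{-1}\\ \bar{w} & 0\end{smallmatrix}\right]\);
\item the first followed by the second is the multiplier labelled \(-a/w\).
\end{itemize}
This is the antipode for \(a=w\), and the trivial multiplier for \(a=-z\), \(w=z\). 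Your proposal is missing exactly the passage from the output of \(\mathsf{D.20}\) to the bare wire: accounting for the sign on the label \(-z\) and discarding the resulting unit multiplier. That is the role the statement assigns to \(\mathsf{D.11}\).

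Your other two ingredients do not fill this gap. Rewriting the left-facing box with cups and caps only re-expresses it as the transpose of the right-facing box. It cancels nothing, and afterwards your guessed law, which is stated for a right-facing box followed by a left-facing one, no longer matches the diagram. Nor can \(\mathsf{Z.11}\) absorb a stray antipode: it removes phase-free identity spiders, whereas the antipode is interpreted as \(\Gr(-I)\).

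A correct argument takes the pair as it stands and collapses it with \(\mathsf{D.20}\), the same move that gives \Cref{lem:box_product}. It then uses \(\mathsf{D.11}\) for the remaining sign and unit bookkeeping, leaving the multiplier \(1\), i.e.\ the wire. Your remark that \Cref{lem:polynomial_inverse} cannot be invoked, since it depends on this lemma, is correct, as is your semantic sanity check.
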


\begin{lemma}
  \label{lem:box_product}
  It follows immediately from \axiomref{D.20} that for all \(z\in \F_p(\delta)\):
  \begin{align*}
      \begin{tikzpicture}
	\begin{pgfonlayer}{nodelayer}
		\node [style=none] (0) at (1.125, 0) {};
		\node [style=lhad] (1) at (0.375, 0) {$z$};
		\node [style=rhad] (2) at (-0.625, 0) {$z$};
		\node [style=none] (3) at (-1.375, 0) {};
		\node [style=none] (bg0) at (-1.375, 0.5) {};
		\node [style=none] (bg1) at (1.125, 0.5) {};
		\node [style=none] (bg2) at (1.125, -0.5) {};
		\node [style=none] (bg3) at (-1.375, -0.5) {};
	\end{pgfonlayer}
	\begin{pgfonlayer}{edgelayer}
		\draw [style=background] (bg0.center)
			 to (bg1.center)
			 to (bg2.center)
			 to (bg3.center)
			 to cycle;
		\draw [in=180, out=0] (3.center) to (2);
		\draw [in=180, out=0] (2) to (1);
		\draw [in=180, out=0] (1) to (0.center);
	\end{pgfonlayer}
\end{tikzpicture}

    =
      \begin{tikzpicture}
	\begin{pgfonlayer}{nodelayer}
		\node [style=none] (5) at (-1, 0) {};
		\node [style=none] (6) at (1, 0) {};
		\node [style=rmat] (7) at (0, 0) {\(\minu 1\)};
		\node [style=none] (bg0) at (-1, 0.5) {};
		\node [style=none] (bg1) at (1, 0.5) {};
		\node [style=none] (bg2) at (1, -0.5) {};
		\node [style=none] (bg3) at (-1, -0.5) {};
	\end{pgfonlayer}
	\begin{pgfonlayer}{edgelayer}
		\draw [style=background] (bg0.center)
			 to (bg1.center)
			 to (bg2.center)
			 to (bg3.center)
			 to cycle;
		\draw [in=180, out=0] (5.center) to (7);
		\draw [in=180, out=0] (7) to (6.center);
	\end{pgfonlayer}
\end{tikzpicture}

  \end{align*}
\end{lemma}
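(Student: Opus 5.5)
The plan is to unfold both directed \(H\)-boxes using \Cref{def:dzx_directed_fourier_box}, so that the two Fourier boxes meet in the middle. Then use \axiomref{D.20} to collapse them, and cancel the outer rational multipliers against each other. Concretely, the right-facing box labelled \(z\) is, by definition, the rational multiplier \(z\) followed by the plain \(H\)-box. The left-facing box labelled \(z\) is the plain \(H\)-box followed by the converse (left-facing) multiplier \(z\). The left-hand side therefore becomes
\[
  \text{(multiplier } z) \;;\; H \;;\; H \;;\; \text{(converse multiplier } z).
\]

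First, I would rewrite the middle \(H;H\) as the multiplier \(\minu 1\). I expect this to be exactly the content of \axiomref{D.20}, or its specialisation at label \(1\): the undelayed stabilizer fact that the Fourier transform squares to the antipode. Semantically it is the graph of \(\left[\begin{smallmatrix}0&-1\\1&0\end{smallmatrix}\right]^2=-I\). If \axiomref{D.20} is stated for general labelled boxes, I would instead apply it directly to the adjacent pair and skip this intermediate step.

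Second, I would slide the multiplier \(\minu 1\) past the multiplier \(z\). Both are scalar multipliers, and by \Cref{lem:mutliplier_arithmetic} (equivalently, the functor \(\mathsf{GAA}\to\StabZX\) of \Cref{lem:gaa_functor}, which makes multipliers compose by multiplication) they commute. The diagram becomes multiplier \(z\), then converse multiplier \(z\), then multiplier \(\minu 1\).

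Third, I would cancel multiplier \(z\) against converse multiplier \(z\) to the identity. For polynomial \(z\) this is \Cref{lem:polynomial_inverse}. For \(z=f/g\) rational, I would expand by \Cref{def:dzx_rational_multiplier} and apply \Cref{lem:mutliplier_arithmetic} and \Cref{lem:polynomial_inverse} twice. This leaves the multiplier \(\minu 1\), as claimed.

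The main obstacle is matching the precise form of \axiomref{D.20}, which is not visible here: whether it speaks of \(H;H\), of labelled boxes, or of converse boxes. A secondary subtlety is what happens when \(z\) is not invertible. I would check against the semantics that \(z=0\) is either excluded or handled by the zero-labelled conventions. The soundness check \(\operatorname{Gr}\left[\begin{smallmatrix}z&0\\0&\bar z^{\minu1}\end{smallmatrix}\right]\) followed by \(-I\) and then its inverse confirms the target is \(-I\).
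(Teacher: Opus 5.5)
Your derivation is correct for nonzero \(z\), but it is not the paper's route. The paper does no computation: it simply cites \axiomref{D.20}. Judging from \Cref{lem:box_inverse}, where the same axiom plus \axiomref{D.11} turns a right-facing box labelled \(-z\) followed by a left-facing box labelled \(z\) into the identity, \axiomref{D.20} is presumably itself the rule collapsing a right-facing \(H\)-box followed by a left-facing one into a multiplier determined by the two labels. The lemma would then just be its equal-label instance.

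You instead unfold both boxes via \Cref{def:dzx_directed_fourier_box}, collapse the middle with the undelayed identity \(H;H=-1\), and cancel the multiplier \(z\) against its converse using \Cref{lem:polynomial_inverse} and \Cref{lem:mutliplier_arithmetic}. The middle step does not depend on guessing the form of \axiomref{D.20}. Both sides are undelayed diagrams interpreted as the graph of \(-I\), so the equation holds by completeness of \(\StabZX\) (\Cref{theorem:stabzx_acr}).

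Your route is longer and does not really bypass \axiomref{D.20}, since \Cref{lem:polynomial_inverse} is proved via \Cref{lem:box_inverse}. It is, however, non-circular, because neither lemma you cite uses this one. It also makes visible why the answer is \(-1\) and exactly where invertibility of \(z\) enters. There is one small slip: starting from multiplier \(z\), then \(-1\), then the converse multiplier \(z\), you need to commute \(-1\) with the converse multiplier, or move it left past the multiplier \(z\). Either is fine by multiplier arithmetic.

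On your secondary worry: no convention rescues \(z=0\). For \Cref{def:dzx_poly_multiplier} to have the stated interpretation, the zero multiplier must be \(\{((x,0),(0,w))\}\), and its converse is \(\{((0,w),(x,0))\}\). The left-hand side then interprets as \(\{((x,0),(x',0))\}\), not as \(\{((x,w),(-x,-w))\}\). So the lemma has to be read with \(z\neq 0\), as in \Cref{lem:box_inverse}, and that is exactly the case your argument covers.
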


Similarly, it follows easily that:
\begin{lemma}
  \label{lem:compact_antipode}
    \begin{align*}
      \input{gsa_completeness/miscellaneous_non_scalable/lemmas/compact_antipode/statement/0.tikz}
    =
      \input{gsa_completeness/miscellaneous_non_scalable/lemmas/compact_antipode/statement/1.tikz},
      \quad
      \input{gsa_completeness/miscellaneous_non_scalable/lemmas/compact_antipode/statement/2.tikz}
    =
      \input{gsa_completeness/miscellaneous_non_scalable/lemmas/compact_antipode/statement/3.tikz},
      \quad
      \input{gsa_completeness/miscellaneous_non_scalable/lemmas/compact_antipode/statement/4.tikz}
    =
      \input{gsa_completeness/miscellaneous_non_scalable/lemmas/compact_antipode/statement/5.tikz},
      \quad\text{and}\quad
      \input{gsa_completeness/miscellaneous_non_scalable/lemmas/compact_antipode/statement/6.tikz}
    =
      \input{gsa_completeness/miscellaneous_non_scalable/lemmas/compact_antipode/statement/7.tikz}.
  \end{align*}
\end{lemma}

\begin{lemma}
  \label{lem:box_identity}
  For any invertible \(z \in \F_p(\delta)\)
  \begin{align*}
      % [inline block 3: 30 envs, 29971 chars in 7 pieces, piece 1 here, a bare % at each other -> data_tex | \begin{tikzpicture} 	\begin{pgfonlayer}{nodelayer}...]


  \end{align*}
\end{proof}

It follows from a simple induction argument that:
\begin{lemma}
  \label{lem:antipode_spider_symplectic}
  For any \(a,b \in \F_p(\delta)\):
    \begin{align*}
      %

  \end{align*}
\end{lemma}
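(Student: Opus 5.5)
The statement is hidden behind a diagram, so I am assuming it reads as follows. A green spider with phase \(\phasemat{a}{b}\), with an antipode (the \(\minu 1\) multiplier) placed on each of its legs, equals the green spider with phase \(\phasemat{\minu a}{b}\). The symplectic phase is unchanged, and the same holds for the red colour-swapped version.

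I would prove this by induction on the total number \(k\) of legs. The main tool is spider fusion (\Cref{ax:dzx_fusion}), which lets me split a spider into smaller spiders joined by a plain wire. Multipliers then behave as expected because of the graphical affine algebra functor of \Cref{lem:gaa_functor}.

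First the base case, a one-legged spider (a state or effect), where I would unfold \Cref{def:dzx_shifted_spider}. The affine part of the phase is a multiplier by \(a(\delta)\) fed into a phase-free spider. Composing it with the antipode gives the multiplier by \(\minu a(\delta)\), by the multiplier arithmetic of \Cref{lem:mutliplier_arithmetic} and the compact-antipode identities of \Cref{lem:compact_antipode}.

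The symplectic part is built from H-boxes labelled by \(b(\delta)\), and here I would show the antipode cancels. The reason is that \((\minu 1)\,b\,(\minu 1)^\dag=b\), since \(\overline{\minu 1}=\minu 1\). Diagrammatically, I would move the antipode through both ends of the H-box using \Cref{lem:box_product}. The two resulting copies of \(\minu 1\) then cancel, as \(\minu 1\cdot\minu 1=1\).

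For the inductive step I would assume the claim for all spiders with fewer than \(k\) legs. Using fusion, I would split the \(k\)-legged spider into a one-legged phased spider joined to a phase-free spider with \(k\) legs in total, where the base case handles the phased piece. For the phase-free piece, I would use the GAA axioms (the antipode copies through the copy and co-copy structure) to replace the antipodes on its outer legs by one antipode on the internal wire. I would then split that internal antipode into two antipodes, since \(\minu 1\cdot\minu 1=1\), and move one onto each side of the wire. Re-fusing the pieces then gives the spider with phase \(\phasemat{\minu a}{b}\).

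I expect the main obstacle to be the base case for the symplectic phase when \(b(\delta)=0\) or when \(b\) is only defined through the case split in \Cref{def:dzx_shifted_spider}. In that situation the cancellation has to be checked separately for each branch of the definition, using well-definedness (\Cref{prop:dzx_well_defined}). I would handle this by treating \(b=0\) on its own and, when \(b\neq 0\), writing \(b=h/k\) with \(h,k\in\F_p[\delta+\delta^{\minu 1}]\).
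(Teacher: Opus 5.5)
Your plan is a sensible way to fill in what the paper only calls ``a simple induction''. You fuse off the phase with \Cref{ax:dzx_fusion}, slide the antipodes through the phase-free part using the structure carried along \Cref{lem:gaa_functor}, and settle the one-legged case by unfolding \Cref{def:dzx_shifted_spider}. However, the inductive step as written contains a step that fails. Once you have traded the antipodes on the outer legs of the phase-free spider for a single antipode on the internal wire, you cannot ``split that internal antipode into two antipodes, since \(-1\cdot -1=1\)''. One antipode is not equal to two, because \((-1)(-1)=1\neq -1\), so this would leave a stray antipode on the phase-free side that nothing cancels. Delete that sentence. The single antipode is already adjacent to the one-legged phased spider, so the base case turns that spider into the one with phase \((-a,b)\), and you then re-fuse.

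Note that your inductive hypothesis is then never used. The induction you actually need sits in the claim that, for a phase-free spider of any arity, antipodes on a set of legs equal antipodes on the complementary legs. You prove this by induction on the number of legs, using the GAA axioms for how scalars pass through copying and addition, together with \Cref{lem:compact_antipode} for bent legs.

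There are two smaller issues in the base case. To clear the antipodes from the two ends of the H-box loop, the relevant fact is \Cref{lem:box_antipode} (boxes commute with antipodes), after which the two antipodes cancel; \Cref{lem:box_product} does not give this. Any spiders with \(\F_p\)-valued phases that appear when you unfold the definition are covered by the undelayed version of the statement. That version is derivable in \(\StabZX\) by the completeness in \Cref{theorem:stabzx_acr}.
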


\begin{lemma}
  \label{lem:colour_inverted}
  It follows from \axiomref{D.20} and \Cref{lem:box_inverse} that for any invertible
  \(z \in \F_p(\delta)\),
  \begin{align*}
      %
 
  \end{align*}
\end{lemma}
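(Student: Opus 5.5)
The statement is a colour-inversion rule for spiders conjugated by \(z\)-labelled \(H\)-boxes. The idea is to reduce it to the unlabelled colour change already in the axioms, by factoring each \(z\)-box into a standard Hadamard and a multiplier, and then cancelling the leftover boxes in pairs using \Cref{lem:box_inverse}.

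\textbf{Unfolding the boxes.} Since \(z\in\F_p(\delta)\) is invertible, \Cref{def:dzx_directed_fourier_box} lets me rewrite each directed \(H\)-box labelled \(z\) (and \(-z\)) as a rational multiplier followed by an unlabelled Hadamard. This moves all the \(z\)-dependence into multipliers on the legs of the spider.

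\textbf{Applying \axiomref{D.20}.} Next I apply \axiomref{D.20}, which I read as the generalised colour-change/\(H\)-box composition axiom. It pushes the unlabelled Hadamards through the spider and swaps its colour. The multipliers are left on the legs, now with conjugate labels because they have passed through a Fourier box.

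\textbf{Cancelling the multipliers.} On each leg, the remaining multiplier is paired with the inverse \(H\)-box that the statement places there. Reading \Cref{lem:box_inverse} from right to left, each pair \(\minu z\)–\(z\) collapses to a bare wire. This is leg-by-leg, so an induction on the number of legs, using spider fusion to split off one leg at a time, handles arbitrary arity.

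\textbf{Expected obstacle: bookkeeping of signs and conjugates.} Passing a multiplier labelled \(h(\delta)\) through a Fourier box yields \(\bar h^{\minu 1}\) on the symplectic component, as the interpretation \(\Gr\left[\begin{smallmatrix} f & 0\\ 0 & \bar f^{\minu 1}\end{smallmatrix}\right]\) shows. Also, \Cref{lem:box_product} introduces a \(\minu 1\) multiplier whenever two same-direction boxes meet.

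To control this, I will first fix the directions of all boxes so that every adjacent pair is of the exact shape in \Cref{lem:box_inverse}. Any stray \(\minu 1\) multipliers I will absorb into the spider using \Cref{lem:antipode_spider_symplectic}, which negates the phase. Finally, as a soundness check, I will compare both sides under \(\interp{-}\) in \(\sALR\) to confirm that the phases match.
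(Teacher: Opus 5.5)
The gap is in your cancellation step. \Cref{lem:box_inverse} equates one specific pair of labelled boxes with a bare wire: the right-facing box labelled \(-z\) followed by the left-facing box labelled \(z\). It says nothing about a multiplier next to an \(H\)-box, or two multipliers next to each other. Your first step unfolds every labelled box via \Cref{def:dzx_directed_fourier_box}, so after the colour change each leg carries a bare rational multiplier. There is then no box pair for \Cref{lem:box_inverse} to act on. You would have to refold, undoing step one, or switch to \Cref{lem:polynomial_inverse} and the multiplier arithmetic, which you do not cite.

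The unfolding also causes trouble earlier. A right-facing box on an input unfolds as multiplier-then-Hadamard, but a left-facing box on an input unfolds as Hadamard-then-multiplier, and dually on outputs. For boxes facing the other way from the colour rule you apply (which is what an ``inverted'' rule is about), the multiplier sits between the Hadamard and the spider. The Hadamard cannot be absorbed until that multiplier is moved. One option is commuting it past the Hadamard, which turns \(h\) into \(\bar h^{-1}\). The other is pushing it into the spider, which needs the same multiplier on every leg and rescales the phase. Neither move is supplied by \axiomref{D.20} or \Cref{lem:box_inverse}. Your remark that labels become conjugated ``because they have passed through a Fourier box'' is exactly this missing lemma, and \Cref{lem:box_product} and \Cref{lem:antipode_spider_symplectic} do not provide it. Also, collapsing a pair means reading \Cref{lem:box_inverse} left to right, not right to left.

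The fix is to stay at the level of labelled boxes throughout. \axiomref{D.20} is already a rule about \(z\)-labelled boxes valid for every invertible \(z\). That is why \Cref{lem:box_inverse} ``follows immediately from \axiomref{D.20} and \axiomref{D.11}'', and why the present lemma carries the hypothesis ``for any invertible \(z\)''. The paper's one-line justification cites exactly these two ingredients. You instantiate \axiomref{D.20} at the given label \(z\), not at \(z=1\) after unfolding. Then you remove the adjacent pairs of oppositely directed boxes with \Cref{lem:box_inverse}. This needs no induction on legs via fusion and no detour through multipliers.

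Finally, the comparison in \(\sALR\) can only be a sanity check. This lemma is used in \Cref{lem:scalable_arrow_phase}, and hence in the reduction to reduced AP-form, so appealing to completeness here would be circular.
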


\begin{lemma}
  \label{lem:copy_swapped}
  A colour-swapped version of \axiomref{Z.7} is derivable: for any \(a \in \F_p(\delta)\),
  \begin{align*}
      %

  \end{align*}
\end{proof}

It follows from a simple induction proof that:
\begin{proposition}
  \label{gsa:prop:zero_normal_forms}
  For any \(m,n \in \N\) and \(D \in \ZX(m,n)\) and invertible \(a \in \F_p(\delta)\):
  \begin{align*}
      %

  \end{align*}
\end{proposition}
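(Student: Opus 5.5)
The plan is a structural induction on the diagram \(D\in\ZX(m,n)\). I read the statement as saying that tensoring \(D\) with the zero scalar (the white spider with invertible affine phase \(a\) and no legs) collapses \(D\) to the canonical disconnected diagram: green effects on all \(m\) inputs, green states on all \(n\) outputs, and the zero scalar. Since \(D\) is built from generators by composition, tensor and the compact structure, the proof reduces to three steps: the base cases, closure under \(\otimes\), and closure under \(\circ\).

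First I will establish two auxiliary facts.
\begin{itemize}
\item[(i)] The zero scalar does not depend on \(a\). By \Cref{lem:colour_inverted} and \Cref{lem:antipode_spider_symplectic}, the phase \(a\) can be rescaled by any invertible multiplier through the legless spider, so all invertible \(a\) give the same zero.
\item[(ii)] The zero absorbs every scalar diagram \(s\): the zero tensored with \(s\) equals the zero.
\end{itemize}
For (ii) I will let the zero ``explode'' a single wire. Using spider fusion (\Cref{ax:dzx_fusion}), I split the zero into a white spider with phase \(a\) attached to a plain wire. The colour-swapped copy rule of \Cref{lem:copy_swapped} then pushes a white affine-phase state through any green spider it meets, duplicating it onto every other leg. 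Iterating this propagates the non-zero affine phase until every wire in \(s\) is cut. Each cut leaves a pair of a copied state and the original phase, and these recombine into a zero of the form covered by (i).

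For the base cases I will treat each generator \(g\) in turn: green and red spiders with rational and self-conjugate phases, directed H-boxes, multipliers, delays, and cups and caps. In each case I copy the zero onto every leg of \(g\) using \Cref{lem:copy_swapped}, together with its colour-change variant obtained via \Cref{lem:box_inverse}. After copying, \(g\) becomes a closed scalar, which (ii) absorbs, and the legs are left carrying the required green effects and states.

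The tensor step is immediate, using \(0\otimes 0=0\), which is an instance of (ii). For composition, composing two normal forms glues green states to green effects along the shared boundary. The result is a disjoint collection of closed scalars plus the zero, which (ii) absorbs.

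I expect the main obstacle to be the base cases with genuinely delayed or rational labels, namely multipliers labelled \(f(\delta)/g(\delta)\) and spiders defined by case distinction (\Cref{def:dzx_shifted_spider}). The copy rule must be shown to commute with the feedback loops inside these definitions. I will handle this by unfolding each definition into undelayed \(\ZX\) generators together with delays. Delays need only the sliding equations \eqref{eq:fdzx_sliding}, applied to the zero-phase copies, which commute with delays. The remaining pieces are exactly the finite \(\ZX\) cases treated in \cite{gsa}.
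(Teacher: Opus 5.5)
Your inductive skeleton (base cases on generators, closure under \(\otimes\) and \(\circ\), absorbing leftover closed scalars) matches the paper's ``simple induction''. The gap is the step that does the real work: making the legless zero act on the rest of the diagram. The rules you cite cannot do this.

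\textbf{Why the cited rules fail.} Spider fusion only merges spiders that share a wire, so it cannot attach the zero to a wire of \(D\) or \(s\). Moreover, a white \((a,0)\)-spider sitting on a wire is the Pauli \(X^a\), an invertible map, not a cut. Even a white affine state that is already connected stops propagating at the first white spider. \Cref{lem:copy_swapped} copies it through green spiders, but a white spider just absorbs it into its phase and keeps its other legs connected. So your claim in (ii), that iterating the copy rule cuts every wire, is false.

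\textbf{The obstruction is structural.} Send each diagram to the \(\sALR\)-interpretation of the diagram obtained by setting every affine phase to \(0\). This assignment respects everything you use:
\begin{itemize}
\item spider fusion and the (colour-swapped) copy rules;
\item the box, antipode and colour-change lemmas;
\item the sliding equations.
\end{itemize}
Each of these is sound for all values of the affine parameters, zero included. However, this interpretation sends the zero scalar to the legless \((0,0)\)-spider, which is the unit scalar. Under it, the proposition would assert that every \(D\) equals its disconnected normal form, and this already fails for the identity wire. Hence no argument built only from your listed tools can prove the statement. Your fact (ii) and the base-case step ``copy the zero onto every leg of \(g\)'' are exactly where it breaks; your fact (i) is fine.

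\textbf{What is missing.} You need a rule that is specific to the zero: the zero scalar next to a wire equals the zero scalar next to the cut wire (a green effect followed by a green state). This is the empty-relation axiom of graphical affine algebra, or its counterpart in the stabilizer axiomatisation. The paper's induction depends on a rule of this kind, and its failure in the above interpretation shows it cannot be derived from the rules you list.

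With this rule the induction is short:
\begin{itemize}
\item Cut every wire of \(D\). This leaves the required green effects and states on the boundary, together with a closed scalar for each generator.
\item Absorb those scalars. Nonzero ones go by the ordinary state/effect rules. Further zeros go by cutting their single internal wire (after a colour change if needed), and (i) then normalises the remaining phase.
\end{itemize}
Delays and rational labels need no special treatment, since cutting a wire does not depend on what sits at its ends.
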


\begin{lemma}
  \label{lem:box_antipode}
  Boxes and antipodes commute: for any invertible \(z \in \F_p(\delta)\),
  \begin{align*}
      %

  \end{align*}
\end{proof}

\begin{lemma}
  \label{lem:box_opposite_inverse}
  The opposite-inverse box can be written as follows: for any invertible \(z \in \F_p(\delta)\),
  \begin{align*}
      %

  \end{align*}
\end{lemma}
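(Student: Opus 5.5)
The plan is to obtain the identity purely algebraically from the two-box lemmas already established, namely \Cref{lem:box_inverse} and \Cref{lem:box_product}, together with \Cref{lem:box_antipode}. It should not be necessary to unfold the boxes into multipliers and Hadamards. I read the opposite-inverse box as the box of the opposite orientation labelled by the inverse parameter. The target equation should then express it, for invertible \(z\), as a box of the original orientation decorated with an antipode, i.e.\ the multiplier \(\minu 1\). The strategy is to show that both sides are two-sided inverses of the same invertible diagram.

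First, \Cref{lem:box_inverse} says that the right-facing box labelled \(\minu z\) followed by the left-facing box labelled \(z\) is the identity wire. By the dagger (equivalently, by bending wires with the compact structure, which swaps left- and right-facing boxes), we also get the composite in the other order. So each of these boxes is an isomorphism with an explicit two-sided inverse. Next, I would use \Cref{lem:box_antipode} to slide the antipode through a box, which turns a box labelled \(\minu z\) into an antipode composed with the box labelled \(z\). Finally, \Cref{lem:box_product} (two boxes with the same label composing to the multiplier \(\minu 1\)) absorbs or produces the antipodes: since \(\minu 1\) is an involution by \Cref{lem:compact_antipode}, postcomposing with \(\minu 1\) on both sides exchanges the "product" form and the "inverse" form.

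Concretely, I would begin from the right-hand side of the target and precompose it with the box whose inverse it should be. Using \Cref{lem:box_antipode}, I would move the antipode to the outside, and then apply \Cref{lem:box_product} or \Cref{lem:box_inverse} to collapse the two adjacent boxes into the identity, possibly after cancelling a pair of antipodes. Uniqueness of inverses then identifies the right-hand side with the opposite-inverse box. For the inverse parameter \(z^{\minu 1}\), I would pass between \(z\) and \(z^{\minu 1}\) labels using \Cref{lem:colour_inverted}, or the rational multiplier arithmetic of \Cref{lem:mutliplier_arithmetic}, which expresses a box labelled \(z^{\minu 1}\) through the converse of the \(z\)-multiplier.

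The main obstacle I anticipate is bookkeeping of orientation and sign. The boxes are directed, and the label \(h(\delta)\) enters the interpretation conjugated, as \(h(\delta^{\minu 1})\), on the dual side. Bending a wire therefore replaces \(z\) by \(\bar z\) rather than leaving it fixed, so I must check that each lemma is applied with the correct orientation. If that check fails, I would verify the candidate equation semantically in \(\sALR\) via the graph of \(\left[\begin{smallmatrix}0 & -z^{\minu 1}\\ z & 0\end{smallmatrix}\right]\) to pin down the right signs and conjugations, and then redo the diagrammatic chain accordingly.
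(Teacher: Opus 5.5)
The missing idea is the step that actually produces the inverse label. The lemmas you plan to chain are \Cref{lem:box_inverse}, \Cref{lem:box_product}, \Cref{lem:box_antipode} and \Cref{lem:compact_antipode}. They involve only boxes labelled $\pm z$, antipodes, cups and caps. An antipode merely negates a label: semantically, the right-facing box $w$ followed by the multiplier $-1$ is the right-facing box $-w$, both being $\Gr\left[\begin{smallmatrix}0 & \bar w^{-1}\\ -w & 0\end{smallmatrix}\right]$. So neither a direct chain of these lemmas nor an inverse-uniqueness argument built on them can reach an equation containing a box labelled $z^{-1}$ or $-z^{-1}$. In particular, if the box of the original orientation in your guessed target carries the label $\pm z$, that target is not even sound.

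Something that mentions the inverse label must be invoked, and your plan never pins it down. \Cref{lem:colour_inverted} is cited without saying what it provides. \Cref{lem:mutliplier_arithmetic} acts only on multipliers, so using it means trading boxes for multipliers and Hadamards, which is precisely the unfolding you said you would avoid. The paper does not argue via uniqueness of inverses at all. Its proof is a direct five-step rewrite: it opens with axiom $\mathsf{D.14}$, uses \Cref{lem:box_product} and \Cref{lem:box_antipode} to create and move an antipode, and closes with axiom $\mathsf{D.1}$ and the affine-algebra axioms $\mathsf{A.7}$, $\mathsf{A.13}$. The passage between $z$ and $z^{-1}$ is carried by these axioms, not by the box lemmas.

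Your inverse-uniqueness frame has a further hole. Uniqueness needs one-sided inverses in opposite orders: from $X;L=\mathrm{id}$ and $L;Y=\mathrm{id}$ you conclude $X=Y$. Both box-composition lemmas, \Cref{lem:box_inverse} and \Cref{lem:box_product}, put the right-facing box first. Your claim that the dagger or wire-bending supplies the other order is false. Bending reverses composition and swaps orientations, so it sends ``right-facing box $-z$ then left-facing box $z$ equals the identity'' to ``right-facing box $z$ then left-facing box $-z$ equals the identity''. That is the same lemma with $z$ replaced by $-z$. The dagger, which semantically sends each box to its inverse, just returns the original equation. So you never obtain a composite with the left-facing box first, which is what your argument needs. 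Finally, checking the candidate equation in $\sALR$ can guide signs and conjugations but cannot replace a derivation, since lemmas like this one feed into the completeness result \Cref{thm:dzx_completeness}.
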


\begin{proof}
  \begin{align*}
    &
      \input{gsa_completeness/miscellaneous_non_scalable/lemmas/box_opposite_inverse/proof/0.tikz}
    \steq{\axiomref{D.14}}
      \input{gsa_completeness/miscellaneous_non_scalable/lemmas/box_opposite_inverse/proof/1.tikz}
    \steq{\lemref{lem:box_product}}
      \input{gsa_completeness/miscellaneous_non_scalable/lemmas/box_opposite_inverse/proof/2.tikz} 
    \steq{\lemref{lem:box_antipode}}
      \input{gsa_completeness/miscellaneous_non_scalable/lemmas/box_opposite_inverse/proof/3.tikz}
    \\
    &\steq{\axiomref{D.1}}
      \input{gsa_completeness/miscellaneous_non_scalable/lemmas/box_opposite_inverse/proof/4.tikz}
    \steq{\axiomref{A.7} \\ \axiomref{A.13}}
      \input{gsa_completeness/miscellaneous_non_scalable/lemmas/box_opposite_inverse/proof/5.tikz}
  \end{align*}
\end{proof}

\begin{lemma}
  \label{lem:box_swapped}
  For any invertible \(z \in \F_p(\delta+\delta^{\minu 1})\), 
  \begin{align*}
      % [inline block 4: 25 envs, 22687 chars in 4 pieces, piece 1 here, a bare % at each other -> data_tex | \begin{tikzpicture} 	\begin{pgfonlayer}{nodelayer}...]


  \end{align*}
\end{lemma}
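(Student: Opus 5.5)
The plan is to reduce the statement to the directed boxes of \Cref{def:dzx_directed_fourier_box} and then use self-conjugacy. The hypothesis \(z\in\F_p(\delta+\delta^{\minu 1})\) is exactly what makes the right- and left-facing boxes labelled \(z\) coincide (\Cref{def:dzx_undirected_fourier_box}). I therefore read the lemma as saying that the undirected \(H\)-box labelled \(z\) is invariant under the swap of its orientation, i.e.\ under bending its wires with cups and caps (equivalently, under the colour/direction swap that appears in its statement). The reason it should hold is that, semantically, transposing \(\Gr\left[\begin{smallmatrix}0 & -\bar z^{\minu 1}\\ z & 0\end{smallmatrix}\right]\) replaces \(z\) by \(\bar z\), and \(\bar z=z\) by \Cref{prop:self-conj-rational}.

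The first step is to write \(z=h/k\) with \(h,k\in\F_p[\delta+\delta^{\minu 1}]\), using \Cref{lem:self-conj-rational}. I then expand the box as a rational multiplier followed by a plain Hadamard, as in \Cref{def:dzx_directed_fourier_box}. The rational multiplier is in turn a polynomial multiplier \(h\) composed with the converse of \(k\) (\Cref{def:dzx_rational_multiplier}).

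The second step is to bend the wires. The converse of a polynomial multiplier is by definition the multiplier with its wires bent by cups and caps, so bending turns each directed factor into its converse. I would then push the plain Hadamard through these converses. For that I would use \Cref{lem:box_antipode} (boxes commute with antipodes), \Cref{lem:box_opposite_inverse} (to rewrite an opposite-facing box as an inverse box with an antipode), and \Cref{lem:box_product} and \Cref{lem:box_inverse} to cancel the leftover pairs of boxes and antipodes. After these rewrites the bent diagram should be a left-facing box whose label is the conjugate \(\bar z\); by \Cref{def:dzx_undirected_fourier_box} that left-facing box is the original undirected box once \(\bar z=z\) is used. Any stray antipodes should then cancel in pairs by \Cref{lem:compact_antipode}.

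The main obstacle is the second step. I need to show syntactically that bending a polynomial multiplier \(f(\delta)\) yields the multiplier labelled \(f(\delta^{\minu 1})\), which amounts to showing that the transpose of the delay is the inverse delay. My first approach would be induction on degree using \Cref{def:dzx_poly_multiplier}. Spiders are invariant under bending, so each inductive step reduces to the case of a single delay, and that case follows from the sliding equations \eqref{eq:fdzx_sliding} together with \Cref{lem:polynomial_inverse}. If the induction becomes unwieldy, the fallback is semantic. Both sides denote the same shifted Lagrangian relation in \(\sALR\), and I would bring both to AP-form via \Cref{lem:dzx_graph_form} and \Cref{prop:dzx_ap_form}. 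However, this fallback is only available if the lemma is not itself used in the proof of \Cref{prop:dzx_ap_form}, so the diagrammatic induction is the primary route.
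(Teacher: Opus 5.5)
There is a genuine gap, beginning with what you set out to prove. The lemma has two equations. The paper's proof says the first \emph{is} axiom $\mathsf{D.19}$, and only the second is derived, by a short diagrammatic rewrite from the first.

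Several things suggest that both equations decompose the $z$-box rather than bend it: the hypotheses ($z$ invertible as well as self-conjugate), the name (compare \Cref{lem:copy_swapped}, a colour-swapped version of an axiom), and the uses in \Cref{lem:symplectic_states} and \Cref{lem:scalable_arrow_phase}, where the lemma is followed by spider fusion. On that reading, both equations express the undirected $z$-box through spiders whose symplectic phases are built from $z$ and $z^{-1}$, as in the generalised Euler decomposition, and the second equation is the colour swap of the first.

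The property you target, invariance of the undirected box under bending, holds equally for $z=0$, so your reading leaves the invertibility hypothesis idle. It also needs no induction. The left-facing multiplier is by definition the bent right-facing one, and the plain Hadamard is symmetric. Hence bending the right-facing $h$-box gives the left-facing $h$-box for every $h$, and for self-conjugate $z$ the two coincide by \Cref{def:dzx_undirected_fourier_box}.

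What is actually required is to take the first equation as $\mathsf{D.19}$ and derive its colour-swapped form. One possible route:
\begin{enumerate}
\item rewrite the red spiders as Hadamard-conjugated green spiders via the colour-change rule;
\item apply $\mathsf{D.19}$ with label $z^{-1}$, which is again invertible and self-conjugate;
\item absorb the leftover Hadamards using the box inverse and box product lemmas.
\end{enumerate}

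Independently of the target, the step you single out as the main one is false. Bending the polynomial multiplier $f$ gives its converse, the left-facing multiplier. By \Cref{def:dzx_rational_multiplier} and the lemma that left- and right-facing polynomial multipliers are mutually inverse, this is the multiplier $1/f$: it acts by $f^{-1}$ on the $X$-part and by $\bar f$ on the $Z$-part. The multiplier labelled $\bar f$ instead acts by $\bar f$ and $f^{-1}$. The two agree for $f=\delta$, since $1/\delta=\bar\delta$, which is why the single-delay case looks right. They already differ for $f=1+\delta$.

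Your induction over \Cref{def:dzx_poly_multiplier} also fails structurally. Bending turns the green copy spider into a green merge and the red addition into a red co-addition, so the bent diagram is not a polynomial multiplier of any label. Conjugation of labels comes from the Hadamard, which exchanges the $X$-part (scaled by $f$) with the $Z$-part (scaled by $\bar f^{-1}$), not from cups and caps.

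Finally, your semantic fallback is unavailable, as you feared. This lemma is used in \Cref{lem:scalable_arrow_phase}, which in turn is used in the proof of \Cref{prop:dzx_reduced_ap_form}. Appealing to AP-forms or to completeness would therefore be circular.
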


\begin{proof}
  The first equation  is exactly \(\axiomref{D.19}\). For the second equation, observe:
    \begin{align*}
      &
        %

    \end{align*}
\end{proof}

\begin{lemma} 
  \label{lem:push_pauli_state}
  Strictly-affine red states absorb arbitrary phases and vice-versa:
  \begin{align*}
      %

  \end{align*}
\end{proof}

\begin{lemma}
  \label{lem:symplectic_states}
  States with non-zero symplectic part can all be represented using both
  green and red spiders: for any \(a \in \F_p(\delta)\) and invertible \(z \in \F_p(\delta+\delta^{\minu 1})\),
  \begin{align*}
      %

  \end{align*}
\end{lemma}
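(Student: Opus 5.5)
The plan is to derive the identity inside \(\dStabZX\) rather than check it semantically. Completeness is not yet available at this point, and these lemmas feed into the proof of \Cref{thm:dzx_completeness}. A semantic computation in \(\sALR\) is still useful, to pin down which red phase must appear.

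\textbf{Semantic target.} A green state with phase \((a(\delta),z(\delta))\) denotes the affine Lagrangian line \(\{(x,\,z x + a)\}\). Since \(z\) is invertible, this can be rewritten as \(\{(z^{\minu 1}\zeta - z^{\minu 1}a,\ \zeta)\}\). That is the red state with symplectic phase \(z^{\minu 1}\) and affine phase \(-z^{\minu 1}a\), up to the sign conventions of the figure. Self-conjugacy of \(z\) gives self-conjugacy of \(z^{\minu 1}\), so the red spider is well-formed by \Cref{def:dzx_shifted_spider}. I would fix the exact phases from this computation before rewriting.

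\textbf{Step 1: split off the affine part.} Using spider fusion (\Cref{ax:dzx_fusion}), write the green \((a,z)\) state as the green \((0,z)\) state followed by a green \((a,0)\) phase gate.

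\textbf{Step 2: handle the purely symplectic state.} For the green \((0,z)\) state, unfold \Cref{def:dzx_shifted_spider}. In the invertible case this expresses it through an undirected H-box labelled \(z\) (\Cref{def:dzx_undirected_fourier_box}) applied to a red unit, or equivalently a green unit. Then use \Cref{lem:colour_inverted} together with \Cref{lem:box_inverse} and \Cref{lem:box_product}. These should move the box through, invert its label to \(z^{\minu 1}\), and change the colour, producing the red \((0,z^{\minu 1})\) state. If an antipode \(\minu 1\) appears along the way, I would absorb it using \Cref{lem:box_antipode} and \Cref{lem:antipode_spider_symplectic}.

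\textbf{Step 3: move the affine phase across.} The green \((a,0)\) gate is now applied to a red state with nonzero symplectic phase. Use \Cref{lem:push_pauli_state}, which says strictly-affine phases are absorbed by states, to push it into the red state. This multiplies it by \(z^{\minu 1}\) and flips its sign, giving the red phase \((-z^{\minu 1}a,\,z^{\minu 1})\). The reverse direction, from red to green, follows by the colour-swapped argument, or by applying the dagger and colour change.

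\textbf{Expected obstacle.} The main difficulty is Step 2: getting the case analysis in the definition of generalised spiders right. The symplectic phase is presented as a ratio \(h/k\) of self-conjugate Laurent polynomials, and the label must end up as exactly \(z^{\minu 1}\) rather than \(\bar z^{\minu 1}\) or \(-z^{\minu 1}\). I would first settle the case where \(z\) is a self-conjugate Laurent polynomial. I would then reduce the general rational case to it via the multiplier arithmetic of \Cref{lem:mutliplier_arithmetic}, relying on the well-definedness result \Cref{prop:dzx_well_defined} so that the choice of representation does not matter.
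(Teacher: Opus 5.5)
Your semantic target is correct. It is the one-wire case of \Cref{gsa:lem:scalable_symplectic_states}: the green \((a,z)\) state equals the red \((-z^{-1}a,z^{-1})\) state. However, Step~3 fails as justified. \Cref{lem:push_pauli_state} says that a \emph{strictly affine} state, i.e.\ one with zero symplectic part, absorbs an arbitrary phase of the other colour. After Step~2 you have a green \((a,0)\) phase acting on the red state \((0,z^{-1})\), whose symplectic part \(z^{-1}\) is nonzero, so the lemma does not apply. Moreover, the effect you want is not an absorption but a shift of the red affine phase by \(-z^{-1}a\).

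What you need at that point is a Pauli-push rule of the kind stated in \Cref{lem:pauli_push}: a spider with phase \((c,d)\) lets an affine phase of the other colour through, and its own affine phase becomes \(ad+c\). The paper proves the case \(d\neq 0\) of that lemma by invoking the state-change lemma, i.e.\ the very statement you are proving. So detaching the affine phase as a gate and then pushing it back through a symplectic red spider is circular in this development.

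The way out is to keep the affine phase on a strictly affine \emph{state}: the green \((a,z)\) state is the green \((a,0)\) state followed by the \((0,z)\) phase. You then rewrite the symplectic part around it, so that the only affine manipulation left is a genuine absorption. The paper's proof is the five-step chain \(\mathsf{D.20}\), \Cref{lem:box_swapped}, \Cref{ax:dzx_fusion}, \Cref{lem:push_pauli_state}, \Cref{ax:dzx_fusion}, starting from the state with its full phase. There \Cref{lem:push_pauli_state} can only be used in its absorption sense.

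Step~2 is also not yet a derivation. In the appendix the rational spiders and H-boxes are primitives of \Cref{fig:dzx_axioms}. The axioms there and \Cref{lem:box_swapped} hold for every nonzero self-conjugate rational label. So there is no need to unfold \Cref{def:dzx_shifted_spider}, split into cases on \(h/k\), or reduce to Laurent polynomials.

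The unfolding you guess cannot be right either. An H-box sends the red unit to the green unit (semantically \((0,\zeta)\mapsto(-z^{-1}\zeta,0)\)) and vice versa, so ``an H-box labelled \(z\) applied to a unit'' has zero symplectic part. You also leave open how the label actually becomes \(z^{-1}\). In the paper this work is done by \(\mathsf{D.20}\) followed by \Cref{lem:box_swapped}, and neither appears in your plan.
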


\begin{proof}
  It suffices to prove the first claim, as this immediately implies the other:
  \begin{align*}
    &
      \input{gsa_completeness/miscellaneous_non_scalable/lemmas/symplectic_states/proof/0.tikz}
    \steq{\axiomref{D.20}} 
      \input{gsa_completeness/miscellaneous_non_scalable/lemmas/symplectic_states/proof/1.tikz} 
    \steq{\lemref{lem:box_swapped}}
      \input{gsa_completeness/miscellaneous_non_scalable/lemmas/symplectic_states/proof/2.tikz} 
    \steq{\axref{ax:dzx_fusion}}
      \input{gsa_completeness/miscellaneous_non_scalable/lemmas/symplectic_states/proof/3.tikz}
    \\
    &\steq{\lemref{lem:push_pauli_state}}
      \input{gsa_completeness/miscellaneous_non_scalable/lemmas/symplectic_states/proof/4.tikz} 
    \steq{\axref{ax:dzx_fusion}}
      \input{gsa_completeness/miscellaneous_non_scalable/lemmas/symplectic_states/proof/5.tikz}
  \end{align*}
  The second equation follows from an analogous argument.
\end{proof}

\begin{lemma}
  \label{lem:pauli_push}
  Red spiders with arbitrary phases copy affine green spiders and vice-versa:
  \[
      % [inline block 5: 16 envs, 22664 chars in 4 pieces, piece 1 here, a bare % at each other -> data_tex | \begin{tikzpicture} 	\begin{pgfonlayer}{nodelayer}...]


  \]
\end{lemma}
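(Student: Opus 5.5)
The plan is to reduce the lemma to the phase-free copy rule together with the already established absorption lemma \Cref{lem:push_pauli_state}. The colour-swapped half will then follow formally by conjugating with H-boxes. I treat the case of a green affine state, with phase $(a(\delta),0)$, plugged into a red spider with arbitrary phase $(c(\delta),d(\delta))$ and $n$ outputs; the other half is handled at the end.

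The first step is to use spider fusion (\Cref{ax:dzx_fusion}) in reverse. This splits the red spider into a phase-free red spider with $n+1$ legs, and a one-legged red effect carrying the whole phase $(c(\delta),d(\delta))$ attached to the extra leg. The second step is to push the green affine state through the phase-free red spider using the copy axiom \axiomref{Z.7}, or its colour-swapped form \Cref{lem:copy_swapped}, depending on which colour does the copying in the paper's conventions. This yields $n+1$ copies of the green $(a(\delta),0)$ state, one on each leg. Because \axiomref{Z.7} is stated for an arbitrary affine label $a\in\F_p(\delta)$, no induction on polynomial degree is needed at this point. For $n=0$ the same argument applies with only the extra leg present.

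The third step deals with the leftover copy, which now meets the red phase effect and forms a scalar diagram. Using \Cref{lem:push_pauli_state}, I will absorb the red phase into the strictly-affine state, leaving a closed diagram made of a single spider or a green/red pair. By \Cref{gsa:prop:zero_normal_forms} and the scalar-free semantics in $\sALR$, this diagram is either the empty diagram or the zero map. I expect this scalar bookkeeping to be the main obstacle. The plan is to check on the relation side that the composite is never empty: the green affine state leaves the variable constrained by the red effect free, so the constraint $x+d(\delta)z+c(\delta)=0$ is always satisfiable. Then I will exhibit the corresponding rewrite to the empty diagram, most likely using \axiomref{Z.11} and fusion, with case analysis only if the sign or inverse conventions of the labels force it.

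Finally, the vice-versa statement (green spiders with arbitrary phases copying affine red states) should follow by surrounding the whole equation with H-boxes. The colour change rule \Cref{ax:dzx_colour} turns each spider into its opposite colour, \Cref{lem:box_inverse} cancels the adjacent box pairs, and \Cref{lem:box_antipode} moves any resulting antipodes through the copied states. Since negating an affine label stays within $\F_p(\delta)$, this gives the colour-swapped statement.
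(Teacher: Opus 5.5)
\textbf{There is a genuine gap: you are proving a different statement.} This lemma is the Pauli-push rule, as its label and the paper's remark ``if \(d=0\), then \(ad+c=c\)'' indicate.

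\begin{itemize}
\item The affine green spider is a one-input, one-output spider with phase \((a(\delta),0)\), i.e.\ a Pauli, not a state.
\item It sits on a leg of a red spider with phase \((c(\delta),d(\delta))\).
\item Pushing it through copies it onto the remaining legs, and the red spider's phase becomes \((a(\delta)d(\delta)+c(\delta),\,d(\delta))\).
\end{itemize}

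Semantically, the Pauli translates the \(z\)-component of its wire by \(a\). The red spider identifies all \(z\)-components, so its affine constraint acquires an extra term \(ad\). In your reading, a one-legged green state is plugged in instead. The red spider then collapses entirely into copies of the state, and the phase disappears into a nonzero scalar. That fact is true, but it does not yield the Pauli-push rule.

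\textbf{Where the plan breaks.} Your first two steps are the right start. You unfold the phase \((c,d)\) onto an extra leg by \Cref{ax:dzx_fusion}, then push the affine spider through the phase-free red spider (now pushing a Pauli rather than copying a state). Step three fails. The copy landing on the phase leg is a Pauli composed with the one-legged red spider of phase \((c,d)\). That is an open diagram, not a closed scalar. It equals the one-legged red spider with phase \((ad+c,d)\), not \((c,d)\), so it cannot be discarded.

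\textbf{What is needed.} Establishing this phase update is the real content of the lemma, and it requires the case split the paper makes:
\begin{itemize}
\item If \(d=0\), the one-legged red spider is strictly affine and absorbs the Pauli by \Cref{lem:push_pauli_state}, consistent with \(ad+c=c\).
\item If \(d\neq 0\), then \(d\) is invertible, so \Cref{lem:symplectic_states} recolours the \((c,d)\) spider as a green one. You fuse the affine Pauli into it (affine phases add), recolour back to obtain phase \((ad+c,d)\), and fuse the result back into the spider.
\end{itemize}

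Your H-box conjugation for the colour-swapped half matches the paper's argument and goes through once the first equation is established.
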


\begin{proof}
  First of all,
  \begin{align*}
      %

  \end{align*}
  Then, we separate the equation into two cases based on whether the green spider
  is affine or not.
  In case $d = 0$, the green spider is affine and therefore:
  \begin{align*}
      %

  \end{align*}
  Note that if \(d = 0\), then \(ad + c = c\) and so the lemma holds.
  Otherwise, \(d \neq 0\) and therefore \(d^{\minu 1}\) exists, so we can apply
  the state-change lemma:
  \begin{align*}
    &
      %

  \end{align*}
  We can prove the second equation of the lemma using Hadamard-boxes as
  follows:
  \begin{align*}
    &
      \input{gsa_completeness/miscellaneous_non_scalable/lemmas/pauli_push/proof/eq3/0.tikz}
    \steq{\axref{ax:dzx_colour}}
      \input{gsa_completeness/miscellaneous_non_scalable/lemmas/pauli_push/proof/eq3/1.tikz}
    \steq{\lemref{lem:box_inverse}\\ \axiomref{D.20}}
      \input{gsa_completeness/miscellaneous_non_scalable/lemmas/pauli_push/proof/eq3/2.tikz} 
    \steq{\lemref{lem:pauli_push}}
      \input{gsa_completeness/miscellaneous_non_scalable/lemmas/pauli_push/proof/eq3/3.tikz}
    \\
    &\steq{\lemref{lem:box_inverse}}
      \input{gsa_completeness/miscellaneous_non_scalable/lemmas/pauli_push/proof/eq3/4.tikz} 
    \steq{\axref{ax:dzx_colour}}
      \input{gsa_completeness/miscellaneous_non_scalable/lemmas/pauli_push/proof/eq3/5.tikz}
  \end{align*}
\end{proof}

\subsection{Scalable generators}\label{sec:scalable_generators}

In this appendix, we introduce scalable notation for \(\dStabZX\) to help us prove completeness.

\begin{definition}
  \label{def:matrix_arrow}

  \textbf{Scalable matrix arrows} are defined by induction:

  \begin{equation*}
      \begin{tikzpicture}
	\begin{pgfonlayer}{nodelayer}
		\node [style=none] (0) at (0.5, 1.375) {};
		\node [style=none] (1) at (4, 1.375) {};
		\node [style=none] (2) at (4, -1.375) {};
		\node [style=none] (3) at (0.5, -1.375) {};
		\node [style=grmat] (9) at (2.25, 0) {$\begin{matrix}A & C \\ B & D\end{matrix}$};
		\node [style=none] (10) at (4, 0) {};
		\node [style=none] (11) at (0.5, 0) {};
	\end{pgfonlayer}
	\begin{pgfonlayer}{edgelayer}
		\draw [style=background] (2.center)
			 to (1.center)
			 to (0.center)
			 to (3.center)
			 to cycle;
		\draw [style=thick] (11.center) to (9);
		\draw [style=thick] (9) to (10.center);
	\end{pgfonlayer}
\end{tikzpicture}

    \coloneqq
      \begin{tikzpicture}
	\begin{pgfonlayer}{nodelayer}
		\node [style=none] (5) at (6, 1.4) {};
		\node [style=none] (6) at (11.125, 1.4) {};
		\node [style=none] (7) at (11.125, -1.4) {};
		\node [style=none] (8) at (6, -1.4) {};
		\node [style=none] (12) at (6, 0) {};
		\node [style=none] (13) at (6.5, 0) {};
		\node [style=ggn] (14) at (7.5, 0.675) {};
		\node [style=ggn] (15) at (7.5, -0.675) {};
		\node [style=grn] (16) at (9.75, 0.475) {};
		\node [style=grn] (17) at (9.75, -0.475) {};
		\node [style=grmat] (18) at (8.625, 1) {$A$};
		\node [style=grmat] (19) at (8.625, 0.325) {$B$};
		\node [style=grmat] (20) at (8.625, -0.325) {$C$};
		\node [style=grmat] (21) at (8.625, -1) {$D$};
		\node [style=none] (22) at (11.125, 0) {};
		\node [style=none] (23) at (10.625, 0) {};
		\node [style=grn] (24) at (9.75, 0.475) {};
		\node [style=grn] (25) at (9.75, -0.475) {};
		\node [style=none] (26) at (8.25, 1) {};
		\node [style=none] (27) at (9, 1) {};
		\node [style=none] (28) at (8.25, 0.325) {};
		\node [style=none] (29) at (9, 0.325) {};
		\node [style=none] (30) at (8.25, -0.325) {};
		\node [style=none] (31) at (9, -0.325) {};
		\node [style=none] (32) at (8.25, -1) {};
		\node [style=none] (33) at (9, -1) {};
	\end{pgfonlayer}
	\begin{pgfonlayer}{edgelayer}
		\draw [style=background] (6.center)
			 to (5.center)
			 to (8.center)
			 to (7.center)
			 to cycle;
		\draw [style=thick] (12.center)
			 to (13.center)
			 to [in=180, out=0] (15);
		\draw [style=thick, in=-180, out=0] (13.center) to (14);
		\draw [style=thick] (22.center)
			 to (23.center)
			 to [in=0, out=180] (25);
		\draw [style=thick, in=0, out=180] (23.center) to (24);
		\draw [style=very thick] (14)
			 to [in=-180, out=30, looseness=0.75] (26.center)
			 to [in=-180, out=30, looseness=0.75] (18.center)
			 to [in=-180, out=30, looseness=0.75] (27.center)
			 to [in=150, out=0, looseness=0.75] (16);
		\draw [style=very thick] (17)
			 to [in=0, out=135, looseness=0.75] (29.center)
			 to [in=0, out=135, looseness=0.75] (19.center)
			 to [in=0, out=135, looseness=0.75] (28.center)
			 to [in=-30, out=180, looseness=0.75] (14);
		\draw [style=very thick] (15)
			 to [in=-180, out=30, looseness=0.75] (30.center)
			 to [in=-180, out=30, looseness=0.75] (20.center)
			 to [in=-180, out=30, looseness=0.75] (31.center)
			 to [in=-135, out=0, looseness=0.75] (16);
		\draw [style=very thick] (17)
			 to [in=0, out=-150, looseness=0.75] (33.center)
			 to [in=0, out=-150, looseness=0.75] (21.center)
			 to [in=0, out=-150, looseness=0.75] (32.center)
			 to [in=-30, out=180, looseness=0.75] (15);
	\end{pgfonlayer}
\end{tikzpicture}

  \end{equation*}
  Where
  \begin{equation*}
      \begin{tikzpicture}
	\begin{pgfonlayer}{nodelayer}
		\node [style=none] (0) at (7.125, -0.5) {};
		\node [style=none] (1) at (9.875, -0.5) {};
		\node [style=none] (2) at (9.875, 0.5) {};
		\node [style=none] (3) at (7.125, 0.5) {};
		\node [style=none] (4) at (7.125, 0) {};
		\node [style=none] (5) at (9.875, 0) {};
		\node [style=glmat] (6) at (8.5, 0) {\(A\)};
	\end{pgfonlayer}
	\begin{pgfonlayer}{edgelayer}
		\draw [style=background] (0.center)
			 to (3.center)
			 to (2.center)
			 to (1.center)
			 to cycle;
		\draw [style=background] (2.center) to (1.center);
		\draw [style=background] (1.center) to (0.center);
		\draw [style=background] (0.center) to (3.center);
		\draw [style=background] (3.center) to (2.center);
		\draw [style=thick] (4.center) to (6);
		\draw [style=thick] (6) to (5.center);
	\end{pgfonlayer}
\end{tikzpicture}

    \coloneqq
      \begin{tikzpicture}
	\begin{pgfonlayer}{nodelayer}
		\node [style=none] (0) at (11.375, -0.75) {};
		\node [style=none] (1) at (15.625, -0.75) {};
		\node [style=none] (2) at (15.625, 0.75) {};
		\node [style=none] (3) at (11.375, 0.75) {};
		\node [style=none] (4) at (11.375, 0) {};
		\node [style=none] (5) at (15.625, 0) {};
		\node [style=rmatt] (6) at (13.5, 0) {\(A\)};
		\node [style=none] (7) at (14.75, 0) {};
		\node [style=none] (8) at (12.25, 0) {};
		\node [style=none] (9) at (14.75, 0.5) {};
		\node [style=none] (10) at (12.25, 0.5) {};
		\node [style=none] (11) at (12.25, -0.5) {};
		\node [style=none] (12) at (14.75, -0.5) {};
	\end{pgfonlayer}
	\begin{pgfonlayer}{edgelayer}
		\draw [style=background] (2.center)
			 to (1.center)
			 to (0.center)
			 to (3.center)
			 to cycle;
		\draw [style=background] (2.center) to (1.center);
		\draw [style=background] (1.center) to (0.center);
		\draw [style=background] (0.center) to (3.center);
		\draw [style=background] (3.center) to (2.center);
		\draw [style=thick] (4.center)
			 to [in=180, out=0, looseness=0.75] (10.center)
			 to (9.center)
			 to [bend left=90, looseness=1.25] (7.center)
			 to (6.center)
			 to (8.center)
			 to [bend right=90, looseness=1.25] (11.center)
			 to (12.center)
			 to [in=180, out=0, looseness=0.75] (5.center);
	\end{pgfonlayer}
\end{tikzpicture}

  \end{equation*}
\end{definition}
\begin{definition}
  \label{def:unlabelled_scalable}
  The unlabelled scalable H-box is defined by induction:

  \begin{equation*}
      \begin{tikzpicture}
	\begin{pgfonlayer}{nodelayer}
		\node [style=none] (0) at (-4, 0) {};
		\node [style=none] (1) at (-1, 0) {};
		\node [style=ghad] (2) at (-2.5, 0) {};
		\node [style=none] (3) at (-4, 0.7) {};
		\node [style=none] (4) at (-1, 0.7) {};
		\node [style=none] (5) at (-1, -0.7) {};
		\node [style=none] (6) at (-4, -0.7) {};
		\node [style=wirelable] (23) at (-3.375, 0.3) {$n+1$};
		\node [style=wirelable] (24) at (-1.625, 0.3) {$n+1$};
	\end{pgfonlayer}
	\begin{pgfonlayer}{edgelayer}
		\draw [style=background] (5.center)
			 to (4.center)
			 to (3.center)
			 to (6.center)
			 to cycle;
		\draw [style=thick] (1.center)
			 to (2.center)
			 to (0.center);
		\draw [style=thick] (2) to (0.center);
		\draw [style=thick] (2) to (1.center);
	\end{pgfonlayer}
\end{tikzpicture}
    \coloneqq
      \begin{tikzpicture}
	\begin{pgfonlayer}{nodelayer}
		\node [style=none] (10) at (0, 0) {};
		\node [style=none] (11) at (2.5, 0) {};
		\node [style=none] (13) at (0, 0.7) {};
		\node [style=none] (14) at (2.5, 0.7) {};
		\node [style=none] (15) at (2.5, -0.675) {};
		\node [style=none] (16) at (0, -0.675) {};
		\node [style=none] (19) at (0.5, 0) {};
		\node [style=none] (20) at (2, 0) {};
		\node [style=ghad] (22) at (1.25, -0.25) {};
		\node [style=wirelable] (25) at (0.75, -0.425) {$n$};
		\node [style=wirelable] (26) at (1.75, -0.425) {$n$};
		\node [style=none] (27) at (0, 0) {};
		\node [style=none] (28) at (2.5, 0) {};
		\node [style=none] (29) at (0.5, 0) {};
		\node [style=none] (30) at (2, 0) {};
		\node [style=had] (31) at (1.25, 0.25) {};
	\end{pgfonlayer}
	\begin{pgfonlayer}{edgelayer}
		\draw [style=background] (15.center)
			 to (14.center)
			 to (13.center)
			 to (16.center)
			 to cycle;
		\draw [style=thick] (11.center)
			 to (20.center)
			 to [in=0, out=-135] (22.center)
			 to [in=-45, out=180] (19.center)
			 to (10.center);
		\draw (28.center) to (30.center);
		\draw (29.center) to (27.center);
		\draw (30.center)
			 to [in=0, out=135] (31.center)
			 to [in=45, out=180] (29.center);
	\end{pgfonlayer}
\end{tikzpicture}

  \end{equation*}

  Given a matrix \(A\), the \(A\)-labelled scalable \(H\)-boxes are defined as follows:

  \begin{equation*}
      \begin{tikzpicture}
	\begin{pgfonlayer}{nodelayer}
		\node [style=none] (0) at (0.5, 1.375) {};
		\node [style=none] (1) at (4, 1.375) {};
		\node [style=none] (2) at (4, -1.375) {};
		\node [style=none] (3) at (0.5, -1.375) {};
		\node [style=grhad] (9) at (2.25, 0) {$\begin{matrix}A & C \\ B & D\end{matrix}$};
		\node [style=none] (10) at (4, 0) {};
		\node [style=none] (11) at (0.5, 0) {};
	\end{pgfonlayer}
	\begin{pgfonlayer}{edgelayer}
		\draw [style=background] (2.center)
			 to (1.center)
			 to (0.center)
			 to (3.center)
			 to cycle;
		\draw [style=thick] (11.center) to (9);
		\draw [style=thick] (9) to (10.center);
	\end{pgfonlayer}
\end{tikzpicture}

    \coloneqq
      \begin{tikzpicture}
	\begin{pgfonlayer}{nodelayer}
		\node [style=none] (0) at (0.5, 1.375) {};
		\node [style=none] (1) at (4.5, 1.375) {};
		\node [style=none] (2) at (4.5, -1.375) {};
		\node [style=none] (3) at (0.5, -1.375) {};
		\node [style=grmat] (9) at (2.25, 0) {$\begin{matrix}A & C \\ B & D\end{matrix}$};
		\node [style=none] (10) at (4.5, 0) {};
		\node [style=none] (11) at (0.5, 0) {};
		\node [style=ghad] (12) at (3.75, 0) {};
	\end{pgfonlayer}
	\begin{pgfonlayer}{edgelayer}
		\draw [style=background] (2.center)
			 to (1.center)
			 to (0.center)
			 to (3.center)
			 to cycle;
		\draw [style=thick] (11.center) to (9);
		\draw [style=thick] (9) to (10.center);
	\end{pgfonlayer}
\end{tikzpicture}

    \qquad\text{and}\qquad
      \begin{tikzpicture}
	\begin{pgfonlayer}{nodelayer}
		\node [style=none] (0) at (0.5, 1.375) {};
		\node [style=none] (1) at (4, 1.375) {};
		\node [style=none] (2) at (4, -1.375) {};
		\node [style=none] (3) at (0.5, -1.375) {};
		\node [style=glhad] (9) at (2.25, 0) {$\begin{matrix}A & C \\ B & D\end{matrix}$};
		\node [style=none] (10) at (4, 0) {};
		\node [style=none] (11) at (0.5, 0) {};
	\end{pgfonlayer}
	\begin{pgfonlayer}{edgelayer}
		\draw [style=background] (2.center)
			 to (1.center)
			 to (0.center)
			 to (3.center)
			 to cycle;
		\draw [style=thick] (11.center) to (9);
		\draw [style=thick] (9) to (10.center);
	\end{pgfonlayer}
\end{tikzpicture}

    \coloneqq
      \begin{tikzpicture}
	\begin{pgfonlayer}{nodelayer}
		\node [style=none] (0) at (4.5, 1.375) {};
		\node [style=none] (1) at (0.5, 1.375) {};
		\node [style=none] (2) at (0.5, -1.375) {};
		\node [style=none] (3) at (4.5, -1.375) {};
		\node [style=glmat] (9) at (2.75, 0) {$\begin{matrix}A & C \\ B & D\end{matrix}$};
		\node [style=none] (10) at (0.5, 0) {};
		\node [style=none] (11) at (4.5, 0) {};
		\node [style=ghad] (12) at (1.25, 0) {};
	\end{pgfonlayer}
	\begin{pgfonlayer}{edgelayer}
		\draw [style=background] (2.center)
			 to (1.center)
			 to (0.center)
			 to (3.center)
			 to cycle;
		\draw [style=thick] (11.center) to (9);
		\draw [style=thick] (9) to (10.center);
	\end{pgfonlayer}
\end{tikzpicture}

  \end{equation*}
\end{definition}

\begin{definition}
  \label{def:scalable_spider}
  The scalable spiders are defined by induction on the dimension:
  \begin{equation*}
      \begin{tikzpicture}
	\begin{pgfonlayer}{nodelayer}
		\node [style=wirelable] (139) at (-3.375, -1.5) {$k+1$};
		\node [style=wirelable] (140) at (-1.55, -1.5) {$k+1$};
		\node [style=wirelable] (141) at (-3.375, -0.25) {$k+1$};
		\node [style=wirelable] (142) at (-1.55, -0.25) {$k+1$};
		\node [style=none] (93) at (-4.5, 1.7) {};
		\node [style=none] (94) at (-0.5, 1.7) {};
		\node [style=none] (95) at (-0.5, -1.825) {};
		\node [style=none] (96) at (-4.5, -1.825) {};
		\node [style=gbn] (71) at (-2.5, -0.875) {};
		\node [style=none] (72) at (-1.75, -0.5) {};
		\node [style=none] (73) at (-1.75, -1.25) {};
		\node [style=none] (74) at (-3.25, -0.5) {};
		\node [style=none] (75) at (-3.25, -1.25) {};
		\node [style=none] (83) at (-3.4, -0.875) {$\vdotss$};
		\node [style=none] (85) at (-1.675, -0.875) {$\vdotss$};
		\node [style=none] (86) at (-4.5, -0.5) {};
		\node [style=none] (87) at (-4.5, -1.25) {};
		\node [style=none] (128) at (-0.5, -0.5) {};
		\node [style=none] (129) at (-0.5, -1.25) {};
		\node [style=bphase] (64) at (-2.5, 0.825) {$\phasemat{a \\ \vb{v}}{b & {\vb{w}}^\dag \\ {\vb{w}} & A}$};
	\end{pgfonlayer}
	\begin{pgfonlayer}{edgelayer}
		\draw [style=background] (95.center)
			 to (94.center)
			 to (93.center)
			 to (96.center)
			 to cycle;
		\draw [style=thick] (128.center)
			 to (72.center)
			 to [in=60, out=180] (71);
		\draw [style=thick] (129.center)
			 to (73.center)
			 to [in=-60, out=180] (71);
		\draw [style=thick] (86.center)
			 to (74.center)
			 to [in=120, out=360] (71);
		\draw [style=thick] (87.center)
			 to (75.center)
			 to [in=-120, out=0] (71);
		\draw [style=dbphase] (64) to (71);
	\end{pgfonlayer}
\end{tikzpicture} 
    \coloneqq
      \begin{tikzpicture}
	\begin{pgfonlayer}{nodelayer}
		\node [style=wirelable] (143) at (1.25, -1.6) {$k$};
		\node [style=wirelable] (144) at (3.25, -1.6) {$k$};
		\node [style=wirelable] (145) at (3.4, 0.475) {$k$};
		\node [style=wirelable] (146) at (1.1, 0.475) {$k$};
		\node [style=none] (105) at (0.25, 2.175) {};
		\node [style=none] (106) at (4.25, 2.175) {};
		\node [style=none] (107) at (4.25, -2.35) {};
		\node [style=none] (108) at (0.25, -2.35) {};
		\node [style=bn] (58) at (2.25, 0.975) {};
		\node [style=gbn] (59) at (2.25, -1.225) {};
		\node [style=none] (62) at (2.25, 0.55) {};
		\node [style=gdhad] (63) at (2.25, 0) {$\vb{w}$};
		\node [style=none] (97) at (0.25, 0.975) {};
		\node [style=none] (98) at (0.25, -1.225) {};
		\node [style=none] (99) at (0.875, 0.975) {};
		\node [style=none] (100) at (0.875, -1.225) {};
		\node [style=none] (122) at (0.75, -0.125) {$\vdotss$};
		\node [style=none] (125) at (1.75, 0.9) {$\vdotss$};
		\node [style=none] (127) at (1.7, -1.1) {$\vdotss$};
		\node [style=none] (132) at (4.25, 0.975) {};
		\node [style=none] (133) at (4.25, -1.225) {};
		\node [style=none] (134) at (3.625, 0.975) {};
		\node [style=none] (135) at (3.625, -1.225) {};
		\node [style=none] (136) at (3.75, -0.125) {$\vdotss$};
		\node [style=none] (137) at (2.75, 0.9) {$\vdotss$};
		\node [style=none] (138) at (2.8, -1.1) {$\vdotss$};
		\node [style=bphase] (65) at (2.25, 1.875) {$\phasemat{a}{b}$};
		\node [style=bphase] (66) at (2.25, -2.125) {$\phasemat{\vb{v}}{A}$};
		\node [style=bn] (147) at (2.25, 0.975) {};
		\node [style=none] (148) at (0.875, 0.975) {};
		\node [style=none] (149) at (0.25, 0.975) {};
		\node [style=bn] (150) at (2.25, 0.975) {};
		\node [style=none] (151) at (3.625, 0.975) {};
		\node [style=none] (152) at (4.25, 0.975) {};
		\node [style=bn] (153) at (2.25, 0.975) {};
		\node [style=none] (154) at (0.875, -1.225) {};
		\node [style=none] (155) at (3.625, -1.225) {};
		\node [style=none] (156) at (0.25, -1.225) {};
		\node [style=none] (157) at (4.25, -1.225) {};
	\end{pgfonlayer}
	\begin{pgfonlayer}{edgelayer}
		\draw [style=background] (107.center)
			 to (106.center)
			 to (105.center)
			 to (108.center)
			 to cycle;
		\draw [style=thick] (98.center)
			 to (100.center)
			 to [in=-150, out=-30] (59);
		\draw [style=thick] (97.center)
			 to (99.center)
			 to [in=165, out=-30] (59);
		\draw [in=135, out=45] (99.center) to (58);
		\draw [in=-165, out=30] (100.center) to (58);
		\draw [style=thick] (59)
			 to [in=-150, out=-30] (135.center)
			 to (133.center);
		\draw [style=thick] (132.center)
			 to (134.center)
			 to [in=15, out=-150] (59);
		\draw [in=-15, out=150] (135.center) to (58);
		\draw [in=45, out=135] (134.center) to (58);
		\draw [style=thick] (59) to (63);
		\draw (63) to (62.center);
		\draw (62.center) to (58);
		\draw [style=dbphase] (65) to (58);
		\draw [style=ubphase] (66) to (59);
		\draw (149.center)
			 to (148.center)
			 to [in=135, out=45] (147);
		\draw (152.center)
			 to (151.center)
			 to [in=45, out=135] (150);
		\draw (156.center)
			 to (154.center)
			 to [in=-165, out=30] (153);
		\draw (153)
			 to [in=150, out=-15] (155.center)
			 to (157.center);
	\end{pgfonlayer}
\end{tikzpicture}

    \qquad \text{where} \qquad
      \begin{tikzpicture}
	\begin{pgfonlayer}{nodelayer}
		\node [style=wirelable] (139) at (-3.375, -1.5) {$k+1$};
		\node [style=wirelable] (140) at (-1.55, -1.5) {$k+1$};
		\node [style=wirelable] (141) at (-3.375, -0.25) {$k+1$};
		\node [style=wirelable] (142) at (-1.55, -0.25) {$k+1$};
		\node [style=none] (93) at (-4.5, 1.7) {};
		\node [style=none] (94) at (-0.5, 1.7) {};
		\node [style=none] (95) at (-0.5, -1.825) {};
		\node [style=none] (96) at (-4.5, -1.825) {};
		\node [style=gwn] (71) at (-2.5, -0.875) {};
		\node [style=none] (72) at (-1.75, -0.5) {};
		\node [style=none] (73) at (-1.75, -1.25) {};
		\node [style=none] (74) at (-3.25, -0.5) {};
		\node [style=none] (75) at (-3.25, -1.25) {};
		\node [style=none] (83) at (-3.4, -0.875) {$\vdotss$};
		\node [style=none] (85) at (-1.675, -0.875) {$\vdotss$};
		\node [style=none] (86) at (-4.5, -0.5) {};
		\node [style=none] (87) at (-4.5, -1.25) {};
		\node [style=none] (128) at (-0.5, -0.5) {};
		\node [style=none] (129) at (-0.5, -1.25) {};
		\node [style=wphase] (64) at (-2.5, 0.825) {$\phasemat{a \\ \vb{v}}{b & {\vb{w}}^\dag \\ {\vb{w}} & A}$};
	\end{pgfonlayer}
	\begin{pgfonlayer}{edgelayer}
		\draw [style=background] (95.center)
			 to (94.center)
			 to (93.center)
			 to (96.center)
			 to cycle;
		\draw [style=thick] (128.center)
			 to (72.center)
			 to [in=60, out=180] (71);
		\draw [style=thick] (129.center)
			 to (73.center)
			 to [in=-60, out=180] (71);
		\draw [style=thick] (86.center)
			 to (74.center)
			 to [in=120, out=360] (71);
		\draw [style=thick] (87.center)
			 to (75.center)
			 to [in=-120, out=0] (71);
		\draw [style=dwphase] (64) to (71);
	\end{pgfonlayer}
\end{tikzpicture} 
    \coloneqq
      \begin{tikzpicture}
	\begin{pgfonlayer}{nodelayer}
		\node [style=wirelable] (143) at (2.375, -1.5) {$k+1$};
		\node [style=wirelable] (144) at (4.2, -1.5) {$k+1$};
		\node [style=wirelable] (145) at (2.375, -0.25) {$k+1$};
		\node [style=wirelable] (146) at (4.2, -0.25) {$k+1$};
		\node [style=none] (147) at (0.5, 1.7) {};
		\node [style=none] (148) at (6, 1.7) {};
		\node [style=none] (149) at (6, -1.825) {};
		\node [style=none] (150) at (0.5, -1.825) {};
		\node [style=gbn] (151) at (3.25, -0.875) {};
		\node [style=none] (152) at (4, -0.5) {};
		\node [style=none] (153) at (4, -1.25) {};
		\node [style=none] (154) at (2.5, -0.5) {};
		\node [style=none] (155) at (2.5, -1.25) {};
		\node [style=none] (156) at (2.35, -0.875) {$\vdotss$};
		\node [style=none] (157) at (4.075, -0.875) {$\vdotss$};
		\node [style=none] (158) at (0.5, -0.5) {};
		\node [style=none] (159) at (0.5, -1.25) {};
		\node [style=none] (160) at (6, -0.5) {};
		\node [style=none] (161) at (6, -1.25) {};
		\node [style=bphase] (162) at (3.25, 0.825) {$\phasemat{a \\ \vb{v}}{\minu b & \minu {\vb{w}}^\dag \\ \minu {\vb{w}} & \minu A}$};
		\node [style=ghad] (163) at (1.25, -0.525) {};
		\node [style=ghad] (164) at (1.25, -1.275) {};
		\node [style=ghad] (165) at (5.325, -1.275) {\(\minu\)};
		\node [style=ghad] (166) at (5.325, -0.525) {\(\minu\)};
	\end{pgfonlayer}
	\begin{pgfonlayer}{edgelayer}
		\draw [style=background] (149.center)
			 to (148.center)
			 to (147.center)
			 to (150.center)
			 to cycle;
		\draw [style=thick] (160.center)
			 to (152.center)
			 to [in=60, out=180] (151);
		\draw [style=thick] (161.center)
			 to (153.center)
			 to [in=-60, out=180] (151);
		\draw [style=thick] (158.center)
			 to (154.center)
			 to [in=120, out=360] (151);
		\draw [style=thick] (159.center)
			 to (155.center)
			 to [in=-120, out=0] (151);
		\draw [style=dbphase] (162) to (151);
	\end{pgfonlayer}
\end{tikzpicture}
  \end{equation*}
\end{definition}

\subsection{Scalable lemmas}\label{sec:scalable_results}

In this appendix, we prove various scalable lemmas in \(\dStabZX\) to help prove completeness. 

It is a consequence of the work of Booth et al.~\cite{gsa} that all lemmas here hold for \(\StabZX\) when the labels are restricted from \(\F_p(\delta)\) and \(\F_p(\delta+\delta^{\minu 1})\) to \(\F_p\). Therefore, there are a few forward references to these lemmas when rewriting diagrams with non-delayed labels.

\begin{lemma}
    \label{eq:arrow_inverse}
For any invertible matrix \(Z\in\Matrices\): 
\[
    \begin{tikzpicture}
	\begin{pgfonlayer}{nodelayer}
		\node [style=none] (0) at (-0.5, 0) {};
		\node [style=none] (1) at (-3, 0) {};
		\node [style=grmat] (2) at (-1.75, 0) {$Z^{\minu 1}$};
		\node [style=none] (3) at (-0.5, 0.375) {};
		\node [style=none] (4) at (-0.5, -0.375) {};
		\node [style=none] (5) at (-3, -0.375) {};
		\node [style=none] (6) at (-3, 0.375) {};
		\node [style=wirelable] (7) at (-0.75, 0.175) {$m$};
		\node [style=wirelable] (8) at (-2.75, 0.175) {$m$};
	\end{pgfonlayer}
	\begin{pgfonlayer}{edgelayer}
		\draw [style=background] (4.center)
			 to (3.center)
			 to (6.center)
			 to (5.center)
			 to cycle;
		\draw [style=thick] (0.center) to (1.center);
	\end{pgfonlayer}
\end{tikzpicture}
  =
    \begin{tikzpicture}
	\begin{pgfonlayer}{nodelayer}
		\node [style=none] (10) at (0.5, 0) {};
		\node [style=none] (11) at (2.5, 0) {};
		\node [style=glmat] (12) at (1.5, 0) {$Z$};
		\node [style=none] (13) at (0.5, 0.375) {};
		\node [style=none] (14) at (0.5, -0.375) {};
		\node [style=none] (15) at (2.5, -0.375) {};
		\node [style=none] (16) at (2.5, 0.375) {};
		\node [style=wirelable] (17) at (0.75, 0.175) {$m$};
		\node [style=wirelable] (18) at (2.25, 0.175) {$m$};
	\end{pgfonlayer}
	\begin{pgfonlayer}{edgelayer}
		\draw [style=background] (14.center)
			 to (13.center)
			 to (16.center)
			 to (15.center)
			 to cycle;
		\draw [style=thick] (10.center) to (11.center);
	\end{pgfonlayer}
\end{tikzpicture}.
\]
\end{lemma}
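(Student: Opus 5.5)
We prove the lemma by showing that the left-facing scalable arrow labelled \(Z\) is a two-sided inverse of the right-facing arrow labelled \(Z\). The right-facing arrow labelled \(Z^{\minu 1}\) is also such an inverse, so the two coincide.

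\textbf{Step 1 (inverse property).} We show that composing the right arrow \(Z\) with the left arrow \(Z\) gives the identity, in both orders. By \Cref{def:matrix_arrow}, the left arrow \(Z\) is the right arrow \(Z\) with its wires bent back through cups and caps. In the non-scalable case this is \Cref{lem:polynomial_inverse}, together with its rational extension from \Cref{lem:mutliplier_arithmetic}. For the scalable case we would rather not induct on \(m\) directly. Instead we use the fact that scalable arrows form a functor from matrices: the interpretation of the scalable generators gives a functor \(\mathsf{GAA}\to\StabZX\) (\Cref{lem:gaa_functor}). Once the labels are allowed to be rational functions in \(\F_p(\delta)\), graphical affine algebra over the field \(\F_p(\delta)\) already proves the corresponding facts about matrix arrows. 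In particular, it proves that the mirror image of an invertible matrix arrow is its inverse, since graphical linear algebra over a field is complete for linear relations. It remains to check that the axioms of \Cref{fig:gaa_axioms} with \(\mathbb K=\F_p(\delta)\) hold for the bundled arrows. This follows from \Cref{lem:mutliplier_arithmetic}, \Cref{lem:box_inverse} and the spider axioms, exactly as in \cite{gsa}.

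\textbf{Step 2 (functoriality).} We need the arrow of \(AB\) to equal the composite of the arrows of \(A\) and \(B\), and the arrow of \(I_m\) to be the identity wire. We prove this by induction on the block decomposition in \Cref{def:matrix_arrow}. The block product reduces to the bialgebra interaction of the green copy spiders and red addition spiders, together with the scalar arithmetic of multipliers from \Cref{lem:mutliplier_arithmetic}. The identity case reduces to zero multipliers disconnecting wires and unit multipliers being plain wires.

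\textbf{Step 3 (conclusion).} Combining the two steps gives
\[
\text{arrow}(Z^{\minu 1}) = \text{arrow}(Z^{\minu 1})\circ\text{arrow}(Z)\circ\text{left}(Z) = \text{arrow}(Z^{\minu 1}Z)\circ\text{left}(Z) = \text{left}(Z).
\]

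The main obstacle is Step 2, specifically the induction for matrix products over \(\F_p(\delta)\). Bialgebra rules involving delays need the sliding equations \eqref{eq:fdzx_sliding} so that multipliers with rational labels commute past spiders. I would first try to prove that rational multipliers copy through green spiders and add through red spiders, and then run the standard graphical-linear-algebra induction on matrix size.
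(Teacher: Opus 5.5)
Your argument is correct and uses the same ingredient the paper implicitly relies on. The paper gives no separate proof of this lemma; it defers, via \Cref{lem:gaa_functor} and \cite{gsa}, to graphical affine algebra over \(\F_p(\delta)\), which is your Step 1 (in the paper's own terms, your inverse property is \Cref{prop:matprop} for invertible \(Z\), and your Step 2 is \Cref{lem:scalable_mat}). Note, though, that once you invoke completeness over the field \(\F_p(\delta)\) the lemma follows at once, since the equation itself is sound for linear relations, so Steps 2--3 are unnecessary; likewise the ``main obstacle'' you flag (rational multipliers copying and adding through spiders via the sliding equations) is not extra work but part of checking \Cref{lem:gaa_functor}, which Step 1 already presupposes.
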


\begin{lemma}
  \label{lem:scalable_bigebra}%
  For any scalable dimension \(n \in \N\):
  \[
      \begin{tikzpicture}
	\begin{pgfonlayer}{nodelayer}
		\node [style=none] (22) at (4.5, 1) {};
		\node [style=none] (23) at (7.75, 1) {};
		\node [style=none] (24) at (7.75, -1) {};
		\node [style=none] (25) at (4.5, -1) {};
		\node [style=none] (27) at (4.5, 0.5) {};
		\node [style=none] (28) at (4.5, -0.5) {};
		\node [style=ggn] (29) at (5.5, 0.5) {};
		\node [style=ggn] (30) at (5.5, -0.5) {};
		\node [style=grn] (31) at (6.75, 0.5) {};
		\node [style=grn] (32) at (6.75, -0.5) {};
		\node [style=none] (33) at (7.75, 0.5) {};
		\node [style=none] (34) at (7.75, -0.5) {};
	\end{pgfonlayer}
	\begin{pgfonlayer}{edgelayer}
		\draw [style=background] (24.center)
			 to (23.center)
			 to (22.center)
			 to (25.center)
			 to cycle;
		\draw [style=thick] (27.center) to (29);
		\draw [style=thick] (28.center) to (30);
		\draw [style=thick] (31) to (33.center);
		\draw [style=thick] (32) to (34.center);
		\draw [style=thick] (29) to (32);
		\draw [style=thick] (30) to (31);
		\draw [style=thick, bend left=45] (29) to (31);
		\draw [style=thick, bend right=45] (30) to (32);
	\end{pgfonlayer}
\end{tikzpicture}
    =
      \begin{tikzpicture}
	\begin{pgfonlayer}{nodelayer}
		\node [style=none] (18) at (9.75, 0.75) {};
		\node [style=none] (19) at (12.5, 0.75) {};
		\node [style=none] (20) at (12.5, -0.75) {};
		\node [style=none] (21) at (9.75, -0.75) {};
		\node [style=none] (35) at (9.75, 0.5) {};
		\node [style=none] (36) at (9.75, -0.5) {};
		\node [style=grn] (37) at (10.75, 0) {};
		\node [style=ggn] (38) at (11.5, 0) {};
		\node [style=none] (39) at (12.5, 0.5) {};
		\node [style=none] (40) at (12.5, -0.5) {};
	\end{pgfonlayer}
	\begin{pgfonlayer}{edgelayer}
		\draw [style=background] (20.center)
			 to (19.center)
			 to (18.center)
			 to (21.center)
			 to cycle;
		\draw [style=thick] (37) to (38);
		\draw [style=thick, in=0, out=-120] (37) to (36.center);
		\draw [style=thick, in=360, out=120] (37) to (35.center);
		\draw [style=thick, in=180, out=60] (38) to (39.center);
		\draw [style=thick, in=180, out=-60] (38) to (40.center);
	\end{pgfonlayer}
\end{tikzpicture}.
  \]
\end{lemma}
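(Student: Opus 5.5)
The plan is to prove the scalable bialgebra law by induction on the scalable dimension \(n\), unfolding the scalable spiders via \Cref{def:scalable_spider} and reducing to the single-wire bialgebra axiom of \(\dStabZX\). Equivalently, and more cleanly, I would first observe that all the generators involved here are phase-free green and red spiders. These lie in the image of the strict symmetric monoidal functor \(\mathsf{GAA}\to\StabZX\) of \Cref{lem:gaa_functor}. The scalable bialgebra law is already known to hold in graphical affine algebra by Booth et al.~\cite{gsa}. So it suffices to transport it along this functor, composed with the inclusion \(\StabZX\to\St(\StabZX)\to\dStabZX\). No delays or rational labels appear in the statement, so nothing specific to \(\F_p(\delta)\) is needed.

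If a self-contained diagrammatic derivation is preferred, I would argue directly by induction.

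\emph{Base case.} For \(n=0\), both sides are the empty diagram. For \(n=1\), the statement is exactly the bialgebra axiom of \(\StabZX\) among the axioms of \Cref{fig:dzx_axioms} (equivalently \Cref{fig:zx_axioms}).

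\emph{Inductive step.} For \(n+1\), I would unfold each thick spider with trivial phase into a thin spider on the first wire tensored with a thick spider on the remaining \(n\) wires. With zero phases, the \(\vb w\)-labelled H-box connecting them in \Cref{def:scalable_spider} is labelled by the zero vector. This zero-labelled connection must first be shown to disconnect, via the multiplier/H-box lemmas for zero labels (a zero multiplier splits into a red counit and a green unit, which are then absorbed by spider fusion). After that, the left-hand side becomes a bundled tensor product of a thin bialgebra diagram and an \(n\)-dimensional one, but with the wires interleaved. Using naturality of the symmetry and of the bundling (op)laxators from \Cref{eq:strict}, I would permute wires so that the two factors are separated. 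I would then apply the thin bialgebra axiom to the first factor and the induction hypothesis to the second, and re-bundle.

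The main obstacle is the bookkeeping in the inductive step: showing that the zero-labelled H-box in the unfolded scalable spider really disappears, and that the wire crossings in the scalable bialgebra unbundle into a tensor product of a thin bialgebra and an \(n\)-dimensional one. I would handle the first point by appeal to the zero-label lemmas (\Cref{gsa:prop:zero_normal_forms} together with spider fusion). I would handle the second by coherence of the strictification, treating the crossing of thick wires as the block-permutation that reorders \((x_0,\vec x,y_0,\vec y)\) to \((x_0,y_0,\vec x,\vec y)\). Given these, the rest is routine.
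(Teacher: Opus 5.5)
Your direct induction on \(n\) is exactly the paper's ``straightforward induction on the type \(n\) of the scalable wires'': the base case is the thin bialgebra axiom, and the inductive step peels off one wire, separates the interleaved wires by coherence of the bundling maps, and applies the thin law together with the induction hypothesis. Your alternative of transporting the label-free law from the undelayed setting is also sound. One small correction: if a zero-labelled connection appears when unfolding the phase-free thick spiders, it is removed by the graphical affine algebra axioms via \Cref{lem:gaa_functor} (a zero multiplier is a counit followed by a unit, absorbed by spider fusion), not by \Cref{gsa:prop:zero_normal_forms}, which concerns diagrams with empty semantics.
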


\begin{proof}
  This follows from a straightforward induction on the type \(n \in \N\) of the scalable wires.
\end{proof}

\begin{lemma}
  \label{lem:scalable_copy}%
  For any \(n \in \N\) and \(\vb{x} \in \F_p(\delta)^n\):
  \[
      \begin{tikzpicture}
	\begin{pgfonlayer}{nodelayer}
		\node [style=none] (4) at (11.5, 1) {};
		\node [style=none] (5) at (14.25, 1) {};
		\node [style=none] (6) at (14.25, -1) {};
		\node [style=none] (7) at (11.5, -1) {};
		\node [style=grn] (8) at (12.25, 0) {};
		\node [style=gbn] (9) at (13.25, 0) {};
		\node [style=none] (15) at (14.25, 0.5) {};
		\node [style=none] (16) at (14.25, -0.5) {};
		\node [style=wphase] (ph0) at (12.25, 0.85) {$\phasemat{\vb{x}}{0}$};
	\end{pgfonlayer}
	\begin{pgfonlayer}{edgelayer}
		\draw [style=background] (6.center)
			 to (5.center)
			 to (4.center)
			 to (7.center)
			 to cycle;
		\draw [style=thick] (9) to (8);
		\draw [style=thick, bend left] (9) to (15.center);
		\draw [style=thick, bend right] (9) to (16.center);
		\draw [style=dwphase] (ph0) to (8);
	\end{pgfonlayer}
\end{tikzpicture}
    =
      \begin{tikzpicture}
	\begin{pgfonlayer}{nodelayer}
		\node [style=none] (0) at (16.25, 1) {};
		\node [style=none] (1) at (19.3, 1) {};
		\node [style=none] (2) at (19.3, -1) {};
		\node [style=none] (3) at (16.25, -1) {};
		\node [style=grn] (11) at (18.5, 0.5) {};
		\node [style=grn] (12) at (18.5, -0.5) {};
		\node [style=none] (13) at (19.25, 0.5) {};
		\node [style=none] (14) at (19.25, -0.5) {};
		\node [style=wphase] (ph1) at (17.5, 0.5) {$\phasemat{\vb{x}}{0}$};
		\node [style=wphase] (ph2) at (17.5, -0.5) {$\phasemat{\vb{x}}{0}$};
	\end{pgfonlayer}
	\begin{pgfonlayer}{edgelayer}
		\draw [style=background] (2.center)
			 to (1.center)
			 to (0.center)
			 to (3.center)
			 to cycle;
		\draw [style=thick] (12) to (14.center);
		\draw [style=thick] (11) to (13.center);
		\draw [style=rwphase] (ph1) to (11);
		\draw [style=rwphase] (ph2) to (12);
	\end{pgfonlayer}
\end{tikzpicture}.
  \]
\end{lemma}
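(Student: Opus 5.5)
We prove \Cref{lem:scalable_copy} by induction on the scalable dimension \(n\), following the pattern of \Cref{lem:scalable_bigebra}: reduce the thick-wire statement to the non-scalable copy rule by peeling off one wire at a time.

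For the base case \(n=0\), both sides are the empty diagram, since the thick wire of dimension \(0\) carries nothing. The case \(n=1\) is exactly the non-scalable copy rule: a red (white) state with affine phase \(x\in\F_p(\delta)\) and zero symplectic phase is copied by a green (black) spider. For \(x\in\F_p\) this is the axiom \axiomref{Z.7} of \(\StabZX\). For general rational \(x(\delta)\) we use the colour-swapped and generalised version, \Cref{lem:copy_swapped}, together with \Cref{lem:pauli_push}, which states that red spiders with arbitrary phases copy affine green spiders and vice versa. Specialising \Cref{lem:pauli_push} to a red state with zero symplectic phase, plugged into a one-input two-output green spider, gives the one-wire statement directly.

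For the inductive step, write \(\vb x=(x,\vb x')\) with \(\vb x'\in\F_p(\delta)^{n}\), so that the wire has dimension \(n+1\). By \Cref{def:scalable_spider}, with all symplectic data and off-diagonal vector \(\vb w\) equal to zero, the scalable white state with phase \((\vb x,0)\) unfolds into a single white state \((x,0)\) on the first wire tensored with a scalable white state \((\vb x',0)\) on the remaining \(n\) wires. Indeed, the \(\vb w\)-labelled H-box connecting them has zero label, so it disconnects by the zero-matrix case of the scalable arrows, which is the GAA fact from \Cref{lem:gaa_functor}.

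The scalable green copy spider likewise unfolds into a green spider on the first wire tensored with a scalable green spider on the rest. Here again the \(\vb w=0\) H-box disconnects, and the wire crossings come from the split/merge generators of \Cref{eq:strict}. We then apply the \(n=1\) case to the first wire and the induction hypothesis to the remaining \(n\) wires, and re-bundle both output copies using \Cref{def:scalable_spider} in reverse.

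The main obstacle is the bookkeeping in the unfolding step. We must check that, with zero symplectic phases, the inductive definition really yields a plain tensor product, including the colour-change convention: the scalable white spider is defined via H-boxes around a black one. We also need the crossings to be coherent with the bundling/unbundling (op)laxators. We handle the colour convention by first applying \Cref{lem:box_inverse} to cancel the H-box pairs, and the crossings by naturality of the symmetry in the strictification.
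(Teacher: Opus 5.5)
Your proposal is correct and follows the paper's proof, which is just ``a straightforward induction on the type \(n\) of the scalable wires.'' The argument you give fills in exactly what the paper leaves implicit: peel off one wire using \Cref{def:scalable_spider}, where the connecting H-box has zero label and disconnects, then combine the one-wire copy rule with the induction hypothesis. One small clarification on bookkeeping: the one-wire scalable white state is a black state followed by a \((-)\)-labelled H-box. Matching it with the primitive white state in the non-scalable copy rule is therefore really an application of the colour-change rule, with \Cref{lem:box_inverse} serving only to adjust H-box conventions.
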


\begin{proof}
  This follows from a straightforward induction on the type \(n \in \N\) of the scalable wires.
\end{proof}

\begin{lemma}
  \label{lem:scalable_mat}%
  For any \(m,n,k \in \N\) and \(A \in \Matrices[k][n]\), \(B \in \Matrices[n][m]\),
  the block decomposition of a matrix arrow and the composition of matrix
  arrows hold diagrammatically; see~\Cref{def:matrix_arrow}.
\end{lemma}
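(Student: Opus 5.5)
The statement is \Cref{lem:scalable_mat}. It says that scalable matrix arrows decompose blockwise as in \Cref{def:matrix_arrow}, and that composing the arrows for \(A\in\Matrices[k][n]\) and \(B\in\Matrices[n][m]\) yields the arrow for the product \(AB\).

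I would prove the two parts separately, each by induction on the scalable dimensions. The base cases are single wires with \(1\times 1\) matrices, i.e.\ rational multipliers. There the composition claim is exactly the multiplicativity of rational multipliers, \Cref{lem:mutliplier_arithmetic}. Via \Cref{lem:gaa_functor}, it is also the graphical-affine-algebra fact that multipliers compose by multiplication. The sum case, where two multipliers meet in a copy spider followed by an add spider, is likewise part of \Cref{lem:mutliplier_arithmetic}.

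\textbf{Block decomposition.} For an arbitrary splitting of the thick wires into sub-bundles, not just the \(1+n\) splitting used in the definition, I would induct on the size of the first block. Splitting a bundle \(n_1+n_2\) further means iterating the copy and add spiders. I would reassociate these using spider fusion on the green and red scalable spiders, which is the scalable version of \Cref{ax:dzx_fusion} proved by the same induction. I would use \Cref{lem:scalable_bigebra} to interchange copies and adds that cross each other. After this regrouping, the inner blocks assemble by the induction hypothesis into arrows labelled by the sub-blocks.

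\textbf{Composition.} I would induct on \(n\), the shared dimension. Write \(A=[A_1\ A_2]\), split by columns, and \(B=\left[\begin{smallmatrix}B_1\\ B_2\end{smallmatrix}\right]\), split by rows, so that \(AB=A_1B_1+A_2B_2\). Using block decomposition, the composite becomes a copy on the input, then \(B_1,B_2\) in parallel, then \(A_1,A_2\) in parallel, then an add on the output. The only nontrivial step is the middle: the add spider that merges the outputs of \(B\) meets the copy spider that splits the input of \(A\). \Cref{lem:scalable_bigebra} handles this interaction, which is the bialgebra rule. After applying it, each path \(B_i\) then \(A_i\) composes by the induction hypothesis. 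The resulting parallel sum is then the arrow for \(A_1B_1+A_2B_2\), again by block decomposition together with the sum property of multipliers.

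\textbf{Obstacle.} I expect the main difficulty to be bookkeeping rather than concept. Before the bialgebra rule can fire, the wires have to be permuted and the spiders regrouped correctly, so I would first prove a small auxiliary lemma: scalable copy and add spiders commute with wire permutations. That is immediate by naturality of the symmetry. I would also check the degenerate cases where some dimension is \(0\). These reduce to zero-matrix arrows, which are a discard followed by a codiscard spider. There the counit case of \Cref{lem:scalable_copy} with \(\vb x=0\) shows that a composite through an empty wire gives the zero-matrix arrow, as required.
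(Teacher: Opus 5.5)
Your overall strategy, induction on the dimensions of the scalable wires bottoming out in multiplier arithmetic, is the same as the paper's, which only asserts a straightforward induction. However, your composition argument puts its one nontrivial step in the wrong place, and as a result its base case is wrong.

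With your split (columns of \(A\), rows of \(B\), outer dimensions unsplit), block decomposition gives exactly what you say: a copy on the \(m\)-wire, then \(B_1\otimes B_2\), then \(A_1\otimes A_2\), then an add on the \(k\)-wire. In the middle the outputs of \(B_i\) are wired directly into the inputs of \(A_i\). No add spider meets a copy spider there, so the bialgebra step you describe is vacuous.

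The add--copy interaction actually occurs where your induction on \(n\) bottoms out, namely \(n=1\) with \(k,m\) arbitrary. A row \(b\in\F_p(\delta)^{1\times m}\) unfolds to multipliers \(b_j\) followed by an \(m\)-ary add. A column \(a\in\F_p(\delta)^{k\times 1}\) unfolds to a \(k\)-ary copy followed by multipliers \(a_i\). Your stated base case (\(1\times 1\) multipliers via \Cref{lem:mutliplier_arithmetic}) covers only \(k=m=n=1\). Already a length-\(2\) column composed after a length-\(2\) row is not handled.

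To repair this, take \(n=1\) as the base case and prove the outer-product case directly. First rewrite the \(m\)-ary add followed by the \(k\)-ary copy into the complete bipartite network of \(m\) copies and \(k\) adds. This is the multi-legged single-wire bialgebra. You get it from the binary rule by induction on the arities using spider fusion, with the unit and counit laws (e.g.\ \Cref{lem:scalable_copy} at \(\vb{x}=0\)) for the degenerate arities. The equal-thickness binary statement \Cref{lem:scalable_bigebra} is not by itself the rule needed. Then slide each \(b_j\) forward through its copy and each \(a_i\) backward through its add, multiply on each edge, and recognise the fully unfolded arrow of \(ab\). Alternatively, you could split the outer dimensions instead, where your ``add meets copy'' description is accurate. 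In that case you first need that matrix arrows commute with copy and add spiders.

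Two smaller points. First, the step ``the parallel sum is the arrow of \(A_1B_1+A_2B_2\)'' is itself an induction, the additivity the paper records separately as \Cref{lem:scalable_plus}; it uses only fusion of like spiders and the scalar sum rule. Second, in the block-decomposition part no copies and adds ever cross, since copies sit on the input side and adds on the output side. There you need only like-spider fusion and permutations, not \Cref{lem:scalable_bigebra}.
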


\begin{proof}
  This follows from a straightforward induction on the type \(n \in \N\) of the scalable wires.
\end{proof}

\begin{lemma}
  \label{lem:scalable_plus}%
  For any \(n \in \N\) and matrices \(A, B \in \Matrices[n][m][\F_p(\delta)]\),
  \begin{gather*}
      \input{gsa_completeness/scalable_results/lemmas/scalable_plus/0.tikz}
    =
      \input{gsa_completeness/scalable_results/lemmas/scalable_plus/1.tikz}
    ,\qquad
      \input{gsa_completeness/scalable_results/lemmas/scalable_plus/2.tikz}
    =
      \input{gsa_completeness/scalable_results/lemmas/scalable_plus/3.tikz}
      ,\qquad
      \input{gsa_completeness/scalable_results/lemmas/scalable_plus/4.tikz}
    =
      \input{gsa_completeness/scalable_results/lemmas/scalable_plus/5.tikz}
    ,\qquad
      \input{gsa_completeness/scalable_results/lemmas/scalable_plus/6.tikz}
    =
      \input{gsa_completeness/scalable_results/lemmas/scalable_plus/7.tikz}
    \\
      \input{gsa_completeness/scalable_results/lemmas/scalable_plus/8.tikz}
    =
      \input{gsa_completeness/scalable_results/lemmas/scalable_plus/9.tikz}
    =
      \input{gsa_completeness/scalable_results/lemmas/scalable_plus/10.tikz}
    .
  \end{gather*}
\end{lemma}

\begin{proof}
  This follows from a straightforward induction on  \(n\) and \(m\).
\end{proof}

\begin{lemma}
  \label{lem:scalable_id}%
  For any \(n \in \N\): 
  \[
      \begin{tikzpicture}
	\begin{pgfonlayer}{nodelayer}
		\node [style=none] (3) at (2.75, 0.5) {};
		\node [style=none] (4) at (4.5, 0.5) {};
		\node [style=none] (5) at (4.5, -0.5) {};
		\node [style=none] (6) at (2.75, -0.5) {};
		\node [style=none] (20) at (2.75, 0) {};
		\node [style=none] (21) at (4.5, 0) {};
		\node [style=grn] (22) at (3.625, 0) {};
		\node [style=none] (23) at (4.125, 0) {};
	\end{pgfonlayer}
	\begin{pgfonlayer}{edgelayer}
		\draw [style=background] (5.center)
			 to (4.center)
			 to (3.center)
			 to (6.center)
			 to cycle;
		\draw [style=thick] (20.center) to (21.center);
	\end{pgfonlayer}
\end{tikzpicture}

    =
      \begin{tikzpicture}
	\begin{pgfonlayer}{nodelayer}
		\node [style=none] (7) at (6, 0.5) {};
		\node [style=none] (8) at (7.5, 0.5) {};
		\node [style=none] (9) at (7.5, -0.5) {};
		\node [style=none] (10) at (6, -0.5) {};
		\node [style=none] (15) at (6, 0) {};
		\node [style=none] (16) at (7.5, 0) {};
		\node [style=gbn] (17) at (6.75, 0) {};
	\end{pgfonlayer}
	\begin{pgfonlayer}{edgelayer}
		\draw [style=background] (9.center)
			 to (8.center)
			 to (7.center)
			 to (10.center)
			 to cycle;
		\draw [style=thick] (15.center) to (16.center);
	\end{pgfonlayer}
\end{tikzpicture}

    =
      \begin{tikzpicture}
	\begin{pgfonlayer}{nodelayer}
		\node [style=none] (24) at (9, 0.5) {};
		\node [style=none] (25) at (10.5, 0.5) {};
		\node [style=none] (26) at (10.5, -0.5) {};
		\node [style=none] (27) at (9, -0.5) {};
		\node [style=rmatt] (28) at (9.725, 0) {\(1\)};
		\node [style=none] (29) at (10.5, 0) {};
		\node [style=none] (30) at (9, 0) {};
	\end{pgfonlayer}
	\begin{pgfonlayer}{edgelayer}
		\draw [style=background] (26.center)
			 to (25.center)
			 to (24.center)
			 to (27.center)
			 to cycle;
		\draw [style=thick, in=360, out=180] (29.center) to (28);
		\draw [style=thick] (30.center) to (28);
	\end{pgfonlayer}
\end{tikzpicture}

    =
     \begin{tikzpicture}
	\begin{pgfonlayer}{nodelayer}
		\node [style=none] (11) at (12, 0.5) {};
		\node [style=none] (12) at (13.5, 0.5) {};
		\node [style=none] (13) at (13.5, -0.5) {};
		\node [style=none] (14) at (12, -0.5) {};
		\node [style=none] (18) at (12, 0) {};
		\node [style=none] (19) at (13.5, 0) {};
	\end{pgfonlayer}
	\begin{pgfonlayer}{edgelayer}
		\draw [style=background] (13.center)
			 to (12.center)
			 to (11.center)
			 to (14.center)
			 to cycle;
		\draw [style=thick] (18.center) to (19.center);
	\end{pgfonlayer}
\end{tikzpicture}
.
  \]
\end{lemma}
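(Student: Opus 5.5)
The statement to prove is \Cref{lem:scalable_id}: the scalable red spider with one input and one output, the scalable green spider with one input and one output, and the scalable multiplier labelled by the identity matrix all equal the bare thick wire of width \(n\). I would argue by induction on \(n\), as for the preceding scalable lemmas (\Cref{lem:scalable_bigebra}, \Cref{lem:scalable_copy}).

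\textbf{Base cases.} For \(n=0\) every side is the empty diagram, by the definitions of the scalable generators and the strictification splitters. For \(n=1\) the scalable spiders reduce to ordinary spiders with phase \((0,0)\), and the identity rule of \Cref{fig:dzx_axioms} gives the plain wire. The multiplier labelled \(1\) is \(1\)-labelled \(H\)-boxes composed as in its definition, which is the identity by \Cref{lem:box_inverse}.

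\textbf{Inductive step for the green spider.} By \Cref{def:scalable_spider}, the arity-\((1,1)\) green spider of width \(k+1\) splits into a single-wire green spider with phase \((0,0)\) on the top wire and a width-\(k\) green spider with phase \((\vb 0,0)\) on the rest. These two are joined by the \(\vb w\)-labelled scalable \(H\)-box, which here has \(\vb w=\vb 0\). First I would show that a zero-labelled \(H\)-box edge between the two spiders disconnects them. This uses the zero multiplier, i.e. the fact that a multiplier labelled \(0\) is a discard followed by a state, which should follow from \Cref{lem:scalable_mat} and the GAA functor \Cref{lem:gaa_functor}. The resulting state and effect are then absorbed into the spiders by spider fusion (\Cref{ax:dzx_fusion}) and the identity rule. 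This leaves a single-wire identity tensored with the width-\(k\) green spider, and the induction hypothesis finishes. I expect this step to be the main obstacle: I must check that the zero \(H\)-box really decouples, with no leftover scalar or antipode, in the presence of the Hadamard conventions.

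\textbf{Inductive step for the red spider.} By \Cref{def:scalable_spider} the red spider is the green one conjugated by scalable \(H\)-boxes with the negated phase, and the phase \(-0=0\) is unchanged. So it equals a scalable \(H\)-box followed by its \(\minu\)-labelled partner, which is the identity by the scalable version of \Cref{lem:box_inverse}; that version is itself proved by induction from \Cref{def:unlabelled_scalable}.

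\textbf{Inductive step for the identity multiplier.} I would use the block decomposition of \Cref{def:matrix_arrow} with \(A=1\), \(B=C=0\) and \(D=I_k\). The zero blocks disconnect the two copying spiders by the same zero-multiplier argument as in the green case. The remaining structure is a \(1\)-multiplier tensored with an \(I_k\)-multiplier between trivial spiders, which reduce to wires by the green and red cases just established and by the induction hypothesis.
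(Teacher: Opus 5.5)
Your overall strategy, induction on the width \(n\), is the same as the paper's, which only says ``straightforward induction on \(n\)''. However, the induction as you have set it up does not close. When you unfold the width-\((k+1)\) spider using \Cref{def:scalable_spider}, the width-\(k\) remainder is not a \((1,1)\)-spider. Like the top single-wire spider, it carries an extra thick leg running to the \(\vb w\)-labelled \(H\)-box. Once the zero \(H\)-box decouples, that leg is capped by \(k\) parallel units or counits, and you must absorb them to recover the width-\(k\) \((1,1)\)-spider.

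Nothing you cite covers this step. \Cref{ax:dzx_fusion} only fuses single-wire spiders, and your induction hypothesis only concerns arity \((1,1)\). The scalable fusion lemma \Cref{lem:scalable_fusion} is established only after this one, so invoking it here would need care to avoid circularity. The same problem recurs in your multiplier step: the width-\(k\) spiders around the \(I_k\) block carry legs capped by the decoupled zero blocks \(B\) and \(C\).

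The repair is to strengthen the statement you induct on. Claim that a zero-phase scalable spider of either colour, of any arity and width \(k\), equals \(k\) parallel single-wire zero-phase spiders of the same arity between the splitters. The red case reduces to the green one by your \(H\)-conjugation argument. In the inductive step:
\begin{itemize}
\item the top spider absorbs its unit or counit by \Cref{ax:dzx_fusion}, after a colour change through the \(H\)-box where needed;
\item the width-\(k\) remainder has arity one higher, so by hypothesis it is \(k\) parallel single spiders, and the capping units or counits are absorbed wire by wire.
\end{itemize}
Specialising to arity \((1,1)\) with the single-wire identity rule gives the first two equalities. The multiplier case then follows as you describe, or more directly from the block-diagonal case of \Cref{lem:scalable_mat}.

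The scalar and antipode bookkeeping you single out as the main obstacle is harmless. Nonzero scalars are not tracked, and a multiplier applied to a unit or counit is absorbed.

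There is also a shorter route. Every label here is a constant, so for each fixed \(n\) all four diagrams are ordinary stabilizer ZX-diagrams interpreted as the identity relation. The lemma then follows from completeness of \(\StabZX\) (\Cref{theorem:stabzx_acr}). This matches the paper's remark that constant-label instances of these lemmas follow from Booth et al.
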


\begin{proof}
  This follows from a straightforward induction on the type \(n \in \N\) of the scalable wires.
\end{proof}

\begin{lemma}
  \label{gsa:lem:scalable_symplectic_states}
  For any \(n \in \N\), \(\vb x \in \F_p(\delta)^n\) and invertible \(X \in \Herm\),
  \[
      \begin{tikzpicture}
	\begin{pgfonlayer}{nodelayer}
		\node [style=gbn] (20) at (-0.25, -0.25) {};
		\node [style=none] (21) at (1, -0.25) {};
		\node [style=bphase] (31) at (-0.25, 0.65) {$\phasemat{\vb x}{X}$};
		\node [style=none] (bg0) at (-1, 0.75) {};
		\node [style=none] (bg1) at (1, 0.75) {};
		\node [style=none] (bg2) at (1, -0.75) {};
		\node [style=none] (bg3) at (-1, -0.75) {};
	\end{pgfonlayer}
	\begin{pgfonlayer}{edgelayer}
		\draw [style=background] (bg0.center)
			 to (bg1.center)
			 to (bg2.center)
			 to (bg3.center)
			 to cycle;
		\draw [style=thick] (21.center) to (20);
		\draw [style=dbphase] (31) to (20);
	\end{pgfonlayer}
\end{tikzpicture}

    =
      \begin{tikzpicture}
	\begin{pgfonlayer}{nodelayer}
		\node [style=none] (34) at (1.75, -0.375) {};
		\node [style=gwn] (35) at (-0.75, -0.375) {};
		\node [style=whphase] (36) at (0, 0.525) {$\phasemat{\minu X^{\minu 1}\vb x}{X^{\minu 1}}$};
		\node [style=none] (bg0) at (-1.75, 0.875) {};
		\node [style=none] (bg1) at (1.75, 0.875) {};
		\node [style=none] (bg2) at (1.75, -0.875) {};
		\node [style=none] (bg3) at (-1.75, -0.875) {};
	\end{pgfonlayer}
	\begin{pgfonlayer}{edgelayer}
		\draw [style=background] (bg0.center)
			 to (bg1.center)
			 to (bg2.center)
			 to (bg3.center)
			 to cycle;
		\draw [style=thick] (35) to (34.center);
		\draw [style=dwphase] (36) to (35);
	\end{pgfonlayer}
\end{tikzpicture}

    \qquad\text{and}\qquad
      \begin{tikzpicture}
	\begin{pgfonlayer}{nodelayer}
		\node [style=gwn] (28) at (-0.25, -0.25) {};
		\node [style=none] (29) at (1, -0.25) {};
		\node [style=whphase] (32) at (-0.25, 0.65) {$\phasemat{\vb x}{X}$};
		\node [style=none] (bg0) at (-1, 0.75) {};
		\node [style=none] (bg1) at (1, 0.75) {};
		\node [style=none] (bg2) at (1, -0.75) {};
		\node [style=none] (bg3) at (-1, -0.75) {};
	\end{pgfonlayer}
	\begin{pgfonlayer}{edgelayer}
		\draw [style=background] (bg0.center)
			 to (bg1.center)
			 to (bg2.center)
			 to (bg3.center)
			 to cycle;
		\draw [style=thick] (29.center) to (28);
		\draw [style=dwphase] (32) to (28);
	\end{pgfonlayer}
\end{tikzpicture}

    =
      \begin{tikzpicture}
	\begin{pgfonlayer}{nodelayer}
		\node [style=gbn] (25) at (-0.5, -0.375) {};
		\node [style=none] (26) at (1.75, -0.375) {};
		\node [style=bphase] (33) at (0, 0.525) {$\phasemat{\minu X^{\minu 1}\vb x}{X^{\minu 1}}$};
		\node [style=none] (bg0) at (-1.75, 0.875) {};
		\node [style=none] (bg1) at (1.75, 0.875) {};
		\node [style=none] (bg2) at (1.75, -0.875) {};
		\node [style=none] (bg3) at (-1.75, -0.875) {};
	\end{pgfonlayer}
	\begin{pgfonlayer}{edgelayer}
		\draw [style=background] (bg0.center)
			 to (bg1.center)
			 to (bg2.center)
			 to (bg3.center)
			 to cycle;
		\draw [style=thick] (25) to (26.center);
		\draw [style=dbphase] (33) to (25);
	\end{pgfonlayer}
\end{tikzpicture}

  \]
\end{lemma}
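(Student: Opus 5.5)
The plan is to argue at the level of matrix arrows rather than by entrywise induction. Entrywise induction on \(n\), unfolding \Cref{def:scalable_spider}, is awkward because an invertible \(X\in\Herm\) may have vanishing diagonal entries. In that case the non-scalable \Cref{lem:symplectic_states} cannot be applied to the first wire alone. I will prove only the first equation. The second follows by applying the scalable colour change: conjugate both sides by the scalable \(H\)-boxes of \Cref{def:unlabelled_scalable}, using the definition of red scalable spiders in \Cref{def:scalable_spider} and \Cref{lem:box_inverse}.

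\textbf{Step 1.} First I separate the affine and symplectic parts. By spider fusion (\Cref{ax:dzx_fusion}, extended to scalable spiders by induction as in \Cref{lem:scalable_bigebra}), the green state with phase \((\vb x, X)\) is a green state with phase \((\vb 0,X)\) fused with a green affine phase \((\vb x,0)\). Semantically it is the relation \(\{(\vb u, X\vb u+\vb x)\}\), written in \((x\text{-part},z\text{-part})\) coordinates. The red state with phase \((-X^{\minu 1}\vb x,X^{\minu 1})\) is \(\{(X^{\minu 1}\vb z-X^{\minu 1}\vb x,\vb z)\}\), which is the same set. So the diagrammatic task is to realise the coordinate change \(\vb z=X\vb u+\vb x \iff \vb u=X^{\minu 1}(\vb z-\vb x)\).

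\textbf{Step 2.} I then write the symplectic green state through matrix arrows. The green state \((\vb 0,X)\) equals a green copy of the red (plus-type) unit on \(n\) wires, with one branch fed through the scalable \(H\)-box labelled \(X\) and then fused back. This is the scalable version of the construction in \Cref{def:scalable_shifted_graph}, and it is proved by induction on \(n\) using \Cref{lem:scalable_mat} and \Cref{lem:scalable_plus}, so that the \(2\times 2\) block decomposition of \(X\) reassembles into the rank-one and entry pieces of \Cref{def:scalable_spider}. The key point is that the state then becomes ``red unit followed by the matrix arrow \(X\) and an \(H\)-box''. Using \Cref{eq:arrow_inverse}, this forward arrow \(X\) may be traded for the backward arrow \(X^{\minu 1}\) on the opposite side. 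The \(H\)-box colour-changes the green spiders to red, giving a red state whose symplectic part is the arrow \(X^{\minu 1}\), i.e.\ phase \(X^{\minu 1}\).

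\textbf{Step 3.} Finally, the green affine phase \(\vb x\) is pushed through the red structure. By \Cref{lem:scalable_copy} and the scalable version of \Cref{lem:push_pauli_state}/\Cref{lem:pauli_push}, it is copied through, then passes the arrow \(X^{\minu 1}\) by \Cref{lem:scalable_mat}, and becomes the red affine phase \(-X^{\minu 1}\vb x\); the sign comes from the antipode in \Cref{lem:box_product}. The Hermiticity of \(X^{\minu 1}\) guarantees that the result is a legitimate scalable spider.

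The main obstacle is Step 2: showing, by induction on \(n\) and uniformly for non-diagonalisable Hermitian \(X\), that the inductively defined scalable green spider coincides with the matrix-arrow presentation. The delicate point is that the off-diagonal entries \(\vb w\) and \(\vb w^\dag\) must be matched against directed \(H\)-boxes with conjugated labels, so \Cref{lem:box_swapped} and \Cref{ax:h_flip} are needed there. If this proves too messy, the fallback is a direct induction that pivots on a \(2\times2\) block \(\begin{smallbmatrix}0&w\\ \bar w&0\end{smallbmatrix}\) whenever all diagonal entries vanish, using the non-scalable lemmas for that block.
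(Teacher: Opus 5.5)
Your Step 2 does not go through, and it is the step that carries the content of the lemma. The form you call the key point, ``unit, then the matrix arrow \(X\), then an \(H\)-box'', is semantically wrong. The units are the coordinate Lagrangians \(\{(\vb 0,\vb z)\}\) and \(\{(\vb u,\vb 0)\}\). The arrow acts as \((\vb u,\vb z)\mapsto(X\vb u,X^{-\dagger}\vb z)\) and the \(H\)-box as \((\vb u,\vb z)\mapsto(-\vb z,\vb u)\). Block-diagonal and block-antidiagonal maps of this kind send coordinate Lagrangians to coordinate Lagrangians. So that composite is again a green or red unit, never \(\{(\vb u,X\vb u)\}\) with \(X\) invertible.

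For the same reason, the later moves cannot produce the inversion \(X\mapsto X^{-1}\). \Cref{eq:arrow_inverse} only re-orients an arrow; it never changes the colour or the symplectic phase of a spider. A single \(H\)-box on one branch does not colour-change a green spider either, since that needs \(H\)-boxes on all of its legs. The first half of Step 2 can be salvaged. A green copy of the green unit, with one branch sent through the \(H\)-box labelled \(A\) and merged back, gives a green state with phase \(A+A^\dagger\) (in the spirit of \Cref{lem:scalable_box_loop}), so take \(A=X/2\). But you are then left with a green spider and no mechanism that inverts its phase.

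The missing ingredient is a rule in which a green symplectic phase genuinely interacts with an \(H\)-box, of Euler or state-change type. Semantically, the \(H\)-box labelled \(X\) equals green \(X\), then red \(-X^{-1}\), then green \(X\), and red units absorb green phases. This is what the paper relies on: its proof replays that of \Cref{lem:symplectic_states} with scalable generators. It uses \(\mathsf{D.20}\) to bring in an \(H\)-box labelled by the phase and \Cref{lem:box_swapped} to turn it around. It then fuses (\Cref{lem:scalable_fusion}), absorbs the affine part with \Cref{gsa:lem:push_pauli_state_thick}, and fuses again.

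Your Step 1, the idea of absorbing the affine part in Step 3, and your derivation of the second equation by \(H\)-box conjugation via \Cref{def:scalable_spider} all fit this argument. Step 2, however, has to be replaced by it. Your fallback induction with \(2\times 2\) pivots is a legitimate idea in principle. But it amounts to proving the block-inverse (Schur complement) formula diagrammatically. You cannot shortcut it with \Cref{prop:dzx_schur_complementation}, since the \(X^{-1}\) in the Schur complement is exactly what this lemma is used to produce.
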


\begin{proof}
  The proof is analogous to the proof of \Cref{lem:symplectic_states}.
\end{proof}

\begin{lemma}
  \label{lem:scalable_fusion}
  
  For any \(\vb{x},\vb{y} \in \F_p(\delta)^n\) and \(X,Y \in \Herm\),
  \[
      \begin{tikzpicture}
	\begin{pgfonlayer}{nodelayer}
		\node [style=gbn] (17) at (-0.5, 1.5) {};
		\node [style=none] (18) at (-2, 2.25) {};
		\node [style=none] (19) at (-2, 0.75) {};
		\node [style=none] (20) at (1.5, 2.25) {};
		\node [style=none] (21) at (1.5, 0.75) {};
		\node [style=none] (22) at (1.75, 1.75) {$\vdots$};
		\node [style=none] (23) at (-1.75, 1.75) {$\vdots$};
		\node [style=gbn] (24) at (0.5, -1.5) {};
		\node [style=none] (25) at (-1.5, -0.75) {};
		\node [style=none] (26) at (-1.5, -2.25) {};
		\node [style=none] (27) at (2, -0.75) {};
		\node [style=none] (28) at (2, -2.25) {};
		\node [style=none] (29) at (1.75, -1.25) {$\vdots$};
		\node [style=none] (30) at (-1.75, -1.25) {$\vdots$};
		\node [style=none] (44) at (-2, -0.75) {};
		\node [style=none] (45) at (-2, -2.25) {};
		\node [style=none] (46) at (2, 2.25) {};
		\node [style=none] (47) at (2, 0.75) {};
		\node [style=bphase] (lblphase1) at (-0.5, 2.4) {$\phasemat{\vb{x}}{X}$};
		\node [style=bphase] (lblphase2) at (0.5, -2.4) {$\phasemat{\vb{y}}{Y}$};
		\node [style=none] (bg0) at (-2, 2.5) {};
		\node [style=none] (bg1) at (2, 2.5) {};
		\node [style=none] (bg2) at (2, -2.5) {};
		\node [style=none] (bg3) at (-2, -2.5) {};
	\end{pgfonlayer}
	\begin{pgfonlayer}{edgelayer}
		\draw [style=background] (bg0.center)
			 to (bg1.center)
			 to (bg2.center)
			 to (bg3.center)
			 to cycle;
		\draw [style=thick] (17)
					 to [in=180, out=30] (20.center)
					 to (46.center);
		\draw [style=thick] (17)
					 to [in=180, out=-30] (21.center)
					 to (47.center);
		\draw [style=thick, bend right] (17) to (18.center);
		\draw [style=thick, bend left] (17) to (19.center);
		\draw [style=thick, bend left] (24) to (27.center);
		\draw [style=thick, bend right] (24) to (28.center);
		\draw [style=thick] (44.center)
					 to (25.center)
					 to [in=150, out=360] (24);
		\draw [style=thick] (45.center)
					 to (26.center)
					 to [in=210, out=0] (24);
		\draw [style=thick, in=135, out=-45] (17) to (24);
		\draw [style=dbphase] (lblphase1) to (17);
		\draw [style=ubphase] (lblphase2) to (24);
	\end{pgfonlayer}
\end{tikzpicture}

    =
      \begin{tikzpicture}
	\begin{pgfonlayer}{nodelayer}
		\node [style=gbn] (31) at (0, 0) {};
		\node [style=none] (32) at (2.25, 2.25) {};
		\node [style=none] (33) at (2.25, 0.75) {};
		\node [style=none] (34) at (2.25, -0.75) {};
		\node [style=none] (35) at (2.25, -2.25) {};
		\node [style=none] (36) at (-2.25, 2.25) {};
		\node [style=none] (37) at (-2.25, 0.75) {};
		\node [style=none] (38) at (-2.25, -0.75) {};
		\node [style=none] (39) at (-2.25, -2.25) {};
		\node [style=none] (40) at (2, 1.75) {$\vdots$};
		\node [style=none] (41) at (2, -1.25) {$\vdots$};
		\node [style=none] (42) at (-2, 1.75) {$\vdots$};
		\node [style=none] (43) at (-2, -1.25) {$\vdots$};
		\node [style=bphase] (lblphase3) at (0, 1.65) {$\phasemat{\vb{x}+\vb{y}}{X+Y}$};
		\node [style=none] (bg0) at (-2.25, 2.5) {};
		\node [style=none] (bg1) at (2.25, 2.5) {};
		\node [style=none] (bg2) at (2.25, -2.5) {};
		\node [style=none] (bg3) at (-2.25, -2.5) {};
	\end{pgfonlayer}
	\begin{pgfonlayer}{edgelayer}
		\draw [style=background] (bg0.center)
			 to (bg1.center)
			 to (bg2.center)
			 to (bg3.center)
			 to cycle;
		\draw [style=thick, in=180, out=30, looseness=1.25] (31) to (33.center);
		\draw [style=thick, in=180, out=60] (31) to (32.center);
		\draw [style=thick, in=180, out=-30] (31) to (34.center);
		\draw [style=thick, in=180, out=-60] (31) to (35.center);
		\draw [style=thick, in=0, out=120] (31) to (36.center);
		\draw [style=thick, in=0, out=150] (31) to (37.center);
		\draw [style=thick, in=0, out=-150] (31) to (38.center);
		\draw [style=thick, in=0, out=-120] (31) to (39.center);
		\draw [style=dbphase] (lblphase3) to (31);
	\end{pgfonlayer}
\end{tikzpicture}

    \qquad\text{and}\qquad
      \begin{tikzpicture}
	\begin{pgfonlayer}{nodelayer}
		\node [style=gwn] (49) at (-0.5, 1.5) {};
		\node [style=none] (50) at (-2, 2.25) {};
		\node [style=none] (51) at (-2, 0.75) {};
		\node [style=none] (52) at (1.5, 2.25) {};
		\node [style=none] (53) at (1.5, 0.75) {};
		\node [style=none] (54) at (1.75, 1.75) {$\vdots$};
		\node [style=none] (55) at (-1.75, 1.75) {$\vdots$};
		\node [style=gwn] (56) at (0.5, -1.5) {};
		\node [style=none] (57) at (-1.5, -0.75) {};
		\node [style=none] (58) at (-1.5, -2.25) {};
		\node [style=none] (59) at (2, -0.75) {};
		\node [style=none] (60) at (2, -2.25) {};
		\node [style=none] (61) at (1.75, -1.25) {$\vdots$};
		\node [style=none] (62) at (-1.75, -1.25) {$\vdots$};
		\node [style=none] (76) at (-2, -0.75) {};
		\node [style=none] (77) at (-2, -2.25) {};
		\node [style=none] (78) at (2, 2.25) {};
		\node [style=none] (79) at (2, 0.75) {};
		\node [style=none] (80) at (-0.025, 0.125) {};
		\node [style=whphase] (lblphase4) at (-0.5, 2.4) {$\phasemat{\vb{x}}{X}$};
		\node [style=whphase] (lblphase5) at (0.5, -2.4) {$\phasemat{\vb{y}}{Y}$};
		\node [style=none] (bg0) at (-2, 2.5) {};
		\node [style=none] (bg1) at (2, 2.5) {};
		\node [style=none] (bg2) at (2, -2.5) {};
		\node [style=none] (bg3) at (-2, -2.5) {};
	\end{pgfonlayer}
	\begin{pgfonlayer}{edgelayer}
		\draw [style=background] (bg0.center)
			 to (bg1.center)
			 to (bg2.center)
			 to (bg3.center)
			 to cycle;
		\draw [style=thick] (49)
			 to [in=180, out=30] (52.center)
			 to (78.center);
		\draw [style=thick] (49)
			 to [in=180, out=-30] (53.center)
			 to (79.center);
		\draw [style=thick, bend right] (49) to (50.center);
		\draw [style=thick, bend left] (49) to (51.center);
		\draw [style=thick, bend left] (56) to (59.center);
		\draw [style=thick, bend right] (56) to (60.center);
		\draw [style=thick] (56)
			 to [in=360, out=150] (57.center)
			 to (76.center);
		\draw [style=thick] (56)
			 to [in=0, out=-150] (58.center)
			 to (77.center);
		\draw [style=thick] (49)
			 to [in=94, out=-45] (80.center)
			 to [in=135, out=-86] (56);
		\draw [style=dwphase] (lblphase4) to (49);
		\draw [style=uwphase] (lblphase5) to (56);
	\end{pgfonlayer}
\end{tikzpicture}

    =
      \begin{tikzpicture}
	\begin{pgfonlayer}{nodelayer}
		\node [style=gwn] (63) at (0, 0) {};
		\node [style=none] (64) at (2.25, 2.25) {};
		\node [style=none] (65) at (2.25, 0.75) {};
		\node [style=none] (66) at (2.25, -0.75) {};
		\node [style=none] (67) at (2.25, -2.25) {};
		\node [style=none] (68) at (-2.25, 2.25) {};
		\node [style=none] (69) at (-2.25, 0.75) {};
		\node [style=none] (70) at (-2.25, -0.75) {};
		\node [style=none] (71) at (-2.25, -2.25) {};
		\node [style=none] (72) at (2, 1.75) {$\vdots$};
		\node [style=none] (73) at (2, -1.25) {$\vdots$};
		\node [style=none] (74) at (-2, 1.75) {$\vdots$};
		\node [style=none] (75) at (-2, -1.25) {$\vdots$};
		\node [style=whphase] (lblphase6) at (0, 1.65) {$\phasemat{\vb{x}+\vb{y}}{X+Y}$};
		\node [style=none] (bg0) at (-2.25, 2.5) {};
		\node [style=none] (bg1) at (2.25, 2.5) {};
		\node [style=none] (bg2) at (2.25, -2.5) {};
		\node [style=none] (bg3) at (-2.25, -2.5) {};
	\end{pgfonlayer}
	\begin{pgfonlayer}{edgelayer}
		\draw [style=background] (bg0.center)
			 to (bg1.center)
			 to (bg2.center)
			 to (bg3.center)
			 to cycle;
		\draw [style=thick, in=180, out=30] (63) to (65.center);
		\draw [style=thick, in=180, out=60] (63) to (64.center);
		\draw [style=thick, in=180, out=-30] (63) to (66.center);
		\draw [style=thick, in=180, out=-60] (63) to (67.center);
		\draw [style=thick, in=0, out=120] (63) to (68.center);
		\draw [style=thick, in=0, out=150] (63) to (69.center);
		\draw [style=thick, in=0, out=-150] (63) to (70.center);
		\draw [style=thick, in=0, out=-120] (63) to (71.center);
		\draw [style=dwphase] (lblphase6) to (63);
	\end{pgfonlayer}
\end{tikzpicture}

  \]
\end{lemma}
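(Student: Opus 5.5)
The statement to prove is \Cref{lem:scalable_fusion}, the scalable spider fusion rule. I would prove it by induction on the scalable dimension \(n\), following the pattern of the other scalable lemmas such as \Cref{lem:scalable_bigebra} and \Cref{lem:scalable_copy}. The inductive unfolding is \Cref{def:scalable_spider}.

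\textbf{Base cases.} For \(n=0\) the thick wire is empty, so both sides are the empty diagram. For \(n=1\) the scalable spider is just an ordinary generalised spider. The claim then reduces directly to \Cref{ax:dzx_fusion}, which fuses spiders labelled by \((a(\delta),b(\delta))\) and \((c(\delta),d(\delta))\) with self-conjugate symplectic parts into a spider labelled by their sum.

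\textbf{Inductive step (green spiders).}
\begin{itemize}
\item Write the phases in block form: \(\vb x=(a,\vb v)\) and \(X\) with top-left entry \(b\), off-diagonal column \(\vb w\) (and row \(\vb w^\dag\)), and lower block \(A\). Do the same for \(\vb y=(c,\vb v')\) and \(Y\) with blocks \(d,\vb w',A'\).
\item Unfold both spiders by \Cref{def:scalable_spider}. Each becomes a single-wire spider \((a,b)\) and a \(k\)-dimensional scalable spider \((\vb v,A)\), linked by a scalable H-box edge labelled \(\vb w\).
\item The connecting thick wire splits into a single wire joining the two single-wire spiders, and a \(k\)-wire joining the two \(k\)-dimensional spiders.
\item Fuse the single-wire pair using \Cref{ax:dzx_fusion}, obtaining phase \((a+c,b+d)\).
\item Fuse the \(k\)-dimensional pair using the induction hypothesis, obtaining \((\vb v+\vb v',A+A')\).
\item Two parallel H-box connections, labelled \(\vb w\) and \(\vb w'\), now run between the same pair of spiders. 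Combine them into a single connection labelled \(\vb w+\vb w'\) using \Cref{lem:scalable_plus} together with spider fusion around the H-boxes.
\item Refold the result with \Cref{def:scalable_spider} to get the scalable spider with phase \((\vb x+\vb y,X+Y)\).
\end{itemize}

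\textbf{Inductive step (red spiders).} By \Cref{def:scalable_spider}, each red scalable spider is a green one with negated phase matrix, conjugated by layers of scalable H-boxes. I would cancel the adjacent H-box and \(\minu\)-H-box pair on the shared wire using \Cref{lem:box_inverse}, lifted to the scalable setting by induction. The two spiders then fuse by the green case. The negated matrices add to \(-(X+Y)\), as required.

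\textbf{Expected main obstacle.} The hardest step is the bookkeeping when the two unfolded spiders are rearranged. The wire splitting and merging maps, and the order in which the cross wires meet the H-box edges, must be reorganised so that the \(\vb w\) and \(\vb w'\) edges end up genuinely parallel. I would handle this with the (op)laxator coherence from \Cref{eq:strict} and \Cref{lem:scalable_bigebra}, and then check that the merged edge label is exactly \(\vb w+\vb w'\) and not a conjugated variant. Hermiticity of \(X\) and \(Y\) guarantees that the row block is \(\vb w^\dag+\vb w'^\dag\), so the summed matrix is consistent.
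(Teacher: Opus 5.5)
Your proposal is correct and follows essentially the same route as the paper: induction on the thick-wire dimension, block-decomposing \(\vb x,\vb y,X,Y\), and unfolding both spiders via \Cref{def:scalable_spider}. You then fuse the single-wire parts with \Cref{ax:dzx_fusion} and the remaining parts with the induction hypothesis, merge the two off-diagonal H-box edges into one labelled \(\vb w+\vb w'\), and obtain the red case from the green one by cancelling the H-boxes on the shared wire (the paper simply asserts the red case ``follows easily'').

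The one step to tighten is that merge. Two H-box edges entering the same green spider do not combine by spider fusion alone. Unless one of the equations of \Cref{lem:scalable_plus} already packages parallel H-box edges, first unfuse a phase-free green merge behind the two H-boxes and use the colour-change axiom to turn it into an addition; the sum-of-matrix-arrows rule then applies.
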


\begin{proof}
  We prove thickened green fusion by induction on the type \(n\in\N\) of the thick wires.
  \begin{itemize}
    \item \emph{Inductive claim:}
      The claim holds for any \(n \in \N\).
    \item \emph{Base case:}
      Trivial.
    \item \hypertarget{ind:scalable_fusion}{}\emph{Inductive hypothesis:}
      Fix \(n \in \N\) and assume the claim holds for wires of type \(n\).
    \item \emph{Inductive step:}
      Given \(X,Y \in \Herm[n+1]\), \(\vb x, \vb y\in \F_p(\delta)^{n+1}\), there exist 
      \(X',Y' \in \Herm[n]\), \(\vb x',\vb y', X_{1,\bullet}, Y_{1,\bullet} \in \F_p(\delta)^n\), and \(x_1, y_1, X_{1,1}\in \F_p(\delta)\) such that:
      \begin{equation*}
          X
        =
          % [inline block 6: 13 envs, 27722 chars in 3 pieces, piece 1 here, a bare % at each other -> data_tex | \begin{bmatrix}             X_{1,1} & X_{1,\bullet}^\dag \\...]

.
      \end{align*}
  \end{itemize}
  The scalable red spider fusion follows easily from the scalable green spider fusion.
\end{proof}

\begin{lemma}
  \label{gsa:lem:arrow_transpose}
  For any \(m,n \in \N\) and \(A \in \Matrices[n][m][\F_p(\delta)]\),
  \[
      %
.
  \]
\end{lemma}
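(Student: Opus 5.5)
The plan is to prove the equation by induction on the scalable dimensions \(m\) and \(n\), reducing it to the case of a single wire, where it holds essentially by definition. Throughout, the only structural facts I need are the \dag-compact closed coherences (yanking, and sliding maps around cups and caps), the flexsymmetry of the scalable green and red spiders, and the block decomposition of matrix arrows from \Cref{lem:scalable_mat}.

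\textbf{Base case (\(m=n=1\)).} Here \(A\) is a single rational function \(f(\delta)/g(\delta)\). By \Cref{def:dzx_rational_multiplier}, the multiplier labelled \(f/g\) is the right-facing polynomial multiplier \(f\) followed by the left-facing multiplier \(g\). By \Cref{def:dzx_rational_multiplier} again, the left-facing multiplier is literally the right-facing one bent around by a cup and a cap. Bending the whole composite therefore does two things: it turns the \(f\)-part into a left-facing \(f\), which is definitional, and it turns the bent \(g\)-part back into a right-facing \(g\), by yanking. After the bend, the two factors appear in the reverse order. I will use \Cref{lem:mutliplier_arithmetic} to commute them back into the normal order, so that the result is the transposed arrow as written in the statement. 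If the statement's transposed arrow is instead phrased through a right-facing arrow with a conjugated label, I would unfold \Cref{def:dzx_poly_multiplier} by induction on the degree of \(f\). The steps there are:
\begin{itemize}
\item bending a green copy spider gives a green spider with legs swapped, by flexsymmetry;
\item bending a red addition spider likewise gives a red spider with legs swapped;
\item bending a delay turns it into a left-facing \(\delta\).
\end{itemize}
This gives the identity term by term.

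\textbf{Inductive step.} Write \(A\) in block form \(\left[\begin{smallmatrix}A' & C\\ B & D\end{smallmatrix}\right]\), splitting off one row or one column, and expand the arrow using \Cref{def:matrix_arrow}. This presents it as:
\begin{itemize}
\item a layer of scalable green copy spiders;
\item the four block arrows \(A', B, C, D\) arranged crosswise;
\item a layer of scalable red addition spiders.
\end{itemize}
Now bend the outer thick wires with a cup and a cap. The cup and cap pass through the spider layers, and by spider flexsymmetry this exchanges the green and red layers' roles as inputs and outputs. Each of the four inner arrows is then sandwiched between a cup and a cap, so I apply the induction hypothesis to each. Finally, I reorganise the crossing wires by naturality of the symmetry. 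After this, the off-diagonal blocks \(B\) and \(C\) trade places, which is exactly the block form of the transpose. Reassembling with \Cref{lem:scalable_mat} finishes the step.

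I expect the main obstacle to be this last bookkeeping. The bent copy and add layers must land in the correct orientation to match \Cref{def:matrix_arrow} for the transposed matrix. If they instead come out mirrored, I would correct this using \Cref{lem:scalable_bigebra} or \Cref{lem:scalable_id}. I also need to check that no antipode (a multiplier by \(-1\)) is introduced when a red spider is bent, which is the role played by \Cref{lem:compact_antipode}.
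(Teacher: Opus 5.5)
Your induction skeleton matches the paper's one-line proof, but the mechanism you put inside it cannot produce a transpose. The skeleton is: induction on the dimensions of \(A\), splitting into blocks via \Cref{def:matrix_arrow} and \Cref{lem:scalable_mat}, with the off-diagonal blocks trading places. The problem is that cups and caps, flexsymmetry, \Cref{lem:scalable_bigebra} and \Cref{lem:scalable_id} are all colour-preserving. Bending a right-facing arrow just gives the left-facing arrow of \Cref{def:matrix_arrow}, i.e.\ the converse relation; for invertible \(Z\) this is the right-facing \(Z^{-1}\) (\Cref{eq:arrow_inverse}).

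Your own base case shows this. Bending \(f/g\) yields the right-facing \(g\) followed by the left-facing \(f\). By \Cref{def:dzx_rational_multiplier} this is the multiplier \(g/f\), the \emph{inverse} of the label. The \(1\times 1\) (conjugate) transpose of \(f/g\) is instead \(f/g\) itself (resp.\ \(\bar f/\bar g\)). Semantically, the bent multiplier \(a\) is \(\Gr\left[\begin{smallmatrix} a^{-1} & 0\\ 0 & \bar a\end{smallmatrix}\right]\), while the right-facing \(\bar a\) is \(\Gr\left[\begin{smallmatrix} \bar a & 0\\ 0 & a^{-1}\end{smallmatrix}\right]\); they differ by exchanging the \(X\) and \(Z\) components.

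Your fallback through \Cref{def:dzx_poly_multiplier} fails for the same reason. A bent delay is \(\delta^{-1}=\bar\delta\), but bending \(1+\delta\) gives \((1+\delta)^{-1}\), not \(1+\delta^{-1}\). In the inductive step, bending leaves a red split at the input end and a green merge at the output end. That is not the shape of \Cref{def:matrix_arrow} for any matrix, and no bigebra or identity lemma can fix it.

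The missing idea is the colour change. As in graphical linear algebra, transposition is mirror image plus colour swap, and in the ZX-calculus the colour swap is the Hadamard. The lemma is the scalable form of the H-box flip: a (scalable) H-box slides past a matrix arrow, reversing its direction and replacing \(A\) by \(A^\dag\). Equivalently, the right-facing H-box labelled \(A\) equals the left-facing one labelled \(A^\dag\). This is how the paper uses it later, e.g.\ to get the red/surjective half of \Cref{lem:scalable_arrow_label_white} from the green/injective half by passing to the injective matrix \(A^\dag\).

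With this reading, the base case is \Cref{def:dzx_directed_fourier_box} together with the flip axiom \Cref{ax:h_flip} (right-facing \(H_a\) equals left-facing \(H_{\bar a}\)). The inductive step then runs as follows:
\begin{itemize}
\item push the H-boxes through the red add layer and the green copy layer using \Cref{ax:dzx_colour};
\item apply the induction hypothesis to the four blocks;
\item use naturality of the symmetry to exchange \(B\) and \(C\), giving the blocks of \(A^\dag=\left[\begin{smallmatrix} (A')^{\dag} & B^\dag\\ C^\dag & D^\dag\end{smallmatrix}\right]\).
\end{itemize}
The antipodes you need to track come from \(H\) versus \(H^{-1}\) (\Cref{lem:box_product,lem:box_inverse}), not from bending red spiders.
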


\begin{proof}
  This follows from a straightforward induction on \(n\) and \(m\) of the matrix \(A\).
\end{proof}

\begin{lemma}
  \label{lem:scalable_box_loop}
  For any \(n \in \N\) and \(A \in \Matrices[n][n][\F_p(\delta)]\),
  \begin{align*}
      %

  \end{align*}
\end{lemma}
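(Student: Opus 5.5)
The statement to prove is \Cref{lem:scalable_box_loop}: for \(A\in\Matrices[n][n][\F_p(\delta)]\), closing a scalable \(A\)-labelled \(H\)-box (or matrix arrow) into a loop against a thick green/red spider simplifies to a scalable spider whose symplectic phase is built from \(A\), presumably its Hermitian part \(A+A^\dag\). I will follow the pattern of the other scalable lemmas and argue by induction on \(n\).

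\textbf{Base case.} For \(n=0\) both sides are the empty diagram. For \(n=1\) the statement reduces to a non-scalable identity. When the label lies in \(\F_p\), this identity is the corresponding lemma of Booth et al.~\cite{gsa}. For a general label \(a(\delta)\in\F_p(\delta)\), I plan to do the following:
\begin{itemize}
\item rewrite the directed \(H\)-box as a rational multiplier followed by an undirected \(H\)-box, using \Cref{def:dzx_directed_fourier_box};
\item slide the multiplier around the loop, which turns \(a(\delta)\) into \(\bar a(\delta)\) on the returning branch;
\item fuse with \Cref{ax:dzx_fusion}, so that the self-loop contributes the symplectic phase \(a(\delta)+a(\delta^{\minu 1})\).
\end{itemize}
This phase is self-conjugate, so by \Cref{prop:self-conj-rational} it is a legitimate spider label.

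\textbf{Inductive step.} I write
\(A=\begin{bmatrix} a & \vb{u}^\trans\\ \vb{v} & A'\end{bmatrix}\)
and use \Cref{lem:scalable_mat} to split the matrix arrow into its four blocks. The loop then decomposes into four pieces:
\begin{itemize}
\item a scalar loop on the first wire, handled by the base case;
\item an \(n\)-dimensional loop labelled \(A'\), handled by the inductive hypothesis;
\item two cross edges, labelled \(\vb{u}\) and \(\vb{v}\), joining the first wire to the remaining \(n\) wires.
\end{itemize}
Bending the cross edges with the compact structure and \Cref{gsa:lem:arrow_transpose} turns \(\vb{u}\) into \(\bar{\vb u}\). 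The two cross edges become parallel Hadamard edges between the same spiders. I merge them using \Cref{lem:scalable_plus} together with \Cref{lem:scalable_bigebra}, giving a single edge labelled \(\bar{\vb u}+\vb v\). Finally, \Cref{lem:scalable_fusion} and \Cref{def:scalable_spider} reassemble the pieces into the scalable spider with phase \(\begin{bmatrix} a+\bar a & (\bar{\vb u}+\vb v)^\dag\\ \bar{\vb u}+\vb v & A'+A'^\dag\end{bmatrix}=A+A^\dag\).

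\textbf{Main obstacle.} I expect the hardest part to be the conjugation bookkeeping when the arrows are bent around cups and caps. Transposing through the compact structure introduces the involution \(\delta\mapsto\delta^{\minu 1}\), and the directed \(H\)-boxes carry a sign (\Cref{lem:box_product}, \Cref{lem:box_opposite_inverse}). I would first check the \(1\times1\) case against the semantics in \(\sALR\), to fix whether the resulting phase is \(A+A^\dag\) or \(-(A+A^\dag)\), before writing out the inductive step.
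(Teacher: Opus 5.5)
Your inductive step follows the paper's proof. The paper also writes \(A=\begin{bmatrix} a & C\\ B & D\end{bmatrix}\) and unfolds the matrix arrow, the scalable \(H\)-box and the scalable spider. It then fuses with \Cref{lem:scalable_fusion}, applies the induction hypothesis to the \(D\)-block loop, and uses \Cref{gsa:lem:arrow_transpose} to turn one cross edge around so it runs parallel to the other. Finally it reassembles with \Cref{def:scalable_spider} and \Cref{lem:scalable_plus}. Using \Cref{lem:scalable_mat} or \Cref{lem:scalable_bigebra} along the way is harmless, and your target phase \(A+A^\dag\) is consistent with how the paper uses this lemma later.

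The gap is the scalar loop. It carries the only non-formal content of the lemma, and your inductive step relies on it as its ``base case''. The derivation you sketch for it does not go through.
\begin{itemize}
\item After \Cref{def:dzx_directed_fourier_box}, the loop is a single edge carrying one multiplier and one plain \(H\)-box.
\item Sliding the multiplier through the \(H\)-box only replaces \(a(\delta)\) by \(\bar a(\delta)^{-1}\), i.e.\ the reversed \(\bar a\)-arrow. It is still one multiplier on one edge.
\item Pushing it through the spider, with compensating multipliers on the spider's other legs, just moves it to another position on the same loop.
\item At no point is the loop removed, or split into an \(a\)-part and an \(\bar a\)-part.
\item \Cref{ax:dzx_fusion} only merges spiders joined by plain wires, so it cannot absorb an \(H\)-box self-loop into a phase.
\end{itemize}
Evaluating a self-loop through a rationally labelled \(H\)-box as the phase \(a+\bar a\) has to come from the axioms of \(\dStabZX\). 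In the paper, this is done at that step of the inductive argument by citing axiom \(\mathsf{D.7}\) of \Cref{fig:dzx_axioms}, alongside the induction hypothesis and \Cref{gsa:lem:arrow_transpose}. With that axiom, the base case is just \(n=0\), and no separate \(n=1\) argument is needed.
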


\begin{proof}
  We prove the claim by induction on the type \(n\) of the thick wires, equivalently the dimension of the matrix \(A\).
  \begin{itemize}
    \item \emph{Inductive claim:}
      The claim holds for any \(n \in \N\).
    \item \emph{Base case:}
      Trivial.
    \item \hypertarget{ind:scalable_box_loop}{}\emph{Inductive hypothesis:}
      Fix \(n = k\) and assume the claim holds for matrices in \(\Matrices[k][k][\F_p(\delta)]\).
    \item \emph{Inductive step:}
      Given any \(A \in\Matrices[(k+1)][(k+1)][\F_p(\delta)]\), there exist
      \(a \in \F_p(\delta)\),
      \(B \in \Matrices[k][1][\F_p(\delta)]\),
      \(C \in\Matrices[1][k][\F_p(\delta)]\), and \(D \in  \Matrices[k][k][\F_p(\delta)]\) such that:
      \begin{equation}
        A =
        \begin{bmatrix}
          a & C \\ B & D
        \end{bmatrix}
      \end{equation}
      Then
      \begin{align*}
        &
          \input{gsa_completeness/scalable_results/lemmas/scalable_box_loop/proof/0.tikz}
        \steq{\defref{def:matrix_arrow}\\ \defref{def:unlabelled_scalable} \\ \defref{def:scalable_spider}}
          \input{gsa_completeness/scalable_results/lemmas/scalable_box_loop/proof/1.tikz} 
        \steq{\lemref{lem:scalable_fusion}}
          \input{gsa_completeness/scalable_results/lemmas/scalable_box_loop/proof/2.tikz} 
        \steq{\indhyp{ind:scalable_box_loop} \\ \axiomref{D.7} \\ \lemref{gsa:lem:arrow_transpose}}
          \input{gsa_completeness/scalable_results/lemmas/scalable_box_loop/proof/3.tikz}
        \\
        &\steq{\defref{def:scalable_spider}\\ \lemref{lem:scalable_plus}}
          \input{gsa_completeness/scalable_results/lemmas/scalable_box_loop/proof/4.tikz}
        =
          \input{gsa_completeness/scalable_results/lemmas/scalable_box_loop/proof/5.tikz}.
      \end{align*}
  \end{itemize}
\end{proof}

It follows from \Cref{lem:gaa_functor}, together with \cite[Prop.~17, Eq.~20]{gsa}
\begin{proposition}
  \label{prop:matprop}
  For any matrix \(A\) the following are equivalent:
  \begin{center}
    \begin{tabular}{rlcrl}
      \((1)\) & \(A\) is injective & \quad\quad\quad & \((1)\) & \(A\) is surjective\\[9pt]
      \((2)\) & \input{gsa_completeness/scalable_results/matprop/inj3.tikz} = \input{gsa_completeness/scalable_results/matprop/inj3_2.tikz} & & \((2)\) & \input{gsa_completeness/scalable_results/matprop/surj3.tikz} = \input{gsa_completeness/scalable_results/matprop/surj3_2.tikz}\\[18pt]
      \((3)\) & \input{gsa_completeness/scalable_results/matprop/inj2.tikz} = \input{gsa_completeness/scalable_results/matprop/inj2_2.tikz} & & \((3)\) & \input{gsa_completeness/scalable_results/matprop/surj2.tikz} = \input{gsa_completeness/scalable_results/matprop/surj2_2.tikz}\\[18pt]
      \((4)\) & \input{gsa_completeness/scalable_results/matprop/inj1.tikz} = \input{gsa_completeness/scalable_results/matprop/inj1_2.tikz} & & \((4)\) & \input{gsa_completeness/scalable_results/matprop/surj1.tikz} = \input{gsa_completeness/scalable_results/matprop/surj1_2.tikz}
    \end{tabular}
  \end{center}
  Moreover, the following matrix rules hold:
  \begin{equation*}
    \begin{array}{c@{\;}c@{\;}c@{\qquad}c@{\;}c@{\;}c@{\qquad}c@{\;}c@{\;}c}
      \input{gsa_completeness/scalable_results/matprop/matrules/red.tikz}
        &=& \input{gsa_completeness/scalable_results/matprop/matrules/red_2.tikz}
      & \input{gsa_completeness/scalable_results/matprop/matrules/phase.tikz}
        &=& \input{gsa_completeness/scalable_results/matprop/matrules/phase_2.tikz}
      & \input{gsa_completeness/scalable_results/matprop/matrules/green.tikz}
        &=& \input{gsa_completeness/scalable_results/matprop/matrules/green_2.tikz} \\[2ex]
      \input{gsa_completeness/scalable_results/matprop/matrules/red_copy.tikz}
        &=& \input{gsa_completeness/scalable_results/matprop/matrules/red_copy_2.tikz}
      & \input{gsa_completeness/scalable_results/matprop/matrules/compose.tikz}
        &=& \input{gsa_completeness/scalable_results/matprop/matrules/compose_2.tikz}
      & \input{gsa_completeness/scalable_results/matprop/matrules/green_copy.tikz}
        &=& \input{gsa_completeness/scalable_results/matprop/matrules/green_copy_2.tikz}
    \end{array}
  \end{equation*}
\end{proposition}

\begin{lemma}
  \label{lem:scalable_arrow_phase}
  For any \(m,n \in \N\), any \(A \in \Matrices[m][n][\F_p(\delta)]\), any \(\vb x \in \F_p(\delta)^m\)
  and any \(X \in \Herm[m]\),
  \begin{align*}
      \begin{tikzpicture}
	\begin{pgfonlayer}{nodelayer}
		\node [style=none] (0) at (-1.5, -0.25) {};
		\node [style=rmatt] (1) at (-0.5, -0.25) {$A$};
		\node [style=gbn] (2) at (0.5, -0.25) {};
		\node [style=none] (3) at (1.5, -0.25) {};
		\node [style=bphase] (9) at (0.5, 0.65) {$\phasemat{\vb x}{X}$};
		\node [style=none] (bg0) at (-1.5, 0.75) {};
		\node [style=none] (bg1) at (1.5, 0.75) {};
		\node [style=none] (bg2) at (1.5, -0.75) {};
		\node [style=none] (bg3) at (-1.5, -0.75) {};
	\end{pgfonlayer}
	\begin{pgfonlayer}{edgelayer}
		\draw [style=background] (bg0.center)
			 to (bg1.center)
			 to (bg2.center)
			 to (bg3.center)
			 to cycle;
		\draw [style=thick, in=180, out=0] (0.center) to (1);
		\draw [style=thick, in=180, out=0] (1) to (2);
		\draw [style=thick, in=180, out=0] (2) to (3.center);
		\draw [style=dbphase] (9) to (2);
	\end{pgfonlayer}
\end{tikzpicture}

    =
      \begin{tikzpicture}
	\begin{pgfonlayer}{nodelayer}
		\node [style=none] (5) at (-2, -0.25) {};
		\node [style=gbn] (6) at (-0.5, -0.25) {};
		\node [style=rmatt] (7) at (0.75, -0.25) {$A$};
		\node [style=none] (8) at (2, -0.25) {};
		\node [style=bphase] (10) at (0, 0.65) {$\phasemat{A^\dag \vb x}{A^\dag XA}$};
		\node [style=none] (bg0) at (-2, 0.875) {};
		\node [style=none] (bg1) at (2, 0.875) {};
		\node [style=none] (bg2) at (2, -0.75) {};
		\node [style=none] (bg3) at (-2, -0.75) {};
	\end{pgfonlayer}
	\begin{pgfonlayer}{edgelayer}
		\draw [style=background] (bg0.center)
			 to (bg1.center)
			 to (bg2.center)
			 to (bg3.center)
			 to cycle;
		\draw [style=thick, in=180, out=0] (5.center) to (6);
		\draw [style=thick, in=180, out=0] (6) to (7);
		\draw [style=thick, in=180, out=0] (7) to (8.center);
		\draw [style=dbphase] (10) to (6);
	\end{pgfonlayer}
\end{tikzpicture}

  \end{align*}
\end{lemma}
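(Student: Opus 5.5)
We prove \Cref{lem:scalable_arrow_phase} by induction on the dimension \(n\) of the domain of \(A\), in the same style as \Cref{lem:scalable_fusion} and \Cref{lem:scalable_box_loop}. Since scalable matrix arrows decompose blockwise (\Cref{lem:scalable_mat}), it is enough to peel off one column at a time. For the base case \(n=0\), the arrow \(A\) is a zero-width matrix. Then \(A^\dag\vb x\) and \(A^\dag X A\) are empty, and both sides reduce to the same diagram by \Cref{lem:scalable_id} and the scalar normal forms of \Cref{gsa:prop:zero_normal_forms}.

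Before the induction, we check the genuinely non-scalable case \(m=n=1\), where \(A=a\in\F_p(\delta)\) is a single rational multiplier. If \(a\neq 0\), the plan is to unfold the multiplier as an H-box followed by a \(\minu\)-box, as in \Cref{prop:dzx_equivalent_presentations}. We then push the green spider through using colour change (\axref{ax:dzx_colour}) and \Cref{lem:box_swapped}. Spider fusion (\axref{ax:dzx_fusion}) collects the phase \((\bar a x, \bar a X a)\). The conjugate \(\bar a\) appears because the left-facing box carries the reversed orientation (\Cref{def:dzx_directed_fourier_box}). If \(a=0\), both sides disconnect, and \Cref{lem:pauli_push} together with \Cref{gsa:prop:zero_normal_forms} shows both reduce to the same phase-free diagram. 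For general \(m\) with \(n=1\), we split \(A\) into a column of multipliers feeding a scalable red copy. Each multiplier is commuted into its own green spider leg using the base case, and the resulting spiders are fused by \Cref{lem:scalable_fusion}. The cross terms \(\bar a_i X_{ij} a_j\) of \(A^\dag X A\) arise from the H-box edges between legs in \Cref{def:scalable_spider}, where \Cref{lem:scalable_box_loop} and \Cref{gsa:lem:arrow_transpose} transport the labels.

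For the inductive step, write \(A=[\,A_1 \mid A_2\,]\) with \(A_1\) a single column. By \Cref{lem:scalable_mat}, the arrow \(A\) factors as a scalable red split followed by \(A_1\otimes A_2\) and a green merge. The phased green spider absorbs the green merge through the scalable bialgebra and copy lemmas (\Cref{lem:scalable_bigebra}, \Cref{lem:scalable_copy}), and is then duplicated onto both branches. Applying the column case and the induction hypothesis, and refusing with \Cref{lem:scalable_fusion}, produces the block phase
\[\begin{bmatrix} A_1^\dag X A_1 & A_1^\dag X A_2\\ A_2^\dag X A_1 & A_2^\dag X A_2\end{bmatrix}=A^\dag X A,\]
with the affine part \(A^\dag\vb x\) assembled in the same way.

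I expect the main obstacle to be the off-diagonal blocks. The proof must show that, after duplication, the Hadamard-edges between the two branches carry exactly \(A_1^\dag X A_2\) and its conjugate, rather than an unsymmetrised or unconjugated version. This requires careful use of the directed H-boxes and \Cref{lem:box_swapped}, which holds only for self-conjugate labels. The likely resolution is to keep the edge label as a pair of directed boxes labelled \(A_1^\dag X A_2\) and \(A_2^\dag X A_1\), and to invoke Hermiticity of \(X\) only at the end to identify the result with the undirected scalable spider.
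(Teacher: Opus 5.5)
The gap is in your inductive step. By \Cref{def:matrix_arrow}, $[\,A_1\mid A_2\,]$ is $A_1\otimes A_2$ on the split bundle, followed by a \emph{red} (adding) merge of two $m$-wire bundles. There is no green merge for the phased green spider to absorb. If there were one, spreading its phase onto both branches would double the phase.

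What you actually need is to commute a symplectic phase backwards through an adder. Because
\(
(u_1+u_2)^\dag X(u_1+u_2)=u_1^\dag Xu_1+u_2^\dag Xu_2+u_1^\dag Xu_2+u_2^\dag Xu_1,
\)
a red merge followed by the green $(\vb x,X)$ spider equals $(\vb x,X)$ on each summand, plus a \emph{new} H-box edge carrying $X$ between the two bundles, followed by the merge. None of the lemmas you cite creates that edge. \Cref{lem:scalable_bigebra} is phase-free, and \Cref{lem:scalable_copy} only moves affine red states through green copies. \Cref{lem:pauli_push} handles the affine part $\vb x$, but not $X$.

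This identity is the real content of the lemma, and it needs its own argument with a non-scalable core. Your last paragraph worries about how the cross labels get conjugated and symmetrised, but never derives where the cross edges come from.

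Your column case is fine once the colours and order are corrected. A column arrow is a green copy \emph{followed} by multipliers, so it fuses with the phased spider, and the cross terms come from the H-boxes of \Cref{def:scalable_spider}. This is essentially the paper's induction on $m$, restricted to $n=1$.

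The paper orders the inductions so that the adder is only ever scalar-valued. It first inducts on $n$ for a single row ($m=1$), peeling off one entry. A zero row is absorbed via \Cref{lem:push_pauli_state}. Otherwise it permutes to a nonzero entry $a$ and uses its invertibility, together with \Cref{lem:box_inverse}, \Cref{lem:colour_inverted}, \Cref{lem:box_swapped} and \Cref{lem:antipode_spider_symplectic}, to handle the cross term for a scalar phase. Only then does it induct on $m$, splitting off the top row. There stacking goes through a green copy, and the cross terms come from the H-box labelled $X_{1,\bullet}$, transported by \Cref{gsa:lem:arrow_transpose} and \Cref{lem:scalable_box_loop}.

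To rescue your ordering, you would have to prove the thick adder identity above. That identity itself reduces to the same scalar computation as the paper's row case.

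A minor point: in your base case $n=0$ and in the case $a=0$, the phase is absorbed by red zero states via \Cref{lem:push_pauli_state} (thick version \Cref{gsa:lem:push_pauli_state_thick}). \Cref{gsa:prop:zero_normal_forms} concerns the empty relation and does not apply.
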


\begin{proof}
  We prove the claim holds for arbitrary \(A\in \Matrices[m][n][\F_p(\delta)]\), \(\vb x \in \F_p(\delta)^m\)
  and any \(X \in \Herm[m]\), for all \(n,m\in \N\), first by induction on \(n \in \N\), followed by induction on \(m\) .

  \paragraph{Induction on \(n \in \N\).}
  \begin{itemize}
    \item \emph{Inductive claim}:
    We prove that the lemma holds for any \(n \in \N\), any \(A \in \Matrices[1][n][\F_p(\delta)]\), any \(x \in \F_p(\delta)\) and any \(y \in \F_p(\delta+\delta^{\minu 1})\). 

    \item \emph{Base case:}
    Take any \(x,y \in \F_p(\delta)\), noting that the only \(1\times 0\) matrix is the zero matrix:
    \begin{align*}
        \begin{tikzpicture}
	\begin{pgfonlayer}{nodelayer}
		\node [style=wn] (8) at (-0.75, 0) {};
		\node [style=bn] (9) at (0.25, 0) {};
		\node [style=none] (10) at (1.25, 0) {};
		\node [style=bphase] (11) at (0.25, 0.9) {$\phasemat{x}{y}$};
		\node [style=none] (12) at (-1.25, 1) {};
		\node [style=none] (13) at (1.25, 1) {};
		\node [style=none] (14) at (1.25, -0.5) {};
		\node [style=none] (15) at (-1.25, -0.5) {};
	\end{pgfonlayer}
	\begin{pgfonlayer}{edgelayer}
		\draw [style=background] (12.center) to (13.center) to (14.center) to (15.center) to cycle;
		\draw (8) to (10.center);
		\draw [style=dbphase] (11) to (9);
	\end{pgfonlayer}
\end{tikzpicture}
 
      \steq{\lemref{lem:push_pauli_state}} 
        \begin{tikzpicture}
	\begin{pgfonlayer}{nodelayer}
		\node [style=wn] (15) at (0, 0) {};
		\node [style=none] (17) at (0.75, 0) {};
		\node [style=none] (18) at (-0.5, 0.5) {};
		\node [style=none] (19) at (0.75, 0.5) {};
		\node [style=none] (20) at (0.75, -0.5) {};
		\node [style=none] (21) at (-0.5, -0.5) {};
	\end{pgfonlayer}
	\begin{pgfonlayer}{edgelayer}
		\draw [style=background] (18.center) to (19.center) to (20.center) to (21.center) to cycle;
		\draw (15) to (17.center);
	\end{pgfonlayer}
\end{tikzpicture}
 
      =
        \begin{tikzpicture}
	\begin{pgfonlayer}{nodelayer}
		\node [style=bn] (53) at (-0.5, 0) {};
		\node [style=none] (49) at (0.5, 0) {};
		\node [style=rmat] (50) at (0.5, 0) {$0$};
		\node [style=none] (52) at (1.25, 0) {};
		\node [style=none] (54) at (-1, 0.5) {};
		\node [style=none] (55) at (1.25, 0.5) {};
		\node [style=none] (56) at (1.25, -0.5) {};
		\node [style=none] (57) at (-1, -0.5) {};
	\end{pgfonlayer}
	\begin{pgfonlayer}{edgelayer}
		\draw [style=background] (54.center) to (55.center) to (56.center) to (57.center) to cycle;
		\draw (53) to (49.center);
		\draw (49.center) to (52.center);
	\end{pgfonlayer}
\end{tikzpicture}
 
      \steq{\axiomref{A.2}\\\axiomref{A.6}}
        \begin{tikzpicture}
	\begin{pgfonlayer}{nodelayer}
		\node [style=bn] (64) at (-0.5, 0) {};
		\node [style=none] (59) at (0.75, 0) {};
		\node [style=rmat] (60) at (0.75, 0) {$0$};
		\node [style=none] (61) at (1.5, 0) {};
		\node [style=bphase] (65) at (-0.5, 0.9) {$\phasemat{0^\dag x}{0^\dag y0}$};
		\node [style=none] (66) at (-1.125, 1.125) {};
		\node [style=none] (67) at (1.5, 1.125) {};
		\node [style=none] (68) at (1.5, -0.5) {};
		\node [style=none] (69) at (-1.125, -0.5) {};
	\end{pgfonlayer}
	\begin{pgfonlayer}{edgelayer}
		\draw [style=background] (66.center) to (67.center) to (68.center) to (69.center) to cycle;
		\draw (64) to (59.center);
		\draw (59.center) to (61.center);
		\draw [style=dbphase] (65) to (64);
	\end{pgfonlayer}
\end{tikzpicture}

    \end{align*}

    \item \hypertarget{ind:arrow_phase:n}{}\emph{Inductive hypothesis:}
    Fix some  \(n\in \N\). Assume that the lemma holds for all vectors  \(A \in \Matrices[1][n]\) and scalars \(x \in \F_p(\delta)\), \(y \in \F_p(\delta+\delta^{\minu 1})\).
  
    \item \emph{Inductive step:}
    Take any vector \(A\in \Matrices[1][(n+1)]\) and scalars \(x \in \F_p(\delta)\), \(y \in \F_p(\delta+\delta^{\minu 1})\).
    
    We proceed by case distinction on whether \(A\) is the zero matrix or not.

    First, if \(A\) is the zero vector, then:
    \begin{align*}
        \begin{tikzpicture}
	\begin{pgfonlayer}{nodelayer}
		\node [style=none] (0) at (-2.25, -0.125) {};
		\node [style=rmatt] (1) at (-0.25, -0.125) {$A$};
		\node [style=bn] (2) at (1.25, -0.125) {};
		\node [style=none] (3) at (2.25, -0.125) {};
		\node [style=bphase] (4) at (1.25, 0.775) {$\phasemat{x}{y}$};
		\node [style=wirelable] (9) at (-1.575, -0.375) {$n+1$};
		\node [style=none] (bg0) at (-2.25, 0.875) {};
		\node [style=none] (bg1) at (2.25, 0.875) {};
		\node [style=none] (bg2) at (2.25, -0.875) {};
		\node [style=none] (bg3) at (-2.25, -0.875) {};
	\end{pgfonlayer}
	\begin{pgfonlayer}{edgelayer}
		\draw [style=background] (bg0.center)
			 to (bg1.center)
			 to (bg2.center)
			 to (bg3.center)
			 to cycle;
		\draw [style=thick] (0.center) to (1);
		\draw [in=180, out=0] (1) to (2);
		\draw [in=180, out=0] (2) to (3.center);
		\draw [style=dbphase] (4) to (2);
	\end{pgfonlayer}
\end{tikzpicture}

      \steq{\defref{def:matrix_arrow}}
        \begin{tikzpicture}
	\begin{pgfonlayer}{nodelayer}
		\node [style=none] (11) at (-1.75, -0.25) {};
		\node [style=none] (14) at (1.75, -0.25) {};
		\node [style=gbn] (24) at (-0.75, -0.25) {};
		\node [style=wn] (25) at (0.25, -0.25) {};
		\node [style=bn] (26) at (1, -0.25) {};
		\node [style=bphase] (27) at (1, 0.65) {$\phasemat{x}{y}$};
		\node [style=none] (bg0) at (-1.75, 0.75) {};
		\node [style=none] (bg1) at (1.75, 0.75) {};
		\node [style=none] (bg2) at (1.75, -0.75) {};
		\node [style=none] (bg3) at (-1.75, -0.75) {};
	\end{pgfonlayer}
	\begin{pgfonlayer}{edgelayer}
		\draw [style=background] (bg0.center)
			 to (bg1.center)
			 to (bg2.center)
			 to (bg3.center)
			 to cycle;
		\draw [in=0, out=180] (14.center) to (26);
		\draw [in=0, out=180] (26) to (25);
		\draw [style=thick] (11.center) to (24);
		\draw [style=dbphase] (27) to (26);
	\end{pgfonlayer}
\end{tikzpicture}

      \steq{\lemref{lem:push_pauli_state}} 
        \begin{tikzpicture}
	\begin{pgfonlayer}{nodelayer}
		\node [style=none] (33) at (-1.125, 0) {};
		\node [style=none] (34) at (1.125, 0) {};
		\node [style=gbn] (35) at (-0.375, 0) {};
		\node [style=wn] (36) at (0.375, 0) {};
		\node [style=none] (bg0) at (-1.125, 0.5) {};
		\node [style=none] (bg1) at (1.125, 0.5) {};
		\node [style=none] (bg2) at (1.125, -0.5) {};
		\node [style=none] (bg3) at (-1.125, -0.5) {};
	\end{pgfonlayer}
	\begin{pgfonlayer}{edgelayer}
		\draw [style=background] (bg0.center)
			 to (bg1.center)
			 to (bg2.center)
			 to (bg3.center)
			 to cycle;
		\draw (34.center) to (36);
		\draw [style=thick] (33.center) to (35);
	\end{pgfonlayer}
\end{tikzpicture}

      \steq{\defref{def:matrix_arrow}\\ \axiomref{D.11} \\ \axiomref{D.16}}
        \begin{tikzpicture}
	\begin{pgfonlayer}{nodelayer}
		\node [style=none] (55) at (-2.125, -0.25) {};
		\node [style=none] (56) at (2.125, -0.25) {};
		\node [style=rmatt] (57) at (1.125, -0.25) {$A$};
		\node [style=gbn] (59) at (-0.625, -0.25) {};
		\node [style=bphase] (61) at (-0.625, 0.65) {$\phasemat{A^\dag x}{yA^\dag A}$};
		\node [style=none] (bg0) at (-2.125, 0.875) {};
		\node [style=none] (bg1) at (2.125, 0.875) {};
		\node [style=none] (bg2) at (2.125, -0.75) {};
		\node [style=none] (bg3) at (-2.125, -0.75) {};
	\end{pgfonlayer}
	\begin{pgfonlayer}{edgelayer}
		\draw [style=background] (bg0.center)
			 to (bg1.center)
			 to (bg2.center)
			 to (bg3.center)
			 to cycle;
		\draw [style=thick, in=180, out=0] (55.center) to (59);
		\draw [style=thick, in=180, out=0] (59) to (57);
		\draw (57) to (56.center);
		\draw [style=dbphase] (61) to (59);
	\end{pgfonlayer}
\end{tikzpicture}

    \end{align*}
      
    Otherwise, if \(A\) is not zero, there exists a nonzero element in \(a\) in \(A\). By permuting the rows of \(A\), assume without loss of generality that \(A= (a, A_\bullet)\), for some \(A_\bullet \in \Matrices[1][n]\). Then:
  
    \begin{align*}
      &
        \input{gsa_completeness/scalable_results/lemmas/scalable_arrow_phase/proof/eq2/0.tikz}
      \steq{\propref{prop:matprop}}
        \input{gsa_completeness/scalable_results/lemmas/scalable_arrow_phase/proof/eq2/1.tikz} 
      \steq{\defref{def:matrix_arrow}}
        \input{gsa_completeness/scalable_results/lemmas/scalable_arrow_phase/proof/eq2/2.tikz} 
      \steq{\axref{ax:dzx_fusion}}
        \input{gsa_completeness/scalable_results/lemmas/scalable_arrow_phase/proof/eq2/3.tikz}
      \\
      &\steq{\axiomref{D.21}}
        \input{gsa_completeness/scalable_results/lemmas/scalable_arrow_phase/proof/eq2/4.tikz} 
      \steq{\axiomref{D.5}}
        \input{gsa_completeness/scalable_results/lemmas/scalable_arrow_phase/proof/eq2/5.tikz} 
      \steq{\axref{ax:dzx_fusion}}
        \input{gsa_completeness/scalable_results/lemmas/scalable_arrow_phase/proof/eq2/6.tikz}
      \\
      &\steq{\axiomref{D.0} \\ \lemref{lem:compact_antipode}}
        \input{gsa_completeness/scalable_results/lemmas/scalable_arrow_phase/proof/eq2/7.tikz} 
      \steq{\lemref{lem:box_swapped}}
        \input{gsa_completeness/scalable_results/lemmas/scalable_arrow_phase/proof/eq2/8.tikz}
      \\
      &\steq{\defref{def:dzx_directed_fourier_box} \\ \axref{ax:dzx_fusion} \\ \lemref{lem:antipode_spider_symplectic}}
        \input{gsa_completeness/scalable_results/lemmas/scalable_arrow_phase/proof/eq2/9.tikz} 
      \steq{\axiomref{A.5}\\ \lemref{lem:box_inverse} \\ \lemref{lem:colour_inverted} \\ \axiomref{D.7}}
        \input{gsa_completeness/scalable_results/lemmas/scalable_arrow_phase/proof/eq2/10.tikz}\\
      &\steq{\indhyp{ind:arrow_phase:n} \\ \propref{prop:matprop}}
        \input{gsa_completeness/scalable_results/lemmas/scalable_arrow_phase/proof/eq2/11.tikz}
      \steq{\defref{def:matrix_arrow}}
         \input{gsa_completeness/scalable_results/lemmas/scalable_arrow_phase/proof/eq2/12.tikz} \\
      &\steq{\lemref{lem:scalable_fusion} \\ \defref{def:matrix_arrow} \\
      \defref{def:scalable_spider}}
        \input{gsa_completeness/scalable_results/lemmas/scalable_arrow_phase/proof/eq2/13.tikz}
      =
        \input{gsa_completeness/scalable_results/lemmas/scalable_arrow_phase/proof/eq2/14.tikz}.
    \end{align*}
  \end{itemize}

  \paragraph{Induction on \(m\).}
  \begin{itemize}
    \item \emph{Inductive claim:}
      We prove that the lemma holds for any \(n \in \N\), any \(A \in \Matrices[m][n][\F_p(\delta)]\), any \(\vb x \in \F_p(\delta)^m\) and any \(X \in \Herm[m]\).

    \item \emph{Base case:}
      The base case is trivial, because all \(A \in \Matrices[0][n][\F_p(\delta)]\), any \(\vb x \in \F_p(\delta)^0\) and \(X \in \Herm[0]\) are necessarily zero.

    \item \hypertarget{ind:arrow_phase:m}{}\emph{Inductive hypothesis:}
      Fix some \(n,m\in \N\). Assume that the lemma holds for all matrices in \(\Matrices[j][k]\), \(X \in \Herm[j]\) and vectors \( \vb x \in \F_p(\delta)^j\) such that \(j\leq m\) and \(k \leq n\).
  
    \item \emph{Inductive step:}
      Take arbitrary matrices \(A \in  \Matrices[(m+1)][n][\F_p(\delta)]\), \(X \in \Herm[m+1]\) and an arbitrary  vector \( \vb x \in \F_p(\delta)^{m+1}\).
    
    There exist scalars \(x_1,x_{1,1}, a\), vectors \(X_{1,\bullet}, A_{\bullet}\) and a Hermitian matrix \(X_{\bullet, \bullet}\) such that:
    \begin{equation*}
      \vb x = \begin{bmatrix} x_1 \\ \vb x_\bullet \end{bmatrix},
      \quad\quad
      X = \begin{bmatrix} X_{1,1} & X_{1,\bullet}^\dag \\ X_{1,\bullet} & X_{\bullet,\bullet} \end{bmatrix}
      \quad\quad\text{and}\quad\quad
      A = \begin{bmatrix} a \\ A_\bullet \end{bmatrix}.
    \end{equation*}
    Then
    \begin{align*}
      &
        \input{gsa_completeness/scalable_results/lemmas/scalable_arrow_phase/proof/eq3/0.tikz}
      \steq{\defref{eq:strict}}
        \input{gsa_completeness/scalable_results/lemmas/scalable_arrow_phase/proof/eq3/1.tikz} 
      \steq{\defref{def:matrix_arrow}}
        \input{gsa_completeness/scalable_results/lemmas/scalable_arrow_phase/proof/eq3/2.tikz} 
      \steq{\defref{def:scalable_spider}}
        \input{gsa_completeness/scalable_results/lemmas/scalable_arrow_phase/proof/eq3/3.tikz}
      \\
      &\steq{\indhyp{ind:arrow_phase:m} \\ \propref{prop:matprop}}
        \input{gsa_completeness/scalable_results/lemmas/scalable_arrow_phase/proof/eq3/4.tikz} 
      \steq{\lemref{gsa:lem:arrow_transpose}} 
        \input{gsa_completeness/scalable_results/lemmas/scalable_arrow_phase/proof/eq3/5.tikz} 
      \steq{\lemref{lem:scalable_fusion}\\ \defref{def:matrix_arrow}}
         \input{gsa_completeness/scalable_results/lemmas/scalable_arrow_phase/proof/eq3/6.tikz}
      \\
      &\steq{\lemref{lem:scalable_box_loop}\\ \lemref{lem:scalable_fusion}}
        \input{gsa_completeness/scalable_results/lemmas/scalable_arrow_phase/proof/eq3/7.tikz}.
    \end{align*}
  \end{itemize}

\end{proof}

\begin{lemma}
  \label{lem:scalable_arrow_label_black}
  \label{lem:scalable_arrow_label_white}
  For any \(m,n \in \N\),  \(X \in \Herm\), \(\vb{x} \in \F_p(\delta)^m\), and \(A \in \Matrices[m][n][\F_p(\delta)]\):
  \begin{itemize}
    \item If \(A\) is injective, then:
      \begin{equation*}
        \begin{tikzpicture}
	\begin{pgfonlayer}{nodelayer}
		\node [style=none] (4) at (-6.25, 1.625) {};
		\node [style=none] (5) at (-1.25, 1.625) {};
		\node [style=none] (6) at (-1.25, -1.625) {};
		\node [style=none] (7) at (-6.25, -1.625) {};
		\node [style=ggn] (8) at (-3.75, 0) {};
		\node [style=none] (9) at (-1.25, 0.75) {};
		\node [style=none] (10) at (-1.25, -0.75) {};
		\node [style=none] (11) at (-6.25, 0.75) {};
		\node [style=none] (12) at (-6.25, -0.75) {};
		\node [style=lmatt] (13) at (-2.25, 0.75) {$A$};
		\node [style=lmatt] (14) at (-2.25, -0.75) {$A$};
		\node [style=rmatt] (15) at (-5.25, 0.75) {$A$};
		\node [style=rmatt] (16) at (-5.25, -0.75) {$A$};
		\node [style=none] (18) at (-1.5, 0.25) {$\vdots$};
		\node [style=none] (30) at (-6.25, 0.75) {};
		\node [style=none] (31) at (-6, 0.25) {$\vdots$};
		\node [style=bphase] (ph0) at (-3.75, 1.1) {$\phasemat{\vb{x}}{X}$};
	\end{pgfonlayer}
	\begin{pgfonlayer}{edgelayer}
		\draw [style=background] (6.center)
			 to (5.center)
			 to (4.center)
			 to (7.center)
			 to cycle;
		\draw [style=thick] (15) to (11.center);
		\draw [style=thick] (16) to (12.center);
		\draw [style=thick] (13) to (9.center);
		\draw [style=thick] (14) to (10.center);
		\draw [style=thick, in=180, out=-75] (8) to (14);
		\draw [style=thick, in=0, out=-105] (8) to (16);
		\draw [style=thick, in=360, out=105] (8) to (15);
		\draw [style=thick, in=180, out=75] (8) to (13);
		\draw [style=dbphase] (ph0) to (8);
	\end{pgfonlayer}
\end{tikzpicture}
      =
        \begin{tikzpicture}
	\begin{pgfonlayer}{nodelayer}
		\node [style=none] (0) at (1.25, 1.625) {};
		\node [style=none] (1) at (5.25, 1.625) {};
		\node [style=none] (2) at (5.25, -1.15) {};
		\node [style=none] (3) at (1.25, -1.15) {};
		\node [style=ggn] (19) at (3.25, 0) {};
		\node [style=none] (20) at (5.25, 0.75) {};
		\node [style=none] (21) at (5.25, -0.75) {};
		\node [style=none] (22) at (1.25, 0.75) {};
		\node [style=none] (23) at (1.25, -0.75) {};
		\node [style=none] (24) at (4.25, 0.75) {};
		\node [style=none] (25) at (4.25, -0.75) {};
		\node [style=none] (26) at (2.25, 0.75) {};
		\node [style=none] (27) at (2.25, -0.75) {};
		\node [style=none] (28) at (1.5, 0.25) {$\vdots$};
		\node [style=none] (29) at (5, 0.25) {$\vdots$};
		\node [style=bphase] (ph1) at (3.25, 1.1) {$\phasemat{A^\dag \vb{x}}{A^\dag XA}$};
	\end{pgfonlayer}
	\begin{pgfonlayer}{edgelayer}
		\draw [style=background] (2.center)
			 to (1.center)
			 to (0.center)
			 to (3.center)
			 to cycle;
		\draw [style=thick] (19)
			 to [in=360, out=105] (26.center)
			 to (22.center);
		\draw [style=thick] (19)
			 to [in=0, out=-105] (27.center)
			 to (23.center);
		\draw [style=thick] (19)
			 to [in=180, out=75] (24.center)
			 to (20.center);
		\draw [style=thick] (19)
			 to [in=180, out=-75] (25.center)
			 to (21.center);
		\draw [style=dbphase] (ph1) to (19);
	\end{pgfonlayer}
\end{tikzpicture}
      \end{equation*}
    \item If \(A\) is surjective, then:
      \begin{equation*}
        \begin{tikzpicture}
	\begin{pgfonlayer}{nodelayer}
		\node [style=none] (4) at (-6.25, 1.625) {};
		\node [style=none] (5) at (-1.25, 1.625) {};
		\node [style=none] (6) at (-1.25, -1.625) {};
		\node [style=none] (7) at (-6.25, -1.625) {};
		\node [style=gwn] (8) at (-3.75, 0) {};
		\node [style=none] (9) at (-1.25, 0.75) {};
		\node [style=none] (10) at (-1.25, -0.75) {};
		\node [style=none] (11) at (-6.25, 0.75) {};
		\node [style=none] (12) at (-6.25, -0.75) {};
		\node [style=rmatt] (13) at (-2.25, 0.75) {$A$};
		\node [style=rmatt] (14) at (-2.25, -0.75) {$A$};
		\node [style=lmatt] (15) at (-5.25, 0.75) {$A$};
		\node [style=lmatt] (16) at (-5.25, -0.75) {$A$};
		\node [style=none] (18) at (-1.5, 0.25) {$\vdots$};
		\node [style=none] (30) at (-6.25, 0.75) {};
		\node [style=none] (31) at (-6, 0.25) {$\vdots$};
		\node [style=wphase] (ph0) at (-3.75, 1.1) {$\phasemat{\vb{x}}{X}$};
	\end{pgfonlayer}
	\begin{pgfonlayer}{edgelayer}
		\draw [style=background] (6.center)
			 to (5.center)
			 to (4.center)
			 to (7.center)
			 to cycle;
		\draw [style=thick] (15) to (11.center);
		\draw [style=thick] (16) to (12.center);
		\draw [style=thick] (13) to (9.center);
		\draw [style=thick] (14) to (10.center);
		\draw [style=thick, in=180, out=-75] (8) to (14);
		\draw [style=thick, in=0, out=-105] (8) to (16);
		\draw [style=thick, in=360, out=105] (8) to (15);
		\draw [style=thick, in=180, out=75] (8) to (13);
		\draw [style=dwphase] (ph0) to (8);
	\end{pgfonlayer}
\end{tikzpicture}
      =
        \begin{tikzpicture}
	\begin{pgfonlayer}{nodelayer}
		\node [style=none] (0) at (1.25, 1.625) {};
		\node [style=none] (1) at (5.25, 1.625) {};
		\node [style=none] (2) at (5.25, -1.15) {};
		\node [style=none] (3) at (1.25, -1.15) {};
		\node [style=gwn] (19) at (3.25, 0) {};
		\node [style=none] (20) at (5.25, 0.75) {};
		\node [style=none] (21) at (5.25, -0.75) {};
		\node [style=none] (22) at (1.25, 0.75) {};
		\node [style=none] (23) at (1.25, -0.75) {};
		\node [style=none] (24) at (4.25, 0.75) {};
		\node [style=none] (25) at (4.25, -0.75) {};
		\node [style=none] (26) at (2.25, 0.75) {};
		\node [style=none] (27) at (2.25, -0.75) {};
		\node [style=none] (28) at (1.5, 0.25) {$\vdots$};
		\node [style=none] (29) at (5, 0.25) {$\vdots$};
		\node [style=wphase] (ph1) at (3.25, 1.1) {$\phasemat{A\vb{x}}{AXA^\dag}$};
	\end{pgfonlayer}
	\begin{pgfonlayer}{edgelayer}
		\draw [style=background] (2.center)
			 to (1.center)
			 to (0.center)
			 to (3.center)
			 to cycle;
		\draw [style=thick] (19)
			 to [in=360, out=105] (26.center)
			 to (22.center);
		\draw [style=thick] (19)
			 to [in=0, out=-105] (27.center)
			 to (23.center);
		\draw [style=thick] (19)
			 to [in=180, out=75] (24.center)
			 to (20.center);
		\draw [style=thick] (19)
			 to [in=180, out=-75] (25.center)
			 to (21.center);
		\draw [style=dwphase] (ph1) to (19);
	\end{pgfonlayer}
\end{tikzpicture}
      \end{equation*}
  \end{itemize}
\end{lemma}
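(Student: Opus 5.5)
The plan is to reduce the statement to \Cref{lem:scalable_arrow_phase} and the matrix properties of \Cref{prop:matprop}. The two bullets are dual: the surjective case follows from the injective one by colour change (\Cref{ax:dzx_colour}, together with the definition of red scalable spiders via scalable \(H\)-boxes in \Cref{def:scalable_spider}) and transposition of matrix arrows (\Cref{gsa:lem:arrow_transpose}). So I will concentrate on the injective case.

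\textbf{Step 1: reduce to a single leg.} By scalable spider fusion (\Cref{lem:scalable_fusion}), split the green spider into one copy of the spider carrying the phase \((\vb x, X)\) on a single thick wire, and a phase-free green spider with all the remaining legs. The phase-free spider with each leg capped by \(A\) (or its converse) is a copy/cocopy network. Using the matrix rules of \Cref{prop:matprop}, namely the green-copy rule for matrix arrows, the multiplicity legs of the green spider can be pulled through \(A\). After this, only one pair \(A\), \(A^{\text{op}}\) sits between the phase spider and a plain green spider on the other legs.

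\textbf{Step 2: use injectivity to eliminate the remaining pair.} This is where injectivity enters. By \Cref{prop:matprop} (injective \((2)\)--\((4)\)), the composite of the converse arrow after \(A\), glued through a phase-free green spider, equals the identity-like green spider on \(n\) wires. This yields a phase-free green spider on \(n\) wires with \(A\) still attached to the phase leg.

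\textbf{Step 3: absorb the phase.} On the phase leg, apply \Cref{lem:scalable_arrow_phase} to slide the phased spider through \(A\). This turns the phase \((\vb x, X)\) into \((A^\dag\vb x, A^\dag X A)\) on an \(n\)-wire spider. A final application of \Cref{lem:scalable_fusion} merges it into a single spider with phase \((A^\dag \vb x, A^\dag X A)\), as claimed.

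\textbf{Main obstacle.} The delicate point is Step 1/2: arranging the legs so that \Cref{prop:matprop} applies cleanly. In particular, the inputs carry \(A\) as right-arrows while the outputs carry left-arrows. The input legs must first be bent to outputs using compact structure, and \Cref{lem:compact_antipode} may be needed to correct antipodes when bending. If this bookkeeping becomes awkward, the fallback is a direct induction on the number of legs. The base case with one leg is exactly \Cref{lem:scalable_arrow_phase} combined with the injectivity identity from \Cref{prop:matprop}, and the inductive step uses \Cref{lem:scalable_fusion} to peel off one leg at a time, as in the inductive proofs above.
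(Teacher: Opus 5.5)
Your proposal matches the paper's proof essentially step for step. The paper also splits off the phase by \Cref{lem:scalable_fusion}, uses \Cref{prop:matprop} with injectivity to reduce to a phase-free green spider on $n$ wires with a single arrow $A$ feeding the phase, slides the phase through by \Cref{lem:scalable_arrow_phase}, removes the leftover arrow by a second use of \Cref{prop:matprop} (a step you leave implicit), and fuses back. The surjective case is likewise obtained from the injective one via \Cref{gsa:lem:arrow_transpose}, and the fact you should state explicitly there is that $A$ surjective makes $A^\dag$ injective, which is what licenses applying the first bullet.
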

\begin{proof}
  We begin with the first claim:
    \begin{align*}
      &
        \input{gsa_completeness/scalable_results/lemmas/scalable_arrow_label/proof/0.tikz}
      \steq{\lemref{lem:scalable_fusion}} 
        \input{gsa_completeness/scalable_results/lemmas/scalable_arrow_label/proof/1.tikz} 
      \steq{\propref{prop:matprop}}
        \input{gsa_completeness/scalable_results/lemmas/scalable_arrow_label/proof/2.tikz}
      \\
      &\steq{\lemref{lem:scalable_arrow_phase}} 
        \input{gsa_completeness/scalable_results/lemmas/scalable_arrow_label/proof/3.tikz} 
      \steq{\propref{prop:matprop}}
        \input{gsa_completeness/scalable_results/lemmas/scalable_arrow_label/proof/4.tikz} 
      \steq{\lemref{lem:scalable_fusion}} 
        \input{gsa_completeness/scalable_results/lemmas/scalable_arrow_label/proof/5.tikz}
    \end{align*}
    The second claim follows from the first claim, 
    \Cref{gsa:lem:arrow_transpose}, the fact that
    the adjoint of a surjective matrix is injective and a simple induction.
\end{proof}

Using the previous lemmas, it is easy to show that:
\begin{lemma}
  \label{lem:scalable_colour}
  For any \(m,n \in \N\),  \(X \in \Herm\), \(\vb{x} \in \F_p(\delta)^m\), and \(A \in \Matrices[m][n][\F_p(\delta)]\):
  \begin{itemize}
    \item If \(A\) is injective, then:
      \begin{equation*}
          % [inline block 7: 39 envs, 47854 chars in 10 pieces, piece 1 here, a bare % at each other -> data_tex | \begin{tikzpicture} 	\begin{pgfonlayer}{nodelayer}...]


      \end{equation*}
    \item If \(A\) is surjective, then:
      \begin{equation*}
          %

      \end{equation*}
    \item If \(A\) is invertible, then:
      from green to red:
      \begin{equation*}
          %

      \end{equation*}
  \end{itemize}
\end{lemma}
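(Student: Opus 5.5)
The statement is the scalable colour-change lemma (\Cref{lem:scalable_colour}). It has three parts. Injective \(A\) lets a green scalable spider conjugated by \(A\) become a red one. Surjective \(A\) gives the dual statement. Invertible \(A\) gives the full green-to-red change. I would derive all three from \Cref{lem:scalable_arrow_label_black} together with the scalable H-box definitions (\Cref{def:unlabelled_scalable}) and the scalable spider definition (\Cref{def:scalable_spider}). The latter already defines a red scalable spider as a green one sandwiched between unlabelled scalable H-boxes, with the phase negated.

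\textbf{Injective case.} I would first unfold the red spider on the right-hand side via \Cref{def:scalable_spider}, so that it becomes a green spider with \(\minu\)-labelled H-boxes on one side and unlabelled ones on the other. Next I would push the matrix arrows through those H-boxes. The labelled scalable H-boxes of \Cref{def:unlabelled_scalable} are exactly a matrix arrow composed with an unlabelled box. Using \Cref{gsa:lem:arrow_transpose}, an arrow \(A\) on one side of an H-box turns into \(A^\dag\) facing the other way.

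This leaves a green spider surrounded by arrows whose matrix is injective. I would then absorb those arrows with the first item of \Cref{lem:scalable_arrow_label_black}, which turns the phase \((\vb x,X)\) into \((A^\dag\vb x, A^\dag X A)\). After that I would fuse the pieces with \Cref{lem:scalable_fusion} and cancel paired H-boxes using \Cref{lem:box_inverse} applied componentwise, which follows by induction on the wire dimension.

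\textbf{Surjective case.} This is dual to the injective case. It should follow from the second item of \Cref{lem:scalable_arrow_label_black}, using \Cref{gsa:lem:arrow_transpose} and the fact that the adjoint of a surjective matrix is injective.

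\textbf{Invertible case.} I would apply both previous items, since an invertible \(A\) is both injective and surjective. I would then use \Cref{eq:arrow_inverse} to trade \(A^{\minu 1}\) arrows for reversed \(A\) arrows, and cancel the arrow pairs via \Cref{lem:scalable_mat}.

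\textbf{Main obstacle.} I expect the hard step to be the bookkeeping of signs and of conjugation \(\bar{(-)}\) when arrows are pushed through H-boxes. The \(\minu\)-labelled boxes introduce antipodes, and the transpose lemma introduces daggers rather than plain transposes over \(\F_p(\delta)\). To make the phases match exactly, I would first check that the composite arrow–H-box–arrow equals a single arrow of the expected matrix. This check can be done by induction on the dimension, using \Cref{prop:matprop} and the non-scalable \Cref{lem:box_antipode} and \Cref{lem:box_swapped} in the inductive step.
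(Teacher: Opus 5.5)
Your plan is essentially the paper's route: the paper gives no separate argument for this lemma, only remarking that it follows easily from the preceding lemmas. Concretely, that means \Cref{lem:scalable_arrow_label_black} (and \Cref{lem:scalable_arrow_phase}) combined with the H-box definition of scalable red spiders, plus \Cref{eq:arrow_inverse} in the invertible case. When you fill in the details, keep in mind two points.

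\begin{itemize}
\item Sliding an arrow through an H-box turns \(A\) into a reversed \(A^\dag\), which swaps injective and surjective. So check that the arrows you finally absorb with the first item of \Cref{lem:scalable_arrow_label_black} really are injective and point into the green spider; otherwise apply the second item to the red spider directly.
\item The red spider's definition negates only the symplectic part of the phase, not the affine part.
\end{itemize}
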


\begin{lemma}
  \label{gsa:lem:gaussianeliminationi}
  Given a diagram in AP-form and an \emph{invertible} matrix \(A \in
  \Matrices[m][m][\F_p(\delta)]\),
  \begin{align*}
      %

  \end{align*}
\end{proof}

\begin{lemma}
  \label{gsa:lem:push_pauli_state_thick}
  The thick scalable version of \Cref{lem:push_pauli_state} holds:
  \begin{align*}
      %

  \end{align*}  
\end{lemma}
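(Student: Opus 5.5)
We prove the thick version by induction on the type \(n\in\N\) of the thick wire, mirroring the way the other scalable lemmas of this appendix (e.g.\ \Cref{lem:scalable_fusion} and \Cref{lem:scalable_arrow_phase}) lift their non-scalable counterparts. The base case \(n=0\) is trivial, since all thick wires, vectors and Hermitian matrices of dimension zero are empty. For the case \(n=1\) the thick diagram is literally the diagram of \Cref{lem:push_pauli_state}, so that lemma gives the claim directly.

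For the inductive step we take an \((n+1)\)-dimensional instance and block-decompose its data as in \Cref{def:scalable_spider}. We write the spider's affine phase as \(\bigl[\begin{smallmatrix} a\\ \vb v\end{smallmatrix}\bigr]\) and its symplectic phase as \(\bigl[\begin{smallmatrix} b & \vb w^\dag\\ \vb w & A\end{smallmatrix}\bigr]\). We also split the strictly-affine thick state, using the splitting/merging generators of \Cref{eq:strict}, into a single-wire strictly-affine state on the first wire tensored with an \(n\)-dimensional thick one. Unfolding \Cref{def:scalable_spider}, the thick spider becomes a single-wire spider with phase \((a,b)\) and an \(n\)-dimensional thick spider with phase \((\vb v,A)\), connected through the scalable H-box labelled \(\vb w\).

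We then absorb in two stages. First, we apply \Cref{lem:push_pauli_state} on the first wire, so that the single strictly-affine state absorbs the \((a,b)\) spider. Second, we push the resulting state through the \(\vb w\)-labelled H-box connection, using \Cref{lem:pauli_push} and \Cref{lem:scalable_copy} so that it copies through the green node. This leaves only affine phases acting on the \(n\)-dimensional spider, and we fuse them into it with \Cref{lem:scalable_fusion}. The inductive hypothesis then lets the \(n\)-dimensional strictly-affine state absorb the remaining \((\vb v',A)\) spider. Finally, we re-bundle the wires with \Cref{eq:strict}. The colour-swapped statement follows identically, or by conjugating with scalable H-boxes via \Cref{lem:scalable_colour}.

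The main obstacle is the second stage. Absorbing the first spider leaves behind the H-box edges labelled by \(\vb w\), and these can contribute extra affine phases on the remaining wires. We must check that these phases are precisely of the form that either \Cref{lem:push_pauli_state} or fusion can eat. In particular, the rational labels must be handled via \Cref{lem:box_inverse} and \Cref{lem:box_swapped} when the relevant entries are invertible. Entries of \(\vb w\) that are zero are dealt with by a separate case split, as in \Cref{lem:scalable_arrow_phase}.
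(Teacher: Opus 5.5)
Your proof is correct, but it is organised differently from the paper's. The paper's proof is the single remark that the argument for \Cref{lem:push_pauli_state} goes through unchanged on thick wires. One replays the single-wire rewrites with their scalable counterparts (such as \Cref{lem:scalable_fusion} and \Cref{lem:scalable_copy}), treating the Hermitian phase as one block. The induction on thickness thus stays hidden inside those scalable lemmas, and the off-diagonal data is never examined.

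You instead induct on the thickness, peel off one wire with \Cref{def:scalable_spider}, and reduce to the single-wire lemma itself. This is more explicit and self-contained, at the cost of tracking the $\vb w$-edge. That edge is less of an obstacle than you suggest. After the first-wire state has been copied, its copy on the $\vb w$-leg is a strictly affine state. A matrix arrow carries it to a strictly affine state uniformly in $\vb w$; this is plain graphical affine algebra (cf.\ \Cref{lem:gaa_functor} and \Cref{lem:scalable_mat}). The unlabelled scalable H-box then turns it into a strictly affine green state on the remaining $n$ wires. This merely shifts $\vb v$ and is fused in by \Cref{lem:scalable_fusion}. The shift is harmless because your inductive hypothesis ranges over all affine phases. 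So no case split on zero entries of $\vb w$ is needed, and no appeal to \Cref{lem:box_inverse} or \Cref{lem:box_swapped} either.

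Two points of precision. First, the first-wire node has a third leg (to the $\vb w$-box), so \Cref{lem:push_pauli_state} does not apply to it directly. You must first unfuse its phase with \Cref{ax:dzx_fusion}, apply \Cref{lem:push_pauli_state}, and then copy the state through the phase-free node with the single-wire case of \Cref{lem:scalable_copy}. Second, splitting the thick strictly-affine state into a thin one and an $n$-thick one uses \Cref{def:scalable_spider} with vanishing off-diagonal block, not only the bundling generators of \Cref{eq:strict}.
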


\begin{proof}
The proof is essentially the same as that of \Cref{lem:push_pauli_state}
\end{proof}

\begin{lemma}
  \label{lem:affine_symplectomorphisms}
  For any \(x,y \in \F_p(\delta)^n\) and \(X \in \Herm[m]\),
  \begin{align*}
      %

  \end{align*}
\end{lemma}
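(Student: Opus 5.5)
The diagrams in this statement did not survive extraction, so this plan rests on an assumption about its content, which I cannot confirm from the text. I take it to say that the affine symplectomorphisms can be absorbed into a scalable spider, updating its labels explicitly. These are the thick green phase $\phasemat{\vb x}{X}$ and the thick red affine phase $\phasemat{\vb y}{0}$, the generators of the local Clifford-plus-Pauli group. This is exactly what is needed to keep a diagram in AP-form while normalising it.

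\textbf{Reduction to two cases.} I would split the statement into two equations and treat each separately.
\begin{itemize}
\item \emph{Green phase.} Plugging $\phasemat{\vb x}{X}$ onto a thick green spider is immediate. By \Cref{lem:scalable_fusion} the phases simply add, turning $\phasemat{\vb v}{A}$ into $\phasemat{\vb v+\vb x}{A+X}$. Hermiticity of $A+X$ is automatic, since $\Herm[m]$ is closed under addition.
\item \emph{Red affine phase.} This case carries the content. The plan is first to commute $\phasemat{\vb y}{0}$ through the green spider using the thick copy rule \Cref{lem:scalable_copy} together with \Cref{gsa:lem:push_pauli_state_thick}. This changes the affine part of the green phase by a term linear in $\vb y$, of the form $X\vb y$ or $X^\dag\vb y$, and leaves a residual red affine phase on the wire.
\end{itemize}

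\textbf{Main obstacle.} The hard step is the bookkeeping in the red case: pinning down the correction term $X\vb y$ and its sign, and checking the conjugation conventions ($\dag$ rather than $\trans$). I would settle these by induction on $n$, following the template of \Cref{lem:scalable_arrow_phase}:
\begin{itemize}
\item decompose $X=\begin{bmatrix} X_{1,1} & X_{1,\bullet}^\dag\\ X_{1,\bullet} & X_{\bullet,\bullet}\end{bmatrix}$ and $\vb y=(y_1,\vb y_\bullet)$;
\item unfold \Cref{def:scalable_spider};
\item apply the non-scalable \Cref{lem:pauli_push} to the first wire;
\item use the Hadamard-edge identities (\Cref{lem:box_swapped}, \Cref{lem:box_inverse}) to move the Pauli across each $X_{1,\bullet}$-weighted edge;
\item invoke the inductive hypothesis on the remaining $n-1$ wires;
\item refuse with \Cref{lem:scalable_fusion}.
\end{itemize}
Each edge labelled by a rational function $h(\delta)$ contributes $\bar h$ or $h$ to the affine phase at the other endpoint, depending on the direction of the edge. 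This is exactly why the correct expression is $X\vb y$ with the Hermitian matrix $X$, rather than a transpose.

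\textbf{Fallback if the diagrammatic bookkeeping stalls.} I would instead check the claimed phase update semantically in $\sALR$ by a short kernel computation. This fixes the signs. I would then redo the induction, matching each step to the now-known target.
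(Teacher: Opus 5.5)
Your plan matches the paper's proof, which is likewise an induction on the thickness $n$ of the scalable spiders with a trivial base case and a diagrammatic inductive step, so the induction in your ``main obstacle'' paragraph (\Cref{lem:pauli_push} on the first wire, the Hadamard-edge lemmas, the inductive hypothesis on the remaining wires, and fusion) is the actual argument. Drop the initial suggestion to commute the red phase using \Cref{lem:scalable_copy} and \Cref{gsa:lem:push_pauli_state_thick}: both concern red affine \emph{states}, and the latter absorbs green phases entirely, so on their own they cannot produce the $X\vb y$ correction for a Pauli sitting on a wire.
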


\begin{proof}
  We prove the claim by induction on the thickness \(n\) of scalable spiders.
  \begin{itemize}
    \item \emph{Inductive claim:} The claim holds for any \(n \in \N\).
    \item \emph{Base case:} Trivial.
    \item \hypertarget{ind:affine_symplectomorphisms}{}\emph{Inductive hypothesis:}
      Fix \(n \in \N\) and assume the claim holds for spiders of thickness \(n\).
    \item \emph{Inductive step:}

    \begin{align*}
      &
        %
,
  \end{align*}
  where we have written the matrix \(X\) in block form as
  \begin{equation*}
    X
    =
    %
  \end{equation*}
  for some \(A \in \Herm[m]\), \(B \in \Matrices[m][n]\) and \(C \in \Herm[n]\). The
  submatrix \(A\) can then be rewritten to the form
  \begin{equation*}
    A \mapsto Z^{\minu 1} %
 Z,
  \end{equation*}
  where \(D\) is symmetric and invertible, and \(Z\) is a unitary
  change-of-basis matrix that maps the domain of \(A\) to \(\ker{A}^\perp
  \oplus \ker{A}\). Graphically, we then have:
  \begin{align*}
    &
      %
,
  \end{align*}
  so that Schur complementing with respect to the submatrix \(D\) gives a
  diagram in AP-form.
\end{proof}

\propDZXReducedAPForm*

\proofof{prop:dzx_reduced_ap_form}
  Consider a diagram in AP-form described by the quadruple
  \((E,Y,\vb{x},\vb{y})\). Then, by \Cref{gsa:lem:gaussianeliminationi}
  we can act on \(E\) on the left by an invertible matrix \(A\). Thus,
  we can perform Gaussian elimination on \(E\). During this procedure,
  if at any point the resulting matrix \(AE\) has a zero row,
  this means that one of the internal vertices has no neighbours in
  the graph. This disconnected internal vertex then takes the form
  \(
    \begin{tikzpicture}
	\begin{pgfonlayer}{nodelayer}
		\node [style=bn] (6) at (0.625, 0) {};
		\node [style=bphase] (9) at (-0.375, 0) {$\phasemat{x}{0}$};
		\node [style=none] (bg0) at (-1.125, 0.375) {};
		\node [style=none] (bg1) at (1.125, 0.375) {};
		\node [style=none] (bg2) at (1.125, -0.375) {};
		\node [style=none] (bg3) at (-1.125, -0.375) {};
	\end{pgfonlayer}
	\begin{pgfonlayer}{edgelayer}
		\draw [style=background] (bg0.center)
			 to (bg1.center)
			 to (bg2.center)
			 to (bg3.center)
			 to cycle;
		\draw [style=rbphase] (9) to (6);
	\end{pgfonlayer}
\end{tikzpicture}

  \).
  If
  \(x \neq 0\)  then the diagram has empty semantics and we can use
  \Cref{gsa:prop:zero_normal_forms} to rewrite the diagram to normal
  form. Otherwise, \(x=0\) and we can eliminate the disconnected vertex
  using \axiomref{D.6} followed by \axiomref{D.4}.

  In the nonempty case, the resulting row reduced echelon form of \(E\), with zero rows removed,
  is of the form \(\begin{bmatrix} I_m & F \end{bmatrix}\), up to a permutation
  of its columns. Denoting this permutation by \(\varsigma\), we have
  \begin{align*}
      \begin{tikzpicture}
	\begin{pgfonlayer}{nodelayer}
		\node [style=gbn] (20) at (-0.162, 0) {};
		\node [style=none] (21) at (2.088, 0) {};
		\node [style=gbn] (22) at (4.088, 0.5) {};
		\node [style=none] (23) at (4.588, -0.5) {};
		\node [style=none] (24) at (3.588, -0.5) {};
		\node [style=bphase] (25) at (-2.463, 0) {$\phasemat{\vb{x} \\ \vb{y}}{0 & E \\ E^\dag & Y}$};
		\node [style=wirelable] (26) at (0.963, 0.25) {$m+n$};
		\node [style=none] (27) at (3.588, 0.5) {};
		\node [style=wirelable] (28) at (3.588, -0.25) {$n$};
		\node [style=wirelable] (29) at (3.588, 0.25) {$m$};
		\node [style=none] (bg0) at (-4.588, 1) {};
		\node [style=none] (bg1) at (4.588, 1) {};
		\node [style=none] (bg2) at (4.588, -1) {};
		\node [style=none] (bg3) at (-4.588, -1) {};
		\node [style=none] (30) at (-0.162, 0) {};
		\node [style=none] (31) at (2.088, 0) {};
		\node [style=none] (32) at (4.588, -0.5) {};
		\node [style=none] (33) at (3.588, -0.5) {};
	\end{pgfonlayer}
	\begin{pgfonlayer}{edgelayer}
		\draw [style=background] (bg0.center)
			 to (bg1.center)
			 to (bg2.center)
			 to (bg3.center)
			 to cycle;
		\draw [style=thick] (23.center)
			 to (24.center)
			 to [in=345, out=180] (21);
		\draw [style=thick] (20)
			 to (21.center)
			 to [in=-180, out=15] (27.center)
			 to (22);
		\draw [style=rbphase] (25) to (20);
		\draw [style=thick] (32.center)
			 to (33.center)
			 to [in=345, out=180] (31.center)
			 to (30.center);
	\end{pgfonlayer}
\end{tikzpicture}

    =
      \begin{tikzpicture}
	\begin{pgfonlayer}{nodelayer}
		\node [style=gbn] (4) at (-0.038, 0) {};
		\node [style=none] (5) at (2.213, 0) {};
		\node [style=gbn] (6) at (4.213, 0.5) {};
		\node [style=none] (8) at (3.713, -0.5) {};
		\node [style=bphase] (9) at (-3.038, 0) {$\phasemat{\vb{x} \\ \vb{a} \\ \vb{b}}{0 & I_m & F \\ I_m & A & B \\ F^\dag & B^\dag & C}$};
		\node [style=wirelable] (10) at (1.087, 0.25) {$m+n$};
		\node [style=none] (11) at (3.713, 0.5) {};
		\node [style=wirelable] (12) at (3.713, -0.25) {$n$};
		\node [style=wirelable] (13) at (3.713, 0.25) {$m$};
		\node [style=lmatt] (14) at (4.588, -0.5) {\(\varsigma\)};
		\node [style=none] (15) at (5.588, -0.5) {};
		\node [style=none] (bg0) at (-5.588, 1.25) {};
		\node [style=none] (bg1) at (5.588, 1.25) {};
		\node [style=none] (bg2) at (5.588, -1.25) {};
		\node [style=none] (bg3) at (-5.588, -1.25) {};
		\node [style=none] (16) at (-0.037, 0) {};
		\node [style=none] (17) at (2.214, 0) {};
		\node [style=none] (18) at (3.714, -0.5) {};
		\node [style=none] (19) at (4.589, -0.5) {};
	\end{pgfonlayer}
	\begin{pgfonlayer}{edgelayer}
		\draw [style=background] (bg0.center)
			 to (bg1.center)
			 to (bg2.center)
			 to (bg3.center)
			 to cycle;
		\draw [style=thick] (5)
			 to [in=180, out=345] (8.center)
			 to (14);
		\draw [style=thick] (6)
			 to (11.center)
			 to [in=15, out=-180] (5.center)
			 to (4);
		\draw [style=thick] (15.center) to (14);
		\draw [style=rbphase] (9) to (4);
		\draw [style=thick] (16.center)
			 to (17.center)
			 to [in=180, out=345] (18.center)
			 to (19.center);
	\end{pgfonlayer}
\end{tikzpicture}

  \end{align*}
  for matrices \(A,B,C,F\) and vectors \(\vb{a},\vb{b}\) of suitable
  dimensions. Let's ignore the permutation for now, then following
  \Cref{eq:scalable-AP-form-derivation} we have
  \begin{align*}
    &
      \input{gsa_completeness/scalable_results/eqs/ap_form_reduction_proof_2/0.tikz} 
    \steq{\lemref{eq:scalable-AP-form-derivation}}
      \input{gsa_completeness/scalable_results/eqs/ap_form_reduction_proof_2/1.tikz} 
    \steq{\defref{def:matrix_arrow}}
      \input{gsa_completeness/scalable_results/eqs/ap_form_reduction_proof_2/2.tikz}
    \\
    &\steq{\lemref{lem:scalable_fusion} \\ \defref{def:scalable_spider}}
      \input{gsa_completeness/scalable_results/eqs/ap_form_reduction_proof_2/3.tikz} 
    \steq{\propref{prop:matprop} \\ \lemref{gsa:lem:push_pauli_state_thick} \\ \lemref{lem:affine_symplectomorphisms}}
      \input{gsa_completeness/scalable_results/eqs/ap_form_reduction_proof_2/4.tikz}
    \steq{\lemref{lem:scalable_colour} \\ \lemref{lem:scalable_arrow_phase} \\ \lemref{lem:scalable_fusion}}
      \input{gsa_completeness/scalable_results/eqs/ap_form_reduction_proof_2/5.tikz}
    \\
    &\steq{\propref{prop:matprop} \\ \lemref{lem:scalable_arrow_phase}}
     \input{gsa_completeness/scalable_results/eqs/ap_form_reduction_proof_2/6.tikz}
    \steq{\lemref{lem:scalable_fusion}} 
      \input{gsa_completeness/scalable_results/eqs/ap_form_reduction_proof_2/7.tikz}
    \\
    &\steq{\lemref{lem:scalable_box_loop}}
      \input{gsa_completeness/scalable_results/eqs/ap_form_reduction_proof_2/8.tikz}
    \steq{\defref{def:matrix_arrow} \\ \defref{def:scalable_spider}}
      \input{gsa_completeness/scalable_results/eqs/ap_form_reduction_proof_2/9.tikz}
  \end{align*}
  where we have set
  \(\vb{s} \coloneqq F^\dag\vb{a} +
  \vb{b} +  (F^\dag A+B) \vb{x}\)
  and
  \(S \coloneqq F^\dag A F + B +
  BF + F^\dag B^\dag\). 
  Reintroducing the permutation,
  we see that this final diagram is in reduced AP-form, whose constraints can be easily verified.
\end{proof}

\begin{proposition}
  \label{gsa:prop:ap-unique}
  For any nonempty affine Lagrangian state there is exactly one equivalent diagram in reduced-AP form.
\end{proposition}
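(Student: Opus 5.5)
The plan is to prove existence and uniqueness separately, computing the interpretation of a reduced AP-form diagram directly and matching it against the canonical generating tableau of \Cref{prop:generating_tableau}.

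For \emph{existence}, take a nonempty shifted affine Lagrangian state \(R\subseteq(\F_p(\delta)^{2n},\varpi_n)\). \Cref{prop:generating_tableau} writes it in canonical form with data \((m,L,\Sigma,\vb a,\varsigma)\). Conversely, I will compute the semantics of a reduced AP-form diagram with data \((m,\vb x,\vb s,L,\Sigma,\varsigma)\). The interpretation of the scalable green spider with phase \(\left(\vb v, G\right)\) and \(m+n\) legs is the graph-state relation \(\{(\vb X,\vb Z)\mid \vb Z = G\vb X+\vb v\}\), up to the sign and conjugation conventions fixed by \(\varpi\). Capping the first \(m\) legs with green effects projects out those coordinates. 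Writing the internal \(X\)-variables as \(\vb u\in\F_p(\delta)^m\), the \(I_m\) block forces the output \(X\)-coordinates on the first \(m\) wires to be determined by \(\vb u\). The zero affine block \(\vb 0\) together with the cap yields the constraint \(X_{(m)} + L\, X_{(n-m)} + \ldots\). Eliminating \(\vb u\) should give exactly the kernel of
\[\left[\begin{array}{cc|cc} I_m & L & \Sigma' & 0\\ 0&0&-L^\dag & I_{n-m}\end{array}\right]\]
after the permutation \(\varsigma\), with the affine shift read off from \((\vb x,\vb s)\). Here \(\Sigma'\) is \(0\) or \(\Sigma\), depending on which block of the tableau is placed on which side. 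The main obstacle is this bookkeeping: the roles of the \(X\) and \(Z\) halves and of \(L\) versus \(L^\dag\) have to line up precisely with \Cref{prop:generating_tableau}. I will do it first for \(n=1\) (the cases \(m=0\) and \(m=1\)) and then argue blockwise using \Cref{lem:scalable_fusion} and \Cref{lem:scalable_arrow_phase} semantically. Once the correspondence is established, every canonical datum is realised by some reduced AP-form diagram, so every nonempty \(R\) has one.

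For \emph{uniqueness}, suppose two reduced AP-form diagrams have the same interpretation. By the computation above, each determines a canonical generating tableau \((m,L,\Sigma,\vb a,\varsigma)\) for the same \(R\). The uniqueness clause of \Cref{prop:generating_tableau} then forces \(m\), \(\varsigma\), \(L\) and \(\Sigma\) to agree, and forces the affine shifts to agree modulo the linear part. It remains to check that the diagram's affine data \((\vb x,\vb s)\) is a \emph{canonical} representative of the coset. In the reduced form, the shift vector is supported only on the coordinates not fixed by the linear constraints, namely the coordinates where the tableau has an identity pivot. So two shifts differing by an element of \(\ker H\), while both lying in that complementary coordinate subspace, must be equal, since \(H\) restricted to the pivot coordinates is invertible. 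This pins down \(\vb x\) and \(\vb s\).

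Finally, the map from reduced AP-form data to diagrams is injective by construction, because the diagram is assembled directly from this data. Hence there is exactly one reduced AP-form diagram for each nonempty state. The empty state is handled separately by the first clause of \Cref{def:dzx_reduced_ap_form} and is excluded here.
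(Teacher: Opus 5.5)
Your route is the paper's: its whole proof is the remark that the data of a reduced AP-form is exactly the data of the canonical generating tableau of \Cref{prop:generating_tableau}, so existence and uniqueness are inherited from there. Your treatment of the affine part is a useful addition and is correct. The diagram's affine data is the coset representative supported on the pivot coordinates, equivalently the right-hand side \(H\vb a\), which does not depend on the choice of \(\vb a\in R\).

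The problem is the one identification you actually commit to, on which both your existence and uniqueness arguments rest. A reduced AP-form with parameter \(m\) does \emph{not} give the canonical tableau with the same \(m\) and \(L\). With the same \(m\), the canonical \(\Sigma\) lies in \(\Herm[m]\), whereas the diagram's \(\Sigma\) lies in \(\Herm[n-m]\), so \(\Sigma'=\Sigma\) is ill-typed. And \(\Sigma'=0\) would make the semantics independent of \(\Sigma\), destroying exactly the injectivity your uniqueness step needs.

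Do the elimination in the paper's convention. By \Cref{ex:delayed_graph_state_ext}, a shifted graph state is \(\ker[I\mid G]\) up to its affine shift, so \(Z\) is the copied coordinate and a same-coloured cap imposes \(X=0\) with \(Z\) free. Write \((\xi,\zeta)\) for the \(X\)- and \(Z\)-coordinates. Split the outputs as \((w,y)\) with \(w\in\F_p(\delta)^m\) and \(y\in\F_p(\delta)^{n-m}\), and let \(\vb u\) be the internal \(Z\)-variables.
\begin{itemize}
\item The capped rows, which carry the affine phase \(\vb x\), give \(\zeta_w+L\zeta_y=\vb x\). This is a constraint on \(Z\)-coordinates, not on \(X\)-coordinates.
\item The \(w\)-rows give \(\xi_w=-\vb u\), so \(\xi_w\) is free.
\item The \(y\)-rows give \(\xi_y-L^\dag\xi_w+\Sigma\zeta_y=\vb s\).
\end{itemize}
Hence the tableau, with columns \((\xi_w,\xi_y\mid\zeta_w,\zeta_y)\), is
\[
\left[\begin{array}{cc|cc}
0 & 0 & I_m & L\\
-L^\dag & I_{n-m} & 0 & \Sigma
\end{array}\right],
\]
whose \(X\)-part has rank \(n-m\). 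Moving the \(y\)-qubits to the front and swapping the two row blocks yields the canonical form with \(m'=n-m\), \(L'=-L^\dag\), \(\Sigma'=\Sigma\in\Herm[n-m]\), and right-hand side \((\vb s,\vb x)\) (signs depend on conventions).

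The map \((m,L,\Sigma,\vb x,\vb s)\mapsto(n-m,-L^\dag,\Sigma,(\vb s,\vb x))\) is a bijection onto canonical data. So your existence and uniqueness arguments go through once this replaces your guessed matching. This is a single linear elimination, so no induction on \(n\) through the scalable lemmas is needed.

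One caveat you share with the paper: \(\varsigma\) is unique only once it is normalised, say increasing within each block. Otherwise, for \(m'=n\), the data \((\Sigma,\varsigma)\) and \((\varsigma^{-1}\Sigma\varsigma,\mathrm{id})\) define the same subspace. The same normalisation must then be imposed on the AP-form side.
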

\begin{proof}
  Given a state, the data of the reduced AP form is exactly that of its canonical generating tableau, which is unique by \Cref{prop:generating_tableau}.
\end{proof}

  \endgroup
 
\end{document}